\tikzstyle{block} = [draw, fill=white, rectangle, 
\tikzstyle{bigblock} = [draw, fill=white, rectangle, 
\tikzstyle{Bigblock} = [draw, fill=white, rectangle, 
\tikzstyle{input} = [coordinate]
\tikzstyle{output} = [coordinate]
\tikzstyle{pinstyle} = [pin edge={to-,thin,black}]
\pgfplotsset{compat = newest}
\theoremstyle{definition}
\newtheorem{theo}{Theorem}[section]
\newtheorem{lem}[theo]{Lemma}
\newtheorem{cor}[theo]{Corollary}
\newtheorem{prop}[theo]{Proposition}
\newtheorem{defi}[theo]{Definition}
\newtheorem{theo*}{Theorem}
\theoremstyle{remark}
\newtheorem*{rk}{Remark}
\DeclareMathOperator{\maxi}{\text{maximize}}
\DeclareMathOperator{\mini}{\text{minimize}}
\DeclareMathOperator{\st}{\text{subject to}}
\DeclarePairedDelimiter\ceil{\lceil}{\rceil}
\DeclarePairedDelimiterX\set[1]\lbrace\rbrace{#1}
\title{\bfseries Multiple-Access Channel Coding \\
with Non-Signaling Correlations}
\author{Omar Fawzi\footnote{Univ Lyon, ENS Lyon, UCBL, Inria,  LIP, F-69342, Lyon Cedex 07, France. \href{mailto:omar.fawzi@ens-lyon.fr}{\texttt{omar.fawzi@ens-lyon.fr}}} \qquad Paul Fermé\footnote{Univ Lyon, ENS Lyon, UCBL, Inria, LIP, F-69342, Lyon Cedex 07, France. \href{mailto:paul.ferme@ens-lyon.fr}{\texttt{paul.ferme@ens-lyon.fr}}} \footnote{This paper was presented in part at ISIT 2022, see~\cite{FF22}}}
\date{}
\begin{document}
\maketitle

\begin{abstract}
We address the problem of coding for classical multiple-access channels (MACs) with the assistance of non-signaling correlations between parties. It is well-known that non-signaling assistance does not change the capacity of classical point-to-point channels. However, it was recently observed that one can construct MACs from two-player non-local games while relating the winning probability of the game to the capacity of the MAC. By considering games for which entanglement (a special kind of non-signaling correlation) increases the winning probability (e.g., the Magic Square game), this shows that for some specific kinds of channels, entanglement between the senders can increase the capacity.

In this work, we make several contributions towards understanding the capacity region for MACs with the assistance of non-signaling correlations between the parties. We develop a linear program computing the optimal success probability for coding over $n$ copies of a MAC $W$ with size growing polynomially in $n$. Solving this linear program allows us to achieve inner bounds for MACs. Applying this method to the binary adder channel, we show that using non-signaling assistance, the sum-rate $\frac{\log_2(72)}{4} \simeq 1.5425$ can be reached even with zero error, which beats the maximum sum-rate capacity of $1.5$ in the unassisted case. For noisy channels, where the zero-error non-signaling assisted capacity region is trivial, we can use concatenated codes to obtain achievable points in the capacity region. Applied to a noisy version of the binary adder channel, we show that non-signaling assistance still improves the sum-rate capacity. Complementing these achievability results, we give an outer bound on the non-signaling assisted capacity region that has the same expression as the unassisted region except that the channel inputs are not required to be independent. Finally, we show that the capacity region with non-signaling assistance shared only between each sender and the receiver independently is the same as without assistance.
\end{abstract}

\section{Introduction}
Multiple-access channels (MACs for short) are one of the simplest models of network communication settings, where two senders aim to transmit individual messages to one receiver. The capacity of such channels has been entirely characterized by the seminal works by Liao~\cite{Liao73} and Ahlswede~\cite{Ahlswede73} in terms of a simple single-letter formula. From the point of view of quantum information, it is natural to ask whether additional resources, such as quantum entanglement or more generally non-signaling correlations between the parties, change the capacity region. A non-signaling correlation is a multipartite input-output box shared between parties that, as the name suggests, cannot by itself be used to send information between parties. However, non-signaling correlations such as the ones generated by measurements of entangled quantum particles, can provide an advantage for various information processing tasks and nonlocal games. The study of such correlations has given rise to the quantum information area known as nonlocality~\cite{BCPSW14}. For example, in the context of channel coding, there exists classical point-to-point channels for which quantum entanglement between the sender and the receiver can increase the optimal success probability for sending one bit of information with a single use of the channel~\cite{PLMK11,BF18}. However, a well-known result~\cite{BSST99} states that for classical point-to-point channels, entanglement and even more generally non-signaling correlations do not change the capacity of the channel; see also~\cite{Matthews12,BF18}.

In the network setting, behavior is different. Quek and Shor showed in~\cite{QS17} the existence of two-sender two-receiver interference channels with gaps between their classical, quantum-entanglement assisted and non-signaling assisted capacity regions. Following this result, Leditzky et al.~\cite{LALS20} (see also~\cite{SLSS22}) showed that quantum entanglement shared between the two senders of a MAC can strictly enlarge the capacity region. This has been demonstrated through channels that are constructed from two-player non-local games, such as the Magic Square game~\cite{Mermin90,Peres90,Aravind02,BBT05}, by translating known gaps between classical and quantum values of games into MAC capacity gaps. Other instances of network channels for which entanglement increases the capacity region were studied in~\cite{Noetzel20,ND20}. This raises the following natural question: Can non-signaling correlations lead to significant gains in capacity for natural MACs? Can we find a characterization of the capacity region of the MAC when non-signaling resources between the parties are allowed? 

\paragraph{Our Results} We focus here on the MAC with two senders and we allow arbitrary tripartite non-signaling correlations between the two senders and the receiver. This is the most optimistic setting, in the sense that we only enforce the non-signaling constraints between the parties, and also the mathematically simplest setting. Even if not all non-signaling correlations are feasible within quantum theory, the setting we study here can be seen as a tractable and physically motivated outer approximation of what can be achieved with quantum theory. In fact, the quantum set is notoriously complicated and deciding membership in this set is not computable~\cite{JNWY20}. We note that very recently, Pereg et al.~\cite{PDB23} found a multi-letter formula for the capacity of MACs with quantum entanglement shared between the two senders. Unfortunately, this characterization is very difficult to evaluate for any fixed channel.

We denote by $\mathrm{S}^{\mathrm{NS}}(W,k_1,k_2)$ the success probability of the best non-signaling assisted $(k_1,k_2)$-code for the MAC $W$. Contrary to the unassisted value that we denote $\mathrm{S}(W,k_1,k_2)$,  $\mathrm{S}^{\mathrm{NS}}(W,k_1,k_2)$ can be formulated as a linear program; see Proposition~\ref{prop:NSLP}. Furthermore, using symmetries, we have developed a linear program computing $\mathrm{S}^{\mathrm{NS}}$ for a finite number of copies of a MAC $W$ with a size growing polynomially in the number of copies; see Theorem~\ref{theo:polyLP} and Corollary~\ref{cor:poly}. Using this result, we describe a method to derive inner bounds on the non-signaling assisted capacity region achievable with zero error; see Proposition~\ref{prop:IB}. Applied to the binary adder channel, which maps $(x_1, x_2) \in \{0,1\}^2$ to $x_1 + x_2 \in \{0,1,2\}$, we show that the sum-rate $\frac{\log_2(72)}{4} \simeq 1.5425$ can be reached with zero error, which beats the maximum classical sum-rate capacity of $\frac{3}{2}$; see Theorem~\ref{theo:BAC}. For noisy channels, where the zero-error non-signaling assisted capacity region is trivial, we can use concatenated codes to obtain achievable points in the capacity region; see Proposition~\ref{prop:NumericalMethod}. Applied to a noisy version of the binary adder channel, we show that non-signaling assistance still improves the sum-rate capacity. 

In order to find outer bounds, we define a relaxed notion of non-signaling assistance and characterize its capacity region by a single-letter expression, which is the same as the well-known expression for the capacity of the MAC (see Theorem~\ref{theo:capacity}) except that the inputs $X_1$ and $X_2$ are not required to be independent; see Theorem~\ref{theo:CharaNSrelaxed}. This gives in particular an outer bound on the non-signaling assisted capacity region; see Corollary~\ref{cor:OB}. The main open problem that we leave is whether this outer bound on the non-signaling capacity region is tight. We give an example of a channel for which the relaxed notion of non-signaling assistance gives a strictly larger success probability than non-signaling assistance but we do not know if such a gap can persist for the capacity region.

We also study the case where non-signaling assistance is shared only between each sender and the receiver independently. Note that no assistance is shared between the senders. We show that this capacity region is the same as the capacity region without any assistance; see Theorem~\ref{theo:NSsr} and Corollary~\ref{cor:NSsr}. We note that a similar setting with independent entangled states between each sender and the receiver was studied by Hsieh et al.~\cite{HDW08}: a regularized characterization of the capacity region is obtained for any quantum MAC in this setting. It is simple to show using their result that for a classical MAC, this type of entanglement does not change the capacity region given in Theorem~\ref{theo:capacity}.

\paragraph{Organization} In Section~\ref{section:capacity}, we define precisely the different notions of MAC capacities: the classical capacity (i.e. without any assistance) as well as the non-signaling assisted capacity. In Section~\ref{section:complexity}, we address computational complexity questions concerning the probability of success of the best classical coding strategy and the best non-signaling strategy for a MAC. In Section~\ref{section:IB}, we develop numerical methods to find inner bounds on non-signaling assisted capacity regions, and apply those to the binary adder channel and a noisy variant. In Section~\ref{section:OB}, we define our relaxation of non-signaling assistance, we characterize its capacity region by a single-letter formula, and apply those to the binary adder channel. Finally, in Section~\ref{section:NSsr}, we show that the capacity region with non-signaling assistance shared only between each sender and the receiver independently is the same as without assistance.

\section{Multiple Access Channels Capacities}
\label{section:capacity}
\subsection{Classical Capacities}
Formally, a MAC $W$ is a conditional probability distribution depending on two inputs in $\mathcal{X}_1$ and $\mathcal{X}_2$, and an output in $\mathcal{Y}$, so $W := \left(W(y|x_1x_2)\right)_{x_1 \in \mathcal{X}_1,x_2  \in \mathcal{X}_2, y \in \mathcal{Y}}$ such that $W(y|x_1x_2) \geq 0$ and $\sum_{y \in \mathcal{Y}} W(y|x_1x_2) = 1$. We will denote such a MAC by $W : \mathcal{X}_1 \times \mathcal{X}_2 \rightarrow \mathcal{Y}$. The tensor product of two MACs $W: \mathcal{X}_1 \times \mathcal{X}_2 \rightarrow \mathcal{Y}$ and $W': \mathcal{X}'_1 \times \mathcal{X}'_2 \rightarrow \mathcal{Y}'$ is denoted by $W \otimes W' : (\mathcal{X}_1 \times \mathcal{X}'_1) \times (\mathcal{X}_2 \times \mathcal{X}'_2) \to \mathcal{Y} \times \mathcal{Y}'$ and defined by $(W \otimes W')(yy'|x_1x_1'x_2x_2') := W(y|x_1x_2) \cdot W'(y'|x_1'x_2')$. We denote by $W^{\otimes n}(y^n|x_1^nx_2^n) := \prod_{i=1}^nW(y_i|x_{1,i}x_{2,i})$, for $y^n := y_1 \ldots y_n \in \mathcal{Y}^n$, $x_1^n := x_{1,1} \ldots x_{1,n} \in \mathcal{X}_1^n$ and $x_2^n := x_{2,1} \ldots x_{2,n} \in \mathcal{X}_2^n$. We will use the notation $[k]:=\{1,\ldots,k\}$.

The coding problem for a MAC $W : \mathcal{X}_1 \times \mathcal{X}_2 \rightarrow \mathcal{Y}$ is the following: one wants to encode messages in $[k_1]$ into $\mathcal{X}_1$ and messages in $[k_2]$ into $\mathcal{X}_2$ independently, which will be given as input to the channel $W$. This results in a random output in $\mathcal{Y}$, which one needs to decode back into the corresponding messages in $[k_1]$ and $[k_2]$. We will call $e_1 : [k_1] \rightarrow \mathcal{X}_1$ the first encoder, $e_2 : [k_2] \rightarrow \mathcal{X}_2$ the second encoder and $d : \mathcal{Y} \rightarrow [k_1] \times [k_2]$ the decoder. This is depicted in Figure~\ref{fig:MACcoding}.

\begin{figure}[!h]
\begin{center}
  \begin{tikzpicture}[auto, node distance=2cm,>=latex']
    \node [input, name=i1] {};
    \node [block, right of=i1] (e1) {$e_1$};
    \node [bigblock, below right of=e1] (W) {$W$};
    \node [block, below left of=W] (e2) {$e_2$};
    \node [input, left of=e2, name=i2] {};

    \node [block, right of=W] (d) {$d$};

    \draw [->] (e1) -- node[name=x1] {$x_1$} (W);
    \draw [->] (e2) -- node[name=x2] {$x_2$} (W);
    \draw [->] (W) -- node[name=y] {$y$} (d);
    \node [output, right of=d] (j1j2) {};

    \draw [draw,->] (i1) -- node {$i_1$} (e1);
    \draw [draw,->] (i2) -- node {$i_2$} (e2);
    \draw [draw,->] (d) -- node {$(j_1,j_2)$} (j1j2);
  \end{tikzpicture}
\end{center}
\label{fig:MACcoding}
\caption{Coding for a MAC $W$.}
\end{figure}
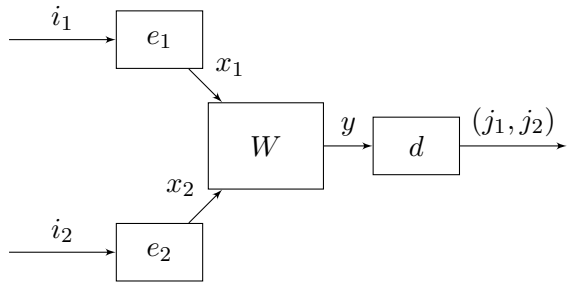

We want to maximize over all encoders $e_1,e_2$ and decoders $d$ the probability of successfully encoding and decoding the messages through $W$, i.e. the probability that $j_1 = i_1$ and  $j_2 = i_2$, given that the input messages are taken uniformly in $[k_1]$ and $[k_2]$. We call this quantity $\mathrm{S}(W,k_1,k_2)$, which is characterized by the following optimization program:

\begin{equation}
\label{eq:def-succ-prob-mac}
  \begin{aligned}
    \mathrm{S}(W,k_1,k_2) := &&\underset{e_1,e_2,d}{\maxi} &&& \frac{1}{k_1k_2} \sum_{i_1,i_2,x_1,x_2,y} W(y|x_1x_2)e_1(x_1|i_1)e_2(x_2|i_2)d(i_1i_2|y)\\
    &&\st &&& \sum_{x_1 \in \mathcal{X}_1} e_1(x_1|i_1) = 1, \forall i_1 \in [k_1]\\
    &&&&&  \sum_{x_2 \in \mathcal{X}_2} e_2(x_2|i_2) = 1, \forall i_2 \in [k_2]\\
    &&&&& \sum_{j_1 \in [k_1],j_2 \in [k_2]} d(j_1j_2|y) = 1, \forall y \in \mathcal{Y}\\
    &&&&& e_1(x_1|i_1), e_2(x_2|i_2), d(j_1j_2|y) \geq 0
  \end{aligned}
\end{equation}

\begin{proof}
One should note that we allow randomized encoders and decoders for generality reasons, although the value of the program is not changed as it is convex. Besides that remark, let us name $I_1,I_2,J_1,J_2,X_1,X_2,Y$ the random variables corresponding to $i_1,i_2,j_1,j_2,x_1,x_2,y$ in the coding and decoding process. Then, for given $e_1,e_2,d$ and $W$, the objective value of the previous program is:

\begin{equation*}
  \begin{aligned}
    &\mathbb{P}\left(J_1 = I_1, J_2 = I_2\right) = \frac{1}{k_1k_2}\sum_{i_1,i_2} \mathbb{P}\left(J_1 = I_1, J_2 = I_2|I_1=i_1,I_2=i_2\right)\\
    &= \frac{1}{k_1k_2}\sum_{i_1,i_2,x_1,x_2}e_1(x_1|i_1)e_2(x_2|i_2) \mathbb{P}\left(J_1 = i_1, J_2 = i_2|I_1=i_1,I_2=i_2,X_1=x_1,X_2=x_2\right)\\
    &= \frac{1}{k_1k_2}\sum_{i_1,i_2,x_1,x_2,y}W(y|x_1x_2)e_1(x_1|i_1)e_2(x_2|i_2) \mathbb{P}\left(J_1 = i_1,J_2=i_2|I_1=i_1,I_2=i_2,X_1=x_1,X_2=x_2,Y=y\right)\\
    &= \frac{1}{k_1k_2}\sum_{i_1,i_2,x_1,x_2,y}W(y|x_1x_2)e_1(x_1|i_1)e_2(x_2|i_2)d(i_1,i_2|y) \ .\\
  \end{aligned}
\end{equation*}
\end{proof}

Since MACs are more general than point-to-point channels (by defining $W(y|x_1x_2):=\hat{W}(y|x_1)$ for $\hat{W}$ a point-to-point channel and looking only at its first input), computing $\mathrm{S}(W,k_1,k_2)$ is \textrm{NP}-hard, and it is even \textrm{NP}-hard to approximate $\mathrm{S}(W,k_1,k_2)$ within a better ratio than $\left(1-e^{-1}\right)$, as a consequence of the hardness result on $\mathrm{S}(W,k)$ shown in~\cite{BF18}.

The (classical) capacity of a MAC, as defined for example in~\cite{CT01}, can be reformulated in the following way:
\begin{defi}[Capacity Region $\mathcal{C}(W)$ of a MAC $W$]
  A rate pair $(R_1,R_2)$ is achievable if:
  \[ \underset{n \rightarrow +\infty}{\lim} \mathrm{S}(W^{\otimes n},\ceil{2^{R_1n}},\ceil{2^{R_2n}}) = 1 \ . \]
  We define the (classical) capacity region $\mathcal{C}(W)$ as the closure of the set of all achievable rate pairs.
\end{defi}

The capacity region $\mathcal{C}(W)$ is characterized by a single-letter formula:

\begin{theo}[Liao~\cite{Liao73} and Ahlswede~\cite{Ahlswede73}]
  \label{theo:capacity}
  $\mathcal{C}(W)$ is the closure of the convex hull of all rate pairs $(R_1,R_2)$ satisfying:
  \[ R_1 < I(X_1:Y|X_2)\ ,\ R_2 < I(X_2:Y|X_1)\ ,\ R_1+R_2 < I((X_1,X_2):Y) \ ,\]
  for $(X_1,X_2) \in \mathcal{X}_1 \times \mathcal{X}_2$ following a product law $P_{X_1} \times P_{X_2}$, and $Y \in \mathcal{Y}$ the outcome of $W$ on inputs $X_1,X_2$.
\end{theo}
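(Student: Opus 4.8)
The plan is to prove the two inclusions separately: achievability (the region on the right-hand side is contained in $\mathcal{C}(W)$) by a random coding argument, and the converse (every achievable rate pair lies in that region) via Fano's inequality; this is the standard argument, see e.g.~\cite{CT01}.

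For achievability, fix a product distribution $P_{X_1} \times P_{X_2}$ on $\mathcal{X}_1 \times \mathcal{X}_2$ and a rate pair $(R_1,R_2)$ strictly satisfying the three stated inequalities. I would draw $k_1 = \ceil{2^{R_1 n}}$ codewords $x_1^n(i_1)$ i.i.d.\ from $P_{X_1}^{\otimes n}$ and independently $k_2 = \ceil{2^{R_2 n}}$ codewords $x_2^n(i_2)$ i.i.d.\ from $P_{X_2}^{\otimes n}$, and decode $y^n$ to the unique pair $(i_1,i_2)$ such that $(x_1^n(i_1), x_2^n(i_2), y^n)$ is jointly $\epsilon$-typical. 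Averaging the error probability over the random codebooks, the analysis splits into the event that the transmitted triple is not jointly typical (vanishing by the law of large numbers) and the events that some wrong pair is jointly typical; the latter decomposes into wrong $i_1$ with correct $i_2$, correct $i_1$ with wrong $i_2$, and both wrong, whose probabilities are bounded respectively by $k_1 \cdot 2^{-n(I(X_1:Y|X_2)-\delta)}$, $k_2 \cdot 2^{-n(I(X_2:Y|X_1)-\delta)}$ and $k_1 k_2 \cdot 2^{-n(I((X_1,X_2):Y)-\delta)}$. The strict inequalities make all three vanish, so some deterministic code achieves success probability $\to 1$, i.e.\ $(R_1,R_2)$ is achievable. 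Since $\mathcal{C}(W)$ is closed by definition and achievable pairs are closed under convex combinations (time-sharing, realized by concatenating blocks of two codes), the closure of the convex hull over all product distributions is contained in $\mathcal{C}(W)$.

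For the converse, suppose $(R_1,R_2)$ is achievable and take a sequence of codes with $\mathrm{S}(W^{\otimes n}, k_j, k_j) \to 1$ for $k_j = \ceil{2^{R_j n}}$, with $I_1, I_2$ uniform and independent. Fano's inequality gives $H(I_1 \mid Y^n) \le n \epsilon_n$, $H(I_2 \mid Y^n) \le n \epsilon_n$ and $H(I_1 I_2 \mid Y^n) \le n \epsilon_n$ with $\epsilon_n \to 0$. From $n R_1 \le H(I_1) = H(I_1 \mid I_2) \le I(I_1 : Y^n \mid I_2) + n\epsilon_n$, using the chain rule, the memoryless structure of $W^{\otimes n}$ and the Markov chain $(I_1,I_2) \to (X_{1,i}, X_{2,i}) \to Y_i$, I would bound $I(I_1 : Y^n \mid I_2) \le \sum_{i=1}^n I(X_{1,i} : Y_i \mid X_{2,i})$; symmetrically $n R_2 \le \sum_i I(X_{2,i} : Y_i \mid X_{1,i}) + n\epsilon_n$ and $n(R_1+R_2) \le \sum_i I((X_{1,i},X_{2,i}) : Y_i) + n\epsilon_n$. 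Introducing a time-sharing variable $Q$ uniform on $[n]$, independent of everything, and setting $X_1 := X_{1,Q}$, $X_2 := X_{2,Q}$, $Y := Y_Q$, these read $R_1 \le I(X_1 : Y \mid X_2, Q) + \epsilon_n$, $R_2 \le I(X_2 : Y \mid X_1, Q) + \epsilon_n$, $R_1 + R_2 \le I((X_1,X_2) : Y \mid Q) + \epsilon_n$; crucially $X_1$ and $X_2$ are conditionally independent given $Q$ because $I_1 \perp I_2$ and the encoders act separately. Letting $n \to \infty$ and noting that conditioning on $Q$ is exactly a convex combination of single-letter regions, $(R_1,R_2)$ lies in the closed convex hull of the stated region.

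The typicality estimates and the chain-rule manipulations are routine; the step requiring the most care is the converse single-letterization — extracting the conditional mutual informations with the correct conditioning on $X_{2,i}$ (resp.\ $X_{1,i}$) out of $I(I_1 : Y^n \mid I_2)$, and then checking that the auxiliary variable $Q$ is eliminated by passing to the convex hull while preserving the product structure $X_1 \perp X_2 \mid Q$. One should also note that the $\ceil{\cdot}$ in $k_j = \ceil{2^{R_j n}}$ perturbs rates only by $o(1)$, and that strict versus non-strict inequalities are reconciled by the closure in the definition of $\mathcal{C}(W)$.
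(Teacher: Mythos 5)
This theorem is stated in the paper as a cited classical result (Liao, Ahlswede, and~\cite{CT01}) and is not proved there, so there is no in-paper proof to compare against; your proposal is the standard textbook argument and it is sound. Both halves are correct: the random-coding/joint-typicality achievability with the three error events bounded by $2^{-n(I(X_1:Y|X_2)-\delta)}$, $2^{-n(I(X_2:Y|X_1)-\delta)}$, $2^{-n(I((X_1,X_2):Y)-\delta)}$ (the first two using the independence of the two codebooks), and the Fano-plus-single-letterization converse. It is worth noting that the converse machinery you describe is exactly what the paper reuses for its relaxed non-signaling outer bound: your inequality $I(X_1^n:Y^n|X_2^n)\le\sum_i I(X_{1,i}:Y_i|X_{2,i})$ is the paper's Lemma~\ref{lem:multiletterOB}, and the elimination of the time-sharing variable $Q$ is handled in Proposition~\ref{prop:OBNSrelaxed}. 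The one step you state a bit glibly — ``conditioning on $Q$ is exactly a convex combination of single-letter regions'' — is the only place that genuinely needs care: the pentagon defined by $R_1\le I(X_1:Y|X_2,Q)$, $R_2\le I(X_2:Y|X_1,Q)$, $R_1+R_2\le I((X_1,X_2):Y|Q)$ is not termwise a convex combination of the pentagons for each $Q=q$; one argues instead that its corner points, e.g.\ $(I(X_1:Y|X_2,Q),\,I(X_2:Y|Q))$, decompose as convex combinations of the corresponding corner points for each $q$, which do lie in the single-letter regions with the product law $P_{X_1|Q=q}\times P_{X_2|Q=q}$. With that made explicit, your proof is complete.
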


For the zero-error (classical) capacity, this leads to the following definition:
\begin{defi}[Zero-Error Capacity Region $\mathcal{C}_0(W)$ of a MAC $W$]
  A rate pair $(R_1,R_2)$ is achievable with zero-error if:
  \[ \exists n_0  \in \mathbb{N}^*, \forall n \geq n_0, \mathrm{S}(W^{\otimes n},\ceil{2^{R_1n}},\ceil{2^{R_2n}}) = 1 \ . \]
  We define the zero-error (classical) capacity region $\mathcal{C}_0(W)$ as the closure of the set of all achievable rate pairs with zero-error.
\end{defi}

We will also consider what we call the sum success probability $\mathrm{S}_{\text{sum}}(W,k_1,k_2)$, defined using $\frac{\mathbb{P}\left(J_1=I_1\right)+\mathbb{P}\left(J_2=I_2\right)}{2}$ rather than $\mathbb{P}\left(J_1=I_1,J_2=I_2\right)$ as an objective value, which leads to the following optimization program:
\begin{equation}
  \begin{aligned}
    \mathrm{S}_{\text{sum}}(W,k_1,k_2) := &&\underset{e_1,e_2,d_1,d_2}{\maxi} &&& \frac{1}{2k_1k_2} \sum_{i_1,i_2,x_1,x_2,y} W(y|x_1x_2)e_1(x_1|i_1)e_2(x_2|i_2)d_1(i_1|y)\\
    &&+&&& \frac{1}{2k_1k_2} \sum_{i_1,i_2,x_1,x_2,y} W(y|x_1x_2)e_1(x_1|i_1)e_2(x_2|i_2)d_2(i_2|y)\\
    &&\st &&& \sum_{x_1 \in \mathcal{X}_1} e_1(x_1|i_1) = 1, \forall i_1 \in [k_1]\\
    &&&&&  \sum_{x_2 \in \mathcal{X}_2} e_2(x_2|i_2) = 1, \forall i_2 \in [k_2]\\
    &&&&& \sum_{j_1 \in [k_1]} d_1(j_1|y) = 1, \forall y \in \mathcal{Y}\\
    &&&&& \sum_{j_2 \in [k_2]} d_2(j_2|y) = 1, \forall y \in \mathcal{Y}\\
    &&&&& e_1(x_1|i_1), e_2(x_2|i_2), d_1(j_1|y), d_2(j_2|y) \geq 0
  \end{aligned}
\end{equation}

Note that we used independent decoders $d_1(j_1|y),d_2(j_2|y)$ rather than a global $d(j_1j_2|y)$ here. This does not change the value of the optimization program. Indeed, since the program is convex, an optimal solution can be found on the extremal points of the search space. Thus, if we had used the variable $d(j_1j_2|y)$, we could always take it to be a function $d$ from $\mathcal{Y}$ to $[k_1]\times[k_2]$. Taking $d_1,d_2$ as the first and second coordinates of that function satisfies the equality $d(j_1j_2|y) = d_1(j_1|y)d_2(j_2|y)$, and therefore, the value of the program is the same in both cases. Note that it is also true for the program computing $\mathrm{S}(W,k_1,k_2)$.

As for the usual (joint) success probability, we can define its capacity region:

\begin{defi}[Sum-Capacity Region $\mathcal{C}_{\text{sum}}(W)$ of a MAC $W$]
  A rate pair $(R_1,R_2)$ is sum-achievable if:
  \[ \underset{n \rightarrow +\infty}{\lim} \mathrm{S}_{\text{sum}}(W^{\otimes n},\ceil{2^{R_1n}},\ceil{2^{R_2n}}) = 1 \ . \]
  We define the sum-capacity region $\mathcal{C}_{\text{sum}}(W)$ as the closure of the set of all sum-achievable rate pairs.
\end{defi}

However, it turns out those two notions of success define the same capacity region:
\begin{prop}
  \label{prop:CapaSumJoint}
  $\mathcal{C}(W) = \mathcal{C}_{\text{sum}}(W)$
\end{prop}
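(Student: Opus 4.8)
The plan is to sandwich $\mathrm{S}(W,k_1,k_2)$ between $\mathrm{S}_{\text{sum}}(W,k_1,k_2)$ and an affine function of it, tightly enough that $\mathrm{S}(W^{\otimes n},\cdot,\cdot)\to 1$ and $\mathrm{S}_{\text{sum}}(W^{\otimes n},\cdot,\cdot)\to 1$ become equivalent statements. Concretely, I would establish
\[ 2\,\mathrm{S}_{\text{sum}}(W,k_1,k_2) - 1 \;\leq\; \mathrm{S}(W,k_1,k_2) \;\leq\; \mathrm{S}_{\text{sum}}(W,k_1,k_2) \]
for every $W$, $k_1$, $k_2$. Since both quantities lie in $[0,1]$, this sandwich shows that the (pre-closure) set of achievable rate pairs is literally the same for the two notions, and hence $\mathcal{C}(W)=\mathcal{C}_{\text{sum}}(W)$ after taking closures.

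For the upper bound $\mathrm{S}\leq\mathrm{S}_{\text{sum}}$, I would take an optimal triple $(e_1,e_2,d)$ for $\mathrm{S}(W,k_1,k_2)$ and set $d_1(j_1|y):=\sum_{j_2}d(j_1j_2|y)$ and $d_2(j_2|y):=\sum_{j_1}d(j_1j_2|y)$, which are valid stochastic decoders. Since all quantities are nonnegative, $d_1(i_1|y)\geq d(i_1i_2|y)$ and $d_2(i_2|y)\geq d(i_1i_2|y)$ termwise, so plugging $(e_1,e_2,d_1,d_2)$ into the $\mathrm{S}_{\text{sum}}$ program gives objective value at least $\mathrm{S}(W,k_1,k_2)$. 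In probabilistic language this is just $\mathbb{P}(J_1=I_1)\geq\mathbb{P}(J_1=I_1,J_2=I_2)$ together with the analogous inequality for $J_2$.

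For the lower bound, I would take an optimal $(e_1,e_2,d_1,d_2)$ for $\mathrm{S}_{\text{sum}}(W,k_1,k_2)$ and form the product decoder $d(j_1j_2|y):=d_1(j_1|y)\,d_2(j_2|y)$, again a valid decoder. Writing $\mathbb{P}(J_1=I_1)=1-\delta_1$ and $\mathbb{P}(J_2=I_2)=1-\delta_2$ (so $\delta_1+\delta_2 = 2(1-\mathrm{S}_{\text{sum}}(W,k_1,k_2))$), the union bound gives $\mathbb{P}(J_1=I_1,J_2=I_2)\geq 1-\delta_1-\delta_2 = 2\,\mathrm{S}_{\text{sum}}(W,k_1,k_2)-1$, which certifies $\mathrm{S}(W,k_1,k_2)\geq 2\,\mathrm{S}_{\text{sum}}(W,k_1,k_2)-1$. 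Combining: if $(R_1,R_2)$ is achievable then $\mathrm{S}(W^{\otimes n},\ceil{2^{R_1n}},\ceil{2^{R_2n}})\to 1$, so by the upper bound $\mathrm{S}_{\text{sum}}$ of the same arguments $\to 1$; conversely $\mathrm{S}_{\text{sum}}\to 1$ forces $\mathrm{S}\to 1$ by the lower bound. Taking closures yields $\mathcal{C}(W)=\mathcal{C}_{\text{sum}}(W)$.

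There is no genuinely hard step here; the only point worth flagging is the observation behind the lower bound, namely that an average of two numbers in $[0,1]$ tending to $1$ forces each of them to tend to $1$, so a good sum-code is automatically a good joint-code up to a factor $2$ loss in the error. As the paper already remarks for the $\mathrm{S}_{\text{sum}}$ program itself, passing between a joint decoder $d$ and a pair of decoders $(d_1,d_2)$ — by marginalization in one direction and by the product $d=d_1d_2$ in the other — is harmless, which is exactly what makes both constructions above legitimate.
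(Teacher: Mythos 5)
Your proposal is correct and is essentially the paper's own argument: the sandwich $2\,\mathrm{S}_{\text{sum}}-1\leq \mathrm{S}\leq \mathrm{S}_{\text{sum}}$ is exactly the paper's $\mathrm{E}_{\text{sum}}\leq \mathrm{E}\leq 2\mathrm{E}_{\text{sum}}$ rewritten in terms of success probabilities, with your union bound playing the role of the pointwise inequality $1-xy\leq(1-x)+(1-y)$ applied to the product decoder. No issues.
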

\begin{proof}
  Let us focus on error probabilities rather than success ones. Call them respectively $\mathrm{E}(W,k_1,k_2) := 1-\mathrm{S}(W,k_1,k_2)$ and $\mathrm{E}_{\text{sum}}(W,k_1,k_2) := 1-\mathrm{S}_{\text{sum}}(W,k_1,k_2)$. Let us fix a solution $e_1,d_1,e_2,d_2$ of the optimization program computing $\mathrm{S}(W,k_1,k_2)$. Let us remark first that:

  \[ \sum_{i_1,i_2,x_1,x_2,y} W(y|x_1x_2)e_1(x_1|i_1)e_2(x_2|i_2) = k_1k_2\ , \]

  thus, the value of the maximum error for those encoders and decoders is:

\begin{equation}
  \begin{aligned}
    &\mathrm{E}(W,k_1,k_2,e_1,d_1,e_2,d_2) := 1 -  \frac{1}{k_1k_2}\left(\sum_{i_1,i_2,x_1,x_2,y} W(y|x_1x_2)e_1(x_1|i_1)e_2(x_2|i_2)d_1(i_1|y)d_2(i_2|y)\right)\\
    &=\frac{1}{k_1k_2}\left(\sum_{i_1,i_2,x_1,x_2,y} W(y|x_1x_2)e_1(x_1|i_1)e_2(x_2|i_2)-\sum_{i_1,i_2,x_1,x_2,y} W(y|x_1x_2)e_1(x_1|i_1)e_2(x_2|i_2)d_1(i_1|y)d_2(i_2|y)\right)\\
    &=\frac{1}{k_1k_2}\left(\sum_{i_1,i_2,x_1,x_2,y} W(y|x_1x_2)e_1(x_1|i_1)e_2(x_2|i_2)\left[1-d_1(i_1|y)d_2(i_2|y)\right]\right) \ .\\
  \end{aligned}
\end{equation}

Similarly, the value of the sum error for those encoder and decoders is:
\begin{equation}
  \begin{aligned}
    \mathrm{E}_{\text{sum}}(W,k_1,k_2,e_1,d_1,e_2,d_2) &:= 1 -  \frac{1}{k_1k_2}\left(\sum_{i_1,i_2,x_1,x_2,y} W(y|x_1x_2)e_1(x_1|i_1)e_2(x_2|i_2)\frac{d_1(i_1|y)+d_2(i_2|y)}{2}\right)\\
    &= \frac{1}{k_1k_2}\left(\sum_{i_1,i_2,x_1,x_2,y} W(y|x_1x_2)e_1(x_1|i_1)e_2(x_2|i_2)\left[1-\frac{d_1(i_1|y)+d_2(i_2|y)}{2}\right]\right) \ .
  \end{aligned}
\end{equation}

However, for $x,y \in [0,1]$, we have that:

\[1-xy \geq \max\left(1-x,1-y\right) \geq 1-\frac{x+y}{2} \ , \]
and:
\[1-xy \leq (1-x) + (1-y) = 2  \left(1-\frac{x+y}{2}\right) \ . \]

This means that, for all $e_1,d_1,e_2,d_2$, we have:
\[ \mathrm{E}_{\text{sum}}(W,k_1,k_2,e_1,d_1,e_2,d_2) \leq \mathrm{E}(W,k_1,k_2,e_1,d_1,e_2,d_2) \leq 2\mathrm{E}_{\text{sum}}(W,k_1,k_2,e_1,d_1,e_2,d_2) \ ,\]
so, maximizing over all $e_1,d_1,e_2,d_2$, we get:
\[ \mathrm{E}_{\text{sum}}(W,k_1,k_2) \leq \mathrm{E}(W,k_1,k_2) \leq 2\mathrm{E}_{\text{sum}}(W,k_1,k_2) \ .\]

Thus, up to a multiplicative factor $2$, the error is the same. In particular, when one of those errors tends to zero, the other one tends to zero as well. This implies that the capacity regions are the same.
\end{proof}

\subsection{Non-Signaling Assisted Capacities}
\paragraph{Three-party non-signaling assistance} We now consider the case where the senders and the receiver are given non-signaling assistance. This resource, which is a theoretical but easier to manipulate generalization of quantum entanglement, can be characterized as follows. A tripartite non-signaling box is described by a joint conditional probability distribution $P(x_1x_2(j_1j_2)|i_1i_2y)$ such that the marginal from any two parties is independent from the removed party's input, i.e., we have:
\begin{equation}
  \begin{aligned}
    &&\forall x_2,j_1,j_2,i_1,i_2,y,i_1', &&&\sum_{x_1} P(x_1x_2(j_1j_2)|i_1i_2y) = \sum_{x_1} P(x_1x_2(j_1j_2)|i_1'i_2y) \ ,\\
    &&\forall x_1,j_1,j_2,i_1,i_2,y,i_2', &&&\sum_{x_2} P(x_1x_2(j_1j_2)|i_1i_2y) = \sum_{x_2} P(x_1x_2(j_1j_2)|i_1i_2'y) \ ,\\
    &&\forall x_1,x_2,i_1,i_2,y,y', &&&\sum_{j_1,j_2} P(x_1x_2(j_1j_2)|i_1i_2y) = \sum_{j_1,j_2} P(x_1x_2(j_1j_2)|i_1i_2y') \ .\\
  \end{aligned}
\end{equation}

This implies that one can consider for example $P(x_1x_2|i_1i_2)$ since it does not depend on $y$, or even $P(x_1|i_1)$ since it does not depend on $i_2,y$. Then, in our coding scenario, when the senders and the receiver are given non-signaling assistance, it means that they share a tripartite non-signaling box, the behavior of which is described by $P$. In this case, the expression $e_1(x_1|i_1)e_2(x_2|i_2)d(j_1j_2|y)$ in \eqref{eq:def-succ-prob-mac} is replaced by $P(x_1x_2(j_1j_2)|i_1i_2y)$, as depicted in Figure~\ref{fig:MACNScoding}.

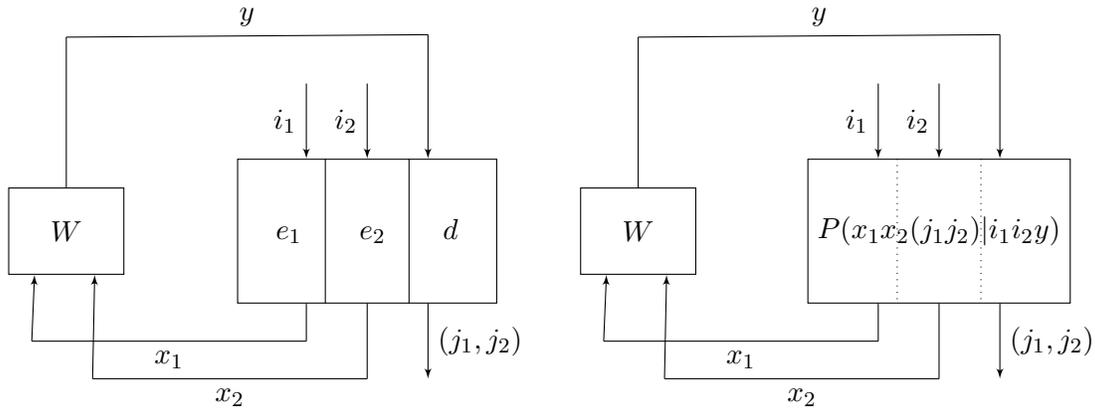
\begin{figure}[!h]
  \begin{center}
      \begin{tikzpicture}[auto, node distance=2cm,>=latex']
    \node [input, name=i1] {};
    \node [input, name=i2] {};
    \node [Bigblock, below of=i2] (P) {$\ \ \ e_1\ \ \ \ \ \ e_2\ \ \ \ \ \ d\ \ \ $};

    \draw (P.120) -- (P.240);
    \draw (P.60) -- (P.300);

    \draw [<-] (P.130) -- node {$i_1$} +(0pt,1cm);
    \draw [<-] (P.90) -- node {$i_2$} +(0pt,1cm);
    \coordinate (ybis) at ($ (P.50) + (0pt,1.65cm) $);
    \draw [<-] (P.50) -- (ybis);
    \coordinate (x1) at ($ (P.230)+(0pt,-0.5cm) $);
    \draw [-] (P.230) -- (x1);
    \coordinate (x2) at ($ (P.270)+(0pt,-1cm) $);
    \draw [-] (P.270) -- (x2);
    \draw [->] (P.310) -- node {$(j_1,j_2)$} +(0pt,-1cm);

    \node [left of=P] (A) {};
    \node [bigblock, left of=A] (W) {$W$};
    \coordinate (x1bis) at ($ (x1)+(-3.65cm,0pt) $);
    \coordinate (x2bis) at ($ (x2)+(-3.65cm,0pt) $);
    \draw [-] (x1) -- node {$x_1$} (x1bis);
    \draw [-] (x2) -- node {$x_2$} (x2bis);
    \draw [->] (x1bis) -- (W.234);
    \draw [->] (x2bis) -- (W.303);
    \coordinate (y) at ($ (W.north)+(0pt,2cm) $) ;
    \draw (W.north) -- (y);
    \draw (y) -- node {$y$} (ybis);
      \end{tikzpicture}
      \ \ \ \
\begin{tikzpicture}[auto, node distance=2cm,>=latex']
    \node [input, name=i1] {};
    \node [input, name=i2] {};
    \node [Bigblock, below of=i2] (P) {$P(x_1x_2(j_1j_2)|i_1i_2y)$};
    
    \draw[dotted] (P.120) -- (P.240);
    \draw[dotted] (P.60) -- (P.300);

    \draw [<-] (P.130) -- node {$i_1$} +(0pt,1cm);
    \draw [<-] (P.90) -- node {$i_2$} +(0pt,1cm);
    \coordinate (ybis) at ($ (P.50) + (0pt,1.65cm) $);
    \draw [<-] (P.50) -- (ybis);
    \coordinate (x1) at ($ (P.230)+(0pt,-0.5cm) $);
    \draw [-] (P.230) -- (x1);
    \coordinate (x2) at ($ (P.270)+(0pt,-1cm) $);
    \draw [-] (P.270) -- (x2);
    \draw [->] (P.310) -- node {$(j_1,j_2)$} +(0pt,-1cm);

    \node [left of=P] (A) {};
    \node [bigblock, left of=A] (W) {$W$};
    \coordinate (x1bis) at ($ (x1)+(-3.65cm,0pt) $);
    \coordinate (x2bis) at ($ (x2)+(-3.65cm,0pt) $);
    \draw [-] (x1) -- node {$x_1$} (x1bis);
    \draw [-] (x2) -- node {$x_2$} (x2bis);
    \draw [->] (x1bis) -- (W.234);
    \draw [->] (x2bis) -- (W.303);
    \coordinate (y) at ($ (W.north)+(0pt,2cm) $) ;
    \draw (W.north) -- (y);
    \draw (y) -- node {$y$} (ybis);
\end{tikzpicture}
\end{center}
\caption{A non-signaling box $P$ replacing $e_1,e_2$ and $d$ in the coding problem for the MAC $W$.}
\label{fig:MACNScoding}
\end{figure}

The cyclicity of Figure~\ref{fig:MACNScoding} is at first sight counter-intuitive. Note first that $P$ being a non-signaling box is completely independent from $W$: in particular, the variable $y$ does not need to follow any law in the definition of $P$ being a non-signaling box. Therefore, the remaining ambiguity is the apparent need to encode and decode at the same time. However, since $P$ is a non-signaling box, we do not need to do both at the same time. Indeed, $\forall y, P(x_1x_2|i_1i_2) = P(x_1x_2|i_1i_2y)$ by the non-signaling property of $P$. Thus, one can get the outputs $x_1,x_2$ on inputs $i_1,i_2$ without access to $y$, as that knowledge won't affect the laws of $x_1,x_2$. Then $y$ follows the law given by $W$ given those $x_1,x_2$. Finally, given access to $y$, the decoding process is described by:
          \[ P((j_1 j_2)|i_1 i_2 y x_1 x_2) = \frac{P(x_1 x_2 (j_1 j_2)|i_1 i_2 y)}{P(x_1 x_2 |i_1 i_2 y)} = \frac{P(x_1 x_2 (j_1 j_2)|i_1 i_2 y)}{P(x_1 x_2 |i_1 i_2)} \ , \]
          so we recover globally $P((j_1 j_2)|i_1 i_2 y x_1 x_2) \times P(x_1 x_2 |i_1 i_2) = P(x_1 x_2 (j_1 j_2)|i_1 i_2 y)$ the prescribed conditional probability. 
          The non-signaling condition ensures that it is possible to consider the conditional probabilities of each party independently. This clarifies how one can effectively encode and then decode messages through a non-signaling box.

As in the unassisted case, we want to maximize over all non-signaling box $P$ the probability of successfully encoding and decoding the messages through $W$, i.e. the probability that $j_1 = i_1$ and  $j_2 = i_2$, given that the input messages are taken uniformly in $[k_1]$ and $[k_2]$. We call this quantity $\mathrm{S^{\mathrm{NS}}}(W,k_1,k_2)$, which is characterized by the following optimization program:

\begin{equation}
  \begin{aligned}
    \mathrm{S}^{\mathrm{NS}}(W,k_1,k_2) := &&\underset{P}{\maxi} &&& \frac{1}{k_1k_2} \sum_{i_1,i_2,x_1,x_2,y} W(y|x_1x_2)P(x_1x_2(i_1i_2)|i_1i_2y)\\
    &&\st &&& \sum_{x_1} P(x_1x_2(j_1j_2)|i_1i_2y) = \sum_{x_1} P(x_1x_2(j_1j_2)|i_1'i_2y)\\
    &&&&& \sum_{x_2} P(x_1x_2(j_1j_2)|i_1i_2y) = \sum_{x_2} P(x_1x_2(j_1j_2)|i_1i_2'y)\\
    &&&&& \sum_{j_1,j_2} P(x_1x_2(j_1j_2)|i_1i_2y) = \sum_{j_1,j_2} P(x_1x_2(j_1j_2)|i_1i_2y')\\
    &&&&& \sum_{x_1,x_2,j_1,j_2} P(x_1x_2(j_1j_2)|i_1i_2y) = 1\\
    &&&&& P(x_1x_2(j_1j_2)|i_1i_2y) \geq 0
  \end{aligned}
\end{equation}

Since it is given as a linear program, the complexity of computing $\mathrm{S}^{\mathrm{NS}}(W,k_1,k_2)$ is polynomial in the number of variables and constraints (see for instance Section 7.1 of~\cite{LinearProgramming}), which is a polynomial in $|\mathcal{X}_1|,|\mathcal{X}_2|,|\mathcal{Y}|,k_1$ and $k_2$. Also, as it is easy to check that a classical strategy is a particular case of a non-signaling assisted strategy, we have that $\mathrm{S}^{\mathrm{NS}}(W,k_1,k_2) \geq \mathrm{S}(W,k_1,k_2)$.

We have then the same definitions of capacity and zero-error capacity:
\begin{defi}[Non-Signaling Assisted Capacity Region $\mathcal{C}^{\mathrm{NS}}(W)$ of a MAC $W$]
  A rate pair $(R_1,R_2)$ is achievable with non-signaling assistance if:
  \[ \underset{n \rightarrow +\infty}{\lim} \mathrm{S}^{\mathrm{NS}}(W^{\otimes n},\ceil{2^{R_1n}},\ceil{2^{R_2n}}) = 1 \ . \]
  We define the non-signaling assisted capacity region $\mathcal{C}^{\mathrm{NS}}(W)$ as the closure of the set of all achievable rate pairs with non-signaling assistance.
\end{defi}

\begin{defi}[Zero-Error Non-Signaling Assisted Capacity Region $\mathcal{C}^{\mathrm{NS}}_0(W)$ of a MAC $W$]
  A rate pair $(R_1,R_2)$ is achievable with zero-error and non-signaling assistance if:
  \[ \exists n_0 \in \mathbb{N}^*, \forall n \geq n_0, \mathrm{S}^{\mathrm{NS}}(W^{\otimes n},\ceil{2^{R_1n}},\ceil{2^{R_2n}}) = 1 \ . \]
  We define the zero-error non-signaling assisted capacity region $\mathcal{C}^{\mathrm{NS}}_0(W)$ as the closure of the set of all achievable rate pairs with zero-error and non-signaling assistance.
\end{defi}

\paragraph{Independent non-signaling assistance} One can also consider the case where non-signaling assistance is shared independently between the first sender and the receiver as well as between the second encoder and the receiver, which we call independent non-signaling assistance. The precise scenario is depicted in Figure~\ref{fig:MACNSsrcoding}:

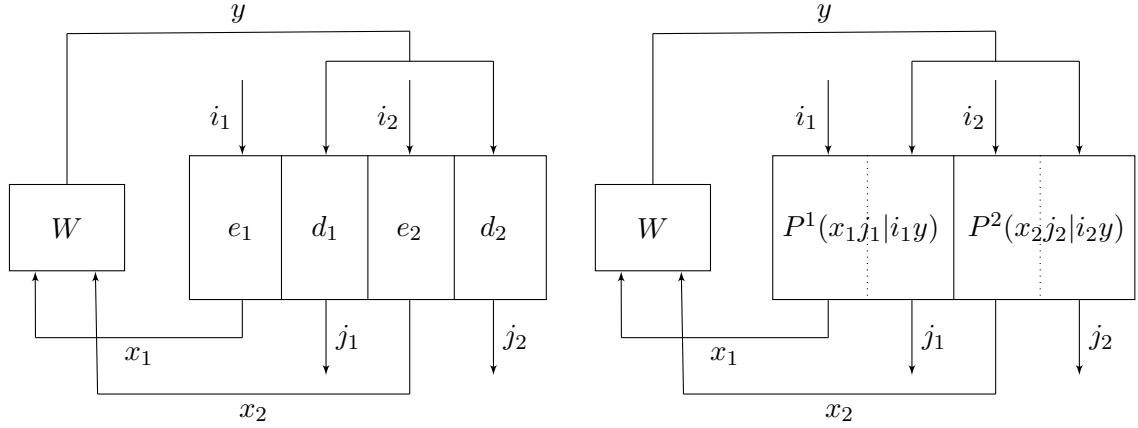
\begin{figure}[!h]
  \begin{center}
      \begin{tikzpicture}[auto, node distance=2cm,>=latex']
    \node [input, name=i1] {};
    \node [input, name=i2] {};
    \node [Bigblock, below of=i2] (P) {$\ \ \ e_1\ \ \ \ \ \ d_1\ \ \ \ \ \ e_2\ \ \ \ \ \ d_2\ \ \ $};

    \draw (P.140) -- (P.220);
    \draw (P.90) -- (P.270);
    \draw (P.40) -- (P.320);

    \draw [<-] (P.150) -- node {$i_1$} +(0pt,1cm);
    \draw [<-] (P.60) -- node {$i_2$} +(0pt,1cm);
    \coordinate (y1bis) at ($ (P.120) + (0pt,1.25cm) $);
    \draw [<-] (P.120) -- (y1bis);
    \coordinate (y2bis) at ($ (P.30) + (0pt,1.25cm) $);
    \draw [<-] (P.30) -- (y2bis);
    \draw [-] (y1bis) -- (y2bis);
    \coordinate (ybis) at ($ (P.60) + (0pt,1.65cm) $);
    \draw [-] (ybis) --  +(0pt,-0.4cm);
    
    \coordinate (x1) at ($ (P.210)+(0pt,-0.5cm) $);
    \draw [-] (P.210) -- (x1);
    \coordinate (x2) at ($ (P.300)+(0pt,-1.25cm) $);
    \draw [-] (P.300) -- (x2);
    \draw [->] (P.240) -- node {$j_1$} +(0pt,-1cm);
    \draw [->] (P.330) -- node {$j_2$} +(0pt,-1cm);
    
    \node [left of=P] (A) {};
    \node [bigblock, left of=A] (W) {$W$};
    \coordinate (x1bis) at ($ (x1)+(-2.75cm,0pt) $);
    \coordinate (x2bis) at ($ (x2)+(-4.15cm,0pt) $);
    \draw [-] (x1) -- node {$x_1$} (x1bis);
    \draw [-] (x2) -- node {$x_2$} (x2bis);
    \draw [->] (x1bis) -- (W.234);
    \draw [->] (x2bis) -- (W.303);
    \coordinate (y) at ($ (W.north)+(0pt,2cm) $) ;
    \draw (W.north) -- (y);
    \draw (y) -- node {$y$} (ybis);
      \end{tikzpicture}
      \ \ \ \
\begin{tikzpicture}[auto, node distance=2cm,>=latex']
    \node [input, name=i1] {};
    \node [input, name=i2] {};
    \node [Bigblock, below of=i2] (P) {$P^1(x_1j_1|i_1y)\ \ \ P^2(x_2j_2|i_2y)$};;
    
    \draw[dotted] (P.140) -- (P.220);
    \draw (P.90) -- (P.270);
    \draw[dotted] (P.40) -- (P.320);

    \draw [<-] (P.150) -- node {$i_1$} +(0pt,1cm);
    \draw [<-] (P.60) -- node {$i_2$} +(0pt,1cm);
    \coordinate (y1bis) at ($ (P.120) + (0pt,1.25cm) $);
    \draw [<-] (P.120) -- (y1bis);
    \coordinate (y2bis) at ($ (P.30) + (0pt,1.25cm) $);
    \draw [<-] (P.30) -- (y2bis);
    \draw [-] (y1bis) -- (y2bis);
    \coordinate (ybis) at ($ (P.60) + (0pt,1.65cm) $);
    \draw [-] (ybis) --  +(0pt,-0.4cm);
    
    \coordinate (x1) at ($ (P.210)+(0pt,-0.5cm) $);
    \draw [-] (P.210) -- (x1);
    \coordinate (x2) at ($ (P.300)+(0pt,-1.25cm) $);
    \draw [-] (P.300) -- (x2);
    \draw [->] (P.240) -- node {$j_1$} +(0pt,-1cm);
    \draw [->] (P.330) -- node {$j_2$} +(0pt,-1cm);
    
    \node [left of=P] (A) {};
    \node [bigblock, left of=A] (W) {$W$};
    \coordinate (x1bis) at ($ (x1)+(-2.75cm,0pt) $);
    \coordinate (x2bis) at ($ (x2)+(-4.15cm,0pt) $);
    \draw [-] (x1) -- node {$x_1$} (x1bis);
    \draw [-] (x2) -- node {$x_2$} (x2bis);
    \draw [->] (x1bis) -- (W.234);
    \draw [->] (x2bis) -- (W.303);
    \coordinate (y) at ($ (W.north)+(0pt,2cm) $) ;
    \draw (W.north) -- (y);
    \draw (y) -- node {$y$} (ybis);
\end{tikzpicture}
\end{center}
\caption{Non-signaling boxes $P^1,P^2$ replacing $e_1,d_1$ and $e_2,d_2$ in the coding problem for the MAC $W$.}
\label{fig:MACNSsrcoding}
\end{figure}

This leads to the following definition of the success probability $\mathrm{S}^{\mathrm{NS}_{\mathrm{SR}}}(W,k_1,k_2)$:
\begin{equation}
  \begin{aligned}
    \mathrm{S}^{\mathrm{NS}_{\mathrm{SR}}}(W,k_1,k_2) := &&\underset{P^1, P^2}{\maxi} &&& \frac{1}{k_1k_2} \sum_{i_1,i_2,x_1,x_2,y} W(y|x_1x_2)P^1(x_1i_1|i_1y)P^2(x_2i_2|i_2y)\\
    &&\st &&& \sum_{x_1} P^1(x_1j_1|i_1y) = \sum_{x_1} P^1(x_1j_1|i_1'y)\\
    &&&&& \sum_{j_1} P^1(x_1j_1|i_1y) = \sum_{j_1} P^1(x_1j_1|i_1y')\\
    &&&&& \sum_{x_1,j_1} P^1(x_1j_1|i_1y) = 1\\
    &&&&& \sum_{x_2} P^2(x_2j_2|i_2y) = \sum_{x_2} P^2(x_2j_2|i_2'y)\\
    &&&&& \sum_{j_2} P^2(x_2j_2|i_2y) = \sum_{j_2} P^2(x_2j_2|i_2y')\\
    &&&&& \sum_{x_2,j_2} P^2(x_2j_2|i_2y) = 1\\
    &&&&& P^1(x_1j_1|i_1y), P^2(x_2j_2|i_2y) \geq 0
  \end{aligned}
  \end{equation}

As before, one can also consider the sum-success probability $\mathrm{S}_{\text{sum}}^{\mathrm{NS}_{\mathrm{SR}}}(W,k_1,k_2)$:
\begin{equation}
  \begin{aligned}
    \mathrm{S}_{\text{sum}}^{\mathrm{NS}_{\mathrm{SR}}}(W,k_1,k_2) := &&\underset{P^1, P^2}{\maxi} &&& \frac{1}{2k_1k_2} \sum_{i_1,i_2,x_1,x_2,y} W(y|x_1x_2)P^1(x_1i_1|i_1y)\sum_{j_2}P^2(x_2j_2|i_2y)\\
    &&+&&& \frac{1}{2k_1k_2} \sum_{i_1,i_2,x_1,x_2,y} W(y|x_1x_2)P^2(x_2i_2|i_2y)\sum_{j_1}P^1(x_1j_1|i_1y)\\
    &&\st &&& \sum_{x_1} P^1(x_1j_1|i_1y) = \sum_{x_1} P^1(x_1j_1|i_1'y)\\
    &&&&& \sum_{j_1} P^1(x_1j_1|i_1y) = \sum_{j_1} P^1(x_1j_1|i_1y')\\
    &&&&& \sum_{x_1,j_1} P^1(x_1j_1|i_1y) = 1\\
    &&&&& \sum_{x_2} P^2(x_2j_2|i_2y) = \sum_{x_2} P^2(x_2j_2|i_2'y)\\
    &&&&& \sum_{j_2} P^2(x_2j_2|i_2y) = \sum_{j_2} P^2(x_2j_2|i_2y')\\
    &&&&& \sum_{x_2,j_2} P^2(x_2j_2|i_2y) = 1\\
    &&&&& P^1(x_1j_1|i_1y), P^2(x_2j_2|i_2y) \geq 0
  \end{aligned}
\end{equation}

\begin{defi}[Independent Non-Signaling Assisted Capacity (resp. Sum-Capacity) Region $\mathcal{C}^{\mathrm{NS}_{\mathrm{SR}}}(W)$ (resp. $\mathcal{C}_{\text{sum}}^{\mathrm{NS}_{\mathrm{SR}}}(W)$)  of a MAC $W$]
  A rate pair $(R_1,R_2)$ is achievable (resp. sum-achievable) with independent non-signaling assistance if:
  \[ \underset{n \rightarrow +\infty}{\lim} \mathrm{S}^{\mathrm{NS}_{\mathrm{SR}}}(W^{\otimes n},\ceil{2^{R_1n}},\ceil{2^{R_2n}}) = 1 \ . \]

   \[ (\text{resp. }\underset{n \rightarrow +\infty}{\lim} \mathrm{S}_{\text{sum}}^{\mathrm{NS}_{\mathrm{SR}}}(W^{\otimes n},\ceil{2^{R_1n}},\ceil{2^{R_2n}}) = 1 \ .) \]
  We define the independent non-signaling assisted capacity (resp. sum-capacity) region $\mathcal{C}^{\mathrm{NS}_{\mathrm{SR}}}(W)$ (resp. $\mathcal{C}_{\text{sum}}^{\mathrm{NS}_{\mathrm{SR}}}(W)$) as the closure of the set of all achievable (resp. sum-achievable) rate pairs with independent non-signaling assistance.
\end{defi}

However, it turns out those two notions of success define the same capacity region:
\begin{prop}
  \label{prop:NSCapaSumJoint}
  $\mathcal{C}^{\mathrm{NS}_{\mathrm{SR}}}(W) = \mathcal{C}_{\text{sum}}^{\mathrm{NS}_{\mathrm{SR}}}(W)$
\end{prop}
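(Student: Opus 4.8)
The plan is to mimic the proof of Proposition~\ref{prop:CapaSumJoint}, working with the error quantities $\mathrm{E}^{\mathrm{NS}_{\mathrm{SR}}}(W,k_1,k_2) := 1 - \mathrm{S}^{\mathrm{NS}_{\mathrm{SR}}}(W,k_1,k_2)$ and $\mathrm{E}^{\mathrm{NS}_{\mathrm{SR}}}_{\text{sum}}(W,k_1,k_2) := 1 - \mathrm{S}^{\mathrm{NS}_{\mathrm{SR}}}_{\text{sum}}(W,k_1,k_2)$, and proving the sandwich $\mathrm{E}^{\mathrm{NS}_{\mathrm{SR}}}_{\text{sum}} \leq \mathrm{E}^{\mathrm{NS}_{\mathrm{SR}}} \leq 2\,\mathrm{E}^{\mathrm{NS}_{\mathrm{SR}}}_{\text{sum}}$ for every $W,k_1,k_2$. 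Since the feasible region of the two linear programs is the \emph{same} set of pairs of non-signaling boxes $(P^1,P^2)$, it is enough to establish the same sandwich for the objective values attached to each fixed feasible $(P^1,P^2)$ and then take the maximum over $(P^1,P^2)$ on all three sides.

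So fix a feasible pair $(P^1,P^2)$. Using the non-signaling constraint on $P^1$ (its marginal over $j_1$ is independent of $y$), set $Q^1(x_1|i_1) := \sum_{j_1} P^1(x_1j_1|i_1y)$, which is well defined, and define a conditional decoder $D^1(j_1|x_1i_1y) := P^1(x_1j_1|i_1y)/Q^1(x_1|i_1)$ whenever $Q^1(x_1|i_1) > 0$ (and arbitrarily, say uniformly, otherwise — those terms contribute $0$ everywhere since then $P^1(x_1j_1|i_1y)=0$ for all $j_1$), so that $P^1(x_1j_1|i_1y) = Q^1(x_1|i_1)\,D^1(j_1|x_1i_1y)$ with $\sum_{j_1} D^1(j_1|x_1i_1y) = 1$; do the same for $P^2$. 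One then checks that $w_{i_1i_2x_1x_2y} := \tfrac{1}{k_1k_2} W(y|x_1x_2) Q^1(x_1|i_1) Q^2(x_2|i_2)$ is a probability distribution over the tuples $(i_1,i_2,x_1,x_2,y)$, that the joint success probability attached to $(P^1,P^2)$ equals $\sum w_{i_1i_2x_1x_2y}\, D^1(i_1|x_1i_1y)\, D^2(i_2|x_2i_2y)$, and that the sum success probability equals $\sum w_{i_1i_2x_1x_2y}\,\tfrac{1}{2}\big(D^1(i_1|x_1i_1y) + D^2(i_2|x_2i_2y)\big)$. Writing $u := D^1(i_1|x_1i_1y) \in [0,1]$ and $v := D^2(i_2|x_2i_2y) \in [0,1]$, the two errors become $\sum w\,(1-uv)$ and $\sum w\,(1-\tfrac{u+v}{2})$, and the elementary inequalities $1-\tfrac{u+v}{2} \le 1-uv \le 2\big(1-\tfrac{u+v}{2}\big)$ already used in Proposition~\ref{prop:CapaSumJoint} give the pointwise sandwich, hence the sandwich after integrating against $w$, hence after maximizing over feasible $(P^1,P^2)$.

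Finally, apply the sandwich to $W^{\otimes n}$ with $k_1 = \ceil{2^{R_1 n}}$ and $k_2 = \ceil{2^{R_2 n}}$: $\mathrm{E}^{\mathrm{NS}_{\mathrm{SR}}}(W^{\otimes n},k_1,k_2) \to 0$ if and only if $\mathrm{E}^{\mathrm{NS}_{\mathrm{SR}}}_{\text{sum}}(W^{\otimes n},k_1,k_2) \to 0$, so a rate pair $(R_1,R_2)$ is achievable with independent non-signaling assistance if and only if it is sum-achievable. Taking closures of the two (equal) sets of rate pairs yields $\mathcal{C}^{\mathrm{NS}_{\mathrm{SR}}}(W) = \mathcal{C}^{\mathrm{NS}_{\mathrm{SR}}}_{\text{sum}}(W)$.

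The proof is not difficult, and the only genuinely new ingredient relative to Proposition~\ref{prop:CapaSumJoint} — the step that deserves care — is the extraction of the well-defined marginal $Q^1(x_1|i_1)$ and conditional decoder $D^1$ from the non-signaling box; once the objective values of both programs are rewritten as expectations of $1-uv$ and $1-\tfrac{u+v}{2}$ against the common weight $w$, the rest is the same two-line scalar inequality.
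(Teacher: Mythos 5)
Your proof is correct and follows essentially the same route as the paper's: both establish the sandwich $\mathrm{E}_{\text{sum}}^{\mathrm{NS}_{\mathrm{SR}}} \leq \mathrm{E}^{\mathrm{NS}_{\mathrm{SR}}} \leq 2\,\mathrm{E}_{\text{sum}}^{\mathrm{NS}_{\mathrm{SR}}}$ for each fixed feasible pair $(P^1,P^2)$ and then pass to the optimum over the common feasible set. The only cosmetic difference is that you factor each box as $P^\alpha = Q^\alpha \cdot D^\alpha$ (legitimate, by non-signaling) so as to invoke the scalar inequality $1-\tfrac{u+v}{2}\leq 1-uv \leq 2\bigl(1-\tfrac{u+v}{2}\bigr)$ from Proposition~\ref{prop:CapaSumJoint} pointwise, whereas the paper obtains the identical bound by manipulating the index sets $\{(j_1,j_2)\neq(i_1,i_2)\}$ directly.
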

\begin{proof}
  Given non-signaling boxes $P^1,P^2$, the maximum success probability of encoding and decoding correctly with those is given by:
\[ \mathrm{S}^{\mathrm{NS}_{\mathrm{SR}}}(W,k_1,k_2,P^1,P^2) := \frac{1}{k_1k_2} \sum_{i_1,i_2,x_1,x_2,y} W(y|x_1x_2)P^1(x_1i_1|i_1y)P^2(x_2i_2|i_2y) \ . \]

This should be compared to the sum success probability of encoding and decoding correctly with those, which we call $\mathrm{S}_{\text{sum}}^{\mathrm{NS}_{\mathrm{SR}}}(W,k_1,k_2,P^1,P^2)$ and is equal to:
\[   \frac{1}{k_1k_2} \sum_{i_1,i_2,x_1,x_2,y} W(y|x_1x_2)\frac{P^1(x_1i_1|i_1y)\sum_{j_2}P^2(x_2j_2|i_2y) + P^2(x_2i_2|i_2y)\sum_{j_1}P^1(x_1j_1|i_1y)}{2} \ . \]

Similarly to what was done in Proposition~\ref{prop:CapaSumJoint}, we focus on error probabilities rather than success probabilities. We have that:

\[ \frac{1}{k_1k_2}\sum_{i_1,i_2,x_1,x_2,y}W(y|x_1x_2)\sum_{j_1,j_2}P^1(x_1j_1|i_1y)P^2(x_2j_2|i_2y) = 1 \ ,\]

so we get that $\mathrm{E}^{\mathrm{NS}_{\mathrm{SR}}}(W,k_1,k_2,P^1,P^2)$ is equal to:

\[\frac{1}{k_1k_2} \sum_{i_1,i_2,x_1,x_2,y} W(y|x_1x_2)\left[\sum_{j_1,j_2}P^1(x_1j_1|i_1y)P^2(x_2j_2|i_2y) - P^1(x_1i_1|i_1y)P^2(x_2i_2|i_2y)\right] \ , \]

and thus:

\[\mathrm{E}^{\mathrm{NS}_{\mathrm{SR}}}(W,k_1,k_2,P^1,P^2) = \sum_{i_1,i_2,x_1,x_2,y} W(y|x_1x_2)\sum_{(j_1,j_2) \not= (i_1,i_2)}P^1(x_1j_1|i_1y)P^2(x_2j_2|i_2y)\ .\]

On the other hand, since:

\[ \sum_{j_1,j_2}P^1(x_1j_1|i_1y)P^2(x_2j_2|i_2y) - P^1(x_1i_1|i_1y)\sum_{j_2}P^2(x_2j_2|i_2y)= \sum_{j_1\not=i_1,j_2}P^1(x_1j_1|i_1y)P^2(x_2j_2|i_2y) \ ,\]

and:

\[ \sum_{j_1,j_2}P^1(x_1j_1|i_1y)P^2(x_2j_2|i_2y) - P^2(x_2i_2|i_2y)\sum_{j_1}P^1(x_1j_1|i_1y)= \sum_{j_1,j_2\not=i_2}P^1(x_1j_1|i_1y)P^2(x_2j_2|i_2y) \ ,\]

we get that $\mathrm{E}_{\text{sum}}^{\mathrm{NS}_{\mathrm{SR}}}(W,k_1,k_2,P^1,P^2)$ is equal to:
\begin{equation}
  \begin{aligned}
    &\frac{1}{k_1k_2}\sum_{i_1,i_2,x_1,x_2,y} W(y|x_1x_2)\left[\frac{\sum_{j_1\not=i_1,j_2}P^1(x_1j_1|i_1y)P^2(x_2j_2|i_2y) + \sum_{j_1,j_2\not=i_2}P^1(x_1j_1|i_1y)P^2(x_2j_2|i_2y)}{2}\right]\\
    &=\frac{1}{k_1k_2}\sum_{i_1,i_2,x_1,x_2,y} W(y|x_1x_2)\left[\sum_{j_1\not=i_1,j_2\not=i_2}P^1(x_1j_1|i_1y)P^2(x_2j_2|i_2y) + \frac{\sum_{(j_1,j_2) \in S}P^1(x_1j_1|i_1y)P^2(x_2j_2|i_2y)}{2}\right]\ ,
  \end{aligned}
\end{equation}

with $S := \{(j_1,i_2) : j_1 \in [k_1]-\{i_1\}\} \cup \{(i_1,j_2) : j_2 \in [k_2]-\{i_2\}\}$. However, we have that:

\begin{equation}
  \begin{aligned}
    &\sum_{j_1\not=i_1,j_2\not=i_2}P^1(x_1j_1|i_1y)P^2(x_2j_2|i_2y) + \frac{\sum_{(j_1,j_2) \in S}P^1(x_1j_1|i_1y)P^2(x_2j_2|i_2y)}{2}\\
    &\leq \sum_{j_1\not=i_1,j_2\not=i_2}P^1(x_1j_1|i_1y)P^2(x_2j_2|i_2y) + \sum_{(j_1,j_2) \in S}P^1(x_1j_1|i_1y)P^2(x_2j_2|i_2y)\\
    &= \sum_{(j_1,j_2) \not= (i_1,i_2)}P^1(x_1j_1|i_1y)P^2(x_2j_2|i_2y)\\
    &\leq 2\left(\sum_{j_1\not=i_1,j_2\not=i_2}P^1(x_1j_1|i_1y)P^2(x_2j_2|i_2y) + \frac{\sum_{(j_1,j_2) \in S}P^1(x_1j_1|i_1y)P^2(x_2j_2|i_2y)}{2}\right) \ .
  \end{aligned}
\end{equation}

This implies that:
\[\mathrm{E}_{\text{sum}}^{\mathrm{NS}_{\mathrm{SR}}}(W,k_1,k_2,P^1,P^2) \leq \mathrm{E}^{\mathrm{NS}_{\mathrm{SR}}}(W,k_1,k_2,P^1,P^2) \leq 2\mathrm{E}_{\text{sum}}^{\mathrm{NS}_{\mathrm{SR}}}(W,k_1,k_2,P^1,P^2) \ ,\]
and by maximizing over all $P^1$ and $P^2$:
\[\mathrm{E}_{\text{sum}}^{\mathrm{NS}_{\mathrm{SR}}}(W,k_1,k_2) \leq \mathrm{E}^{\mathrm{NS}_{\mathrm{SR}}}(W,k_1,k_2) \leq 2\mathrm{E}_{\text{sum}}^{\mathrm{NS}_{\mathrm{SR}}}(W,k_1,k_2) \ .\]
Thus, as before, the capacity regions are the same.
\end{proof}

\section{Properties of Non-Signaling Assisted Codes}
\label{section:complexity}
\subsection{Symmetrization}
One can prove an equivalent formulation of the linear program computing $\mathrm{S}^{\mathrm{NS}}(W,k_1,k_2)$ with a number of variables and constraints polynomial in only $|\mathcal{X}_1|,|\mathcal{X}_2|$ and $|\mathcal{Y}|$ and independent of $k_1$ and $k_2$:
\begin{prop}
  \label{prop:NSLP}
  For a MAC $W: \mathcal{X}_1 \times \mathcal{X}_2 \rightarrow \mathcal{Y}$ and $k_1,k_2 \in \mathbb{N}^*$, we have:
  \begin{equation}
  \begin{aligned}
    \mathrm{S}^{\mathrm{NS}}(W,k_1,k_2) = &&\underset{r,r^1,r^2,p}{\maxi} &&& \frac{1}{k_1k_2} \sum_{x_1,x_2,y} W(y|x_1x_2)r_{x_1,x_2,y}\\
    &&\st &&& \sum_{x_1,x_2} r_{x_1,x_2,y} = 1\\
    &&&&& \sum_{x_1} r^1_{x_1,x_2,y} = k_1 \sum_{x_1} r_{x_1,x_2,y}\\
    &&&&& \sum_{x_2} r^2_{x_1,x_2,y} = k_2 \sum_{x_2} r_{x_1,x_2,y}\\
    &&&&& \sum_{x_1} p_{x_1,x_2} = k_1 \sum_{x_1} r^2_{x_1,x_2,y}\\
    &&&&& \sum_{x_2} p_{x_1,x_2} = k_2 \sum_{x_2} r^1_{x_1,x_2,y}\\
    &&&&& 0 \leq r_{x_1,x_2,y} \leq r^1_{x_1,x_2,y},r^2_{x_1,x_2,y} \leq p_{x_1,x_2}\\
    &&&&& p_{x_1,x_2} -  r^1_{x_1,x_2,y} - r^2_{x_1,x_2,y} + r_{x_1,x_2,y} \geq 0
  \end{aligned}
  \end{equation}

\end{prop}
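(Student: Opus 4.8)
The plan is to show the two linear programs have the same optimum by translating feasible points back and forth, which splits into two inequalities.

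\emph{Symmetrization.} The program defining $\mathrm{S}^{\mathrm{NS}}(W,k_1,k_2)$ is covariant under the group $S_{k_1}\times S_{k_2}$ that relabels the message/guess pair $(i_1,j_1)$ by $\sigma_1$ and $(i_2,j_2)$ by $\sigma_2$ at once: if $P$ is feasible, then $P^{\sigma_1,\sigma_2}(x_1x_2(j_1j_2)\mid i_1i_2y):=P(x_1x_2(\sigma_1(j_1)\sigma_2(j_2))\mid\sigma_1(i_1)\sigma_2(i_2)\,y)$ is again non-signaling and normalized (these constraints are permutation-covariant) and has the same objective value (reindex the sum over $i_1,i_2$). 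Hence the average $P^{\mathrm{sym}}$ of $P^{\sigma_1,\sigma_2}$ over all $(\sigma_1,\sigma_2)$ is feasible with the same objective and is invariant under the diagonal action. Such an invariant box is described by four non-negative functions $a,b,c,d$ on $\mathcal{X}_1\times\mathcal{X}_2\times\mathcal{Y}$: the common values of $P^{\mathrm{sym}}(x_1x_2(j_1j_2)\mid i_1i_2y)$ in the four cases $j_1=i_1,j_2=i_2$; $j_1=i_1,j_2\neq i_2$; $j_1\neq i_1,j_2=i_2$; $j_1\neq i_1,j_2\neq i_2$.

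\emph{Change of variables.} I would then rewrite the constraints on such an invariant box. Put $\pi_{x_1,x_2,y}:=a_{x_1,x_2,y}+(k_2-1)b_{x_1,x_2,y}+(k_1-1)c_{x_1,x_2,y}+(k_1-1)(k_2-1)d_{x_1,x_2,y}$. Then the third non-signaling equation says exactly that $\pi$ does not depend on $y$, normalization says $\sum_{x_1,x_2}\pi_{x_1,x_2}=1$, the first non-signaling equation (no signaling out of the first sender) becomes $\sum_{x_1}a=\sum_{x_1}c$ and $\sum_{x_1}b=\sum_{x_1}d$ for all $x_2,y$, the second becomes $\sum_{x_2}a=\sum_{x_2}b$ and $\sum_{x_2}c=\sum_{x_2}d$, and the objective becomes $\sum_{x_1,x_2,y}W(y\mid x_1x_2)a_{x_1,x_2,y}$. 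Now make the linear change of variables $r:=k_1k_2\,a$, $r^1:=k_1k_2(a+(k_1-1)c)$, $r^2:=k_1k_2(a+(k_2-1)b)$, $p:=k_1k_2\,\pi$; for $k_1,k_2\ge2$ it is invertible, and under it non-negativity of $a,b,c,d$ is exactly the block $0\le r\le r^1,r^2\le p$ with $p-r^1-r^2+r\ge0$; "$\pi$ independent of $y$" is built in since $p$ carries no $y$-index; the equation $\sum_{x_1}a=\sum_{x_1}c$ turns into $\sum_{x_1}r^1=k_1\sum_{x_1}r$, and then a one-line computation turns $\sum_{x_1}b=\sum_{x_1}d$ into $\sum_{x_1}p=k_1\sum_{x_1}r^2$ (and symmetrically for the second sender); finally, since the four non-signaling sum-equations force $\sum_{x_1,x_2}a=\sum_{x_1,x_2}b=\sum_{x_1,x_2}c=\sum_{x_1,x_2}d$, normalization becomes $\sum_{x_1,x_2}r_{x_1,x_2,y}=1$. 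These are precisely the constraints of the claimed program, and the objective value is preserved throughout, so every feasible $P$ of the original program yields a feasible point of the claimed one with the same value; taking $P$ optimal gives $\mathrm{S}^{\mathrm{NS}}(W,k_1,k_2)\le$ its optimum.

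\emph{Converse and main obstacle.} For the other inequality I would run the construction in reverse: from a feasible $(r,r^1,r^2,p)$ define $a,b,c,d\ge0$ by the inverse change of variables, build the diagonally-invariant box $P$ with these four values, and check directly that it is a genuine non-signaling box (the three non-signaling conditions follow from $\sum_{x_1}r^1=k_1\sum_{x_1}r$, $\sum_{x_1}p=k_1\sum_{x_1}r^2$, their second-sender analogues, and the absence of a $y$-index on $p$), that it is normalized (this uses $\sum_{x_1,x_2}r_{x_1,x_2,y}=1$ together with the other equations, which give $\sum_{x_1,x_2}p_{x_1,x_2}=k_1k_2$), and that its objective equals $\frac{1}{k_1k_2}\sum_{x_1,x_2,y}W(y\mid x_1x_2)r_{x_1,x_2,y}$; this proves the reverse inequality, hence equality. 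The genuinely delicate part is the bookkeeping in the middle step: getting the scaling constants right, noticing that the two constraints linking $p$ and $r^i$ to $r$ must be used \emph{together} to recover a single non-signaling condition (each alone only encodes its "diagonal" half), and that the normalization of $P$ is hidden inside $\sum_{x_1,x_2}r_{x_1,x_2,y}=1$ only once the other equalities are in hand. I would also dispatch the corner cases $k_1=1$ or $k_2=1$ separately, where the factors $(k_i-1)$ degenerate. The payoff is that the new program has $O(|\mathcal{X}_1|\,|\mathcal{X}_2|\,|\mathcal{Y}|)$ variables and constraints, with no dependence on $k_1$ or $k_2$.
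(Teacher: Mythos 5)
Your proposal is correct and follows essentially the same route as the paper: your reduced variables $r,r^1,r^2,p$ coincide with the paper's sums of $P$ over the message labels, and your converse construction (the diagonally invariant box determined by $a,b,c,d$) is exactly the box the paper writes down explicitly. The only presentational difference is that you make the $S_{k_1}\times S_{k_2}$ symmetrization explicit, whereas the paper simply verifies the two maps directly (and, like you, it treats the degenerate cases $k_i=1$ only implicitly).
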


\begin{proof}
  One can check that given a solution of the original program, the following choice of variables is a valid solution of the second program achieving the same objective value:
\begin{equation}
  \begin{aligned}
    &r_{x_1,x_2,y} := \sum_{i_1,i_2} P(x_1x_2(i_1i_2)|i_1i_2y) \ , &r^1_{x_1,x_2,y} := \sum_{j_1,i_1,i_2} P(x_1x_2(j_1i_2)|i_1i_2y) \ ,\\
    &r^2_{x_1,x_2,y} := \sum_{j_2,i_1,i_2} P(x_1x_2(i_1j_2)|i_1i_2y) \ , &p_{x_1,x_2} := \sum_{j_1,j_2,i_1,i_2} P(x_1x_2(j_1j_2)|i_1i_2y) \ . \\
  \end{aligned}
\end{equation}
Note that $p_{x_1,x_2}$ is well-defined since $\sum_{j_1,j_2,i_1,i_2} P(x_1x_2(j_1j_2)|i_1i_2y)$ is independent from $y$ by since $P$ is a non-signaling box.

For the other direction, given those variables, a non-signaling probability distribution $P(x_1x_2(j_1j_2)|i_1i_2y)$ achieving the same objective value is given by, for $j_1 \not=i_1$ and $j_2 \not= i_2$:
\begin{equation}
  \begin{aligned}
    P(x_1x_2(i_1i_2)|i_1i_2y) &:= \frac{r_{x_1,x_2,y}}{k_1k_2} \ ,\\
    P(x_1x_2(j_1i_2)|i_1i_2y) &:= \frac{r^1_{x_1,x_2,y} - r_{x_1,x_2,y}}{k_1k_2(k_1-1)} \ ,\\
    P(x_1x_2(i_1j_2)|i_1i_2y) &:= \frac{r^2_{x_1,x_2,y} - r_{x_1,x_2,y}}{k_1k_2(k_2-1)} \ ,\\
    P(x_1x_2(j_1j_2)|i_1i_2y) &:= \frac{p_{x_1,x_2} -  r^1_{x_1,x_2,y} - r^2_{x_1,x_2,y} + r_{x_1,x_2,y}}{k_1k_2(k_1-1)(k_2-1)} \ .
  \end{aligned}
\end{equation}
\end{proof}

This symmetrization can also be done for the program computing $\mathrm{S}_{\text{sum}}^{\mathrm{NS}_{\mathrm{SR}}}(W,k_1,k_2)$:
\begin{prop}
  \label{prop:NSsrProgram}
  \begin{equation}
    \begin{aligned}
      \mathrm{S}_{\text{sum}}^{\mathrm{NS}_{\mathrm{SR}}}(W,k_1,k_2) = &&\underset{r^1,r^2,p^1,p^2}{\maxi} &&& \frac{1}{2k_1k_2} \sum_{x_1,x_2,y} W(y|x_1x_2)\left(p^2_{x_2}r^1_{x_1,y} + p^1_{x_1}r^2_{x_2,y}\right)\\
      && = &&& \frac{1}{2}\left[ \frac{1}{k_1}\sum_{x_1,y} W^1_{p^2,k_2}(y|x_1)r^1_{x_1,y} + \frac{1}{k_2}\sum_{x_2,y}W^2_{p^1,k_1}(y|x_2)r^2_{x_2,y} \right]\\
      && \text{with} &&& W^1_{p^2,k_2}(y|x_1) := \frac{1}{k_2}\sum_{x_2} W(y|x_1x_2)p^2_{x_2} ,W^2_{p^1,k_1}(y|x_2) := \frac{1}{k_1}\sum_{x_1} W(y|x_1x_2)p^1_{x_1}\\
      &&\st &&& \sum_{x_1} r^1_{x_1,y} = 1, \sum_{x_2} r^2_{x_2,y} = 1\\
      &&&&& \sum_{x_1} p^1_{x_1} = k_1, \sum_{x_2} p^2_{x_2} = k_2\\
      &&&&& 0 \leq r^1_{x_1,y} \leq p^1_{x_1}, 0 \leq r^2_{x_2,y} \leq p^2_{x_2}
    \end{aligned}
  \end{equation}
\end{prop}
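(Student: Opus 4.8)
The plan is to mimic the proof of Proposition~\ref{prop:NSLP}: exhibit explicit value-preserving maps between feasible points of the two programs in both directions, so that the two optima coincide. No symmetrization argument is actually needed, because the reduced variables $r^1,r^2,p^1,p^2$ already capture exactly the aggregates of $P^1,P^2$ that enter the objective.

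\emph{From the original program to the symmetrized one.} Starting from feasible non-signaling boxes $P^1,P^2$, I would set
\begin{align*}
 r^1_{x_1,y} &:= \sum_{i_1} P^1(x_1 i_1 | i_1 y), & p^1_{x_1} &:= \sum_{i_1,j_1} P^1(x_1 j_1 | i_1 y),\\
 r^2_{x_2,y} &:= \sum_{i_2} P^2(x_2 i_2 | i_2 y), & p^2_{x_2} &:= \sum_{i_2,j_2} P^2(x_2 j_2 | i_2 y),
\end{align*}
noting that $p^1_{x_1},p^2_{x_2}$ are well-defined (independent of $y$) by the non-signaling output condition on $P^1,P^2$. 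Non-negativity and $r^1_{x_1,y}\le p^1_{x_1}$, $r^2_{x_2,y}\le p^2_{x_2}$ are immediate, and $\sum_{x_1}p^1_{x_1}=k_1$ (resp. $=k_2$) follows by summing the normalization of $P^1$ (resp. $P^2$) over $i_1\in[k_1]$. The only step requiring a small observation is $\sum_{x_1}r^1_{x_1,y}=1$: writing $q(j_1|y):=\sum_{x_1}P^1(x_1 j_1|i_1 y)$, which is independent of $i_1$ by the non-signaling input condition, one gets $\sum_{x_1}r^1_{x_1,y}=\sum_{i_1\in[k_1]}q(i_1|y)=\sum_{j_1\in[k_1]}q(j_1|y)=1$ because $i_1$ and $j_1$ run over the same index set $[k_1]$. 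Finally, factoring the sum over $i_1$ out of $\sum_{j_2}P^2(\cdot)$ in the first term of the original objective (and symmetrically for the second term) turns it into exactly $\frac{1}{2k_1k_2}\sum_{x_1,x_2,y}W(y|x_1x_2)\big(p^2_{x_2}r^1_{x_1,y}+p^1_{x_1}r^2_{x_2,y}\big)$.

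\emph{From the symmetrized program to the original one.} Given a feasible $(r^1,r^2,p^1,p^2)$, I would reconstruct ``uniformly symmetric'' boxes: put $P^1(x_1 i_1| i_1 y):=r^1_{x_1,y}/k_1$ on the diagonal and $P^1(x_1 j_1| i_1 y):=(p^1_{x_1}-r^1_{x_1,y})/(k_1(k_1-1))$ for $j_1\ne i_1$, and likewise for $P^2$ (assuming $k_1,k_2\ge 2$; when $k_1=1$ the constraints force $p^1_{x_1}=r^1_{x_1,y}$ and only the diagonal term survives, and similarly for $k_2=1$). Then I would check, using $0\le r^1_{x_1,y}\le p^1_{x_1}$, $\sum_{x_1}r^1_{x_1,y}=1$ and $\sum_{x_1}p^1_{x_1}=k_1$: non-negativity; the non-signaling input condition and normalization, both from $\sum_{x_1}P^1(x_1 j_1|i_1 y)=1/k_1$ for every $j_1$ (hence independent of $i_1$); and the non-signaling output condition, from $\sum_{j_1}P^1(x_1 j_1|i_1 y)=p^1_{x_1}/k_1$, independent of $y$. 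Plugging these boxes into the original objective, $\sum_{i_1}P^1(x_1 i_1|i_1 y)=r^1_{x_1,y}$ and $\sum_{i_2,j_2}P^2(x_2 j_2|i_2 y)=p^2_{x_2}$, so the value is again $\frac{1}{2k_1k_2}\sum_{x_1,x_2,y}W(y|x_1x_2)\big(p^2_{x_2}r^1_{x_1,y}+p^1_{x_1}r^2_{x_2,y}\big)$. The two directions together give equality of the optima, and the alternative expression in the statement is just the rearrangement $\frac{1}{2k_1k_2}\sum_{x_1,x_2,y}W(y|x_1x_2)p^2_{x_2}r^1_{x_1,y}=\frac{1}{2k_1}\sum_{x_1,y}W^1_{p^2,k_2}(y|x_1)r^1_{x_1,y}$, where $W^1_{p^2,k_2}$ is a genuine channel since $\sum_y W^1_{p^2,k_2}(y|x_1)=\frac{1}{k_2}\sum_{x_2}p^2_{x_2}=1$.

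The work here is essentially bookkeeping, so the main obstacle is just to match, one by one, each non-signaling constraint on $P^1$ and $P^2$ with the corresponding (in)equality among $r^1,r^2,p^1,p^2$, and to get the combinatorial factors $k_1$ and $k_1-1$ (resp. $k_2$ and $k_2-1$) right in the symmetric boxes so that both the marginal identities and the objective come out on the nose. The single genuinely non-mechanical point is the identity $\sum_{x_1}r^1_{x_1,y}=1$, which relies on the ``correct-decoding'' diagonal ranging the output index $j_1=i_1$ over the whole message set $[k_1]$.
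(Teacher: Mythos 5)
Your proposal is correct and is essentially the paper's own proof: the same aggregation maps $r^1_{x_1,y}=\sum_{i_1}P^1(x_1i_1|i_1y)$, $p^1_{x_1}=\sum_{i_1,j_1}P^1(x_1j_1|i_1y)$ (and symmetrically for $P^2$) in one direction, and the same uniformly symmetric reconstruction $P^1(x_1i_1|i_1y)=r^1_{x_1,y}/k_1$, $P^1(x_1j_1|i_1y)=(p^1_{x_1}-r^1_{x_1,y})/(k_1(k_1-1))$ in the other. You simply carry out the constraint checks that the paper leaves as ``one can check,'' including the key observation behind $\sum_{x_1}r^1_{x_1,y}=1$ and the degenerate case $k_1=1$.
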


\begin{proof}
  One can check that given a solution of the original program, the following choice of variables is a valid solution of the second program achieving the same objective value:
\begin{equation}
  \begin{aligned}
    &r^1_{x_1,y} := \sum_{i_1} P^1(x_1i_1|i_1y) \ , &p^1_{x_1} := \sum_{j_1,i_1} P^1(x_1j_1|i_1y) \ ,\\
    &r^2_{x_2,y} := \sum_{i_2} P^2(x_2i_2|i_2y) \ , &p^2_{x_2} := \sum_{j_2,i_2} P^2(x_2j_2|i_2y) \ .\\
  \end{aligned}
\end{equation}
Note that $p^1_{x_1}$ and $p^2_{x_2}$ are well-defined since  $\sum_{j_1,i_1} P^1(x_1j_1|i_1y)$ and $\sum_{j_2,i_2} P^2(x_2j_2|i_2y)$ are independent from $y$ since $P^1$ and $P^2$ are non-signaling boxes.

For the other direction, given those variables, non-signaling probability distributions $P^1(x_1j_1|i_1y)$ and $P^2(x_2j_2|i_2y)$ achieving the same objective value are given by, for $j_1 \not=i_1$ and $j_2 \not= i_2$:
\begin{equation}
  \begin{aligned}
    P^1(x_1i_1|i_1y) &:= \frac{r^1_{x_1,y}}{k_1} \ ,\\
    P^1(x_1j_1|i_1y) &:= \frac{p^1_{x_1,y} - r^1_{x_1,y}}{k_1(k_1-1)} \ ,\\
    P^2(x_2i_2|i_2y) &:= \frac{r^2_{x_2,y}}{k_2} \ ,\\
    P^2(x_2j_2|i_2y) &:= \frac{p^2_{x_2,y} - r^2_{x_2,y}}{k_2(k_2-1)} \ .\\
  \end{aligned}
\end{equation}
\end{proof}

\subsection{Properties of $\mathrm{S}^{\mathrm{NS}}(W,k_1,k_2)$, $\mathcal{C}^{\mathrm{NS}}(W)$ and $\mathcal{C}^{\mathrm{NS}}_0(W)$}
\begin{defi}
  We say that a conditional probability distribution $P(a^n|x^n)$ defined on $\bigtimes_{i=1}^n\mathcal{A}_i \times \bigtimes_{i=1}^n \mathcal{X}_i$ is \emph{non-signaling} if for all $a^n, x^n, \hat{x}^n$, we have
    \[ \forall i \in [n], \sum_{\hat{a}_i}P(a_1\ldots \hat{a}_i \ldots a_n|x_1\ldots x_i \ldots x_n) = \sum_{\hat{a}_i}P(a_1\ldots \hat{a}_i \ldots a_n|x_1\ldots \hat{x}_i \ldots x_n) \ .\]
\end{defi}

\begin{defi}
  Let $P(a^n|x^n)$ a conditional probability distribution defined on $\bigtimes_{i=1}^n\mathcal{A}_i \times \bigtimes_{i=1}^n \mathcal{X}_i$ and $P'(a'^n|x'^n)$ defined on $\bigtimes_{i=1}^n\mathcal{A}'_i \times \bigtimes_{i=1}^n \mathcal{X}'_i$. We define $P \otimes P'$ the tensor product conditional probability distribution defined on $\bigtimes_{i=1}^n(\mathcal{A}_i \times \mathcal{A}'_i) \times \bigtimes_{i=1}^n (\mathcal{X}_i \times \mathcal{X}'_i)$ by $\left(P \otimes P'\right)(a_1a'_1\ldots a_na'_n|x_1x'_1\ldots x_nx'_n) := P(a^n|x^n) \cdot P'(a'^n|x'^n)$.
\end{defi}

\begin{lem}
  \label{lem:NStensor}
  If both $P$ and $P'$ are non-signaling, then $P \otimes P'$ is non-signaling.
\end{lem}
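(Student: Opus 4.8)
The plan is a direct verification from the definitions; there is no real difficulty beyond bookkeeping. Recall that the ``parties'' for $P \otimes P'$ are the $n$ coordinates, where coordinate $i$ now carries the pair of outputs $(a_i,a'_i)$ and the pair of inputs $(x_i,x'_i)$. So to prove that $P \otimes P'$ is non-signaling, I would fix an arbitrary index $i \in [n]$, fix all other inputs and outputs, and show that $\sum_{\hat a_i, \hat a'_i}(P\otimes P')(a_1 a'_1\ldots \hat a_i \hat a'_i \ldots a_n a'_n \mid x_1 x'_1 \ldots x_i x'_i \ldots x_n x'_n)$ does not depend on the pair $(x_i,x'_i)$.

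The key observation is that this marginal factorizes. Using the definition $(P\otimes P')(\cdots) = P(a^n\mid x^n)\cdot P'(a'^n\mid x'^n)$, the double sum over $(\hat a_i,\hat a'_i)$ separates into a product of two independent single sums, giving
\[ \Big(\sum_{\hat a_i} P(a_1\ldots \hat a_i \ldots a_n \mid x_1 \ldots x_i \ldots x_n)\Big)\cdot\Big(\sum_{\hat a'_i} P'(a'_1\ldots \hat a'_i \ldots a'_n \mid x'_1 \ldots x'_i \ldots x'_n)\Big). \]
Now I would apply the non-signaling hypotheses one factor at a time: by the non-signaling property of $P$ at coordinate $i$, the first factor is unchanged when $x_i$ is replaced by any $\hat x_i$; by the non-signaling property of $P'$ at coordinate $i$, the second factor is unchanged when $x'_i$ is replaced by any $\hat x'_i$. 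Hence the product is unchanged when $(x_i,x'_i)$ is replaced by $(\hat x_i,\hat x'_i)$, which is precisely the non-signaling condition for $P \otimes P'$ at coordinate $i$. Since $i$ was arbitrary, $P \otimes P'$ is non-signaling.

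The only point requiring any care is setting up the indices correctly so that the marginal being summed is genuinely over \emph{both} components $(a_i,a'_i)$ of the $i$-th party's output; once that is done, the factorization of the marginal into a product of a $P$-marginal and a $P'$-marginal does all the work, and the hypotheses apply termwise. I expect the whole argument to be two or three lines of display math.
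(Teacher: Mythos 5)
Your proposal is correct and matches the paper's proof exactly: both factorize the marginal sum over $(\hat a_i,\hat a'_i)$ into a product of a $P$-marginal and a $P'$-marginal and apply the two non-signaling hypotheses factor by factor. Nothing is missing.
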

\begin{proof}
  Let $a^n \in \bigtimes_{j=1}^n\mathcal{A}_j$, $a'^n \in \bigtimes_{j=1}^n\mathcal{A}'_j$, $x^n \in \bigtimes_{j=1}^n \mathcal{X}_j$, $x'^n \in \bigtimes_{j=1}^n \mathcal{X}'_j$ and $\hat{x}_i \in \mathcal{X}_i$, $\hat{x}'_i \in \mathcal{X}'_i$. Using the fact that $P,P'$ are non-signaling, we have:

  \begin{equation}
    \begin{aligned}
      &\sum_{\hat{a}_i\hat{a}_i'}P(a_1a'_1\ldots \hat{a}_i\hat{a}_i' \ldots a_na'_n|x_1x'_1\ldots x_ix'_i \ldots x_nx'_n)\\
      = &\sum_{\hat{a}_i\hat{a}_i'}  P(a_1\ldots \hat{a}_i \ldots a_n|x_1\ldots x_i \ldots x_n) \cdot P'(a'_1\ldots \hat{a}'_i \ldots a'_n|x'_1\ldots x'_i \ldots x'_n)\\
      = &\left(\sum_{\hat{a}_i}  P(a_1\ldots \hat{a}_i \ldots a_n|x_1\ldots x_i \ldots x_n)\right) \cdot \left(\sum_{\hat{a}'_i}  P'(a'_1\ldots \hat{a}'_i \ldots a'_n|x'_1\ldots x'_i \ldots x'_n)\right)\\
      = &\left(\sum_{\hat{a}_i}  P(a_1\ldots \hat{a}_i \ldots a_n|x_1\ldots \hat{x}_i \ldots x_n)\right) \cdot \left(\sum_{\hat{a}'_i}  P'(a'_1\ldots \hat{a}'_i \ldots a'_n|x'_1\ldots \hat{x}'_i \ldots x'_n)\right)\\
      = &\sum_{\hat{a}_i\hat{a}_i'}\left(P \otimes P'\right)(a_1a'_1\ldots \hat{a}_i\hat{a}'_i \ldots a_na'_n|x_1x'_1\ldots \hat{x}_i\hat{x}'_i \ldots x_nx'_n) \ ,
    \end{aligned}
  \end{equation}
  so $P \otimes P'$ is non-signaling.
\end{proof}

\begin{prop}
  \label{prop:oneShot}
  For a MAC $W: \mathcal{X}_1 \times \mathcal{X}_2 \rightarrow \mathcal{Y}$ and $k_1,k_2 \in \mathbb{N}^*$, we have:
  \begin{enumerate}
  \item $\frac{1}{k_1k_2} \leq \mathrm{S}^{\mathrm{NS}}(W,k_1,k_2) \leq 1$.
  \item $\mathrm{S}^{\mathrm{NS}}(W,k_1,k_2) \leq \min\left(\frac{|\mathcal{X}_1|}{k_1}, \frac{|\mathcal{X}_2|}{k_2}, \frac{|\mathcal{Y}|}{k_1k_2} \right)$.
  \item If $k_1' \leq k_1$ and $k_2' \leq k_2$, then $\mathrm{S}^{\mathrm{NS}}(W,k_1',k_2') \geq \mathrm{S}^{\mathrm{NS}}(W,k_1,k_2)$.
    
  \item For any MAC $W': \mathcal{X}'_1 \times \mathcal{X}'_2 \rightarrow \mathcal{Y}'$ and $k_1,k_2 \in \mathbb{N}^*$, we have $\mathrm{S}^{\mathrm{NS}}(W \otimes W',k_1k_1',k_2k_2') \geq \mathrm{S}^{\mathrm{NS}}(W,k_1,k_2) \cdot \mathrm{S}^{\mathrm{NS}}(W',k'_1,k'_2)$. In particular, for any positive integer $n$, $\mathrm{S}^{\mathrm{NS}}(W^{\otimes n},k_1^n,k_2^n) \geq \left[\mathrm{S}^{\mathrm{NS}}(W,k_1,k_2)\right]^n$ and $\mathrm{S}^{\mathrm{NS}}(W \otimes W',k_1,k_2) \geq \mathrm{S}^{\mathrm{NS}}(W,k_1,k_2)$.
  \end{enumerate}
\end{prop}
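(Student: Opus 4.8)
The plan is to establish the four items in order; none needs more than the box/linear-program descriptions already set up, plus Lemma~\ref{lem:NStensor}. Items 1 and 4 are the easy ones. For Item 1 the upper bound is immediate since every feasible $P$ gives an average of success probabilities, and for the lower bound one exhibits the product (hence classical, hence non-signaling) box that outputs fixed symbols $x_1,x_2$ together with a uniformly random pair $(j_1,j_2)\in[k_1]\times[k_2]$: it succeeds with probability exactly $\tfrac{1}{k_1k_2}$ for any $W$. For Item 4, take boxes $P$ and $P'$ optimal for $(W,k_1,k_2)$ and $(W',k_1',k_2')$, identify $[k_1k_1']$ with $[k_1]\times[k_1']$ and likewise for the other alphabets, and use the product box $P\otimes P'$; it is non-signaling by Lemma~\ref{lem:NStensor} applied with three parties, and against $W\otimes W'$ the two blocks' success events are independent, so the overall success probability is $\mathrm{S}^{\mathrm{NS}}(W,k_1,k_2)\cdot\mathrm{S}^{\mathrm{NS}}(W',k_1',k_2')$. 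The consequence for $W^{\otimes n}$ follows by induction on $n$, and the last inequality by choosing $k_1'=k_2'=1$ and using $\mathrm{S}^{\mathrm{NS}}(W',1,1)=1$, which is itself Item~1.

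For Item 2, write $\mathrm{S}^{\mathrm{NS}}(W,k_1,k_2)=\tfrac{1}{k_1k_2}\sum_{x_1,x_2,y}W(y|x_1x_2)\,T(x_1,x_2,y)$ with $T(x_1,x_2,y):=\sum_{i_1,i_2}P(x_1x_2(i_1i_2)|i_1i_2y)$. The bound $\tfrac{|\mathcal Y|}{k_1k_2}$ is quick: by non-signaling the receiver's marginal $P(j_1j_2|y)$ does not depend on the senders' inputs, so $\sum_{x_1,x_2}T(x_1,x_2,y)=\sum_{i_1,i_2}P(j_1{=}i_1,j_2{=}i_2|y)\le 1$; bounding $W\le 1$ and summing over $y$ gives the claim. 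The bound $\tfrac{|\mathcal X_1|}{k_1}$ (and by the $1\leftrightarrow2$ symmetry $\tfrac{|\mathcal X_2|}{k_2}$) is the crux: the idea is to dominate $T(x_1,x_2,y)$ by a quantity independent of both $x_1$ and $y$. Peeling off sender $1$'s output and using that the marginal on sender $2$ and the receiver is independent of sender $1$'s input, one gets $\sum_{i_1}P(x_1x_2(i_1i_2)|i_1i_2y)\le\sum_{i_1}\sum_{x_1'}P(x_1'x_2(i_1,i_2)|i_1i_2y)$, which (treating the summed index as a receiver output) equals $\sum_{x_1',j_1}P(x_1'x_2(j_1,i_2)|i_1i_2y)$; marginalizing the whole receiver output and invoking that sender $2$'s marginal does not depend on sender $1$'s input or on $y$, this is $\le P(x_2|i_2)$. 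Summing over $i_2$ yields $T(x_1,x_2,y)\le q(x_2):=\sum_{i_2}P(x_2|i_2)$, which is independent of $x_1$ and $y$ and satisfies $\sum_{x_2}q(x_2)=k_2$; since $\sum_y W(y|x_1x_2)=1$ we conclude $\mathrm{S}^{\mathrm{NS}}\le\tfrac{1}{k_1k_2}\sum_{x_1,x_2}q(x_2)=\tfrac{|\mathcal X_1|}{k_1}$.

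For Item 3, let $P^{\star}$ be optimal for $(k_1,k_2)$ and write $s(i_1,i_2)$ for its per-message success probability against $W$. For subsets $S_1\subseteq[k_1]$, $S_2\subseteq[k_2]$ of sizes $k_1',k_2'$, restrict $P^{\star}$ to a box for $(k_1',k_2')$ by feeding only inputs in $S_1\times S_2$ and post-composing the receiver's output with a fixed map of $[k_1]\times[k_2]$ onto $S_1\times S_2$ that is the identity on $S_1\times S_2$; this box is still non-signaling (restricting sender inputs and post-processing a party's output both preserve the non-signaling constraints), and on $S_1\times S_2$ its per-message success is at least $s(i_1,i_2)$. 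Hence $\mathrm{S}^{\mathrm{NS}}(W,k_1',k_2')\ge\tfrac{1}{k_1'k_2'}\sum_{(i_1,i_2)\in S_1\times S_2}s(i_1,i_2)$ for every such $S_1,S_2$; the maximum of this quantity over $S_1,S_2$ is at least its average over uniformly random $(S_1,S_2)$, which equals $\tfrac{1}{k_1k_2}\sum_{i_1,i_2}s(i_1,i_2)=\mathrm{S}^{\mathrm{NS}}(W,k_1,k_2)$, and the claim follows.

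The only genuinely non-routine point is the bound $\tfrac{|\mathcal X_1|}{k_1}$ (and its twin) in Item 2: one must pick exactly the right chain of non-signaling marginalizations so that the resulting upper bound on $T(x_1,x_2,y)$ ends up independent of both $x_1$ and $y$ — it is precisely this independence that lets $\sum_y W(y|x_1x_2)$ collapse to $1$ and exposes the channel-input bottleneck $|\mathcal X_1|$ of sender~$1$. A secondary subtlety lives in Item 3: naively deleting messages fails because of the $\tfrac1{k_1k_2}$ normalization, so one has to argue through ``best sub-rectangle $\ge$ average'' (equivalently, symmetrize over message relabelings) rather than just shrinking the code.
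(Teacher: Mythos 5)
Your proof is correct and follows essentially the same route as the paper's: the chain of non-signaling marginalizations you use for the $\tfrac{|\mathcal X_1|}{k_1}$ bound is exactly the content of the paper's symmetrized LP constraints (your $T(x_1,x_2,y)$ and $q(x_2)$ are the paper's $r_{x_1,x_2,y}$ and $\tfrac{1}{k_1}\sum_{x_1}p_{x_1,x_2}$ from Proposition~\ref{prop:NSLP}), and your Item~3 retraction map is a deterministic variant of the paper's uniform redistribution of the discarded outputs $j_1\notin S$, combined with the same ``max over subsets $\geq$ average'' argument. The only cosmetic difference is that you argue directly on the box $P$ where the paper routes Items~1--3 through the symmetrized linear program.
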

\begin{proof}
  \begin{enumerate}
  \item Let us first show that $\mathrm{S}^{\mathrm{NS}}(W,k_1,k_2) \geq \frac{1}{k_1k_2}$. Take $p_{x_1,x_2} := \frac{k_1k_2}{|\mathcal{X}_1||\mathcal{X}_2|}$, $r^1_{x_1,x_2,y} := \frac{p_{x_1,x_2}}{k_2}$, $r^2_{x_1,x_2,y} := \frac{p_{x_1,x_2}}{k_1}$ and $r_{x_1,x_2,y} := \frac{p_{x_1,x_2}}{k_1k_2} = \frac{1}{|\mathcal{X}_1||\mathcal{X}_2|}$. One can easily check that it is indeed a valid solution of the linear program computing $\mathrm{S}^{\mathrm{NS}}(W,k_1,k_2)$. Thus we have:
    \begin{equation}
      \begin{aligned}
        \mathrm{S}^{\mathrm{NS}}(W,k_1,k_2) &\geq \frac{1}{k_1k_2} \sum_{x_1,x_2,y} W(y|x_1x_2)r_{x_1,x_2,y} = \frac{1}{k_1k_2} \sum_{x_1,x_2} \frac{1}{|\mathcal{X}_1||\mathcal{X}_2|} \sum_y W(y|x_1x_2)\\
        &= \frac{1}{k_1k_2} \sum_{x_1,x_2} \frac{1}{|\mathcal{X}_1||\mathcal{X}_2|} = \frac{1}{k_1k_2} \ .
      \end{aligned}
    \end{equation}

    Furthermore, in order to show that it is at most $1$, let us consider an optimal solution of $\mathrm{S}^{\mathrm{NS}}(W,k_1,k_2)$. We have:
    \begin{equation}
      \begin{aligned}
        \mathrm{S}^{\mathrm{NS}}(W,k_1,k_2) &= \frac{1}{k_1k_2} \sum_{x_1,x_2,y} W(y|x_1x_2)r_{x_1,x_2,y} \leq \frac{1}{k_1k_2} \sum_{x_1,x_2,y} W(y|x_1x_2)p_{x_1,x_2}\\
        &= \frac{1}{k_1k_2} \sum_{x_1,x_2} p_{x_1,x_2} \sum_y W(y|x_1x_2) = \frac{1}{k_1k_2} \sum_{x_1,x_2} p_{x_1,x_2} = 1 \ ,
      \end{aligned}
    \end{equation}
    since $\sum_{x_1,x_2} p_{x_1,x_2} = k_1 \sum_{x_1,x_2} r^2_{x_1,x_2,y} = k_1k_2 \sum_{x_1,x_2} r_{x_1,x_2,y} =k_1k_2$.
    
  \item First let us show that $\mathrm{S}^{\mathrm{NS}}(W,k_1,k_2) \leq \frac{|\mathcal{X}_1|}{k_1}$ (the case $\mathrm{S}^{\mathrm{NS}}(W,k_1,k_2) \leq \frac{|\mathcal{X}_2|}{k_2}$ is symmetric):
    \begin{equation}
      \begin{aligned}
        \mathrm{S}^{\mathrm{NS}}(W,k_1,k_2) &= \frac{1}{k_1k_2} \sum_{x_1,x_2,y} W(y|x_1x_2)r_{x_1,x_2,y} \leq \frac{1}{k_1k_2} \sum_{x_1,x_2,y} W(y|x_1x_2)r^2_{x_1,x_2,y}\\
        &\leq \frac{1}{k_1k_2} \sum_{x_2,y} \left(\sum_{x_1'} W(y|x_1'x_2)\right)\cdot\left(\sum_{x_1} r^2_{x_1,x_2,y} \right) \quad \text{since nonnegative terms.}\\
        &=  \frac{1}{k_1k_2} \sum_{x_2,y} \left(\sum_{x_1'} W(y|x_1'x_2)\right)\cdot\left(\frac{1}{k_1} \sum_{x_1} p_{x_1,x_2} \right)\\
        &= \frac{1}{k_1^2k_2} \sum_{x_1,x_2} p_{x_1,x_2} \sum_{x_1'}\left(\sum_y W(y|x_1'x_2)\right) = \frac{|\mathcal{X}_1|}{k_1^2k_2} \sum_{x_1,x_2} p_{x_1,x_2} = \frac{|\mathcal{X}_1|}{k_1} \ .
      \end{aligned}
    \end{equation}

    Let us show now that $\mathrm{S}^{\mathrm{NS}}(W,k_1,k_2) \leq \frac{|\mathcal{Y}|}{k_1k_2}$:
    \begin{equation}
      \begin{aligned}
        \mathrm{S}^{\mathrm{NS}}(W,k_1,k_2) &= \frac{1}{k_1k_2} \sum_{x_1,x_2,y} W(y|x_1x_2)r_{x_1,x_2,y} \leq \frac{1}{k_1k_2} \sum_{y} \left(\max_{x_1,x_2} W(y|x_1x_2)\right) \sum_{x_1,x_2} r_{x_1,x_2,y}\\
        &\leq \frac{1}{k_1k_2} \sum_{y} \sum_{x_1,x_2} r_{x_1,x_2,y} = \frac{|\mathcal{Y}|}{k_1k_2} \ .
      \end{aligned}
    \end{equation}
  \item Let us assume that $k'_1 \leq k_1$ and that $k'_2 = k_2$, since this latter case will follow by symmetry. Consider an optimal solution of $\mathrm{S}^{\mathrm{NS}}(W,k_1,k_2) = \frac{1}{k_1} \sum_{i_1 \in [k_1]} f(i_1)$ with:
    \[ f(i_1) := \frac{1}{k_2} \sum_{x_1,x_2,y,i_2} W(y|x_1x_2)P(x_1x_2(i_1i_2)|i_1i_2y) \ ,\]
    and $P$ non-signaling. Let us consider $S \in \underset{S' \subseteq [k_1]: |S'|=k'_1}{\text{argmax}} \sum_{i_1 \in S'} f(i_1)$. Then, by construction, we have that $\frac{1}{k'_1}\sum_{i_1 \in S} f(i_1) \geq \frac{1}{k_1} \sum_{i_1 \in [k_1]} f(i_1) = \mathrm{S}^{\mathrm{NS}}(W,k_1,k_2)$, since we have taken the average of the $k'_1$ largest values of the sum.

    Let us define the strategy $P'$ on the smallest set $\mathcal{X}_1\times\mathcal{X}_2\times\left(S \times [k_2]\right)\times S \times [k_2] \times \mathcal{Y}$:
    \begin{equation}
      \begin{aligned}
        P'(x_1x_2(j_1j_2)|i_1i_2y) &:= P(x_1x_2(j_1j_2)|i_1i_2y) + C(x_1x_2j_2|i_1i_2y) \ ,\\
        \text{with } C(x_1x_2j_2|i_1i_2y) &:= \frac{1}{k'_1}\sum_{j'_1 \in [k_1]-S}P(x_1x_2(j'_1j_2)|i_1i_2y) \ .
      \end{aligned}
    \end{equation}

    $P'$ is a correct conditional probability distribution. Indeed, it is nonnegative by construction, and we have that:
    \begin{equation}
      \begin{aligned}
      &\sum_{x_1,x_2,j_1 \in S,j_2} P'(x_1x_2(j_1j_2)|i_1i_2y) = \sum_{x_1,x_2,j_1 \in S,j_2} P(x_1x_2(j_1j_2)|i_1i_2y) + \sum_{x_1,x_2,j_1 \in S,j_2} C(x_1x_2j_2|i_1i_2y)\\
      &= \sum_{x_1,x_2j_2} \sum_{j_1 \in S} P(x_1x_2(j_1j_2)|i_1i_2y) + \sum_{x_1,x_2,j_2}\sum_{j_1 \in S} \frac{1}{k'_1}\sum_{j'_1 \in [k_1]-S}P(x_1x_2(j'_1j_2)|i_1i_2y)\\
        &= \sum_{x_1,x_2,j_2}\sum_{j_1 \in S} P(x_1x_2(j_1j_2)|i_1i_2y) + \sum_{x_1,x_2,j_2} \sum_{j'_1 \in [k_1]-S}P(x_1x_2(j'_1j_2)|i_1i_2y)\\
      &= \sum_{x_1,x_2,j_1,j_2} P(x_1x_2(j_1j_2)|i_1i_2y) = 1 \ .
      \end{aligned}
    \end{equation}
    
    Let us show that $P'$ is non-signaling:
    \begin{enumerate}
    \item First with $x_1$:
    \begin{equation}
      \begin{aligned}
      \sum_{x_1} P'(x_1x_2(j_1j_2)|i_1i_2y) &= \sum_{x_1} P(x_1x_2(j_1j_2)|i_1i_2y) + \sum_{x_1} C(x_1x_2j_2|i_1i_2y)\\
      &= \sum_{x_1} P(x_1x_2(j_1j_2)|i_1i_2y) + \frac{1}{k'_1}\sum_{j'_1 \in [k_1]-S}\sum_{x_1}P(x_1x_2(j'_1j_2)|i_1i_2y)\\
      &= \sum_{x_1} P(x_1x_2(j_1j_2)|i'_1i_2y) + \frac{1}{k'_1}\sum_{j'_1 \in [k_1]-S}\sum_{x_1}P(x_1x_2(j'_1j_2)|i'_1i_2y)\\
      &\text{since $P$ is non-signaling.}\\
      &= \sum_{x_1} P'(x_1x_2(j_1j_2)|i'_1i_2y) \ .
      \end{aligned}
    \end{equation}
    
    \item Then with $x_2$:
    \begin{equation}
      \begin{aligned}
      \sum_{x_2} P'(x_1x_2(j_1j_2)|i_1i_2y) &= \sum_{x_2} P(x_1x_2(j_1j_2)|i_1i_2y) + \sum_{x_2} C(x_1x_2j_2|i_1i_2y)\\
      &= \sum_{x_2} P(x_1x_2(j_1j_2)|i_1i_2y) + \frac{1}{k'_1}\sum_{j'_1 \in [k_1]-S}\sum_{x_2}P(x_1x_2(j'_1j_2)|i_1i_2y)\\
      &= \sum_{x_2} P(x_1x_2(j_1j_2)|i_1i'_2y) + \frac{1}{k'_1}\sum_{j'_1 \in [k_1]-S}\sum_{x_2}P(x_1x_2(j'_1j_2)|i_1i'_2y)\\
      &\text{since $P$ is non-signaling.}\\
      &= \sum_{x_2} P'(x_1x_2(j_1j_2)|i_1i'_2y) \ .
      \end{aligned}
    \end{equation}
    \item Finally with $(j_1j_2)$:
    \begin{equation}
      \begin{aligned}
      &\sum_{j_1 \in S,j_2} P'(x_1x_2(j_1j_2)|i_1i_2y) = \sum_{j_2}\sum_{j_1 \in S} P(x_1x_2(j_1j_2)|i_1i_2y) + \sum_{j_2}\sum_{j_1 \in S} C(x_1x_2j_2|i_1i_2y)\\
      &= \sum_{j_2}\sum_{j_1 \in S} P(x_1x_2(j_1j_2)|i_1i_2y) + \sum_{j_2}\sum_{j_1 \in S}\frac{1}{k'_1}\sum_{j'_1 \in [k_1]-S}P(x_1x_2(j'_1j_2)|i_1i_2y)\\
      &= \sum_{j_2}\sum_{j_1 \in S} P(x_1x_2(j_1j_2)|i_1i_2y) + \sum_{j_2}\sum_{j'_1 \in [k_1]-S}P(x_1x_2(j'_1j_2)|i_1i_2y)\\
      &= \sum_{j_1,j_2} P(x_1x_2(j_1j_2)|i_1i_2y)\\
      &= \sum_{j_1,j_2} P(x_1x_2(j_1j_2)|i_1i_2y') \text{ since $P$ is non-signaling.}\\
      &= \sum_{j_1 \in S,j_2} P'(x_1x_2(j_1j_2)|i_1i_2y') \ .
      \end{aligned}
    \end{equation}

       Thus $P'$ is a correct solution of the program computing $\mathrm{S}^{\mathrm{NS}}(W,k'_1,k_2)$, and it leads to the value:
    \begin{equation}
      \begin{aligned}
         \mathrm{S}^{\mathrm{NS}}(W,k'_1,k_2) &\geq \frac{1}{k'_1k_2} \sum_{x_1,x_2,y,i_1 \in S,i_2} W(y|x_1x_2)P'(x_1x_2(i_1i_2)|i_1i_2y)\\
         &\geq \frac{1}{k'_1k_2} \sum_{x_1,x_2,y,i_1 \in S,i_2} W(y|x_1x_2)P(x_1x_2(i_1i_2)|i_1i_2y)\\
         &= \frac{1}{k'_1} \sum_{i_1 \in S} f(i_1) \geq \frac{1}{k_1} \sum_{i_1 \in [k_1]} f(i_1) = \mathrm{S}^{\mathrm{NS}}(W,k_1,k_2) \ .
      \end{aligned}
    \end{equation}
    \end{enumerate}

  \item Consider optimal non-signaling probability distributions $P$ and $P'$ reaching respectively the values $\mathrm{S}^{\mathrm{NS}}(W,k_1,k_2)$ and $\mathrm{S}^{\mathrm{NS}}(W',k'_1,k'_2)$. Then by Lemma~\ref{lem:NStensor}, $P \otimes P'$ is a non-signaling probability distribution on $\left(\mathcal{X}_1 \times \mathcal{X}'_1\right) \times \left(\mathcal{X}_2 \times \mathcal{X}'_2\right) \times \left(([k_1]\times[k'_1]) \times ([k_2]\times[k'_2])\right) \times ([k_1]\times[k'_1]) \times ([k_2]\times[k'_2]) \times \left(\mathcal{Y} \times \mathcal{Y}'\right)$, which is trivially in bijection with $\left(\mathcal{X}_1 \times \mathcal{X}'_1\right) \times \left(\mathcal{X}_2 \times \mathcal{X}'_2\right) \times \left([k_1k'_1] \times [k_2k'_2]\right) \times [k_1k'_1] \times [k_2k'_2] \times \left(\mathcal{Y} \times \mathcal{Y}'\right)$. This gives a valid solution of the program computing $\mathrm{S}^{\mathrm{NS}}(W \otimes W',k_1k'_1,k_2k'_2)$. Thus, we get that $\mathrm{S}^{\mathrm{NS}}(W \otimes W',k_1k'_1,k_2k'_2)$ is larger than or equal to:
    \begin{equation}
      \begin{aligned}
        &\sum_{x_1x_1',x_2x_2',yy',i_1i_1',i_2i_2'} \left(W \otimes W'\right)(yy'|x_1x_1'x_2x_2')\left(P \otimes P'\right)(x_1x_1'x_2x_2'(i_1i_1'i_2i_2')|i_1i_1',i_2i_2'yy')\\
        &= \sum_{x_1x_1',x_2x_2',yy',i_1i_1',i_2i_2'} \left(W(y|x_1x_2) \cdot W'(y'|x'_1x'_2)\right)\left( P(x_1x_2(i_1i_2)|i_1i_2y) \cdot P'(x'_1x'_2(i'_1i'_2)|i'_1i'_2y') \right)\\
        &= \left(\sum_{i_1,i_2,x_1,x_2,y} W(y|x_1x_2)P(x_1x_2(i_1i_2)|i_1i_2y)\right)\cdot\left(\sum_{x'_1,x'_2,y',i'_1,i'_2} W'(y'|x'_1x'_2)P'(x'_1x'_2(i'_1i'_2)|i'_1i'_2y')\right)\\
        &= \mathrm{S}^{\mathrm{NS}}(W,k_1,k_2) \cdot \mathrm{S}^{\mathrm{NS}}(W',k'_1,k'_2) \ .
      \end{aligned}
    \end{equation}

    In particular, applying this $n$ times on the same MAC $W$ gives the first corollary, and the second one comes from the fact that $\mathrm{S}^{\mathrm{NS}}(W \otimes W',k_1,k_2) \geq \mathrm{S}^{\mathrm{NS}}(W,k_1,k_2) \cdot \mathrm{S}^{\mathrm{NS}}(W',1,1) = \mathrm{S}^{\mathrm{NS}}(W,k_1,k_2)$, since $\mathrm{S}^{\mathrm{NS}}(W',1,1) = 1$ by the first property of Proposition~\ref{prop:oneShot}.
  \end{enumerate}
\end{proof}

\begin{cor}
  \label{cor:asymptotic}
  \begin{enumerate}
      \item $\mathcal{C}^{\mathrm{NS}}(W)$ is convex.
      \item If $(R_1,R_2)$ is achievable with non-signaling assistance, then $R_1 \leq \log_2|\mathcal{X}_1|$, $R_2 \leq \log_2|\mathcal{X}_2|$ and $R_1 + R_2 \leq \log_2|\mathcal{Y}|$.
      \item If $(R_1,R_2)$ is achievable with non-signaling assistance, then for all $R_i' \leq R_i$, $(R_1',R_2')$ is achievable with non-signaling assistance.
  \end{enumerate}
\end{cor}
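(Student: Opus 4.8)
The plan is to obtain all three items from the one-shot and single-copy estimates of Proposition~\ref{prop:oneShot}; items~2 and~3 come out essentially for free, and the bulk of the work is the convexity of $\mathcal{C}^{\mathrm{NS}}(W)$, which I would establish by a time-sharing (concatenation) argument.

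\textbf{Items 2 and 3.} For item~3, if $R_1' \le R_1$ and $R_2' \le R_2$ then $\ceil{2^{R_1'n}} \le \ceil{2^{R_1 n}}$ and $\ceil{2^{R_2'n}} \le \ceil{2^{R_2 n}}$ for every $n$; applying the monotonicity statement of Proposition~\ref{prop:oneShot} to the channel $W^{\otimes n}$ gives $\mathrm{S}^{\mathrm{NS}}(W^{\otimes n},\ceil{2^{R_1'n}},\ceil{2^{R_2'n}}) \ge \mathrm{S}^{\mathrm{NS}}(W^{\otimes n},\ceil{2^{R_1 n}},\ceil{2^{R_2 n}})$, and letting $n \to \infty$ the right side tends to $1$ while the left side is at most $1$, so it tends to $1$ as well. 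For item~2, apply the bound $\mathrm{S}^{\mathrm{NS}}(W^{\otimes n},k_1,k_2) \le \min(|\mathcal{X}_1|^n/k_1,\ |\mathcal{X}_2|^n/k_2,\ |\mathcal{Y}|^n/(k_1k_2))$ with $k_i = \ceil{2^{R_i n}} \ge 2^{R_i n}$: for instance $\mathrm{S}^{\mathrm{NS}}(W^{\otimes n},\ceil{2^{R_1 n}},\ceil{2^{R_2 n}}) \le 2^{n(\log_2|\mathcal{X}_1| - R_1)}$, which tends to $0$ if $R_1 > \log_2|\mathcal{X}_1|$, contradicting achievability; hence $R_1 \le \log_2|\mathcal{X}_1|$, and the same computation with the other two terms of the minimum gives $R_2 \le \log_2|\mathcal{X}_2|$ and $R_1 + R_2 \le \log_2|\mathcal{Y}|$.

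\textbf{Item 1.} Since $\mathcal{C}^{\mathrm{NS}}(W)$ is closed by definition, it is enough to show that for any two achievable pairs $(R_1,R_2)$ and $(R_1',R_2')$ and any rational $\lambda = p/q \in [0,1]$ the combination $(\bar{R}_1, \bar{R}_2) := \lambda (R_1,R_2) + (1-\lambda)(R_1',R_2')$ lies in $\mathcal{C}^{\mathrm{NS}}(W)$; approximating arbitrary points of the closure by achievable pairs and arbitrary convex weights by rationals then yields full convexity. Discarding the trivial cases $\lambda \in \{0,1\}$, I would fix $\epsilon > 0$ and show that $(\bar{R}_1 - \epsilon, \bar{R}_2 - \epsilon)$ is achievable, which places $(\bar{R}_1, \bar{R}_2)$ in the closure. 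The device is to write $n = mq + s$ with $0 \le s < q$, to split $W^{\otimes mq} = W^{\otimes mp} \otimes W^{\otimes m(q-p)}$, to run an optimal code for $(R_1,R_2)$ on the first $mp$ copies and one for $(R_1',R_2')$ on the next $m(q-p)$ copies, and to throw away the last $s$ copies. Chaining ``adding channel uses does not hurt'' and ``decreasing $k_1,k_2$ does not hurt'' from Proposition~\ref{prop:oneShot} (together with $W^{\otimes mq} = W^{\otimes mp}\otimes W^{\otimes m(q-p)}$ and $\mathrm{S}^{\mathrm{NS}}(W\otimes W', k_1k_1', k_2k_2') \ge \mathrm{S}^{\mathrm{NS}}(W,k_1,k_2)\,\mathrm{S}^{\mathrm{NS}}(W',k_1',k_2')$), one gets, for all large $n$,
\begin{align*}
 \mathrm{S}^{\mathrm{NS}}(W^{\otimes n},\ceil{2^{(\bar{R}_1-\epsilon)n}},\ceil{2^{(\bar{R}_2-\epsilon)n}})
 &\ge \mathrm{S}^{\mathrm{NS}}(W^{\otimes mq},\ceil{2^{\bar{R}_1 mq}},\ceil{2^{\bar{R}_2 mq}}) \\
 &\ge \mathrm{S}^{\mathrm{NS}}(W^{\otimes mp},\ceil{2^{R_1 mp}},\ceil{2^{R_2 mp}})\cdot \mathrm{S}^{\mathrm{NS}}(W^{\otimes m(q-p)},\ceil{2^{R_1' m(q-p)}},\ceil{2^{R_2' m(q-p)}}),
\end{align*}
where the first inequality uses that $\ceil{2^{(\bar{R}_i-\epsilon)n}} \le \ceil{2^{\bar{R}_i mq}}$ once $(\bar{R}_i - \epsilon)n \le \bar{R}_i (n-s)$, i.e. once $\epsilon n \ge \bar{R}_i q$, and the last step uses $\ceil{2^{R_i mp}}\ceil{2^{R_i' m(q-p)}} \ge 2^{R_i mp + R_i' m(q-p)} = 2^{\bar{R}_i mq}$, hence $\ge \ceil{2^{\bar{R}_i mq}}$. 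As $n \to \infty$ we have $m \to \infty$, so each factor on the right tends to $1$ by achievability of $(R_1,R_2)$ and of $(R_1',R_2')$ (restricted to the subsequences $mp$ and $m(q-p)$), and therefore the left side tends to $1$.

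\textbf{Where the difficulty sits.} Every individual estimate is routine; the only delicate point is the bookkeeping when $n$ is not a multiple of $q$, and this is precisely why I work with the cushion $\epsilon$: without it one would need $\ceil{2^{\bar{R}_1 n}} \le \ceil{2^{\bar{R}_1 mq}}$ with $mq \le n$, which goes the wrong way, whereas the shifted rates $\bar{R}_i - \epsilon$ make the inequality valid for all large $n$. I would also be careful to note that the construction only shows $(\bar{R}_1 - \epsilon, \bar{R}_2 - \epsilon)$ achievable for every $\epsilon > 0$, hence $(\bar{R}_1,\bar{R}_2) \in \mathcal{C}^{\mathrm{NS}}(W)$ --- which, $\mathcal{C}^{\mathrm{NS}}(W)$ being closed, is exactly what is needed --- and to write out the density argument reducing convexity of the closed region to rational convex combinations of genuinely achievable pairs.
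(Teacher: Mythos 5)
Your proof is correct and follows essentially the same route as the paper: items 2 and 3 are read off directly from the one-shot bounds and the monotonicity properties of Proposition~\ref{prop:oneShot}, and convexity is obtained by time-sharing via the supermultiplicativity $\mathrm{S}^{\mathrm{NS}}(W\otimes W',k_1k_1',k_2k_2')\ge \mathrm{S}^{\mathrm{NS}}(W,k_1,k_2)\cdot\mathrm{S}^{\mathrm{NS}}(W',k_1',k_2')$. If anything, your bookkeeping for block lengths $n$ that are not multiples of $q$ (the $\epsilon$-cushion) is more careful than the paper's argument, which restricts to rational rates and effectively argues along a subsequence of block lengths.
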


\begin{proof}
  \begin{enumerate}
    \item Let $(R_1,R_2)$ and $(\tilde{R}_1,\tilde{R}_2)$, two pairs of rational rates achievable with non-signaling assistance for $W$, ie:
    \[ \mathrm{S}^{\mathrm{NS}}(W^{\otimes n},\ceil{2^{R_1n}},\ceil{2^{R_2n}}) \underset{n \rightarrow +\infty}{\rightarrow} 1 \text{ and }\mathrm{S}^{\mathrm{NS}}(W^{\otimes n},\ceil{2^{\tilde{R}_1n}},\ceil{2^{\tilde{R}_2n}}) \underset{n \rightarrow +\infty}{\rightarrow} 1 \ . \]
    Let $\lambda \in (0,1)$ rational and define $R_{\lambda,i} := \lambda \cdot R_i + (1-\lambda) \cdot \tilde{R}_i$, let us show that $(R_{\lambda,1},R_{\lambda,2})$ is achievable with non-signaling assistance. Let us call respectively $k_i:=2^{R_i}, \tilde{k}_i := 2^{\tilde{R}_i}, k_{\lambda,i} := 2^{R_{\lambda,i}} = k_i^{\lambda}\cdot k_i^{(1-\lambda)}$.

    We have $R_{\lambda,i}n = \lambda \cdot R_in + (1-\lambda) \cdot \tilde{R}_in = (\lambda n) \cdot R_i + (1-\lambda)n \cdot \tilde{R}_i$. This is the idea of \emph{time-sharing}: for $\lambda n$ copies of the MAC, we use the strategy with rate $(R_1,R_2)$ and for the $(1-\lambda)n$ other copies of the MAC, we use the strategy with rate $(\tilde{R}_1,\tilde{R}_2)$. There exists some $n$ such that $\lambda n,(1-\lambda)n,\lambda n R_i,(1-\lambda)n \tilde{R}_i$ are integers, since everything is rational. This implies that $k_i^{\lambda n},\tilde{k}_i^{(1-\lambda)n},k_{\lambda,i}^n$ are integers. Thus, thanks to the fourth property of Proposition~\ref{prop:oneShot}, we have:
    \begin{equation}
      \begin{aligned}
        \mathrm{S}^{\mathrm{NS}}(W^{\otimes n},k^n_{\lambda,1},k^n_{\lambda, 2}) &\geq \mathrm{S}^{\mathrm{NS}}(W^{\otimes (\lambda n)}, k_1^{\lambda n}, k_2^{\lambda n}) \cdot \mathrm{S}^{\mathrm{NS}}(W^{\otimes ((1-\lambda) n)}, \tilde{k}_1^{(1-\lambda) n}, \tilde{k}_2^{(1-\lambda) n})\\
        &\underset{n \rightarrow +\infty}{\rightarrow} 1 \cdot 1 = 1 \ .
      \end{aligned}
    \end{equation}

    Thus in particular, since we have $\mathrm{S}^{\mathrm{NS}}(W^{\otimes n},k^n_{\lambda,1},k^n_{\lambda,2}) \leq 1$, we get that $\mathrm{S}^{\mathrm{NS}}(W^{\otimes n},k^n_{\lambda,1},k^n_{\lambda,2}) \underset{n \rightarrow +\infty}{\rightarrow} 1$, so $(R_{\lambda,1},R_{\lambda,2})$ is achievable with non-signaling assistance. Finally, since $\mathcal{C}^{\mathrm{NS}}(W)$ is defined as the closure of achievable rates with non-signaling assistance, we get that $\mathcal{C}^{\mathrm{NS}}(W)$ is convex.
    
  \item By the second property of Proposition~\ref{prop:oneShot}, we have that $\mathrm{S}^{\mathrm{NS}}(W^{\otimes n},k_1^n,k_2^n) \leq \frac{|\mathcal{X}_1^n|}{k_1^n}$. In particular, if one takes $R_1 > \log_2 |\mathcal{X}_1|$, then $k_1 > |\mathcal{X}_1|$ and we get that $\mathrm{S}^{\mathrm{NS}}(W^{\otimes n},k_1^n,k_2^n) \leq \left(\frac{|\mathcal{X}_1|}{k_1}\right)^n \underset{n \rightarrow +\infty}{\rightarrow} 0$, so $R_1 > \log_2 |\mathcal{X}_1|$ is not achievable with non-signaling assistance. Symmetrically, $R_2 > \log_2 |\mathcal{X}_2|$ is not achievable with non-signaling assistance.

    Furthermore, if one takes $R_1+R_2 > \log_2|\mathcal{Y}|$, then in particular $k_1k_2 > |\mathcal{Y}|$, so by the second property of Proposition~\ref{prop:oneShot}, $\mathrm{S}^{\mathrm{NS}}(W^{\otimes n},k_1^n,k_2^n) \leq \frac{|\mathcal{Y}^n|}{k_1^nk_2^n} = \left(\frac{|\mathcal{Y}|}{k_1k_2}\right)^n \underset{n \rightarrow +\infty}{\rightarrow} 0$. Thus, $R_1+R_2 > \log_2|\mathcal{Y}|$ is not achievable with non-signaling assistance.
    
  \item Since $(R_1,R_2)$ is achievable with non-signaling assistance, we have $\mathrm{S}^{\mathrm{NS}}(W^{\otimes n},\ceil{2^{nR_1}},\ceil{2^{nR_2}}) \underset{n \rightarrow +\infty}{\rightarrow} 1$. But, for all positive integer $n$, we have that $\ceil{2^{nR'_1}} \leq \ceil{2^{nR_1}}$ and $\ceil{2^{nR'_2}}\leq \ceil{2^{nR_2}}$, so by the third property of Proposition~\ref{prop:oneShot}, we have that $\mathrm{S}^{\mathrm{NS}}(W^{\otimes n},\ceil{2^{nR'_1}},\ceil{2^{nR'_2}}) \geq \mathrm{S}^{\mathrm{NS}}(W^{\otimes n},\ceil{2^{nR_1}},\ceil{2^{nR_2}})$. Thus $\mathrm{S}^{\mathrm{NS}}(W^{\otimes n},\ceil{2^{nR'_1}},\ceil{2^{nR'_2}} \underset{n \rightarrow +\infty}{\rightarrow} 1$ since it is upper bounded by $1$, and so $(R'_1,R'_2)$ is achievable with non-signaling assistance.
  \end{enumerate}
\end{proof}

\begin{prop}
  \label{prop:ZENScapacity}
   $\mathcal{C}^{\mathrm{NS}}_0(W)$ is the closure of the set of rate pairs $(R_1,R_2)$ such that:
  \[ \exists n \in \mathbb{N}^*, \mathrm{S}^{\mathrm{NS}}(W^{\otimes n},\ceil{2^{R_1n}},\ceil{2^{R_2n}}) = 1 \ . \]
\end{prop}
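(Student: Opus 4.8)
The plan is to show that the two descriptions of $\mathcal{C}^{\mathrm{NS}}_0(W)$ have the same closure. Write $A$ for the set of rate pairs achievable with zero error and non-signaling assistance (so $(R_1,R_2)\in A$ iff $\exists n_0,\forall n\geq n_0,\ \mathrm{S}^{\mathrm{NS}}(W^{\otimes n},\ceil{2^{R_1n}},\ceil{2^{R_2n}})=1$) and $B$ for the set appearing in the statement ($(R_1,R_2)\in B$ iff $\exists n,\ \mathrm{S}^{\mathrm{NS}}(W^{\otimes n},\ceil{2^{R_1n}},\ceil{2^{R_2n}})=1$). By definition $\mathcal{C}^{\mathrm{NS}}_0(W)=\overline{A}$, so the goal is $\overline{A}=\overline{B}$. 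One inclusion is immediate: if the success probability equals $1$ for all $n\geq n_0$ then in particular for $n=n_0$, so $A\subseteq B$ and hence $\overline{A}\subseteq\overline{B}$. Since $\overline{A}$ is closed, for the reverse inclusion it suffices to prove $B\subseteq\overline{A}$.

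So I would fix $(R_1,R_2)\in B$, say $\mathrm{S}^{\mathrm{NS}}(W^{\otimes n_0},k_1,k_2)=1$ with $k_i:=\ceil{2^{R_in_0}}$. First I would upgrade this single-$n$ statement to an all-large-$n$ statement: by the multiplicativity under tensor products (fourth item of Proposition~\ref{prop:oneShot}) together with $\mathrm{S}^{\mathrm{NS}}\leq 1$ (first item), one gets $\mathrm{S}^{\mathrm{NS}}(W^{\otimes n_0 m},k_1^m,k_2^m)=1$ for every $m\geq 1$; then, writing an arbitrary $n$ as $n=n_0\floor{n/n_0}+s$ with $0\leq s<n_0$ and using $W^{\otimes n}=W^{\otimes n_0\floor{n/n_0}}\otimes W^{\otimes s}$ with $\mathrm{S}^{\mathrm{NS}}(W^{\otimes s},1,1)=1$, one concludes $\mathrm{S}^{\mathrm{NS}}(W^{\otimes n},k_1^{\floor{n/n_0}},k_2^{\floor{n/n_0}})=1$ for all $n\geq 1$.

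Next I would fix any $R_i'$ with $0\leq R_i'<R_i$ when $R_i>0$, and $R_i':=0$ when $R_i=0$, and check that $\ceil{2^{R_i'n}}\leq k_i^{\floor{n/n_0}}$ for all sufficiently large $n$: using $k_i\geq 2^{R_in_0}$ one has $k_i^{\floor{n/n_0}}\geq 2^{R_i(n-n_0)}$, while $\ceil{2^{R_i'n}}\leq 2^{R_i'n}+1$, and the gap $2^{(R_i-R_i')n}\to\infty$ absorbs both the $2^{-R_in_0}$ factor and the $+1$ (in the degenerate case $R_i=R_i'=0$ one just uses $\ceil{2^0}=1\leq k_i^{\floor{n/n_0}}$). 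By the monotonicity in $k_1,k_2$ (third item of Proposition~\ref{prop:oneShot}) this gives $\mathrm{S}^{\mathrm{NS}}(W^{\otimes n},\ceil{2^{R_1'n}},\ceil{2^{R_2'n}})=1$ for all large $n$, i.e.\ $(R_1',R_2')\in A$. Letting $R_i'\nearrow R_i$ exhibits $(R_1,R_2)$ as a limit of points of $A$, hence $(R_1,R_2)\in\overline{A}$, proving $B\subseteq\overline{A}$ and finishing the argument.

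The only mildly delicate point is the asymptotic comparison $\ceil{2^{R_i'n}}\leq k_i^{\floor{n/n_0}}$ and the bookkeeping around the degenerate rate $R_i=0$; everything else is a direct application of the structural properties of $\mathrm{S}^{\mathrm{NS}}$ already established in Proposition~\ref{prop:oneShot}.
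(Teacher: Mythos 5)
Your proposal is correct and follows essentially the same route as the paper: both arguments use the multiplicativity and monotonicity properties of $\mathrm{S}^{\mathrm{NS}}$ from Proposition~\ref{prop:oneShot} to bootstrap a single-$n$ perfect code into perfect codes at all sufficiently large block lengths for any strictly smaller rate pair, and then pass to the closure. The only difference is cosmetic bookkeeping (you compare $\ceil{2^{R_i'n}}$ directly with $k_i^{\floor{n/n_0}}$, while the paper reparametrizes the rate as $R_i/(1+1/m)$), so no further comment is needed.
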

\begin{proof}
  It is clear that if $(R_1,R_2)$ is such that $\exists n_0 \in \mathbb{N}^*, \forall n \geq n_0, \mathrm{S}^{\mathrm{NS}}(W^{\otimes n},\ceil{2^{R_1n}},\ceil{2^{R_2n}}) = 1$, then in particular $\exists n \in \mathbb{N}^*, \mathrm{S}^{\mathrm{NS}}(W^{\otimes n},\ceil{2^{R_1n}},\ceil{2^{R_2n}}) = 1$. So, $\mathcal{C}^{\mathrm{NS}}_0(W)$, which is the closure of the former rate pairs, is in particular included in the closure of the latter rate pairs.
  
  For the other inclusion, consider a rate pair $(R_1,R_2)$ and let us assume that there exists some positive integer $n$ such that $\mathrm{S}^{\mathrm{NS}}(W^{\otimes n},\ceil{2^{R_1n}},\ceil{2^{R_2n}}) = 1$. Let us show that for any $(R_1',R_2')$ such that $R'_1 < R_1$ and $R'_2 < R_2$:
  \[ \exists n_0 \in \mathbb{N}^*, \forall n \geq n_0, \mathrm{S}^{\mathrm{NS}}(W^{\otimes n},\ceil{2^{R_1'n}},\ceil{2^{R_2'n}}) = 1 \ , \]
  which is enough to conclude, since we consider only closure of such sets.
  
  First, for all positive integer $m$, we have that $\mathrm{S}^{\mathrm{NS}}(W^{\otimes nm},\ceil{2^{R_1nm}},\ceil{2^{R_2nm}}) = 1$. By the fourth property of Proposition~\ref{prop:oneShot}, we have that $\mathrm{S}^{\mathrm{NS}}(\left(W^{\otimes n}\right)^{\otimes m},\ceil{2^{R_1n}}^m,\ceil{2^{R_2n}}^m) \geq \left[\mathrm{S}^{\mathrm{NS}}(W^{\otimes n},\ceil{2^{R_1n}},\ceil{2^{R_2n}})\right]^m = 1$, so $\mathrm{S}^{\mathrm{NS}}(\left(W^{\otimes n}\right)^{\otimes m},\ceil{2^{R_1n}}^m,\ceil{2^{R_2n}}^m) = 1$ since $\mathrm{S}^{\mathrm{NS}}(W,k_1,k_2) \leq 1$ by the first property of Proposition~\ref{prop:oneShot}. But $\left(W^{\otimes n}\right)^{\otimes m} = W^{\otimes nm}$, and $\ceil{2^{R_1n}}^m \geq \ceil{2^{R_1nm}},\ceil{2^{R_2n}}^m \geq \ceil{2^{R_2nm}}$, so by the third property of Proposition~\ref{prop:oneShot}, we have $\mathrm{S}^{\mathrm{NS}}(W^{\otimes nm},\ceil{2^{R_1nm}},\ceil{2^{R_2nm}}) \geq 1$, so $\mathrm{S}^{\mathrm{NS}}(W^{\otimes nm},\ceil{2^{R_1nm}},\ceil{2^{R_2nm}})=1$.

  Then, consider some $r \in \{0,\ldots,n-1\}$. By the fourth property of Proposition~\ref{prop:oneShot}, we have that:
  \begin{equation}
    \begin{aligned}
      \mathrm{S}^{\mathrm{NS}}(W^{\otimes (nm+r)},\ceil{2^{R_1nm}},\ceil{2^{R_2nm}}) &= \mathrm{S}^{\mathrm{NS}}(W^{\otimes nm} \otimes W^{\otimes r},\ceil{2^{R_1nm}},\ceil{2^{R_2nm}})\\
      &\geq \mathrm{S}^{\mathrm{NS}}(W^{\otimes nm},\ceil{2^{R_1nm}},\ceil{2^{R_2nm}}) = 1 \ ,
    \end{aligned}
  \end{equation}
  so $\mathrm{S}^{\mathrm{NS}}(W^{\otimes (nm+r)},\ceil{2^{R_1nm}},\ceil{2^{R_2nm}}) = 1$. But $\ceil{2^{R_1nm}} = \ceil{2^{\frac{R_1nm}{nm+r}(nm+r)}} = \ceil{2^{\frac{R_1}{1+\delta}(nm+r)}}$ with $\delta = \frac{r}{nm} \leq \frac{1}{m}$, and symmetrically $\ceil{2^{R_1nm}} = \ceil{2^{\frac{R_1}{1+\delta}(nm+r)}}$. Thus in particular, for all $R'_1 \leq \frac{R_1}{1+\frac{1}{m}}$ and $R'_2 \leq \frac{R_2}{1+\frac{1}{m}}$, we have that for all $n'\geq nm, \mathrm{S}^{\mathrm{NS}}(W^{\otimes n'},\ceil{2^{R_1'n'}},\ceil{2^{R_2'n'}}) = 1$. So for any $(R_1',R_2')$ such that $R'_1 < R_1$ and $R'_2 < R_2$, there is large enough $m$ such that $R'_1 \leq \frac{R_1}{1+\frac{1}{m}}$ and $R'_2 \leq \frac{R_2}{1+\frac{1}{m}}$, and thus we get the expected property on $(R_1',R_2')$ for $n_0:=nm$.
\end{proof}

\subsection{Linear Program with Reduced Size for Structured Channels}

Although $\mathrm{S}^{\mathrm{NS}}(W,k_1,k_2)$ can be computed in polynomial time in $W$, $k_1$ and $k_2$, a channel of the form $W^{\otimes n}$ has exponential size in $n$. Thus, the linear program for $\mathrm{S}^{\mathrm{NS}}(W^{\otimes n},k_1,k_2)$ grows exponentially with $n$. However, using the invariance of $W^{\otimes n}$ under permutations, one can find a much smaller linear program computing $\mathrm{S}^{\mathrm{NS}}(W^{\otimes n},k_1,k_2)$.

\begin{defi} Let $G$ a group acting on $\mathcal{X}_1,\mathcal{X}_2,\mathcal{Y}$. We say that a MAC $W : \mathcal{X}_1 \times \mathcal{X}_2 \rightarrow \mathcal{Y}$ is \emph{invariant under the action of $G$} if:
  \[\forall g \in G, W(g \cdot y|g \cdot x_1 g \cdot x_2)=W(y|x_1x_2) \ .\]
\end{defi}

In particular, for  $W^{\otimes n}: \mathcal{X}_1^n \times \mathcal{X}_2^n \rightarrow \mathcal{Y}^n$, the symmetric group $G:=S_n$ acts in a natural way in any set $\mathcal{A}$ raised to power $n$. So for $\sigma \in S_n$, we have that:
\[ W^{\otimes n}(\sigma \cdot y^n|\sigma \cdot x_1^n \sigma \cdot x_2^n) =\prod_{i=1}^n W(y_{\sigma(i)}|x_{1,\sigma(i)}x_{2,\sigma(i)}) = \prod_{i=1}^n W(y_{i}|x_{1,i}x_{2,i}) = W^{\otimes n}(y^n|x_1^nx_2^n) \ ,\]
and so $W^{\otimes n}$ is invariant under the action of $S_n$.

Let $\mathcal{Z} := \{\mathcal{X}_1, \mathcal{X}_2, \mathcal{Y}, \mathcal{X}_1 \times \mathcal{Y}, \mathcal{X}_2 \times \mathcal{Y}, \mathcal{X}_1 \times \mathcal{X}_2,  \mathcal{X}_1 \times \mathcal{X}_2 \times \mathcal{Y} \}$. Let us call $\mathcal{O}_G(\mathcal{A})$ the set of orbits of $\mathcal{A}$ under the action of $G$. Then, one can find an equivalent smaller linear program for $\mathrm{S}^{\mathrm{NS}}(W,k_1,k_2)$:
\begin{theo}
  \label{theo:polyLP}
  Let $W : \mathcal{X}_1 \times \mathcal{X}_2 \rightarrow \mathcal{Y}$ a MAC invariant under the action of $G$. Let us name systematically $w \in \mathcal{O}_G(\mathcal{X}_1 \times \mathcal{X}_2 \times \mathcal{Y}), u \in \mathcal{O}_G(\mathcal{X}_1 \times \mathcal{X}_2), u^1 \in \mathcal{O}_G(\mathcal{X}_1), u^2 \in \mathcal{O}_G(\mathcal{X}_2), v^1 \in \mathcal{O}_G(\mathcal{X}_1 \times \mathcal{Y}), v^2 \in \mathcal{O}_G(\mathcal{X}_2 \times \mathcal{Y}), v \in \mathcal{O}_G(\mathcal{Y})$. We will also call $z_{\mathcal{A}}$ the projection of $z \in \mathcal{O}_G(\mathcal{B})$ on $\mathcal{A}$, for $\mathcal{A},\mathcal{B} \in \mathcal{Z}$ and $\mathcal{A}$ projection of $\mathcal{B}$; note that $z_{\mathcal{A}} \in \mathcal{O}_G(\mathcal{A})$, since by definition of the action, the projection of an orbit is an orbit. Let us finally call $W(w) := W(y|x_1x_2)$ for any $(x_1,x_2,y) \in w$, which is well-defined since $W$ is invariant under $G$.
  We have that $\mathrm{S}^{\mathrm{NS}}(W,k_1,k_2)$ is the solution of the following linear program:
  \begin{equation}
  \begin{aligned}
    \mathrm{S}^{\mathrm{NS}}(W,k_1,k_2) = &&\underset{r,r^1,r^2,p}{\maxi} &&& \frac{1}{k_1k_2}\sum_{w \in \mathcal{O}_G(\mathcal{X}_1 \times \mathcal{X}_2 \times \mathcal{Y})} W(w)r_w\\
    &&\st &&& \sum_{w:w_{\mathcal{Y}}=v} r_w = |v|, \forall v \in \mathcal{O}_G(\mathcal{Y})\\
    &&&&& \sum_{w:w_{\mathcal{X}_2\mathcal{Y}}=v^2} r^1_{w} = k_1 \sum_{w:w_{\mathcal{X}_2\mathcal{Y}}=v^2} r_w, \: \forall v^2 \in \mathcal{O}_G(\mathcal{X}_2 \times \mathcal{Y})\\
    &&&&& \sum_{w:w_{\mathcal{X}_1\mathcal{Y}}=v^1} r^2_{w} = k_2 \sum_{w:w_{\mathcal{X}_1\mathcal{Y}}=v^1} r_w, \: \forall v^1 \in \mathcal{O}_G(\mathcal{X}_1 \times \mathcal{Y})\\
    &&&&& \sum_{u:u_{\mathcal{X}_2}=v^2_{\mathcal{X}_2}} p_u = \frac{|v^2_{\mathcal{X}_2}|}{|v^2|} k_1 \sum_{w:w_{\mathcal{X}_2\mathcal{Y}}=v^2} r^2_w, \: \forall v^2 \in \mathcal{O}_G(\mathcal{X}_2 \times \mathcal{Y})\\
    &&&&& \sum_{u:u_{\mathcal{X}_1}=v^1_{\mathcal{X}_1}} p_u = \frac{|v^1_{\mathcal{X}_1}|}{|v^1|} k_2 \sum_{w:w_{\mathcal{X}_1\mathcal{Y}}=v^1} r^1_w, \: \forall v^1 \in \mathcal{O}_G(\mathcal{X}_1 \times \mathcal{Y})\\
    &&&&& 0 \leq r_w \leq r^1_w,r^2_w \leq \frac{|w|}{|w_{\mathcal{X}_1\mathcal{X}_2}|}p_{w_{\mathcal{X}_1\mathcal{X}_2}}, \: \forall w \in \mathcal{O}_G(\mathcal{X}_1 \times \mathcal{X}_2 \times \mathcal{Y})\\
    &&&&& \frac{|w|}{|w_{\mathcal{X}_1\mathcal{X}_2}|}p_{w_{\mathcal{X}_1\mathcal{X}_2}} -  r^1_w - r^2_w + r_w \geq 0, \: \forall w \in \mathcal{O}_G(\mathcal{X}_1 \times \mathcal{X}_2 \times \mathcal{Y}) \ .\\
  \end{aligned}
  \end{equation}
\end{theo}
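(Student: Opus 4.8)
The plan is to obtain the reduced program from the linear program of Proposition~\ref{prop:NSLP} (call it $(\mathrm{LP}_0)$, with variables $r_{x_1,x_2,y}, r^1_{x_1,x_2,y}, r^2_{x_1,x_2,y}, p_{x_1,x_2}$) by a symmetrization argument followed by a change of variables to orbit representatives. The group $G$ acts on all index sets appearing in $(\mathrm{LP}_0)$, and since $W$ is $G$-invariant, relabelling every index of a feasible point through a fixed $g \in G$ produces another feasible point with the same objective value: the constraints of $(\mathrm{LP}_0)$ get permuted among themselves, and in the objective the substitution $x_i \mapsto g \cdot x_i$, $y \mapsto g \cdot y$ together with $W(g \cdot y| g \cdot x_1\, g \cdot x_2) = W(y|x_1 x_2)$ leaves every term unchanged. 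Averaging over all $g \in G$ then yields, by linearity of the constraints, a feasible point that is constant on every $G$-orbit and attains the same objective value. Hence the optimum of $(\mathrm{LP}_0)$ equals its optimum over $G$-invariant feasible points, which is $\mathrm{S}^{\mathrm{NS}}(W,k_1,k_2)$.

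Next I would set up the bijection between $G$-invariant assignments and tuples of orbit variables: encode a $G$-invariant assignment by its \emph{orbit sums} $r_w := \sum_{(x_1,x_2,y) \in w} r_{x_1,x_2,y}$ (so that the common value of $r$ on $w$ is $r_w/|w|$), and similarly $r^1_w$, $r^2_w$, $p_u$. This is a linear bijection, and the objective $\frac{1}{k_1k_2}\sum_{x_1,x_2,y} W(y|x_1x_2) r_{x_1,x_2,y}$ becomes $\frac{1}{k_1k_2}\sum_w W(w) r_w$ since $W$ is constant on orbits. For $G$-invariant points the two sides of any fixed constraint of $(\mathrm{LP}_0)$ depend only on the orbit of its free index, so the constraint is equivalent to its sum over that orbit; carrying out these sums gives precisely the constraints of the reduced program. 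The normalization $\sum_{x_1,x_2} r_{x_1,x_2,y} = 1$, summed over $y \in v$, becomes $\sum_{w: w_{\mathcal{Y}}=v} r_w = |v|$; the marginal identity $\sum_{x_1} r^1_{x_1,x_2,y} = k_1 \sum_{x_1} r_{x_1,x_2,y}$, summed over $(x_2,y) \in v^2$, becomes $\sum_{w: w_{\mathcal{X}_2\mathcal{Y}}=v^2} r^1_w = k_1 \sum_{w: w_{\mathcal{X}_2\mathcal{Y}}=v^2} r_w$ (and symmetrically for $r^2$); the box inequalities and the final inequality, being homogeneous, translate termwise after writing each variable as its orbit sum divided by the orbit size, which is what produces the factor $\frac{|w|}{|w_{\mathcal{X}_1\mathcal{X}_2}|}$ in front of $p_{w_{\mathcal{X}_1\mathcal{X}_2}}$.

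The only genuinely delicate point, and the one I expect to be the main obstacle, is the two identities $\sum_{x_1} p_{x_1,x_2} = k_1 \sum_{x_1} r^2_{x_1,x_2,y}$ and its mirror, which relate an index set to a proper sub-index set and hence require a counting argument. The needed fact is that for an orbit $v^2 \in \mathcal{O}_G(\mathcal{X}_2 \times \mathcal{Y})$ with projection $v^2_{\mathcal{X}_2} \in \mathcal{O}_G(\mathcal{X}_2)$, every fibre $\{\,y : (x_2,y) \in v^2\,\}$ over a point $x_2 \in v^2_{\mathcal{X}_2}$ has the same cardinality $|v^2|/|v^2_{\mathcal{X}_2}|$: since $G$ acts transitively on $v^2$, hence on $v^2_{\mathcal{X}_2}$, any $g$ with $g \cdot x_2 = x_2'$ restricts to a bijection between the two fibres, so $|v^2| = |v^2_{\mathcal{X}_2}| \cdot |\text{fibre}|$. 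Summing $\sum_{x_1} p_{x_1,x_2} = k_1 \sum_{x_1} r^2_{x_1,x_2,y}$ over $(x_2,y) \in v^2$, the left-hand side sees each $x_2$ with multiplicity $|v^2|/|v^2_{\mathcal{X}_2}|$, giving $\frac{|v^2|}{|v^2_{\mathcal{X}_2}|} \sum_{u: u_{\mathcal{X}_2}=v^2_{\mathcal{X}_2}} p_u$, while the right-hand side is $k_1 \sum_{w: w_{\mathcal{X}_2\mathcal{Y}}=v^2} r^2_w$; dividing through yields the stated constraint with its factor $\frac{|v^2_{\mathcal{X}_2}|}{|v^2|}$. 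Combining these observations, $(\mathrm{LP}_0)$ and the reduced program have the same feasible objective values and in particular the same optimum $\mathrm{S}^{\mathrm{NS}}(W,k_1,k_2)$; everything except the fibre-counting step is the routine fact that a $G$-symmetric linear program may be restricted to its $G$-invariant subspace without changing the optimum.
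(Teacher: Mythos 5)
Your proposal is correct, and its substance --- the orbit-sum change of variables, the translation of each constraint family, and in particular the fibre-counting fact that every fibre of $v^2 \to v^2_{\mathcal{X}_2}$ has size $|v^2|/|v^2_{\mathcal{X}_2}|$ (the paper's Lemma~\ref{lem:orbitProjCard}) --- is exactly what the paper's proof relies on. The one organizational difference is that you first symmetrize: you average a feasible point of the program of Proposition~\ref{prop:NSLP} over $G$ to restrict to $G$-invariant solutions, and then exhibit a genuine bijection between invariant assignments and orbit variables, under which every constraint becomes an equivalence. The paper never averages; instead it gives two one-directional maps, sending an \emph{arbitrary} feasible point forward via orbit sums (summing equalities and inequalities over each orbit, which preserves them without any invariance assumption) and sending a reduced solution back by spreading each orbit value uniformly. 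The two routes are interchangeable here --- composing the paper's two maps is precisely your averaging step --- but yours makes the standard ``restrict a symmetric LP to its invariant subspace'' template explicit at the cost of a (trivial) convexity argument, while the paper's avoids that step at the cost of checking feasibility separately in each direction. Either way the delicate point is the same, and you identified and resolved it correctly.
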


\begin{cor}
  For a channel $W : \mathcal{X}_1 \times \mathcal{X}_2 \to \mathcal{Y}$, $\mathrm{S}^{\mathrm{NS}}(W^{\otimes n},k_1,k_2)$ is the solution of a linear program of size bounded by $O\left(n^{|\mathcal{X}_1|\cdot|\mathcal{X}_2 |\cdot|\mathcal{Y}|-1}\right)$, thus it can be computed in polynomial time in $n$.
  \label{cor:poly}
\end{cor}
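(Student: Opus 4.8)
The plan is to derive this as a direct counting consequence of Theorem~\ref{theo:polyLP}. First I would instantiate that theorem with the channel $W^{\otimes n} : \mathcal{X}_1^n \times \mathcal{X}_2^n \to \mathcal{Y}^n$ and the group $G = S_n$ acting by permuting the $n$ coordinates of each factor; as observed just before the theorem, $W^{\otimes n}$ is invariant under this action. Under the equivariant bijection $(x_1^n,x_2^n,y^n) \leftrightarrow \big((x_{1,1},x_{2,1},y_1),\dots,(x_{1,n},x_{2,n},y_n)\big)$, the set $\mathcal{X}_1^n \times \mathcal{X}_2^n \times \mathcal{Y}^n$ with its $S_n$-action is identified with $(\mathcal{X}_1\times\mathcal{X}_2\times\mathcal{Y})^n$ carrying the coordinate-permutation action, and likewise for each of the seven sets appearing in $\mathcal{Z}$ when the theorem is applied to $W^{\otimes n}$. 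Thus all the index sets of the reduced linear program are orbits of $n$-tuples over finite alphabets under $S_n$.

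Next I would bound the number of such orbits. An $S_n$-orbit of $\mathcal{A}^n$ is determined exactly by the multiset of the $n$ entries, i.e. by the type vector $(n_a)_{a\in\mathcal{A}}$ with $n_a\ge 0$ and $\sum_a n_a = n$; hence $|\mathcal{O}_{S_n}(\mathcal{A}^n)| = \binom{n+|\mathcal{A}|-1}{|\mathcal{A}|-1} = O\!\big(n^{|\mathcal{A}|-1}\big)$. Every set $\mathcal{A}$ occurring in $\mathcal{Z}$ (for $W^{\otimes n}$) satisfies $|\mathcal{A}| \le |\mathcal{X}_1|\cdot|\mathcal{X}_2|\cdot|\mathcal{Y}|$, so the total number of variables $r_w, r^1_w, r^2_w, p_u$ and the total number of constraints in the program of Theorem~\ref{theo:polyLP} are both $O\!\big(n^{|\mathcal{X}_1|\cdot|\mathcal{X}_2|\cdot|\mathcal{Y}|-1}\big)$.

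I would then verify that this program can be \emph{written down} in time polynomial in $n$: orbits are enumerated by enumerating their type vectors; the cardinalities $|w|$, $|v^1|$, $|v^2|$, $|v|$ are multinomial coefficients $\binom{n}{(n_a)_a}$ computable in polynomial time; the projection $z \mapsto z_{\mathcal{A}}$ acts on a type vector by summing out the coordinates not retained in $\mathcal{A}$; and the coefficient $W(w)$ of an orbit $w$ of type $(n_{x_1,x_2,y})_{x_1,x_2,y}$ equals $\prod_{x_1,x_2,y} W(y|x_1x_2)^{\,n_{x_1,x_2,y}}$, again computable in polynomial time. Since a linear program with polynomially many variables and constraints and polynomial-bit-size data is solvable in polynomial time (as already invoked via~\cite{LinearProgramming}), Theorem~\ref{theo:polyLP} then yields $\mathrm{S}^{\mathrm{NS}}(W^{\otimes n},k_1,k_2)$ in time polynomial in $n$, which is the claim.

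The orbit-counting step is the elementary stars-and-bars estimate, so the only point requiring genuine care is the last one: checking that every ingredient of the reduced linear program — in particular the orbit cardinalities and the projection maps between the various orbit sets that appear in the constraints — is efficiently computable from the type-vector encoding of orbits, and that nothing in the passage from Theorem~\ref{theo:polyLP} to $W^{\otimes n}$ secretly reintroduces exponential-size objects.
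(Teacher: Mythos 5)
Your proposal is correct and follows essentially the same route as the paper: instantiate Theorem~\ref{theo:polyLP} with $G=S_n$ acting on $W^{\otimes n}$, count orbits via the stars-and-bars bound $\binom{n+|\mathcal{A}|-1}{|\mathcal{A}|-1}$, and invoke polynomial-time LP solving. Your extra verification that the orbit cardinalities, projections, and coefficients $W(w)$ are efficiently computable from type vectors is a welcome elaboration of a point the paper treats only by bounding the bit-size of the numbers occurring in the program.
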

\begin{proof}
  We use the linear program obtained in Theorem~\ref{theo:polyLP} with $G:=S_n$ acting on $W^{\otimes n}$ as described before. 
  The number of variables and constraints is linear in the number of orbits of the action of $S^n$ on the different sets $\mathcal{A} \in \mathcal{Z}$, where here $\mathcal{Z} = \{\mathcal{X}_1^n, \mathcal{X}_2^n, \mathcal{Y}^n, \mathcal{X}^n_1 \times \mathcal{Y}^n, \mathcal{X}^n_2 \times \mathcal{Y}^n, \mathcal{X}^n_1 \times \mathcal{X}^n_2,  \mathcal{X}^n_1 \times \mathcal{X}^n_2 \times \mathcal{Y}^n \}$. For example, for $\mathcal{A} \in \mathcal{X}^n_1 \times \mathcal{X}^n_2 \times \mathcal{Y}^n$, we have that:

\[ |\mathcal{O}_{S_n}(\mathcal{X}^n_1 \times \mathcal{X}^n_2 \times \mathcal{Y}^n)| = \binom{n + |\mathcal{X}_1||\mathcal{X}_2| |\mathcal{Y}|   - 1}{|\mathcal{X}_1| |\mathcal{X}_2| |\mathcal{Y}| - 1} \leq (n+|\mathcal{X}_1||\mathcal{X}_2| |\mathcal{Y}|-1)^{|\mathcal{X}_1||\mathcal{X}_2| |\mathcal{Y}| - 1} \ .\]

So the number of variables and constraints is $O(n^{|\mathcal{X}_1|\cdot|\mathcal{X}_2 |\cdot|\mathcal{Y}|-1})$. Note also that all the numbers occurring this linear program are integers or fractions of integers, with those integers ranging in $\left[\left(|\mathcal{X}_1||\mathcal{X}_2||\mathcal{Y}|\right)^n\right]$, thus of size $O(n\log(|\mathcal{X}_1||\mathcal{X}_2||\mathcal{Y}|))$. So the size of this linear program is bounded by $O(n^{|\mathcal{X}_1|\cdot|\mathcal{X}_2 |\cdot|\mathcal{Y}|-1})$, and thus $\mathrm{S}^{\mathrm{NS}}(W^{\otimes n},k_1,k_2)$ can be computed in polynomial time in $n$; see for instance Section 7.1 of~\cite{LinearProgramming}.
\end{proof}

In order to prove Theorem~\ref{theo:polyLP}, we will need several lemmas. For all of them, $\mathcal{A}$ and $\mathcal{B}$ will denote finite sets on which a group $G$ is acting, and $x^G$ will denote the orbit of $x$ under $G$:

\begin{lem}
  \label{lem:orbitProjCard}
  Let $\tau \in \mathcal{O}_G(\mathcal{A} \times \mathcal{B})$, and call $\nu := \tau_{\mathcal{A}}$ and $\mu := \tau_{\mathcal{B}}$. For $x \in \nu$, let us call $B_{\tau}^x := \left\{y:(x,y)\in\tau\right\}$. Then, $|B_{\tau}^x|=|B_{\tau}^{x'}|=:c_{\tau}^{\nu}$ for any $x,x' \in \nu$, and furthermore, we have that $c_{\tau}^{\nu}=\frac{|\tau|}{|\nu|}$. Symmetrically, the same occurs for $A_{\tau}^y := \{x:(x,y)\in\tau\}$ with $y \in \mu$, where one gets that $|A_{\tau}^y|=|A_{\tau}^{y'}|=:c_{\tau}^{\mu}=\frac{|\tau|}{|\mu|}$ for $y,y' \in \mu$.
\end{lem}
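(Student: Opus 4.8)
The plan is a standard orbit-counting argument, exploiting that the action on the product set is diagonal, so that both projections $\tau \mapsto \tau_{\mathcal{A}}$ and $\tau \mapsto \tau_{\mathcal{B}}$ are $G$-equivariant and the orbit $\tau$ is itself $G$-stable. Throughout I will use that $\nu = \tau_{\mathcal{A}}$ and $\mu = \tau_{\mathcal{B}}$ are orbits of $\mathcal{A}$ and $\mathcal{B}$ respectively, as granted just before the statement.

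First I would show the constancy of $|B_{\tau}^x|$ on $\nu$. Given $x, x' \in \nu$, since $\nu$ is a single orbit there is $g \in G$ with $g \cdot x = x'$. I claim the map $y \mapsto g \cdot y$ restricts to a bijection $B_{\tau}^x \to B_{\tau}^{x'}$: if $(x,y) \in \tau$, then $g \cdot (x,y) = (x', g\cdot y) \in \tau$ because $\tau$ is $G$-stable, so $g \cdot y \in B_{\tau}^{x'}$; this map is injective because $g$ acts as a bijection on $\mathcal{B}$, and surjective because the same reasoning applied to $g^{-1}$ sends $B_{\tau}^{x'}$ back into $B_{\tau}^x$. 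Hence $|B_{\tau}^x| = |B_{\tau}^{x'}|$, and we may call this common value $c_{\tau}^{\nu}$.

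For the cardinality formula, I would partition $\tau$ according to its first coordinate. Every element of $\tau$ has first coordinate in $\nu = \tau_{\mathcal{A}}$ by definition of the projection, and conversely every $x \in \nu$ occurs as a first coordinate of some element of $\tau$, so
\[
  \tau = \bigsqcup_{x \in \nu} \{x\} \times B_{\tau}^x \ ,
\]
which gives $|\tau| = \sum_{x \in \nu} |B_{\tau}^x| = |\nu| \cdot c_{\tau}^{\nu}$, i.e. $c_{\tau}^{\nu} = |\tau|/|\nu|$. The statement for $A_{\tau}^y$, $y \in \mu$, follows by the symmetric argument with the roles of $\mathcal{A}$ and $\mathcal{B}$ exchanged.

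There is no real obstacle here: the only points requiring a little care are that $\tau$ is genuinely $G$-stable (so that translating a fiber lands in another fiber) and that the diagonal action is what makes $y \mapsto g\cdot y$ an honest bijection between fibers; both are immediate from the definitions, so the proof is essentially the disjoint-union count above.
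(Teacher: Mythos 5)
Your proof is correct and follows essentially the same route as the paper's: the translation map $y \mapsto g \cdot y$ gives a bijection between fibers (the paper phrases this as two injectivity arguments via $g$ and $g^{-1}$ rather than injectivity plus surjectivity, but it is the same idea), and the cardinality formula then follows from partitioning $\tau$ into the fibers $\{x\} \times B_{\tau}^x$ over $x \in \nu$. No gaps.
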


\begin{proof}
  Let $x,x' \in \nu$. Thus there exists $g \in G$ such that $x' = g \cdot x$. Let:
  \[ \begin{array}{ccccc}
    f & : & B_{\tau}^x  & \to & B_{\tau}^{x'} \\
    & & y & \mapsto & g \cdot y \ .\\
  \end{array}\]
  First, $f$ is well defined. Indeed, if $y \in B_{\tau}^x = \left\{y:(x,y)\in\tau\right\}$, then $g \cdot y \in \left\{y:(g \cdot x,y)\in\tau\right\} = B_{\tau}^{x'}$, since $\tau \in \mathcal{O}_G(\mathcal{A} \times \mathcal{B})$. Let us show that $f$ is injective. If $g \cdot y = g \cdot y'$, then $g^{-1} \cdot (g \cdot y) = (g^{-1}g) \cdot y = y$, $g^{-1} \cdot (g \cdot y') = y'$, so $y=y'$. Thus we get that $|B_{\tau}^x|\leq|B_{\tau}^{x'}|$. By a symmetric argument with $x'$ replacing $x$ and $g^{-1}$ replacing $g$, we get that $|B_{\tau}^{x'}|\leq|B_{\tau}^x|$, and so $|B_{\tau}^x|=|B_{\tau}^{x'}|=:c_{\tau}^{\nu}$.

  Furthermore, $\{ B_{\tau}^x \}_{x \in \nu}$ is a partition of $\tau$, so $\sum_{x \in \nu} |B_{\tau}^x| = |\nu| c_{\tau}^{\nu} = |\tau|$, and thus $c_{\tau}^{\nu}=\frac{|\tau|}{|\nu|}$.
\end{proof}

\begin{lem}
  \label{lem:fromOrbitToEle} For any $(x,y) \in \mathcal{A} \times \mathcal{B}$ and $v_{(x,y)^G}$ variable indexed by orbits of $\mathcal{A} \times \mathcal{B}$, let us define the variable $v_{x,y} := \frac{v_{(x,y)^G}}{|(x,y)^G|}$. We have:

  \[ \sum_{x \in \mathcal{A}} v_{x,y} = \frac{1}{|y^G|}\sum_{\tau \in \mathcal{O}_G(\mathcal{A} \times \mathcal{B}): \tau_{\mathcal{B}}=y^G} v_{\tau}, \forall y \in \mathcal{B} \ .\]
\end{lem}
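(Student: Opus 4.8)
The plan is to fix $y \in \mathcal{B}$ and reorganize the left-hand sum according to the $G$-orbit to which each pair $(x,y)$ belongs. First I would observe that as $x$ ranges over $\mathcal{A}$, each pair $(x,y)$ lies in exactly one orbit $\tau \in \mathcal{O}_G(\mathcal{A}\times\mathcal{B})$, and that any orbit arising this way satisfies $\tau_{\mathcal{B}} = y^G$, since the projection onto $\mathcal{B}$ of the orbit of $(x,y)$ is the orbit of $y$. Hence the index set $\{x \in \mathcal{A}\}$ is partitioned as the disjoint union, over the orbits $\tau$ with $\tau_{\mathcal{B}} = y^G$, of the sets $A_\tau^y := \{x : (x,y) \in \tau\}$ introduced in Lemma~\ref{lem:orbitProjCard}.

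Next I would unfold the definition $v_{x,y} := v_{(x,y)^G}/|(x,y)^G|$: for $x \in A_\tau^y$ we have $(x,y)^G = \tau$, so $v_{x,y} = v_\tau/|\tau|$ is constant on each block $A_\tau^y$. Therefore
\[
\sum_{x \in \mathcal{A}} v_{x,y} = \sum_{\tau : \tau_{\mathcal{B}} = y^G}\ \sum_{x \in A_\tau^y} \frac{v_\tau}{|\tau|} = \sum_{\tau : \tau_{\mathcal{B}} = y^G} \frac{|A_\tau^y|}{|\tau|}\, v_\tau .
\]
Finally, I would invoke the counting part of Lemma~\ref{lem:orbitProjCard}, which gives $|A_\tau^y| = c_\tau^{\mu} = |\tau|/|\mu|$ with $\mu = \tau_{\mathcal{B}} = y^G$, i.e. $|A_\tau^y|/|\tau| = 1/|y^G|$; substituting this into the last display yields exactly the claimed identity.

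There is no genuine obstacle here: this is a routine partition-and-count argument. The only two points requiring a modicum of care are (i) checking that the orbits indexing the right-hand side are precisely those hit by the left-hand side — namely those $\tau$ with $\tau_{\mathcal{B}} = y^G$ — and (ii) ensuring that the multiplicity with which $v_\tau$ appears is the uniform value $|A_\tau^y|$ independent of the representative $x$, together with its evaluation $|\tau|/|y^G|$. Both are immediate consequences of Lemma~\ref{lem:orbitProjCard}, so the proof amounts to assembling these observations.
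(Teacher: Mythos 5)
Your argument is correct and coincides with the paper's own proof: both partition the index set $\{x \in \mathcal{A}\}$ into the fibers $A_\tau^y$ over the orbits $\tau$ with $\tau_{\mathcal{B}} = y^G$, use that $v_{x,y} = v_\tau/|\tau|$ is constant on each fiber, and conclude via the cardinality count $|A_\tau^y| = |\tau|/|y^G|$ from Lemma~\ref{lem:orbitProjCard}. Nothing is missing.
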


\begin{proof}
  \begin{equation}
    \begin{aligned}
      \sum_{x \in \mathcal{A}} v_{x,y} &= \sum_{\tau \in \mathcal{O}_G(\mathcal{A} \times \mathcal{B}):\tau_{\mathcal{B}}=y^G}\sum_{x \in \mathcal{A}:(x,y) \in \tau} v_{x,y}\\
      &= \sum_{\tau \in \mathcal{O}_G(\mathcal{A} \times \mathcal{B}):\tau_{\mathcal{B}}=y^G}\sum_{x \in \mathcal{A}:(x,y) \in \tau} \frac{v_{\tau}}{|\tau|} \quad \text{since } (x,y)^G=\tau\\
      &= \sum_{\tau \in \mathcal{O}_G(\mathcal{A} \times \mathcal{B}):\tau_{\mathcal{B}}=y^G} c^{y^G}_{\tau} \frac{v_{\tau}}{|\tau|} \quad \text{by Lemma~\ref{lem:orbitProjCard}, since } y \in \tau_{\mathcal{B}}\\
      &= \sum_{\tau \in \mathcal{O}_G(\mathcal{A} \times \mathcal{B}):\tau_{\mathcal{B}}=y^G} \frac{|\tau|}{|y^G|} \frac{v_{\tau}}{|\tau|} = \frac{1}{|y^G|}\sum_{\tau \in \mathcal{O}_G(\mathcal{A} \times \mathcal{B}): \tau_{\mathcal{B}}=y^G} v_{\tau} \ .
    \end{aligned}
  \end{equation}
\end{proof}

\begin{lem}
  \label{lem:fromEleToOrbit}
  For any $\tau \in \mathcal{O}_G(\mathcal{A} \times \mathcal{B})$, $\mu \in \mathcal{O}_G(\mathcal{B})$ and $v_{x,y}$ variable indexed by elements of $\mathcal{A} \times \mathcal{B}$, let us define $v_{\tau} := \sum_{(x,y) \in \tau} v_{x,y}$. We have:

  \[ \sum_{\tau  \in \mathcal{O}_G(\mathcal{A} \times \mathcal{B}): \tau_{\mathcal{B}}=\mu} v_{\tau} = \sum_{y \in \mu} \sum_{x \in \mathcal{A}} v_{x,y} \ .\]
\end{lem}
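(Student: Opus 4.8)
The plan is to recognize the claimed identity as a simple rearrangement of a double sum, once we establish that the orbits $\tau \in \mathcal{O}_G(\mathcal{A} \times \mathcal{B})$ with $\tau_{\mathcal{B}} = \mu$ form a partition of the subset $\mathcal{A} \times \mu \subseteq \mathcal{A} \times \mathcal{B}$, where we identify the orbit $\mu$ with the corresponding subset of $\mathcal{B}$.

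First I would check this partition claim. Since the orbits of the $G$-action on $\mathcal{A} \times \mathcal{B}$ always partition $\mathcal{A} \times \mathcal{B}$, it suffices to verify that $\bigcup_{\tau : \tau_{\mathcal{B}} = \mu} \tau = \mathcal{A} \times \mu$. For the inclusion $\subseteq$: if $(x,y) \in \tau$ with $\tau_{\mathcal{B}} = \mu$, then by definition of the projection $y \in \tau_{\mathcal{B}} = \mu$, so $(x,y) \in \mathcal{A} \times \mu$. For $\supseteq$: given $(x,y) \in \mathcal{A} \times \mu$, the pair lies in the orbit $\tau := (x,y)^G$, and since the canonical projection $\mathcal{A} \times \mathcal{B} \to \mathcal{B}$ is $G$-equivariant, it sends the orbit of $(x,y)$ to the orbit of $y$, i.e.\ $\tau_{\mathcal{B}} = y^G = \mu$ (the last equality because $y \in \mu$ and $\mu$ is an orbit). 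Hence $\tau$ is one of the orbits appearing in the union.

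Then the computation is immediate:
\[
\sum_{\tau \in \mathcal{O}_G(\mathcal{A} \times \mathcal{B}) : \tau_{\mathcal{B}} = \mu} v_{\tau}
= \sum_{\tau : \tau_{\mathcal{B}} = \mu} \sum_{(x,y) \in \tau} v_{x,y}
= \sum_{(x,y) \in \mathcal{A} \times \mu} v_{x,y}
= \sum_{y \in \mu} \sum_{x \in \mathcal{A}} v_{x,y} \ ,
\]
where the first equality is the definition of $v_{\tau}$, the second uses the partition established above, and the third simply re-indexes the sum over $\mathcal{A} \times \mu$ as an iterated sum over $y \in \mu$ and $x \in \mathcal{A}$.

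I do not expect any real obstacle here: the only point requiring care is the partition statement, and even that reduces to the observation — already implicitly used in Lemma~\ref{lem:orbitProjCard} and Lemma~\ref{lem:fromOrbitToEle} — that projection commutes with the group action, so that the orbits lying above a fixed orbit $\mu$ of $\mathcal{B}$ exactly tile $\mathcal{A} \times \mu$.
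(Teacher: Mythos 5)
Your proof is correct and follows essentially the same route as the paper, which simply performs the same double-sum rearrangement in one line; your explicit verification that the orbits with $\tau_{\mathcal{B}}=\mu$ partition $\mathcal{A}\times\mu$ just makes precise the step the paper leaves implicit.
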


\begin{proof}
  \[ \sum_{\tau \in \mathcal{O}_G(\mathcal{A} \times \mathcal{B}):\tau_{\mathcal{B}}=\mu} v_{\tau} =  \sum_{\tau \in \mathcal{O}_G(\mathcal{A} \times \mathcal{B}):\tau_{\mathcal{B}}=\mu} \sum_{(x,y) \in \tau} v_{x,y} = \sum_{y \in \mu} \sum_{x \in \mathcal{A}} v_{x,y} \ . \]
\end{proof}

\begin{proof}[Proof of Theorem~\ref{theo:polyLP}]
  Let $r_{x_1,x_2,y},r^1_{x_1,x_2,y},r^2_{x_1,x_2,y},p_{x_1,x_2}$ a feasible solution of the program defined in Proposition~\ref{prop:NSLP}, and call $S:=\frac{1}{k_1k_2} \sum_{x_1,x_2,y} W(y|x_1x_2)r_{x_1,x_2,y}$ its value. Define:
  \begin{equation}
    \begin{aligned}
      &r_w := \sum_{(x_1,x_2,y) \in w} r_{x_1,x_2,y} \ , &r^1_w := \sum_{(x_1,x_2,y) \in w} r^1_{x_1,x_2,y} \ ,\\
      &r^2_w := \sum_{(x_1,x_2,y) \in w} r^2_{x_1,x_2,y} \ , &p_u := \sum_{(x_1,x_2) \in u} p_{x_1,x_2} \ .
    \end{aligned}
  \end{equation}
  Let us show that $r_w,r^1_w,r^2_w,p_u$ is a feasible solution of the program defined in Theorem~\ref{theo:polyLP}, and that its value $S^*:=\frac{1}{k_1k_2}\sum_{w} W(w)r_w = S$.

  First, we have  $S^* = S$. Indeed:
   \begin{equation}
     \begin{aligned}
       S^*&=\frac{1}{k_1k_2}\sum_{w} W(w)r_w =\frac{1}{k_1k_2}\sum_{w} W(w) \sum_{(x_1,x_2,y) \in w} r_{x_1,x_2,y}\\
       &=\frac{1}{k_1k_2}\sum_{w} \sum_{(x_1,x_2,y) \in w} W(y|x_1x_2) r_{x_1,x_2,y} \quad \text{since $W(w)= W(y|x_1x_2)$ for all $(x_1,x_2,y) \in w$}\\
       &= \frac{1}{k_1k_2}\sum_{x_1,x_2,y} W(y|x_1x_2) r_{x_1,x_2,y} = S \ .
     \end{aligned}
  \end{equation}

   Then, all the constraints are satisfied. Indeed, thanks to Lemma~\ref{lem:fromEleToOrbit}, we have for the first constraint:
   \[ \sum_{w:w_{\mathcal{Y}}=v} r_w = \sum_{y \in v} \sum_{x_1,x_2} r_{x_1,x_2,y} = \sum_{y \in v} 1 = |v| \ .\]

   For the second constraint (and symmetrically for the third constraint), we have:
   \[ \sum_{w:w_{\mathcal{X}_2\mathcal{Y}}=v^2} r^1_{w} = \sum_{(x_2,y) \in v^2} \sum_{x_1} r^1_{x_1,x_2,y} = \sum_{(x_2,y) \in v^2} k_1 \sum_{x_1} r_{x_1,x_2,y} = k_1\sum_{w:w_{\mathcal{X}_2\mathcal{Y}}=v^2} r_{w} \ .\]
   
   For the fourth (and symmetrically for the fifth), we have:
   \begin{equation}
     \begin{aligned}
       \sum_{w:w_{\mathcal{X}_2\mathcal{Y}}=v^2} r^2_w &= \sum_{(x_2,y) \in v^2} \sum_{x_1} r^2_{x_1,x_2,y} = \sum_{(x_2,y) \in v^2} \frac{1}{k_1} \sum_{x_1} p_{x_1,x_2} = \frac{1}{k_1} \sum_{x_2 \in v^2_{\mathcal{X}_2}} \sum_{y: (x_2,y) \in v^2} \sum_{x_1} p_{x_1,x_2}\\
       &= \frac{1}{k_1} \sum_{x_2 \in v^2_{\mathcal{X}_2}} \frac{|v^2|}{|v^2_{\mathcal{X}_2}|} \sum_{x_1} p_{x_1,x_2} \quad \text{thanks to Lemma~\ref{lem:orbitProjCard}}\\
       &= \frac{1}{k_1}\frac{|v^2|}{|v^2_{\mathcal{X}_2}|} \sum_{u:u_{\mathcal{X}_2}=v^2_{\mathcal{X}_2}} p_u \ .
     \end{aligned}
   \end{equation}

   Finally for the last constraints, we only need to compute:

   \[ \sum_{(x_1,x_2,y) \in w} p_{x_1,x_2} = \sum_{(x_1,x_2) \in w_{\mathcal{X}_1\mathcal{X}_2}} \sum_{y : (x_1,x_2,y) \in w} p_{x_1,x_2} = \sum_{(x_1,x_2) \in w_{\mathcal{X}_1\mathcal{X}_2}} \frac{|w|}{|w_{\mathcal{X}_1\mathcal{X}_2}|} p_{x_1,x_2} = \frac{|w|}{|w_{\mathcal{X}_1\mathcal{X}_2}|} p_{w_{\mathcal{X}_1\mathcal{X}_2}} \ ,\]
   which implies that the linear inequalities on $p_{x_1,x_2}, r_{x_1,x_2,y}, r^1_{x_1,x_2,y}, r^2_{x_1,x_2,y}$ get transposed respectively to the values $\frac{|w|}{|w_{\mathcal{X}_1\mathcal{X}_2}|} p_{w_{\mathcal{X}_1\mathcal{X}_2}}, r_w, r^1_w, r^2_w$. Indeed, for instance, one has for any $x_1,x_2,y$ that $p_{x_1,x_2}- r^1_{x_1,x_2,y}- r^2_{x_1,x_2,y} + r_{x_1,x_2,y} \geq 0$. Thus for some orbit $w$:
   \[\sum_{(x_1,x_2,y) \in w}\left(p_{x_1,x_2}- r^1_{x_1,x_2,y}- r^2_{x_1,x_2,y} + r_{x_1,x_2,y}\right) \geq 0 \ ,\]
   and then $ \frac{|w|}{|w_{\mathcal{X}_1\mathcal{X}_2}|} p_{w_{\mathcal{X}_1\mathcal{X}_2}}- r^1_w- r^2_w + r_w \geq 0$, which was what we wanted to show.

   Now let us consider a feasible solution  $r_w,r^1_w,r^2_w,p_u$ of the program defined in Theorem~\ref{theo:polyLP}, with a value $S^*:=\frac{1}{k_1k_2}\sum_{w} W(w)r_w$. Define:
  \begin{equation}
    \begin{aligned}
      &r_{x_1,x_2,y} := \frac{r_{(x_1,x_2,y)^G}}{|(x_1,x_2,y)^G|} \ , &r^1_{x_1,x_2,y} := \frac{r^1_{(x_1,x_2,y)^G}}{|(x_1,x_2,y)^G|} \ ,\\
      &r^2_{x_1,x_2,y} :=\frac{r^2_{(x_1,x_2,y)^G}}{|(x_1,x_2,y)^G|} \ , &p_{x_1,x_2} := \frac{p_{(x_1,x_2)^G}}{|(x_1,x_2)^G|} \ .
    \end{aligned}
  \end{equation}

  Let us show that $r_{x_1,x_2,y},r^1_{x_1,x_2,y},r^2_{x_1,x_2,y},p_{x_1,x_2}$ is a feasible solution of the program defined in Proposition~\ref{prop:NSLP}, and that its value $S:=\frac{1}{k_1k_2} \sum_{x_1,x_2,y} W(y|x_1x_2)r_{x_1,x_2,y} = S^*$.

  First we have $S = S^*$. Indeed:
   \begin{equation}
     \begin{aligned}
       S &= \frac{1}{k_1k_2}\sum_{x_1,x_2,y} W(y|x_1x_2) r_{x_1,x_2,y} = \frac{1}{k_1k_2}\sum_{x_1,x_2,y} W(y|x_1x_2) \frac{r_{(x_1,x_2,y)^G}}{|r_{(x_1,x_2,y)^G}|}\\
       &= \frac{1}{k_1k_2}\sum_{w} \sum_{(x_1,x_2,y) \in w} W(y|x_1x_2) \frac{r_w}{|w|} = \frac{1}{k_1k_2}\sum_{w} \sum_{(x_1,x_2,y) \in w} W(w) \frac{r_w}{|w|}\\
       &= \frac{1}{k_1k_2}\sum_{w} |w| W(w) \frac{r_w}{|w|} = \frac{1}{k_1k_2}\sum_{w} W(w)r_w = S^* \ .
     \end{aligned}
  \end{equation}

   Then, all the constraints are satisfied. Indeed, thanks to Lemma~\ref{lem:fromOrbitToEle}, we have for the first constraint:
   
   \[\sum_{x_1,x_2} r_{x_1,x_2,y} = \frac{1}{|y^G|} \sum_{w:w_{\mathcal{Y}}=y^G} r_w = \frac{|y^G|}{|y^G|} = 1 \ .\]
   For the second constraint (and symmetrically for the third constraint), we have:

   \[\sum_{x_1} r^1_{x_1,x_2,y} = \frac{1}{|(x_2,y)^G|} \sum_{w:w_{\mathcal{X}_2\mathcal{Y}}=(x_2,y)^G} r^1_w = \frac{k_1}{|(x_2,y)^G|}\sum_{w:w_{\mathcal{X}_2\mathcal{Y}}=(x_2,y)^G} r_w = k_1 \sum_{x_1} r_{x_1,x_2,y} \ .\]

   For the fourth (and symmetrically for the fifth), we have:
   \begin{equation}
     \begin{aligned}
       \sum_{x_1} r^2_{x_1,x_2,y} &= \frac{1}{|(x_2,y)^G|} \sum_{w:w_{\mathcal{X}_2\mathcal{Y}}=(x_2,y)^G} r^2_w = \frac{1}{|(x_2,y)^G|} \frac{1}{k_1}\frac{|(x_2,y)^G|}{|(x_2,y)^G_{\mathcal{X}_2}|} \sum_{u:u_{\mathcal{X}_2}=(x_2,y)^G_{\mathcal{X}_2}} p_u \\
       &= \frac{1}{k_1}\frac{1}{|(x_2,y)^G_{\mathcal{X}_2}|} \sum_{u:u_{\mathcal{X}_2}=(x_2,y)^G_{\mathcal{X}_2}} p_u = \frac{1}{k_1} \frac{1}{|x_2^G|}\sum_{u:u_{\mathcal{X}_2}=x_2^G} p_u \text{ since $(x_2,y)^G_{\mathcal{X}_2} = x_2^{G}$}\\
       &= \frac{1}{k_1} \sum_{x_1} p_{x_1,x_2} \ .
     \end{aligned}
   \end{equation}
   
   Finally, to conclude with the last constraints, one has only to see that for any $x_1,x_2,y$:
   \[\frac{|(x_1,x_2,y)^G|}{|(x_1,x_2,y)^G_{\mathcal{X}_1\mathcal{X}_2}|}p_{(x_1,x_2,y)^G_{\mathcal{X}_1\mathcal{X}_2}} = \frac{|(x_1,x_2,y)^G|}{|(x_1,x_2)^G|}p_{(x_1,x_2)^G} = |(x_1,x_2,y)^G|p_{x_1,x_2}  \ ,\]
   which implies that the linear inequalities on $\frac{|w|}{|w_{\mathcal{X}_1\mathcal{X}_2}|} p_{w_{\mathcal{X}_1\mathcal{X}_2}}, r_w, r^1_w, r^2_w$ get transposed respectively to the values $p_{x_1,x_2}, r_{x_1,x_2,y}, r^1_{x_1,x_2,y}, r^2_{x_1,x_2,y}$. Indeed, for instance, one has for any $w$ that $\frac{|w|}{|w_{\mathcal{X}_1\mathcal{X}_2}|} p_{w_{\mathcal{X}_1\mathcal{X}_2}}- r^1_w - r^2_w + r_w \geq 0$. But for any $(x_1,x_2,y) \in w$, one has that $r_{x_1,x_2,y} = \frac{r_w}{|w|},r^1_{x_1,x_2,y} = \frac{r^1_w}{|w|},r^2_{x_1,x_2,y} = \frac{r^2_w}{|w|}$. Thanks to the previous inequality, we have that $p_{x_1,x_2}  = \frac{p_{w_{\mathcal{X}_1\mathcal{X}_2}}}{|w_{\mathcal{X}_1\mathcal{X}_2}|}$, and thus:
   \[p_{x_1,x_2}- r^1_{x_1,x_2,y}- r^2_{x_1,x_2,y} + r_{x_1,x_2,y} = \frac{p_{w_{\mathcal{X}_1\mathcal{X}_2}}}{|w_{\mathcal{X}_1\mathcal{X}_2}|} - \frac{r^1_w}{|w|} - \frac{r^2_w}{|w|} + \frac{r_w}{|w|} \geq 0 \ ,\]
   which was what we wanted to show.
\end{proof}
  
\section{Non-Signaling Achievability Bounds}
\label{section:IB}
\subsection{Zero-Error Non-Signaling Assisted Achievable Rate Pairs}
We will now present a numerical method to find efficiently inner bounds on $\mathcal{C}_0^{\mathrm{NS}}(W)$. Thanks to Corollary~\ref{cor:poly}, we know how to decide in polynomial time in $n,k_1,k_2$ whether $\mathrm{S}^{\mathrm{NS}}(W^{\otimes n},k_1,k_2)=1$. However, by Proposition~\ref{prop:ZENScapacity}, if $\mathrm{S}^{\mathrm{NS}}(W^{\otimes n},k_1,k_2)=1$, then we have that $\left(\frac{\log(k_1)}{n},\frac{\log(k_2)}{n}\right) \in \mathcal{C}_0^{\mathrm{NS}}(W)$, which describes a way of computing achievable points for that capacity region. More precisely, this leads to the following result:

\begin{prop}[Inner Bounds on $\mathcal{C}_0^{\mathrm{NS}}(W)$]
  \label{prop:IB}
  Let us define the zero-error non-signaling assisted $n$-shots capacity region $\mathcal{C}_{0,\leq n}^{\mathrm{NS}}(W)$ in the following way:
  \[ \mathcal{C}_{0,\leq n}^{\mathrm{NS}}(W)  := \left\{ \left(\frac{\log(k_1)}{n},\frac{\log(k_2)}{n}\right) : \mathrm{S}^{\mathrm{NS}}(W^{\otimes n},k_1,k_2)=1 \right\}\ . \]
  Then, we have that $\forall n \in \mathbb{N}, \mathcal{C}_{0,\leq n}^{\mathrm{NS}}(W) \subseteq \mathcal{C}_0^{\mathrm{NS}}(W)$, and that one can decide in polynomial time in $n,k_1,k_2$ if $\left(\frac{\log(k_1)}{n},\frac{\log(k_2)}{n}\right) \in \mathcal{C}_{0,\leq n}^{\mathrm{NS}}(W)$.
\end{prop}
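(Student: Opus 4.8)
The plan is to derive both claims directly from Proposition~\ref{prop:ZENScapacity} and Corollary~\ref{cor:poly}, the only non-bookkeeping point being a small remark ensuring that ``the optimal value equals $1$'' can be decided \emph{exactly} in polynomial time. For the inclusion, fix $n$ and take any point of $\mathcal{C}_{0,\leq n}^{\mathrm{NS}}(W)$, i.e. $(R_1,R_2):=\left(\frac{\log(k_1)}{n},\frac{\log(k_2)}{n}\right)$ with $k_1,k_2\in\mathbb{N}^*$ and $\mathrm{S}^{\mathrm{NS}}(W^{\otimes n},k_1,k_2)=1$. Since $k_1$ and $k_2$ are integers, $\ceil{2^{R_1 n}}=\ceil{k_1}=k_1$ and $\ceil{2^{R_2 n}}=k_2$, whence $\mathrm{S}^{\mathrm{NS}}(W^{\otimes n},\ceil{2^{R_1 n}},\ceil{2^{R_2 n}})=1$. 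So $(R_1,R_2)$ lies in the set whose closure is, by Proposition~\ref{prop:ZENScapacity}, exactly $\mathcal{C}_0^{\mathrm{NS}}(W)$; since a set is contained in its closure, $(R_1,R_2)\in\mathcal{C}_0^{\mathrm{NS}}(W)$, which gives $\mathcal{C}_{0,\leq n}^{\mathrm{NS}}(W)\subseteq\mathcal{C}_0^{\mathrm{NS}}(W)$.

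For the complexity statement, recall from Corollary~\ref{cor:poly} that $\mathrm{S}^{\mathrm{NS}}(W^{\otimes n},k_1,k_2)$ is the optimum of an explicit linear program with $O\!\left(n^{|\mathcal{X}_1|\cdot|\mathcal{X}_2|\cdot|\mathcal{Y}|-1}\right)$ variables and constraints and with rational coefficients whose numerators and denominators are bounded by $k_1k_2\,(|\mathcal{X}_1||\mathcal{X}_2||\mathcal{Y}|)^n$, hence of total bit-size polynomial in $n,\log(k_1),\log(k_2)$ and a fortiori in $n,k_1,k_2$. Deciding whether $\left(\frac{\log(k_1)}{n},\frac{\log(k_2)}{n}\right)\in\mathcal{C}_{0,\leq n}^{\mathrm{NS}}(W)$ is precisely deciding whether this optimum equals $1$; since $\mathrm{S}^{\mathrm{NS}}(W^{\otimes n},k_1,k_2)\le 1$ always holds by Proposition~\ref{prop:oneShot}, this is equivalent to testing feasibility of the rational linear system obtained by adjoining the equality $\frac{1}{k_1k_2}\sum_{w}W(w)r_w=1$ to the constraints of that program, a linear feasibility instance of the same polynomial size, solvable in polynomial time (see e.g. Section~7.1 of~\cite{LinearProgramming}). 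Equivalently, one may solve the linear program exactly, its optimum being a rational with denominator polynomially bounded in the bit-size of the data, and compare the value to $1$.

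I do not anticipate a genuine obstacle: the argument is a reduction to Corollary~\ref{cor:poly} and Proposition~\ref{prop:ZENScapacity}. The single subtlety worth flagging is that a purely numerical LP solver would not by itself certify that the value is exactly $1$, which is why the decision procedure should be phrased as an exact rational linear feasibility test rather than as floating-point optimization.
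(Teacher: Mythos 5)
Your proposal is correct and follows exactly the route the paper intends: the paper states this proposition as an immediate consequence of Proposition~\ref{prop:ZENScapacity} (for the inclusion, since $\ceil{2^{R_i n}}=k_i$ when $R_i=\log(k_i)/n$) and Corollary~\ref{cor:poly} (for the polynomial-time decidability), with no further argument given. Your added remark about performing an exact rational feasibility test rather than floating-point optimization is a sensible precision but does not change the approach.
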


This implies that we can find efficiently achievable rate pairs for MACs.

\paragraph{Application to the binary adder channel}
The binary adder channel $W_{\text{BAC}}$ is the following MAC:
\[ \forall x_1,x_2 \in \{0,1\}, \forall y \in \{0,1,2\}, W_{\text{BAC}}(y|x_1x_2) := \delta_{y,x_1+x_2} \ .\]

Its classical capacity region $\mathcal{C}(W_{\text{BAC}})$ is well known and consists of all $(R_1,R_2)$ such that $R_1 \leq 1,R_2 \leq 1, R_1+R_2 \leq \frac{3}{2}$, as a consequence of Theorem~\ref{theo:capacity}. Its zero-error classical capacity $\mathcal{C}_0(W_{\text{BAC}})$ is not yet characterized. A lot of work has been done in finding outer and inner bounds on this region~\cite{Lindstrom69,Tilborg78,KL78,Weldon78,KLWY83,BT85,BB98,UL98,AB99,MO05,OS15}. To date, the best lower bound on the sum-rate capacity is $\log_2(240/6) \simeq 1.3178$~\cite{MO05}.

Thanks to Proposition~\ref{prop:IB}, we were able to compute the regions $\mathcal{C}_{0,\leq n}^{\mathrm{NS}}(W)$ for $n$ going up to $7$, which led to Figure~\ref{fig:BAC}. The code can be found on \href{https://github.com/pferme/MAC_NS_LP}{GitHub}. It uses Mosek linear programming solver~\cite{mosek}.

Note that the linear program from Theorem~\ref{theo:polyLP} has still a large number of variables and constraints although polynomial in $n$. Specifically, for $n=2$, it has $244$ variables and $480$ constraints; for $n=3$, it has $1112$ variables and $2054$ constraints; for $n=7$, it has $95592$ variables and $162324$ constraints; finally, for $n=8$, it has $226911$ variables and $383103$ constraints.

  \begin{figure}[!h]
    \begin{center}
      \begin{tikzpicture}
        \begin{axis}[
          xmin = 0, xmax = 1.05,
          ymin = 0, ymax = 1.05,
          xtick distance = 0.25,
          ytick distance = 0.25,
          grid = both,
          width = 0.55\textwidth,
          height = 0.55\textwidth,
          legend cell align = {left},
          legend pos = south west,
          ylabel=$R_2$,
          xlabel=$R_1$,
        ]
        \addplot[
          thick,
          dashed,
          black,
        ] coordinates {(0,1) (0.5,1) (1,0.5) (1,0)};
        \addplot[
          densely dashed,
          darkgray,
          mark = star,
        ] coordinates {(0.33, 0.962409352487) (0.43425,0.86873) (0.59749375012,0.720321349148) (0.720321349148,0.59749375012) (0.86873,0.43425) (0.962409352487,0.33)};
        \addplot[
          domain = 0:1,
          brown,
          mark = triangle,
        ] table[x=R1,y=R2,col sep=comma] {Data_BAC/ZEBorder2.csv};
        \addplot[
          domain = 0:1,
          red,
          mark = o,
        ] table[x=R1,y=R2,col sep=comma] {Data_BAC/ZEBorder3.csv};
        \addplot[
          domain = 0:1,
          blue,
          mark = x,
        ] table[x=R1,y=R2,col sep=comma] {Data_BAC/ZEBorder7.csv};
        \addplot[
          domain = 0.25:0.33,
          darkgray,
          densely dashed,
        ] {(1 + -(2*\x)*ln(2*\x)/ln(2))-(1-2*\x)*ln(1-2*\x)/ln(2))/2};
        \draw[domain=0.25:0.33,variable=\y,darkgray ,densely dashed] plot({(1 + -(2*\y)*ln(2*\y)/ln(2))-(1-2*\y)*ln(1-2*\y)/ln(2))/2},\y);
        \addplot[
          darkgray,
          densely dashed,
        ]  coordinates {(0,1) (0.25,1)};
        \addplot[
          darkgray,
          densely dashed,
        ]  coordinates {(1,0) (1,0.25)};  
        \legend{$\mathcal{C}(W_{\text{BAC}})$, Best Inner Bounds on $\mathcal{C}_0(W_{\text{BAC}})$, $\mathcal{C}_{0,\leq 2}^{\mathrm{NS}}(W_{\text{BAC}})$, $\mathcal{C}_{0,\leq 3}^{\mathrm{NS}}(W_{\text{BAC}})$, $\mathcal{C}_{0,\leq 7}^{\mathrm{NS}}(W_{\text{BAC}})$}
        \end{axis}
      \end{tikzpicture}
      \caption{Capacity regions of the binary adder channel $W_{\mathrm{BAC}}$. The black dashed curve depicts the classical capacity region $\mathcal{C}(W_{\text{BAC}})$, whereas the grey dashed curve shows the best known inner bound border on the zero-error classical capacity region $\mathcal{C}_0(W_{\text{BAC}})$, made from results by~\cite{MO05,BT85,KLWY83}; see~\cite{MO05} for a description of this border. On the other hand, the continuous curves depict the best zero-error non-signaling assisted achievable rate pairs for respectively $2,3$ and $7$ copies of the binary adder channel.}
    \label{fig:BAC}
    \end{center}
\end{figure}
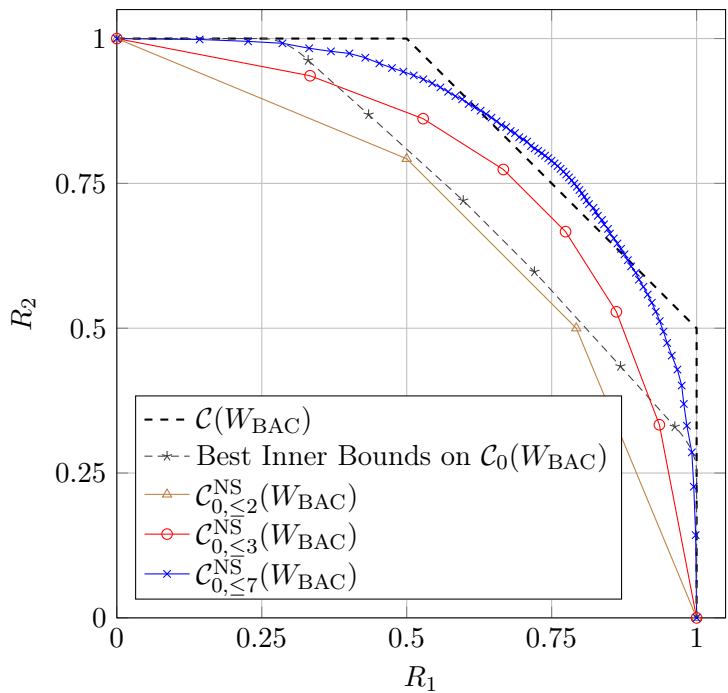

  The first noticeable result coming from these curves is that the zero-error non-signaling assisted sum-rate capacity beats with only $7$ copies the classical sum-rate capacity of $\frac{3}{2}$, even without a zero-error constraint, with a value of $\frac{2\log_2(42)}{7} \simeq 1.5406$, coming from the fact that $\mathrm{S}^{\mathrm{NS}}(W^{\otimes 7}_{\mathrm{BAC}},42,42)=1$ and Proposition~\ref{prop:ZENScapacity}. This implies that $\mathcal{C}^{\mathrm{NS}}_0(W_{\text{BAC}})$ has larger sum-rate pairs than $\mathcal{C}(W_{\text{BAC}})$, and that $\mathcal{C}^{\mathrm{NS}}(W_{\text{BAC}})$ is strictly larger than $\mathcal{C}(W_{\text{BAC}})$. This sum-rate can even be increased up to $\frac{\log_2(72)}{4} \simeq 1.5425$, since we have computed $\mathrm{S}^{\mathrm{NS}}(W^{\otimes 8}_{\mathrm{BAC}},72,72)=1$, which is the largest number of copies we have been able to manage with our efficient version of the linear program from Theorem~\ref{theo:polyLP}. This should be compared with the upper bound on the non-signaling assisted sum-rate capacity coming from Proposition~\ref{prop:BACcapacityNSrelaxed}, which is $\log_2(3) \simeq 1.5850$ for $R_1=R_2$.

  Another surprising property is the speed at which one obtains efficient zero-error non-signaling assisted codes compared to classical zero-error codes. Indeed, with only three copies of the binary adder channel, one gets that $\mathrm{S}^{\mathrm{NS}}(W^{\otimes 3}_{\mathrm{BAC}},4,5)=1$, which corresponds to a sum-rate of $\frac{2+\log_2(5)}{3} \simeq 1.4406$, which already largely beats the best known zero-error achieved sum-rate of $\log_2(240/6) \simeq 1.3178$~\cite{MO05}. These results are summarized in the following theorem:

  \begin{theo}
    \label{theo:BAC}
    We have that $\left(\frac{\log_2(72)}{8},\frac{\log_2(72)}{8}\right) \in \mathcal{C}^{\mathrm{NS}}_0(W_{\mathrm{BAC}})$ but $\left(\frac{\log_2(72)}{8},\frac{\log_2(72)}{8}\right) \not\in \mathcal{C}(W_{\mathrm{BAC}})$, and as a consequence, we have that $\mathcal{C}(W_{\mathrm{BAC}}) \subsetneq \mathcal{C}^{\mathrm{NS}}(W_{\mathrm{BAC}})$.
  \end{theo}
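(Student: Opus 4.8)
The whole statement reduces to the single numerical fact $\mathrm{S}^{\mathrm{NS}}(W_{\mathrm{BAC}}^{\otimes 8},72,72)=1$, which I would obtain by solving the reduced linear program of Theorem~\ref{theo:polyLP} (equivalently Corollary~\ref{cor:poly}) with $G=S_8$ acting on $W_{\mathrm{BAC}}^{\otimes 8}$ as described above; this is the computation reported in the discussion preceding the theorem, run with the solver~\cite{mosek}.

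First I would prove $\left(\tfrac{\log_2 72}{8},\tfrac{\log_2 72}{8}\right)\in\mathcal{C}^{\mathrm{NS}}_0(W_{\mathrm{BAC}})$. Put $n=8$ and $R_1=R_2=\tfrac{\log_2 72}{8}$, so that $\lceil 2^{R_i n}\rceil=\lceil 2^{\log_2 72}\rceil=72$ exactly. The numerical fact then says precisely that $\mathrm{S}^{\mathrm{NS}}(W_{\mathrm{BAC}}^{\otimes n},\lceil 2^{R_1 n}\rceil,\lceil 2^{R_2 n}\rceil)=1$ for this $n$, i.e. the hypothesis of Proposition~\ref{prop:ZENScapacity}. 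Since that proposition identifies $\mathcal{C}^{\mathrm{NS}}_0(W_{\mathrm{BAC}})$ with the closure of the set of rate pairs satisfying that hypothesis, the pair belongs to $\mathcal{C}^{\mathrm{NS}}_0(W_{\mathrm{BAC}})$. Note it is the exact cancellation $2^{\log_2 72}=72$ that makes the ceilings hit $72$ on the nose, so no approximation is lost here.

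Next I would prove $\left(\tfrac{\log_2 72}{8},\tfrac{\log_2 72}{8}\right)\notin\mathcal{C}(W_{\mathrm{BAC}})$. I invoke the description of $\mathcal{C}(W_{\mathrm{BAC}})$ recalled above as a consequence of Theorem~\ref{theo:capacity}, namely $\mathcal{C}(W_{\mathrm{BAC}})=\{(R_1,R_2):R_1\le 1,\ R_2\le 1,\ R_1+R_2\le \tfrac32\}$, and just check the sum-rate: $\tfrac{\log_2 72}{8}+\tfrac{\log_2 72}{8}=\tfrac{\log_2 72}{4}>\tfrac32$ since $72>64=2^6$. Hence the sum-rate constraint fails and the pair is outside $\mathcal{C}(W_{\mathrm{BAC}})$. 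For the concluding strict inclusion I would note that a classical code is a particular non-signaling-assisted code, so $\mathrm{S}^{\mathrm{NS}}\ge\mathrm{S}$ and thus $\mathcal{C}(W_{\mathrm{BAC}})\subseteq\mathcal{C}^{\mathrm{NS}}(W_{\mathrm{BAC}})$, and that a zero-error non-signaling-achievable pair is a fortiori non-signaling-achievable, so $\mathcal{C}^{\mathrm{NS}}_0(W_{\mathrm{BAC}})\subseteq\mathcal{C}^{\mathrm{NS}}(W_{\mathrm{BAC}})$; the two parts then exhibit a pair in $\mathcal{C}^{\mathrm{NS}}(W_{\mathrm{BAC}})\setminus\mathcal{C}(W_{\mathrm{BAC}})$.

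The conceptual argument is short; the real work --- and the main obstacle --- is certifying $\mathrm{S}^{\mathrm{NS}}(W_{\mathrm{BAC}}^{\otimes 8},72,72)=1$ rigorously. For $n=8$ the program of Theorem~\ref{theo:polyLP} already has about $2\times 10^5$ variables and $4\times 10^5$ constraints, at the edge of what the solver handles, and it only returns a floating-point optimum extremely close to $1$. To turn this into a proof I would either exhibit, from the reduced optimal solution, an explicit feasible (rational) non-signaling box for $W_{\mathrm{BAC}}^{\otimes 8}$ with $k_1=k_2=72$ of success probability exactly $1$, or produce an exact LP feasibility certificate showing the optimum is $1$.
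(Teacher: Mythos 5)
Your proposal is correct and follows essentially the same route as the paper: the numerical fact $\mathrm{S}^{\mathrm{NS}}(W_{\mathrm{BAC}}^{\otimes 8},72,72)=1$ combined with Proposition~\ref{prop:ZENScapacity} for membership in $\mathcal{C}^{\mathrm{NS}}_0$, the sum-rate check $\frac{\log_2(72)}{4}>\frac{3}{2}$ against Theorem~\ref{theo:capacity} for non-membership in $\mathcal{C}$, and the two trivial inclusions for the strict containment. Your closing remark about upgrading the floating-point LP optimum to an exact rational certificate is a fair point about rigor that the paper itself glosses over, but it does not change the argument.
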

  \begin{proof}
    Since $2^{8\frac{\log_2(72)}{8}}=72$ and numerically $\mathrm{S}^{\mathrm{NS}}(W^{\otimes 8}_{\mathrm{BAC}},72,72)=1$ thanks to Corollary~\ref{cor:poly}, we get that  $\left(\frac{\log_2(72)}{8},\frac{\log_2(72)}{8}\right) \in \mathcal{C}^{\mathrm{NS}}_0(W_{\mathrm{BAC}})$ by Proposition~\ref{prop:ZENScapacity}. However, $\frac{\log_2(72)}{8}+\frac{\log_2(72)}{8} > \frac{3}{2}$ so $\left(\frac{\log_2(72)}{8},\frac{\log_2(72)}{8}\right) \not\in \mathcal{C}(W_{\mathrm{BAC}})$ by Theorem~\ref{theo:capacity} applied to $W_{\mathrm{BAC}}$. Since $\mathcal{C}(W_{\mathrm{BAC}}) \subseteq \mathcal{C}^{\mathrm{NS}}(W_{\mathrm{BAC}})$ and $\mathcal{C}^{\mathrm{NS}}_0(W_{\mathrm{BAC}}) \subseteq \mathcal{C}^{\mathrm{NS}}(W_{\mathrm{BAC}})$, we thus get that $\mathcal{C}(W_{\mathrm{BAC}}) \subsetneq \mathcal{C}^{\mathrm{NS}}(W_{\mathrm{BAC}})$.
  \end{proof}

  \subsection{Non-Signaling Assisted Achievable Rate Pairs with Non-Zero Error}
  We have analyzed the non-signaling assisted capacity region through zero-error strategies and applied it to the BAC. However, if some noise is added to that channel, its zero-error non-signaling assisted capacity region becomes trivial (see Proposition~\ref{prop:NScapaNoisy}). Thus, the previous method fails to find significant inner bounds on the non-signaling assisted capacity region of noisy MACs.

  In this section, we use concatenated codes to obtain achievable rate pairs, and apply it to a noisy version of the BAC:

  \begin{defi}[Concatenated Codes]
  Given a MAC $W$ and a non-signaling assisted code $P$, define $W[P] : [k_1] \times [k_2] \rightarrow [\ell]$ with $W[P](j|i_1i_2) := \sum_{x_1,x_2,y}W(y|x_1x_2)P(x_1x_2j|i_1i_2y)$:
\begin{center}
  \begin{tikzpicture}[auto, node distance=2cm,>=latex']
    \node [input, name=i1] {};
    \node [input, name=i2] {};
    \node [Bigblock, below of=i2] (P) {$P(x_1x_2j|i_1i_2y)$};

    \draw [<-] (P.130) -- node {$i_1$} +(0pt,1cm);
    \draw [<-] (P.90) -- node {$i_2$} +(0pt,1cm);
    \coordinate (ybis) at ($ (P.50) + (0pt,1.65cm) $);
    \draw [<-] (P.50) -- (ybis);
    \coordinate (x1) at ($ (P.230)+(0pt,-0.5cm) $);
    \draw [-] (P.230) -- (x1);
    \coordinate (x2) at ($ (P.270)+(0pt,-1cm) $);
    \draw [-] (P.270) -- (x2);
    \draw [->] (P.310) -- node {$j$} +(0pt,-1cm);

    \node [left of=P] (A) {};
    \node [bigblock, left of=A] (W) {$W$};
    \coordinate (x1bis) at ($ (x1)+(-3.65cm,0pt) $);
    \coordinate (x2bis) at ($ (x2)+(-3.65cm,0pt) $);
    \draw [-] (x1) -- node {$x_1$} (x1bis);
    \draw [-] (x2) -- node {$x_2$} (x2bis);
    \draw [->] (x1bis) -- (W.234);
    \draw [->] (x2bis) -- (W.303);
    \coordinate (y) at ($ (W.north)+(0pt,2cm) $) ;
    \draw (W.north) -- (y);
    \draw (y) -- node {$y$} (ybis);

    \node [left of=W] (Equal) {$:=$};
    \node [bigblock, left of=Equal] (WP) {$W[P]$};

    \draw [<-] (WP.130) -- node {$i_1$} +(0pt,1cm);
    \draw [<-] (WP.50) -- node {$i_2$} +(0pt,1cm);
    \draw [->] (WP.270) -- node {$j$} +(0pt,-1cm);
    
  \end{tikzpicture}
\end{center}
\end{defi}

 Note that $W[P]$ is a MAC since $W[P](j|i_1i_2) \geq 0$ and:
  \begin{equation}
    \begin{aligned}
      \sum_j W[P](j|i_1i_2) &= \sum_{x_1,x_2,y}W(y|x_1x_2) \sum_j P(x_1x_2j|i_1i_2y)\\
      &= \sum_{x_1,x_2} \left(\sum_y W(y|x_1x_2)\right)P(x_1x_2|i_1i_2) \text{ since $P$ is non-signaling}\\
      &=  \sum_{x_1,x_2} P(x_1x_2|i_1i_2) = 1 \ .
    \end{aligned}
  \end{equation}
  
  The following proposition states that combining a classical code to a non-signaling strategy leads to inner bounds on the non-signaling assisted capacity region of a MAC:
  
  \begin{prop}
    \label{prop:concatCodes}
    If $P$ is a non-signaling assisted code for the MAC $W$, we have that $\mathcal{C}(W[P]) \subseteq \mathcal{C}^{\mathrm{NS}}(W)$.
  \end{prop}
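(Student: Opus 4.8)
The plan is to reduce the claim to the classical capacity of the single MAC $W[P]$, using the observation that the $n$-fold tensor power of the box $P$ turns $n$ copies of $W$ into $n$ copies of $W[P]$, and that a classical code for $(W[P])^{\otimes n}$ composed with this box is a legitimate non-signaling assisted code for $W^{\otimes n}$. Concretely, I would first establish that
\[ \mathrm{S}^{\mathrm{NS}}(W^{\otimes n}, m_1, m_2) \geq \mathrm{S}\big((W[P])^{\otimes n}, m_1, m_2\big) \qquad \text{for all } n, m_1, m_2 \ , \]
and then pass to rates and take closures.

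For the inequality, let $Q := P^{\otimes n}$ be the $n$-fold tensor power of the tripartite box $P$. By iterating Lemma~\ref{lem:NStensor}, $Q$ is a non-signaling box, with sender~$1$ mapping $[k_1]^n$ to $\mathcal{X}_1^n$, sender~$2$ mapping $[k_2]^n$ to $\mathcal{X}_2^n$, and the receiver mapping $\mathcal{Y}^n$ to $[\ell]^n$; in particular it is a valid non-signaling assistance for $W^{\otimes n}$. Factoring the defining product over the $n$ coordinates yields the key identity
\[ W^{\otimes n}[Q](j^n|i_1^n i_2^n) = \sum_{x_1^n,x_2^n,y^n} W^{\otimes n}(y^n|x_1^n x_2^n)\, Q(x_1^n x_2^n j^n|i_1^n i_2^n y^n) = \prod_{t=1}^{n} W[P](j_t|i_{1,t} i_{2,t}) = (W[P])^{\otimes n}(j^n|i_1^n i_2^n) \ . \]
Now take any classical $(m_1,m_2)$-code $(e_1,e_2,d)$ for $(W[P])^{\otimes n}$ and compose it with $Q$: pre-process the message $i_1$ through $e_1$ (resp.\ $i_2$ through $e_2$) before feeding it to sender~$1$ (resp.\ sender~$2$) of $Q$, and post-process the receiver's output $j^n$ of $Q$ through $d$, obtaining
\[ P'(x_1^n x_2^n (j_1 j_2)|i_1 i_2 y^n) := \sum_{a_1,a_2,j^n} e_1(a_1|i_1)\, e_2(a_2|i_2)\, Q(x_1^n x_2^n j^n|a_1 a_2 y^n)\, d(j_1 j_2|j^n) \ , \]
where $a_1 \in [k_1]^n$ and $a_2 \in [k_2]^n$. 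A short check shows that $P'$ is non-signaling: since the non-signaling marginal conditions are linear and $Q$ is non-signaling, summing out a party's output in $P'$ first reduces to summing out that party's output in $Q$ — which already removes the dependence on that party's input to $Q$ — after which $\sum_{a_1} e_1(a_1|i_1) = 1$ (resp.\ $\sum_{a_2} e_2(a_2|i_2) = 1$, resp.\ $\sum_{j_1,j_2} d(j_1 j_2|j^n) = 1$) removes the dependence on $i_1$ (resp.\ $i_2$, resp.\ $y^n$). Finally, setting $j_1 = i_1$, $j_2 = i_2$ in the objective and using the identity above to collapse the sum over $x_1^n,x_2^n,y^n$ shows that the success probability of $P'$ over $W^{\otimes n}$ equals the success probability of $(e_1,e_2,d)$ over $(W[P])^{\otimes n}$; maximizing over classical codes gives the displayed inequality.

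To conclude, if $(R_1,R_2)$ is achievable classically for $W[P]$, then $\mathrm{S}\big((W[P])^{\otimes n}, \ceil{2^{R_1 n}}, \ceil{2^{R_2 n}}\big) \to 1$ as $n \to \infty$, so by the inequality together with $\mathrm{S}^{\mathrm{NS}} \leq 1$ (Proposition~\ref{prop:oneShot}) we obtain $\mathrm{S}^{\mathrm{NS}}\big(W^{\otimes n}, \ceil{2^{R_1 n}}, \ceil{2^{R_2 n}}\big) \to 1$, i.e.\ $(R_1,R_2)$ is achievable for $W$ with non-signaling assistance. Thus the set of rate pairs achievable classically for $W[P]$ is contained in the set of rate pairs achievable for $W$ with non-signaling assistance, and taking closures yields $\mathcal{C}(W[P]) \subseteq \mathcal{C}^{\mathrm{NS}}(W)$. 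The only point requiring any care is the verification that $P'$ is non-signaling, but this is routine: pre-processing the senders' inputs and post-processing the receiver's output are exactly the operations under which the non-signaling marginal conditions are closed.
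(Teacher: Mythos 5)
Your proposal is correct and follows essentially the same route as the paper: compose a good classical code for $(W[P])^{\otimes n}$ with the box to obtain a non-signaling code $\hat P$ (your $P'$) for $W^{\otimes n}$ with the same success probability, then pass to rates. The only difference is presentational — you make explicit the tensorization $Q = P^{\otimes n}$ via Lemma~\ref{lem:NStensor}, the identity $W^{\otimes n}[P^{\otimes n}] = (W[P])^{\otimes n}$, and the verification that the composed box is non-signaling, all of which the paper leaves implicit or asserts without proof.
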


  \begin{proof}
    Let $(R_1,R_2) \in \mathcal{C}(W[P])$. Then, by definition, we have that:
    \[ \underset{n \rightarrow +\infty}{\lim} \mathrm{S}(W[P]^{\otimes n},\ceil{2^{R_1n}},\ceil{2^{R_2n}}) = 1 \ . \]

    Let us fix $\varepsilon > 0$. For a large enough $N$, we have $\mathrm{S}(W[P]^{\otimes N},\ceil{2^{R_1N}},\ceil{2^{R_2N}}) \geq 1 - \varepsilon$. Let us call $\ell_1 := \ceil{2^{R_1N}}$ and $\ell_2 := \ceil{2^{R_2N}}$. Thus, there exists encoders $e_1:[\ell_1] \rightarrow [k_1],e_2:[\ell_2] \rightarrow [k_2]$ and a decoder $d:[\ell] \rightarrow [\ell_1] \times [\ell_2]$ such that:
    \[ \frac{1}{\ell_1\ell_2} \sum_{i_1,i_2,j} W[P](j|i_1i_2)\sum_{a_1 \in [\ell_1],a_2 \in [\ell_2]}e_1(i_1|a_1)e_2(i_2|a_2)d(a_1a_2|j) \geq 1 - \varepsilon \ . \]

    In particular, we have:
    \[ \frac{1}{\ell_1\ell_2} \sum_{x_1,x_2,y} W(y|x_1x_2) \left[\sum_{i_1,i_2,j,a_1,a_2}P(x_1x_2j|i_1i_2y)e_1(i_1|a_1)e_2(i_2|a_2)d(a_1a_2|j)\right] \geq 1 - \varepsilon \ . \]

    Let us define $\hat{P}(x_1x_2(b_1b_2)|a_1a_2y) := \sum_{i_1,i_2,j}P(x_1x_2j|i_1i_2y)e_1(i_1|a_1)e_2(i_2|a_2)d(b_1b_2|j)$. Then, one can easily check that $\hat{P}$ is non-signaling, and thus:

    \[ \mathrm{S}^{\mathrm{NS}}(W^{\otimes N},\ell_1,\ell_2) \geq \frac{1}{\ell_1\ell_2} \sum_{x_1,x_2,y} W(y|x_1x_2)\sum_{a_1,a_2}\hat{P}(x_1x_2(a_1,a_2)|a_1a_2y) \geq 1 - \varepsilon \ .  \]

    This implies that $\underset{n \rightarrow +\infty}{\lim} \mathrm{S}^{\mathrm{NS}}(W^{\otimes n},\ceil{2^{R_1n}},\ceil{2^{R_2n}}) = 1$, i.e. $(R_1,R_2) \in \mathcal{C}^{\mathrm{NS}}(W)$.
  \end{proof}

  Thanks to Proposition~\ref{prop:concatCodes}, we have for any non-signaling assisted code $P$, $\mathcal{C}(W^{\otimes n}[P]) \subseteq \mathcal{C}^{\mathrm{NS}}(W^{\otimes n})$. But if $(R_1,R_2) \in \mathcal{C}^{\mathrm{NS}}(W^{\otimes n})$, we have that $(\frac{R_1}{n},\frac{R_2}{n}) \in \mathcal{C}^{\mathrm{NS}}(W)$. Thus, applying Theorem~\ref{theo:capacity} on $W^{\otimes n}[P]$ leads to inner bounds on $\mathcal{C}^{\mathrm{NS}}(W)$:
  \begin{prop}[Inner Bounds on $\mathcal{C}^{\mathrm{NS}}(W)$]
    \label{prop:NumericalMethod} For any number of copies $n$, number of inputs $k_1 \in [|\mathcal{X}_1|^n]$ and $k_2 \in [|\mathcal{X}_2|^n]$, non-signaling assisted codes $P$ on inputs in $[k_1],[k_2]$ for $W^{\otimes n}$, and distributions $q_1,q_2$ on $[k_1],[k_2]$, we have that the following $(R_1,R_2)$ are in $\mathcal{C}^{\mathrm{NS}}(W)$: 
  \[ R_1 \leq \frac{I(I_1:J|I_2)}{n}\ ,\ R_2 \leq \frac{I(I_2:J|I_1)}{n}\ ,\ R_1+R_2 \leq \frac{I((I_1,I_2):J)}{n} \ ,\]
  for $(I_1,I_2) \in [k_1] \times [k_2]$ following the product law $q_1 \times q_2$, and $J \in [\ell]$ the outcome of $W^{\otimes n }[P]$ on inputs $I_1,I_2$. In particular, the corner points of this capacity region are given by:
  \[ \left(\frac{I(I_1:J|I_2)}{n},\frac{I(I_2 : J)}{n} \right) \text{ and } \left(\frac{I(I_1 : J)}{n} ,\frac{I(I_2:J|I_1)}{n}\right) \ .\]
  \end{prop}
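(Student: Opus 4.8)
The plan is to apply the Liao--Ahlswede single-letter formula (Theorem~\ref{theo:capacity}) to the induced MAC $W^{\otimes n}[P]$, push the region so obtained into $\mathcal{C}^{\mathrm{NS}}(W^{\otimes n})$ via Proposition~\ref{prop:concatCodes}, and finally rescale from $W^{\otimes n}$ down to $W$.

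First I would fix $n$, $k_1 \in [|\mathcal{X}_1|^n]$, $k_2 \in [|\mathcal{X}_2|^n]$, a non-signaling assisted code $P$ for $W^{\otimes n}$ with output alphabet $[\ell]$, and distributions $q_1,q_2$ on $[k_1],[k_2]$. The MAC $M := W^{\otimes n}[P] : [k_1] \times [k_2] \to [\ell]$ is well-defined by the computation just before the statement. Applying Theorem~\ref{theo:capacity} to $M$ with the product input law $q_1 \times q_2$ shows that the closed pentagon
\[ \mathcal{Q} := \left\{(R_1,R_2) : R_1 \le I(I_1:J|I_2),\ R_2 \le I(I_2:J|I_1),\ R_1+R_2 \le I((I_1,I_2):J)\right\} \]
is contained in $\mathcal{C}(M)$: it is the closure of the set of rate pairs lying strictly below these three bounds for this one fixed product law, hence a closed subset of the region described in Theorem~\ref{theo:capacity}. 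By Proposition~\ref{prop:concatCodes} applied to the channel $W^{\otimes n}$ and the code $P$, $\mathcal{C}(M) = \mathcal{C}(W^{\otimes n}[P]) \subseteq \mathcal{C}^{\mathrm{NS}}(W^{\otimes n})$, so $\mathcal{Q} \subseteq \mathcal{C}^{\mathrm{NS}}(W^{\otimes n})$.

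The next step is the rescaling claim: if $(S_1,S_2) \in \mathcal{C}^{\mathrm{NS}}(W^{\otimes n})$ then $(S_1/n, S_2/n) \in \mathcal{C}^{\mathrm{NS}}(W)$. I would argue exactly as in the proof of Proposition~\ref{prop:ZENScapacity}. Fix $R_1' < S_1/n$, $R_2' < S_2/n$, write an arbitrary $n' \in \mathbb{N}^*$ as $n' = nm + r$ with $0 \le r < n$, and use $W^{\otimes n'} = W^{\otimes nm} \otimes W^{\otimes r}$ together with $(W^{\otimes n})^{\otimes m} = W^{\otimes nm}$. The fourth property of Proposition~\ref{prop:oneShot} gives $\mathrm{S}^{\mathrm{NS}}(W^{\otimes n'}, a, b) \ge \mathrm{S}^{\mathrm{NS}}(W^{\otimes nm}, a, b)$ for all $a,b$. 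Since $R_1' n < S_1$, for all sufficiently large $m$ one has $R_1' n' = R_1'(nm+r) \le S_1 m$ and likewise $R_2' n' \le S_2 m$, hence $\ceil{2^{R_1' n'}} \le \ceil{2^{S_1 m}}$ and $\ceil{2^{R_2' n'}} \le \ceil{2^{S_2 m}}$; the third property of Proposition~\ref{prop:oneShot} then yields
\[ \mathrm{S}^{\mathrm{NS}}(W^{\otimes n'}, \ceil{2^{R_1' n'}}, \ceil{2^{R_2' n'}}) \ \ge\ \mathrm{S}^{\mathrm{NS}}(W^{\otimes nm}, \ceil{2^{S_1 m}}, \ceil{2^{S_2 m}}) \ \underset{m \rightarrow +\infty}{\longrightarrow}\ 1, \]
the limit holding because $(S_1,S_2)$ is achievable for $W^{\otimes n}$. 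As this holds for every $R_1' < S_1/n$, $R_2' < S_2/n$ and $\mathcal{C}^{\mathrm{NS}}(W)$ is closed, $(S_1/n, S_2/n) \in \mathcal{C}^{\mathrm{NS}}(W)$. Combining with the previous step, $\tfrac1n \mathcal{Q} \subseteq \mathcal{C}^{\mathrm{NS}}(W)$, which is precisely the claimed family of rate pairs.

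For the ``in particular'' part I would simply record the shape of $\mathcal{Q}$. Writing $a := I(I_1:J|I_2)$, $b := I(I_2:J|I_1)$, $c := I((I_1,I_2):J)$, independence of $I_1,I_2$ gives $a - I(I_1:J) = I(I_1:I_2|J) \ge 0$, so the chain rule $c = I(I_1:J) + I(I_2:J|I_1)$ yields $a + b \ge c$; together with $a \le c$, $b \le c$ this makes $\mathcal{Q}$ a genuine pentagon whose two corner points on the sum-rate face are $(a, c-a)$ and $(c-b, b)$. Finally the chain rule $c = I(I_2:J) + I(I_1:J|I_2)$ gives $c - a = I(I_2:J)$, and symmetrically $c - b = I(I_1:J)$; dividing by $n$ produces the two corner points in the statement. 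The only point needing care is the rescaling step: the ceilings $\ceil{2^{R_i' n'}}$ can exceed $\ceil{2^{S_i m}}$, which is handled — exactly as in Proposition~\ref{prop:ZENScapacity} — by passing to strictly smaller rates and invoking the closure in the definition of $\mathcal{C}^{\mathrm{NS}}(W)$; everything else is a direct application of Theorem~\ref{theo:capacity}, Proposition~\ref{prop:concatCodes}, and the entropy chain rule.
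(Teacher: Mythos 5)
Your proof is correct and follows essentially the same route as the paper: apply Theorem~\ref{theo:capacity} to $W^{\otimes n}[P]$, push the region into $\mathcal{C}^{\mathrm{NS}}(W^{\otimes n})$ via Proposition~\ref{prop:concatCodes}, rescale by $1/n$, and identify the corner points via the chain rule together with $I(I_2:J)\leq I(I_2:J|I_1)$ from the independence of $I_1,I_2$. The only difference is that you spell out the $1/n$ rescaling (blocking $n'=nm+r$ and using Proposition~\ref{prop:oneShot}) which the paper asserts in one line, and you derive $I(I_2:J)\leq I(I_2:J|I_1)$ via $I(I_1:I_2|J)\geq 0$ rather than directly through conditional entropies — both are sound.
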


  \begin{proof}
    The achievable region comes from the previous discussion. We just need to prove that the corner points are of the given form. If $R_1=\frac{I(I_1:J|I_2)}{n}$, constraints on $R_2$ and $R_1+R_2$ leads to a maximum $R_2 = \min\left(\frac{I(I_2:J|I_1)}{n}, \frac{I((I_1,I_2):J)}{n}-\frac{I(I_1:J|I_2)}{n}\right)$. However, $I((I_1,I_2):J)-I(I_1:J|I_2)=I(I_2 : J)$ by the chain rule. We only need to show that $I(I_2:J) \leq I(I_2:J|I_1)$ and the proof will be complete, since the other corner point is symmetric. We have:
    \[ I(I_2:J) = H(I_2)-H(I_2|J) = H(I_2|I_1) - H(I_2|J) \leq H(I_2|I_1) - H(I_2|JI_1) = I(I_2:J|I_1) \ ,\]
    the second equality coming from the fact that $I_1$ and $I_2$ are independent, and the inequality coming from the fact that $H(A|BC) \leq H(A|B)$ for any $A,B,C$.
  \end{proof}
  
  \paragraph{Application to the Noisy Binary Adder Channel} We will now apply this strategy to a noisy version of the BAC. We will consider flip errors $\varepsilon_1, \varepsilon_2$ of inputs $x_1,x_2$ on $W_{\text{BAC}}$, which leads to the following definition of $W_{\text{BAC},\varepsilon_1, \varepsilon_2}$:
  \begin{equation}
    \begin{aligned}
      \forall y,x_1,x_2, W_{\text{BAC},\varepsilon_1, \varepsilon_2}(y|x_1x_2) &:=  (1-\varepsilon_1)(1-\varepsilon_2) W_{\text{BAC}}(y|x_1x_2)\\
      &+ \varepsilon_1(1-\varepsilon_2) W_{\text{BAC}}(y|\overline{x_1}x_2)\\
      &+ (1-\varepsilon_1)\varepsilon_2 W_{\text{BAC}}(y|x_1\overline{x_2})\\
      &+ \varepsilon_1\varepsilon_2 W_{\text{BAC}}(y|\overline{x_1}\overline{x_2}) \ .
      \end{aligned}
  \end{equation}

  First, let us note that the zero-error non-signaling assisted capacity region of $W_{\text{BAC},\varepsilon_1, \varepsilon_2}$ is trivial for $\varepsilon \in (0,1)$:

  \begin{prop}
    \label{prop:NScapaNoisy}
    If $\varepsilon_1,\varepsilon_2 \in (0,1)$, then $\mathcal{C}_0^{\mathrm{NS}}(W_{\text{BAC},\varepsilon_1, \varepsilon_2}) = \left\{ (0,0) \right\}$.
  \end{prop}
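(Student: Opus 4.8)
The plan is to deduce the statement from a general principle: \emph{any MAC all of whose transition probabilities are strictly positive has trivial zero-error non-signaling assisted capacity region}, together with the elementary observation that $W_{\text{BAC},\varepsilon_1,\varepsilon_2}$ has this ``full support'' property when $\varepsilon_1,\varepsilon_2\in(0,1)$. So the proof would split into checking full support, proving the general principle from the linear program of Proposition~\ref{prop:NSLP}, and assembling the two via Proposition~\ref{prop:ZENScapacity}.

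First I would check full support. Fixing an input pair $(x_1,x_2)\in\{0,1\}^2$, the four flipped pairs $(x_1,x_2),(\overline{x_1},x_2),(x_1,\overline{x_2}),(\overline{x_1},\overline{x_2})$ exhaust $\{0,1\}^2$, so their four sums take every value in $\{0,1,2\}$; hence for each $y$, $W_{\text{BAC},\varepsilon_1,\varepsilon_2}(y|x_1x_2)$ is a sum containing at least one of the coefficients $(1-\varepsilon_1)(1-\varepsilon_2),\varepsilon_1(1-\varepsilon_2),(1-\varepsilon_1)\varepsilon_2,\varepsilon_1\varepsilon_2$, all strictly positive for $\varepsilon_1,\varepsilon_2\in(0,1)$, so $W_{\text{BAC},\varepsilon_1,\varepsilon_2}(y|x_1x_2)>0$. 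Since $W^{\otimes n}(y^n|x_1^nx_2^n)=\prod_{i=1}^nW(y_i|x_{1,i}x_{2,i})$ is a product of positive numbers, $W_{\text{BAC},\varepsilon_1,\varepsilon_2}^{\otimes n}$ also has full support for every $n$.

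Next I would establish the general principle. Using the linear program of Proposition~\ref{prop:NSLP}, observe that every feasible tuple $(r,r^1,r^2,p)$ satisfies $\sum_{x_1,x_2}p_{x_1,x_2}=k_1k_2$: sum the constraint $\sum_{x_1}p_{x_1,x_2}=k_1\sum_{x_1}r^2_{x_1,x_2,y}$ over $x_2$, then $\sum_{x_2}r^2_{x_1,x_2,y}=k_2\sum_{x_2}r_{x_1,x_2,y}$ over $x_1$, and use $\sum_{x_1,x_2}r_{x_1,x_2,y}=1$. Since $0\le r_{x_1,x_2,y}\le p_{x_1,x_2}$ and $\sum_yW(y|x_1x_2)=1$,
\[
\frac{1}{k_1k_2}\sum_{x_1,x_2,y}W(y|x_1x_2)\,r_{x_1,x_2,y}\ \le\ \frac{1}{k_1k_2}\sum_{x_1,x_2}p_{x_1,x_2}\ =\ 1\ ,
\]
with equality forcing $\sum_{x_1,x_2,y}W(y|x_1x_2)\bigl(p_{x_1,x_2}-r_{x_1,x_2,y}\bigr)=0$; each summand being nonnegative and $W(y|x_1x_2)>0$ by full support, this yields $r_{x_1,x_2,y}=p_{x_1,x_2}$ for all $x_1,x_2,y$. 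Substituting back into $\sum_{x_1,x_2}r_{x_1,x_2,y}=1$ gives $\sum_{x_1,x_2}p_{x_1,x_2}=1$, hence $k_1k_2=1$. So for a full-support MAC, $\mathrm{S}^{\mathrm{NS}}(W,k_1,k_2)=1$ only when $k_1=k_2=1$.

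Finally I would combine the pieces: applied to $W_{\text{BAC},\varepsilon_1,\varepsilon_2}^{\otimes n}$ (full support by the first step), the identity $\mathrm{S}^{\mathrm{NS}}(W_{\text{BAC},\varepsilon_1,\varepsilon_2}^{\otimes n},k_1,k_2)=1$ forces $k_1=k_2=1$; by Proposition~\ref{prop:ZENScapacity}, the only rate pairs $(R_1,R_2)$ for which some $n$ achieves $\mathrm{S}^{\mathrm{NS}}(W_{\text{BAC},\varepsilon_1,\varepsilon_2}^{\otimes n},\ceil{2^{R_1n}},\ceil{2^{R_2n}})=1$ are those with $R_1,R_2\le 0$, and restricting to nonnegative rates (and noting $\{(0,0)\}$ is closed) yields $\mathcal{C}_0^{\mathrm{NS}}(W_{\text{BAC},\varepsilon_1,\varepsilon_2})=\{(0,0)\}$. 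I do not anticipate a genuine obstacle; the only delicate point is the implication ``$\mathrm{S}^{\mathrm{NS}}=1\Rightarrow r\equiv p$'', which is exactly where strict positivity of every transition probability is needed — and it must be used, since for $\varepsilon_1=\varepsilon_2=0$ the zero-error region is nontrivial.
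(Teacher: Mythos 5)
Your proof is correct and follows essentially the same route as the paper: both use the linear program of Proposition~\ref{prop:NSLP} to show that $\mathrm{S}^{\mathrm{NS}}=1$ forces $r_{x_1^n,x_2^n,y^n}=p_{x_1^n,x_2^n}$ wherever $W^{\otimes n}(y^n|x_1^nx_2^n)>0$, invoke full support of the noisy channel, and conclude $k_1k_2=1$. Your write-up merely packages the key step as a general full-support principle and is somewhat more explicit about the final passage to the capacity region via Proposition~\ref{prop:ZENScapacity}, but the argument is the same.
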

  \begin{proof}
    If $\mathrm{S}^{\mathrm{NS}}(W^{\otimes n},k_1,k_2) = 1$, then $\forall y^n,x_1^n,x_2^n : W^{\otimes n}(y^n|x_1^nx_2^n) > 0 \implies r_{x_1^n,x_2^n,y^n} = p_{x_1^n,x_2^n}$. Indeed, we have for an optimal $p,r$ that:
    \[ \mathrm{S}^{\mathrm{NS}}(W^{\otimes n},k_1,k_2) = \frac{1}{k_1k_2} \sum_{x_1^n,x_2^n,y^n} W^{\otimes n}(y^n|x_1^nx_2^n)r_{x_1^n,x_2^n,y^n} \leq \frac{1}{k_1k_2} \sum_{x_1^n,x_2^n,y^n} W^{\otimes n}(y^n|x_1^nx_2^n)p_{x_1^n,x_2^n} = 1 \ , \]
    which implies the previous statement. But, for $W_{\text{BAC},\varepsilon_1, \varepsilon_2}^{\otimes n}$, one can easily check that for all $y^n,x_1^n,x_2^n$, $W^{\otimes n}(y^n|x_1^nx_2^n) > 0$ since $\varepsilon_1,\varepsilon_2 \in (0,1)$. Indeed, you just have to flip the inputs to a valid preimage of the output. Thus if $\mathrm{S}^{NS}(W_{\text{BAC},\varepsilon_1, \varepsilon_2}^{\otimes n},k_1,k_2) = 1$, we have that $\forall y^n,x_1^n,x_2^n, r_{x_1^n,x_2^n,y^n} = p_{x_1^n,x_2^n}$. In particular, this implies that $\sum_{x_1^n,x_2^n} r_{x_1^n,x_2^n,y^n} = \sum_{x_1^n,x_2^n} p_{x_1^n,x_2^n}$, therefore $1 = k_1k_2$, so $k_1 = 1$ and $k_2 = 1$. Thus $\mathrm{S}^{NS}(W^{\otimes n},2^{nR_1},2^{nR_2}) = 1$ implies that $(R_1,R_2)=(0,0)$. 
  \end{proof}

  We have then applied the numerical method described in Proposition~\ref{prop:NumericalMethod} to $W_{\text{BAC},\varepsilon_1, \varepsilon_2}$ for the symmetric case $\varepsilon_1 = \varepsilon_2 = \varepsilon := 10^{-3}$. Since it is hard to go through all non-signaling assisted codes $P$ and product distributions $q_1,q_2$, we have applied the heuristic of using non-signaling assisted codes obtained while optimizing $\mathrm{S}^{\mathrm{NS}}(W^{\otimes n},k_1,k_2)$ in the symmetrized linear program. We have combined them with uniform $q_1,q_2$, as the form of those non-signaling assisted codes coming from our optimization program is symmetric. We have evaluated the achievable corner points for all $k_1, k_2 \leq 2^n$ for $n \leq 5$ copies which led to Figure~\ref{fig:noisyBAC}:
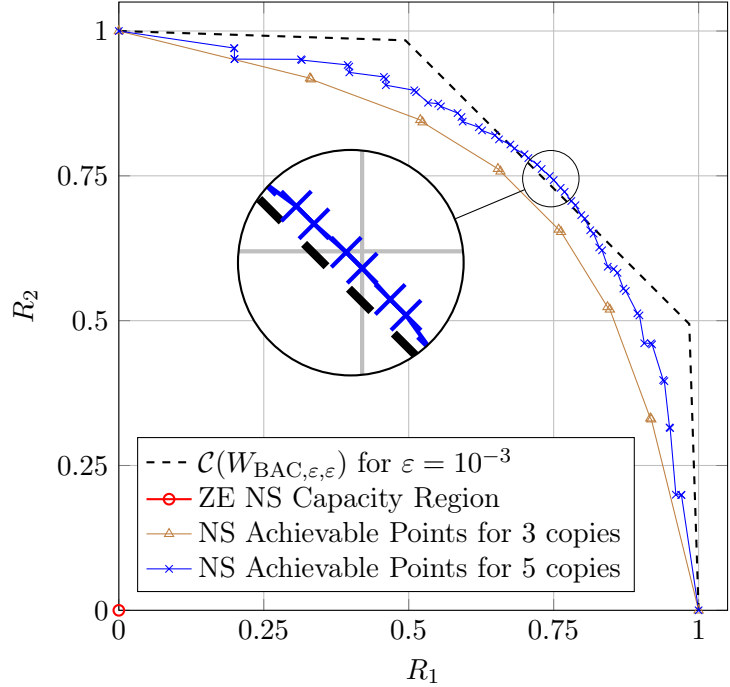
\begin{figure}[!h]
    \begin{center}
      \begin{tikzpicture}
        [spy using outlines={circle, magnification=4, size=2cm, connect spies}]
        \begin{axis}[
          xmin = 0, xmax = 1.05,
          ymin = 0, ymax = 1.05,
          xtick distance = 0.25,
          ytick distance = 0.25,
          grid = both,
          width = 0.55\textwidth,
          height = 0.55\textwidth,
          legend cell align = {left},
          legend pos = south west,
          ylabel=$R_2$,
          xlabel=$R_1$,
        ]
        \addplot[
          thick,
          dashed,
          black,
        ] coordinates {(0,1) (0.49429612113126953,0.983898049082558) (0.983898049082558, 0.49429612113126953) (1,0)};
        \addplot[
          red,
          mark = o,
          thick,
        ] coordinates {(0,0)};
        \addplot[
          domain = 0:1,
          brown,
          mark = triangle,
        ] table[x=R1,y=R2,col sep=comma] {Data_Capacity_BACwithSmallBitFlipNoise/Capacity3.csv};
        \addplot[
          domain = 0:1,
          blue,
          mark = x,
        ] table[x=R1,y=R2,col sep=comma] {Data_Capacity_BACwithSmallBitFlipNoise/Capacity5.csv};
        \legend{$\mathcal{C}(W_{\text{BAC},\varepsilon,\varepsilon})$ for $\varepsilon = 10^{-3}$, ZE NS Capacity Region, NS Achievable Points for $3$ copies, NS Achievable Points for $5$ copies}
        
        \coordinate (spypoint) at (0.745,0.745);
        \coordinate (magnifyglass) at (0.4,0.6);
        \end{axis}
        \spy [size=3cm] on (spypoint) in node [fill=white] at (magnifyglass);
      \end{tikzpicture}
      \caption{Capacity regions of the noisy binary adder channel $W_{\text{BAC},\varepsilon,\varepsilon}$ for $\varepsilon = 10^{-3}$. The black dashed curve depicts the classical capacity region $\mathcal{C}(W_{\text{BAC},\varepsilon,\varepsilon})$ which was found numerically using Theorem~\ref{theo:capacity}. The red point depicts the zero-error non-signaling assisted capacity region (Proposition~\ref{prop:NScapaNoisy}). The blue curve depicts achievable non-signaling assisted rates pairs obtained from $\mathcal{C}(W_{\text{BAC},\varepsilon,\varepsilon}^{\otimes 5}[P])$ through the numerical method described in Proposition~\ref{prop:NumericalMethod}.}
    \label{fig:noisyBAC}
    \end{center}
\end{figure}

Compared to the noiseless binary adder channel, we can first notice that the classical capacity region is slightly smaller, with a classical  sum-rate capacity of $1.478$ at most. On the other hand, although the zero-error non-signaling assisted capacity of $W_{\text{BAC},\varepsilon,\varepsilon}$ is completely trivial, we have with our concatenated codes strategy found significant rate pairs achievable with non-signaling assistance. In particular, we have reached a non-signaling assisted sum-rate capacity of $1.493$ which beats the best classical sum-rate capacity. Thus, it shows that non-signaling assistance can improve the capacity of the noisy binary adder channel as well.

\section{Relaxed Non-Signaling Assisted Capacity Region and Outer Bounds}
\label{section:OB}
A natural question that arises when studying the strength of non-signaling assistance is whether a result similar to Theorem~\ref{theo:capacity} can be found to describe by a single-letter formula the non-signaling assisted capacity region of MACs. In particular, dropping the constraint that $(X_1,X_2)$ is in product form in Theorem~\ref{theo:capacity} seems to be a particularly good candidate to characterize the non-signaling assisted capacity region of MACs, as this looks quite similar to allowing correlations between parties.

We have not been able to show the equivalence between this region and the non-signaling assisted capacity region; however, it turns out to be equivalent to the capacity region defined by a slight relaxation of non-signaling assistance, which we call $\mathrm{S}^{\overline{\mathrm{NS}}}(W,k_1,k_2)$. In particular, this will give us the best known outer bound on the non-signaling capacity.

  \begin{defi}
    \begin{equation}
      \begin{aligned}
        \mathrm{S}^{\overline{\mathrm{NS}}}(W,k_1,k_2) := &&\underset{r,p}{\maxi} &&& \frac{1}{k_1k_2} \sum_{x_1,x_2,y} W(y|x_1x_2)r_{x_1,x_2,y}\\
        &&\st &&& \sum_{x_1,x_2} r_{x_1,x_2,y} \leq 1\\
        &&&&& \sum_{x_1,x_2} p_{x_1,x_2} = k_1k_2\\
        &&&&& \sum_{x_1} p_{x_1,x_2} \geq k_1 \sum_{x_1} r_{x_1,x_2,y}\\
        &&&&& \sum_{x_2} p_{x_1,x_2} \geq k_2 \sum_{x_2} r_{x_1,x_2,y}\\
        &&&&& 0 \leq r_{x_1,x_2,y} \leq p_{x_1,x_2}\\
      \end{aligned}
    \end{equation}
  \end{defi}

The following proposition shows that this is indeed a relaxation of the non-signaling constraint.

  \begin{prop}
    \label{prop:NSisrelaxed}
   $\mathrm{S}^{\mathrm{NS}}(W,k_1,k_2) \leq \mathrm{S}^{\overline{\mathrm{NS}}}(W,k_1,k_2)$.
  \end{prop}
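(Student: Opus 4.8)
The plan is to prove the inequality by exhibiting, from any feasible point of the linear program defining $\mathrm{S}^{\mathrm{NS}}(W,k_1,k_2)$, a feasible point of the program defining $\mathrm{S}^{\overline{\mathrm{NS}}}(W,k_1,k_2)$ with the same objective value. The key preliminary move is \emph{not} to work with the non-signaling box $P$ directly, but instead to use the symmetrized reformulation of $\mathrm{S}^{\mathrm{NS}}(W,k_1,k_2)$ from Proposition~\ref{prop:NSLP}, whose variables are precisely $(r,r^1,r^2,p)$. The program for $\mathrm{S}^{\overline{\mathrm{NS}}}$ uses only the variables $r_{x_1,x_2,y}$ and $p_{x_1,x_2}$, so one simply takes those from an optimal solution of the symmetrized $\mathrm{S}^{\mathrm{NS}}$ program and discards $r^1$ and $r^2$.

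Then one checks the five constraints of the $\mathrm{S}^{\overline{\mathrm{NS}}}$ program in turn. The normalization $\sum_{x_1,x_2} r_{x_1,x_2,y} = 1$ trivially implies $\sum_{x_1,x_2} r_{x_1,x_2,y} \le 1$. Summing $\sum_{x_2} r^2_{x_1,x_2,y} = k_2 \sum_{x_2} r_{x_1,x_2,y}$ over $x_1$ gives $\sum_{x_1,x_2} r^2_{x_1,x_2,y} = k_2$, and then summing $\sum_{x_1} p_{x_1,x_2} = k_1 \sum_{x_1} r^2_{x_1,x_2,y}$ over $x_2$ gives $\sum_{x_1,x_2} p_{x_1,x_2} = k_1 k_2$. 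The inequality $\sum_{x_1} p_{x_1,x_2} \ge k_1 \sum_{x_1} r_{x_1,x_2,y}$ follows from the equality $\sum_{x_1} p_{x_1,x_2} = k_1 \sum_{x_1} r^2_{x_1,x_2,y}$ together with the pointwise bound $r^2_{x_1,x_2,y} \ge r_{x_1,x_2,y}$; symmetrically, $\sum_{x_2} p_{x_1,x_2} \ge k_2 \sum_{x_2} r_{x_1,x_2,y}$ uses $r^1_{x_1,x_2,y} \ge r_{x_1,x_2,y}$. Finally $0 \le r_{x_1,x_2,y} \le p_{x_1,x_2}$ is immediate from the chain $0 \le r_{x_1,x_2,y} \le r^1_{x_1,x_2,y} \le p_{x_1,x_2}$ already present in the symmetrized program. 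Since the objective function of both programs is the identical expression $\frac{1}{k_1 k_2}\sum_{x_1,x_2,y} W(y|x_1 x_2) r_{x_1,x_2,y}$, the constructed point is feasible for $\mathrm{S}^{\overline{\mathrm{NS}}}$ and attains the value $\mathrm{S}^{\mathrm{NS}}(W,k_1,k_2)$, which yields the claim.

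I do not expect a genuine obstacle here: the content of the statement is essentially the observation that the $\mathrm{S}^{\overline{\mathrm{NS}}}$ program is obtained from the symmetrized $\mathrm{S}^{\mathrm{NS}}$ program by forgetting the auxiliary variables $r^1, r^2$ and projecting the equalities that mention them down to inequalities on $r$ and $p$. The only point requiring care is to set up the comparison in the right coordinates, i.e.\ to invoke Proposition~\ref{prop:NSLP} at the outset; attempting the argument directly from the box $P$ would force one to re-derive these aggregated quantities, which is exactly what Proposition~\ref{prop:NSLP} already provides.
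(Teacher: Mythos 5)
Your proposal is correct and is essentially identical to the paper's proof: both start from the symmetrized program of Proposition~\ref{prop:NSLP}, keep only $(r,p)$, and verify the relaxed constraints using $\sum_{x_1} p_{x_1,x_2} = k_1\sum_{x_1} r^2_{x_1,x_2,y}$, $\sum_{x_2} p_{x_1,x_2} = k_2\sum_{x_2} r^1_{x_1,x_2,y}$ together with the pointwise bounds $r \le r^1, r^2 \le p$. No issues.
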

  \begin{proof}
    Let us take a solution $(p_{x_1,x_2}, r_{x_1,x_2,y}, r^1_{x_1,x_2,y}, r^2_{x_1,x_2,y})_{x_1 \in \mathcal{X}_1,x_2 \in \mathcal{X}_2,y \in \mathcal{Y}}$ of the linear program computing $\mathrm{S}^{\mathrm{NS}}(W,k_1,k_2)$. Let us show that $(p_{x_1,x_2}, r_{x_1,x_2,y})_{x_1 \in \mathcal{X}_1,x_2 \in \mathcal{X}_2,y \in \mathcal{Y}}$ is a solution of the linear program computing $\mathrm{S}^{\overline{\mathrm{NS}}}(W,k_1,k_2)$ with a same objective value, from which the proposition follows.

    They have indeed the same value, since the definition which is the same for both programs depends only on $r_{x_1,x_2,y}$. Let us show that all constraints are satisfied for $(p_{x_1,x_2}, r_{x_1,x_2,y})_{x_1 \in \mathcal{X}_1,x_2 \in \mathcal{X}_2,y \in \mathcal{Y}}$.

    We have $\sum_{x_1,x_2} r_{x_1,x_2,y} = 1 \leq 1$ so the first constraint is satisfied. We have then that:
    \[\sum_{x_1,x_2} p_{x_1,x_2} = k_1\sum_{x_1,x_2} r^2_{x_1,x_2,y} = k_1k_2\sum_{x_1,x_2} r_{x_1,x_2,y} = k_1k_2 \ ,\]
    so the second constraint is satisfied.

    For the third constraint (and symmetrically the fourth constraint), we have:

    \[\sum_{x_1} p_{x_1,x_2} = k_1\sum_{x_1} r^2_{x_1,x_2,y} \geq  k_1\sum_{x_1} r_{x_1,x_2,y} \ .\]

    Finally, we have directly $0 \leq r_{x_1,x_2,y} \leq p_{x_1,x_2}$, so the last constraint is satisfied.
  \end{proof}

  We can now introduce the relaxed non-signaling assisted capacity region $\mathcal{C}^{\overline{\mathrm{NS}}}(W)$:
  \begin{defi}[$\mathcal{C}^{\overline{\mathrm{NS}}}(W)$]
  A rate pair $(R_1,R_2)$ is achievable with relaxed non-signaling assistance if:
  \[ \underset{n \rightarrow +\infty}{\lim} \mathrm{S}^{\overline{\mathrm{NS}}}(W^{\otimes n},\ceil{2^{R_1n}},\ceil{2^{R_2n}}) = 1 \ . \]
  We define $\mathcal{C}^{\overline{\mathrm{NS}}}(W)$ as the closure of the convex hull of the set of all achievable rate pairs with relaxed non-signaling assistance.
  \end{defi}

  \begin{rk}
    One could show as in the non-relaxed case that $\mathcal{C}^{\overline{\mathrm{NS}}}(W)$ is convex without taking the convex hull in its definition.
  \end{rk}

  A direct property that follows from this definition and Proposition~\ref{prop:NSisrelaxed} is the fact that the non-signaling assisted capacity region is included in the relaxed non-signaling assisted capacity region.
  \begin{cor}
    \label{cor:NSisrelaxed}
      $\mathcal{C}^{\mathrm{NS}}(W) \subseteq \mathcal{C}^{\overline{\mathrm{NS}}}(W)$.
  \end{cor}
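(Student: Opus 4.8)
The plan is to obtain this inclusion directly from Proposition~\ref{prop:NSisrelaxed} together with the definitions of the two capacity regions, with essentially no extra work. The key observation is that Proposition~\ref{prop:NSisrelaxed} is stated for an arbitrary MAC, hence applies verbatim to each tensor power $W^{\otimes n}$: for all $n$ and all $k_1,k_2 \in \mathbb{N}^*$ we have $\mathrm{S}^{\mathrm{NS}}(W^{\otimes n},k_1,k_2) \leq \mathrm{S}^{\overline{\mathrm{NS}}}(W^{\otimes n},k_1,k_2)$.

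First I would record the trivial upper bound $\mathrm{S}^{\overline{\mathrm{NS}}}(W,k_1,k_2) \leq 1$: in the defining program, the constraint $r_{x_1,x_2,y} \leq p_{x_1,x_2}$ gives $\sum_{x_1,x_2,y} W(y|x_1x_2) r_{x_1,x_2,y} \leq \sum_{x_1,x_2} p_{x_1,x_2}\sum_y W(y|x_1x_2) = \sum_{x_1,x_2} p_{x_1,x_2} = k_1k_2$, so the objective is at most $1$. Then, given a rate pair $(R_1,R_2)$ achievable with non-signaling assistance, i.e.\ $\lim_{n\to+\infty}\mathrm{S}^{\mathrm{NS}}(W^{\otimes n},\ceil{2^{R_1n}},\ceil{2^{R_2n}}) = 1$, the chain $1 \geq \mathrm{S}^{\overline{\mathrm{NS}}}(W^{\otimes n},\ceil{2^{R_1n}},\ceil{2^{R_2n}}) \geq \mathrm{S}^{\mathrm{NS}}(W^{\otimes n},\ceil{2^{R_1n}},\ceil{2^{R_2n}})$ forces $\mathrm{S}^{\overline{\mathrm{NS}}}(W^{\otimes n},\ceil{2^{R_1n}},\ceil{2^{R_2n}}) \to 1$ as well. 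Hence $(R_1,R_2)$ is achievable with relaxed non-signaling assistance.

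To finish, I would invoke the monotonicity of ``closure of convex hull'': the set of rate pairs achievable with non-signaling assistance is contained in the set of rate pairs achievable with relaxed non-signaling assistance, therefore in its convex hull, therefore in the closure of its convex hull, which is by definition $\mathcal{C}^{\overline{\mathrm{NS}}}(W)$. Taking the closure of the left-hand side (which is $\mathcal{C}^{\mathrm{NS}}(W)$, already a closed set contained in the closed set $\mathcal{C}^{\overline{\mathrm{NS}}}(W)$) yields $\mathcal{C}^{\mathrm{NS}}(W) \subseteq \mathcal{C}^{\overline{\mathrm{NS}}}(W)$. There is no genuine obstacle here; the only two points needing a line of justification are the bound $\mathrm{S}^{\overline{\mathrm{NS}}} \leq 1$ and this last set-theoretic bookkeeping comparing an ``achievable set closure'' on one side with a ``closure of convex hull'' on the other.
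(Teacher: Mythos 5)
Your proposal is correct and is exactly the argument the paper intends: the corollary is stated there as a direct consequence of Proposition~\ref{prop:NSisrelaxed} applied to each $W^{\otimes n}$, and your filling-in of the two routine details (the bound $\mathrm{S}^{\overline{\mathrm{NS}}}\leq 1$ via $\sum_{x_1,x_2}p_{x_1,x_2}=k_1k_2$, and the closure/convex-hull bookkeeping) is accurate.
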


  We present now the main result of this section, the characterization of $\mathcal{C}^{\overline{\mathrm{NS}}}(W)$ by a single-letter formula.
  
  \begin{theo}[Characterization of $\mathcal{C}^{\overline{\mathrm{NS}}}(W)$]
        \label{theo:CharaNSrelaxed}
        $\mathcal{C}^{\overline{\mathrm{NS}}}(W)$ is the closure of the convex hull of all rate pairs $(R_1,R_2)$ satisfying:
        \[ R_1 < I(X_1:Y|X_2)\ ,\ R_2 < I(X_2:Y|X_1)\ ,\ R_1+R_2 < I((X_1,X_2):Y) \ ,\]
        for $(X_1,X_2)$ following some law $P_{X_1X_2}$ on $\mathcal{X}_1 \times \mathcal{X}_2$, and $Y \in \mathcal{Y}$ the outcome of $W$ on inputs $X_1,X_2$.
  \end{theo}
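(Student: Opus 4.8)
The plan is to show $\mathcal{C}^{\overline{\mathrm{NS}}}(W)=\mathcal{R}^{\ast}$, where $\mathcal{R}^{\ast}$ denotes the closure of the convex hull of all $(R_{1},R_{2})$ satisfying the three strict inequalities for some $P_{X_{1}X_{2}}$. Since $\mathcal{R}^{\ast}$ is closed and convex and $\mathcal{C}^{\overline{\mathrm{NS}}}(W)$ is by definition the closure of the convex hull of the rate pairs achievable with relaxed non-signaling assistance, it is enough to prove the two inclusions: (a) every $(R_{1},R_{2})$ with $R_{1}<I(X_{1}:Y|X_{2})$, $R_{2}<I(X_{2}:Y|X_{1})$, $R_{1}+R_{2}<I((X_{1},X_{2}):Y)$ for some $P_{X_{1}X_{2}}$ is achievable with relaxed non-signaling assistance; and (b) every achievable rate pair lies in $\mathcal{R}^{\ast}$. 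This mirrors the proof of Theorem~\ref{theo:capacity}, the point being that the linear program defining $\mathrm{S}^{\overline{\mathrm{NS}}}$ is a relaxation in which the only surviving structural constraints match the three information inequalities, so that no independence (product) assumption on $(X_{1},X_{2})$ is needed.

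For (a), I would exhibit an explicit feasible point of the program defining $\mathrm{S}^{\overline{\mathrm{NS}}}(W^{\otimes n},\ceil{2^{R_{1}n}},\ceil{2^{R_{2}n}})$ from jointly typical sets. Take $(X_{1},X_{2})\sim P_{X_{1}X_{2}}$ and $Y\sim W(\cdot|X_{1}X_{2})$, let $T$ be the $\delta$-jointly-typical set of $(X_{1}^{n},X_{2}^{n},Y^{n})$ and $T_{\mathcal{X}_{1}\mathcal{X}_{2}}$ its projection, and set $p_{x_{1}^{n}x_{2}^{n}}:=k_{1}k_{2}/|T_{\mathcal{X}_{1}\mathcal{X}_{2}}|$ on $T_{\mathcal{X}_{1}\mathcal{X}_{2}}$ (and $0$ otherwise) and $r_{x_{1}^{n}x_{2}^{n}y^{n}}:=p_{x_{1}^{n}x_{2}^{n}}\,\mathbbm{1}\{(x_{1}^{n},x_{2}^{n},y^{n})\in T\}$. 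The objective equals $\tfrac{1}{k_{1}k_{2}}\sum_{(x_{1}^{n},x_{2}^{n})\in T_{\mathcal{X}_{1}\mathcal{X}_{2}}}p_{x_{1}^{n}x_{2}^{n}}\Pr_{Y^{n}\sim W^{\otimes n}(\cdot|x_{1}^{n}x_{2}^{n})}[(x_{1}^{n},x_{2}^{n},Y^{n})\in T]\to 1$ by the conditional typicality lemma, and feasibility of the three nontrivial constraints ($\sum_{x_{1}^{n}x_{2}^{n}}r\le 1$ and the two constraints relating $\sum p$ and $\sum r$) reduces, via the standard cardinality bounds $|\{x_{1}^{n}:(x_{1}^{n},x_{2}^{n},y^{n})\in T\}|\le 2^{n(H(X_{1}|X_{2}Y)+\delta)}$ and its analogues, to $k_{1}\le 2^{n(I(X_{1}:Y|X_{2})-O(\delta))}$, $k_{2}\le 2^{n(I(X_{2}:Y|X_{1})-O(\delta))}$ and $k_{1}k_{2}\le 2^{n(I((X_{1},X_{2}):Y)-O(\delta))}$, all of which hold for small $\delta$ and large $n$ by the strict inequalities. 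Applying the same construction to a block-structured input law $P^{(1)\otimes\lambda_{1}n}\otimes\cdots\otimes P^{(m)\otimes\lambda_{m}n}$ realizes arbitrary convex combinations of such points, and taking closures gives all of $\mathcal{R}^{\ast}$.

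For (b), fix an asymptotically optimal sequence $(r^{(n)},p^{(n)})$ of feasible points with value $\ge 1-\varepsilon_{n}$, $\varepsilon_{n}\to 0$. Put $Q_{X_{1}^{n}X_{2}^{n}}:=p^{(n)}/(k_{1}k_{2})$, let $(X_{1}^{n},X_{2}^{n})\sim Q$, $Y^{n}\,|\,X_{1}^{n}X_{2}^{n}\sim W^{\otimes n}$, and $g:=r^{(n)}/p^{(n)}\in[0,1]$; feasibility gives $\mathbb{E}_{P_{X_{1}^{n}X_{2}^{n}Y^{n}}}[g]\ge 1-\varepsilon_{n}$, hence $P(\mathcal{G})\ge 1-\sqrt{\varepsilon_{n}}$ for $\mathcal{G}:=\{g\ge 1-\sqrt{\varepsilon_{n}}\}$ by Markov. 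On $\mathcal{G}$ one has $r^{(n)}\ge(1-\sqrt{\varepsilon_{n}})p^{(n)}$, so the three LP constraints become the uniform bounds $Q_{X_{1}^{n}X_{2}^{n}}(\mathcal{G}_{y^{n}})\le\tfrac{1}{(1-\sqrt{\varepsilon_{n}})k_{1}k_{2}}$ for all $y^{n}$, and $Q_{X_{1}^{n}|X_{2}^{n}=x_{2}^{n}}(\mathcal{G}_{x_{2}^{n}y^{n}})\le\tfrac{1}{(1-\sqrt{\varepsilon_{n}})k_{1}}$ for all $x_{2}^{n},y^{n}$ (and symmetrically with $1\leftrightarrow 2$). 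Feeding the binary test $\mathbbm{1}_{\mathcal{G}}$ into the data-processing inequality for relative entropy (see, e.g.,~\cite{CT01}), together with $d(p\|q)\ge p\log\tfrac{1}{q}-O(1)$ and, for the conditional versions, convexity of $d(\cdot\|q)$ averaged over $X_{2}^{n}\sim Q_{X_{2}^{n}}$ (resp.\ over $X_{1}^{n}$), yields $I((X_{1}^{n},X_{2}^{n}):Y^{n})\ge(1-\sqrt{\varepsilon_{n}})\log(k_{1}k_{2})-O(1)$, $I(X_{1}^{n}:Y^{n}|X_{2}^{n})\ge(1-\sqrt{\varepsilon_{n}})\log k_{1}-O(1)$, and $I(X_{2}^{n}:Y^{n}|X_{1}^{n})\ge(1-\sqrt{\varepsilon_{n}})\log k_{2}-O(1)$. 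A chain-rule single-letterization — using that $Y_{i}$ depends only on $(X_{1,i},X_{2,i})$ — bounds each of these $n$-letter quantities by $\sum_{i}$ of the corresponding single-letter quantity for $P_{X_{1,i}X_{2,i}}$, so $(R_{1},R_{2})$ lies (up to the $o(1)$ errors) in the time-shared region of the ensemble $\{P_{X_{1,i}X_{2,i}}\}_{i\le n}$; that region equals the convex hull of the per-letter single-letter regions, hence sits inside $\mathcal{R}^{\ast}$, and letting $n\to\infty$ (with $\mathcal{R}^{\ast}$ closed) places $(R_{1},R_{2})$ in $\mathcal{R}^{\ast}$.

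The step I expect to be the main obstacle is the converse's translation of the linear-program constraints into the conditional mutual-information bounds: the relaxed program carries no genuine message or codebook structure — the decoder it encodes is only a sub-stochastic kernel $y^{n}\mapsto r^{(n)}_{\cdot\cdot y^{n}}$ — so one must condition carefully when turning $\sum_{x_{1}^{n}}p^{(n)}\ge k_{1}\sum_{x_{1}^{n}}r^{(n)}$ into a bound on $Q_{X_{1}^{n}|X_{2}^{n}}(\mathcal{G}_{\cdot})$ that is uniform in $(x_{2}^{n},y^{n})$, and then use convexity of $d(\cdot\|q)$ to pass from $\mathbb{E}_{X_{2}^{n}}[d(P(\mathcal{G}|X_{2}^{n})\|\cdot)]$ to $I(X_{1}^{n}:Y^{n}|X_{2}^{n})$; tracking the $(1-\sqrt{\varepsilon_{n}})$ factors through to a clean limit is the fiddly part. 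By contrast, the achievability side is routine typical-sequence bookkeeping, and the ``time-sharing equals convex hull'' fact used to finish the converse is standard for MAC rate regions.
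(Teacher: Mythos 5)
Your proposal is correct and follows essentially the same route as the paper: achievability via an explicit feasible point $(p,r)$ supported on jointly typical sets, and a converse that turns the LP constraints into hypothesis-testing/data-processing bounds on the three conditional mutual informations, followed by a chain-rule single-letterization (using that $Y_i$ depends only on $(X_{1,i},X_{2,i})$) and a time-sharing variable to land in the convex hull. The only differences are cosmetic: the paper takes $p$ proportional to $P_{X_1^nX_2^n}$ restricted to the typical set rather than uniform (and handles the nested-typicality parameter $\varepsilon'<\varepsilon$ needed for the uniform lower bound on conditional typical sets), and in the converse it uses $r/p$ directly as a randomized test, avoiding your Markov-inequality rounding and the attendant $\sqrt{\varepsilon_n}$ losses.
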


  \begin{rk}
    Note that the only difference with the classical capacity region of MACs in Theorem~\ref{theo:capacity} is that the joint distribution of $X_1$ and $X_2$ does not have any product form constraints here.
  \end{rk}

  The proof of Theorem~\ref{theo:CharaNSrelaxed} will be divided in Proposition~\ref{prop:OBNSrelaxed} (outer bound part) and Proposition~\ref{prop:AchievabilityNSrelaxed} (achievability part). But first, let us apply these results to the binary adder channel.
  
  \paragraph{Application to the Binary Adder Channel} 
  Let us determine the relaxed non-signaling assisted capacity of the binary adder channel which will be plotted in Figure~\ref{fig:BACNSrelaxed}.
  
  \begin{prop}
    \label{prop:BACcapacityNSrelaxed}
     $\mathcal{C}^{\overline{\mathrm{NS}}}(W_{\text{BAC}})$ has the following description:
   \[ \mathcal{C}^{\overline{\mathrm{NS}}}(W_{\text{BAC}}) =  \bigcup_{q \in \left[\frac{1}{2},\frac{2}{3}\right]} \{ (R_1,R_2) : R_1 \leq h\left(q\right), R_2 \leq h\left(q\right), R_1+R_2 \leq q+h\left(q\right)\} \ . \]
  \end{prop}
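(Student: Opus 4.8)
The plan is to combine Theorem~\ref{theo:CharaNSrelaxed} with the special structure of $W_{\text{BAC}}$. Since $Y=X_1+X_2$ is a deterministic function of the inputs, and a bijective function of $X_1$ once $X_2$ is fixed (and symmetrically), one has $I((X_1,X_2):Y)=H(Y)$, $I(X_1:Y|X_2)=H(X_1|X_2)$, $I(X_2:Y|X_1)=H(X_2|X_1)$, and moreover $H(X_1|X_2)=H(Y|X_2)\leq H(Y)$ and $H(X_2|X_1)\leq H(Y)$. Writing $a:=H(X_1|X_2)$, $b:=H(X_2|X_1)$, $c:=H(Y)$, $T(a,b,c):=\{(R_1,R_2)\geq 0:R_1\leq a,\,R_2\leq b,\,R_1+R_2\leq c\}$, $R(q):=\{(R_1,R_2)\geq 0:R_1\leq h(q),\,R_2\leq h(q),\,R_1+R_2\leq h(q)+q\}$ and $\mathcal{R}:=\bigcup_{P_{X_1X_2}}\{(R_1,R_2)\geq 0:R_1<a,\,R_2<b,\,R_1+R_2<c\}$, Theorem~\ref{theo:CharaNSrelaxed} says $\mathcal{C}^{\overline{\mathrm{NS}}}(W_{\text{BAC}})=\overline{\mathrm{conv}}(\mathcal{R})=\overline{\mathrm{conv}}\bigl(\bigcup_{P_{X_1X_2}}T(a,b,c)\bigr)$, and the goal is to show this equals $K:=\bigcup_{q\in[1/2,2/3]}R(q)$.

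For $K\subseteq\mathcal{C}^{\overline{\mathrm{NS}}}(W_{\text{BAC}})$ I would plug in, for each $q\in[0,1]$, the symmetric law $P_{X_1X_2}(0,0)=P_{X_1X_2}(1,1)=q/2$, $P_{X_1X_2}(0,1)=P_{X_1X_2}(1,0)=(1-q)/2$; a one-line computation gives $H(X_1|X_2)=H(X_2|X_1)=h(q)$ and $H(Y)=h(q)+q$, so the open region defining $R(q)$ lies in $\mathcal{R}$, hence $R(q)\subseteq\mathcal{C}^{\overline{\mathrm{NS}}}(W_{\text{BAC}})$. Restricting $q$ to $[1/2,2/3]$ loses nothing: if $q<1/2$ then $h(q)\leq 1$ and $h(q)+q\leq 3/2$, so $R(q)\subseteq R(1/2)$, and if $q>2/3$ then $h(q)\leq h(2/3)=\log_2 3-\tfrac23$ and $h(q)+q\leq\log_2 3$, so $R(q)\subseteq R(2/3)$ — using that $h$ increases on $[0,1/2]$ and decreases on $[1/2,1]$, and that $q\mapsto h(q)+q$ increases on $[0,2/3]$ and decreases on $[2/3,1]$.

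For the reverse inclusion, first note that $K$ is closed (continuity of $h$, compactness of $[1/2,2/3]$) and convex: given $(R_1,R_2)\in R(q)$ and $(R_1',R_2')\in R(q')$, concavity of $h$ places the midpoint in $R\bigl(\tfrac{q+q'}{2}\bigr)$. Since $K$ and $\mathcal{C}^{\overline{\mathrm{NS}}}(W_{\text{BAC}})$ are downward-closed, bounded, closed convex subsets of $\mathbb{R}^2_{\geq 0}$, it suffices to match their support functions on every direction $(\mu_1,\mu_2)\geq 0$, and by the $X_1\leftrightarrow X_2$ symmetry I may take $\mu_1\geq\mu_2$. One computes $\sup_K(\mu_1R_1+\mu_2R_2)=\sup_{q\in[1/2,2/3]}(\mu_1h(q)+\mu_2q)$, whose maximizer is the interior point $q^\ast=(1+2^{-\mu_2/\mu_1})^{-1}$, lying in $[1/2,2/3]$ precisely because $0\leq\mu_2\leq\mu_1$. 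On the other side, $\sup_{\mathcal{R}}(\mu_1R_1+\mu_2R_2)=\sup_{P_{X_1X_2}}\bigl(\mu_1 a+\mu_2\min(b,\,c-a)\bigr)\leq\sup_{P_{X_1X_2}}\bigl((\mu_1-\mu_2)H(X_1|X_2)+\mu_2H(Y)\bigr)$, using $\min(b,c-a)\leq c-a$. The functional $F(P_{X_1X_2}):=(\mu_1-\mu_2)H(X_1|X_2)+\mu_2H(Y)$ is concave in the joint law — conditional entropy is concave in the joint distribution, and $H(Y)$ is concave in $P_{X_1X_2}$ since $P_Y$ depends on it linearly — and a short stationarity computation ($\partial_{p_{x_1x_2}}H(X_1|X_2)=-\log_2 P(X_1{=}x_1|X_2{=}x_2)$, $\partial_{p_{x_1x_2}}H(Y)=-\log_2 P(Y{=}x_1{+}x_2)-\log_2 e$) shows its unique interior critical point on the probability simplex is exactly the symmetric law with parameter $q^\ast$. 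Hence $\sup_{P_{X_1X_2}}F=(\mu_1-\mu_2)h(q^\ast)+\mu_2(h(q^\ast)+q^\ast)=\mu_1h(q^\ast)+\mu_2q^\ast=\sup_K(\mu_1R_1+\mu_2R_2)$; the two support functions coincide, giving $\mathcal{C}^{\overline{\mathrm{NS}}}(W_{\text{BAC}})=K$, i.e.\ the claimed formula.

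The crux — and the step I expect to be the main obstacle to make fully rigorous — is the optimality of the symmetric laws for every linear functional of $(H(X_1|X_2),H(Y))$: this rests on the concavity of $P_{X_1X_2}\mapsto H(X_1|X_2)$ and $P_{X_1X_2}\mapsto H(Y)$ together with the verification that the symmetric law at the computed $q^\ast$ is stationary, so that the concave $F$ attains its maximum there. The remaining ingredients (reducing the support function of $\bigcup_P T(a,b,c)$ to the two-term expression, the bound $\min(b,c-a)\leq c-a$, the handling of directions with a negative coordinate, and checking $q^\ast\in[1/2,2/3]$) are then routine.
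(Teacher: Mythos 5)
Your proposal is correct, and the forward inclusion (plugging in the symmetric laws, computing $H(X_1|X_2)=H(X_2|X_1)=h(q)$, $H(Y)=q+h(q)$, and restricting $q$ to $[\tfrac12,\tfrac23]$ by monotonicity) matches the paper exactly. Where you genuinely diverge is the converse. The paper's proof never touches support functions: it observes that all three bounds $H(X_1|X_2)$, $H(X_2|X_1)$, $H(Y)$ are invariant under the global bit flip $P_{X_1X_2}\mapsto P_{\overline{X}_1\overline{X}_2}=(p_{11},p_{10},p_{01},p_{00})$ and concave in $P_{X_1X_2}$, so averaging $P$ with its flip (which yields exactly the symmetric law with $q=p_{00}+p_{11}$) weakly increases all three bounds \emph{simultaneously}; hence the region of any $P$ is contained in the region of a symmetric law, and one reduces to the one-parameter family directly at the level of sets. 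You instead dualize: you match support functions in each direction $(\mu_1,\mu_2)$ with $\mu_1\geq\mu_2\geq 0$, bound the channel-side value by $\sup_P\left[(\mu_1-\mu_2)H(X_1|X_2)+\mu_2 H(Y)\right]$ via $\min(b,c-a)\leq c-a$, and identify the maximizer of this concave functional as the symmetric law at $q^\ast=(1+2^{-\mu_2/\mu_1})^{-1}$ by a stationarity computation (which I checked: the four partial derivatives do coincide precisely when $\log_2\frac{1-q}{q}=-\mu_2/\mu_1$). This works, but it costs you extra bookkeeping: the fact that downward-closed compact convex subsets of $\mathbb{R}^2_{\geq 0}$ are determined by their support functions on nonnegative directions, the convexity and closedness of $K$, and the Lagrangian computation — all of which the paper's symmetrization argument bypasses. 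Note also that your own step would simplify if you borrowed the paper's trick: your functional $F$ is itself flip-invariant and concave, so its maximum over the simplex is attained at a symmetric law without any stationarity calculation, after which only the one-variable optimization over $q$ remains.
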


  \begin{rk}
    Note that for $q=\frac{1}{2}$, the bound becomes $R_1 \leq 1, R_2 \leq 1, R_1+R_2 \leq \frac{3}{2}$ and when $q=\frac{2}{3}$ the bound becomes $R_1 \leq \log_2(3)-\frac{2}{3}, R_2 \leq \log_2(3)-\frac{2}{3}, R_1+R_2 \leq \log_2(3)$.
  \end{rk}
  
  \begin{proof}
    We use the characterization of $\mathcal{C}^{\overline{\mathrm{NS}}}$ provided by Theorem~\ref{theo:CharaNSrelaxed}.

    Let us consider an arbitrary $P_{X_1X_2}=(p_{00},p_{01},p_{10},p_{11})$. First, we have that $I((X_1,X_2):Y) = H(Y) - H(Y|X_1X_2) = H(Y)$ since $Y$ is a deterministic function of $(X_1,X_2)$. Then, we have that $I(X_1:Y|X_2) = H(Y|X_2) - H(Y|X_1X_2) = H(Y|X_2)$ for the same  reason. Furthermore, given $X_2$, $Y$  is a deterministic function of $X_1$, so we have $I(X_1:Y|X_2) = H(Y|X_2) - H(Y|X_1X_2) = H(X_1|X_2)$. Symmetrically we have as well $I(X_2:Y|X_1) = H(X_2|X_1)$. In all:
    \[ \mathcal{C}^{\overline{\mathrm{NS}}}(W_{\text{BAC}}) =  \bigcup_{P_{X_1X_2}}\left\{ (R_1,R_2) : R_1 \leq H(X_1|X_2), R_2 \leq H(X_2|X_1), R_1+R_2 \leq H(X_1+X_2)\right\} \]
    
    Let us call $B_1(P_{X_1X_2}):=H(X_1|X_2),B_2(P_{X_1X_2}):=H(X_2|X_1),B_{12}(P_{X_1X_2}):=H(X_1+X_2)$ the three bounds. Let us call $P_{\overline{X}_1\overline{X}_2}=(p_{11},p_{10},p_{01},p_{00})$. One can notice that:
    \begin{equation}
      \begin{aligned}
        B_1(P_{\overline{X}_1\overline{X}_2}) &= H(\overline{X}_1|\overline{X}_2) = H(1-X_1|1-X_2) = H(X_1|X_2) = B_1(P_{X_1X_2}) \ ,\\
        B_2(P_{\overline{X}_1\overline{X}_2}) &= H(1-X_2|1-X_1) = H(X_2|X_1) = B_2(P_{X_1X_2}) \ ,\\
        B_{12}(P_{\overline{X}_1\overline{X}_2}) &= H(\overline{X}_1+\overline{X}_2) = H(1-X_1+1-X_2)\\
        &= H(2-(X_1+X_2))=H(X_1+X_2) = B_{12}(P_{X_1X_2}) \ .
      \end{aligned}
    \end{equation}   
    
    Since $B_{12}(P_{X_1X_2}) = H(X_1+X_2) = H(p_{00},p_{11},p_{01}+p_{10})$, it is concave in $P_{X_1X_2}$ as $H$ is concave and $(p_{00},p_{11},p_{01}+p_{10})$ is linear in $P_{X_1X_2}$. Also, $B_1(P_{X_1X_2}) = H(X_1|X_2) = -D(P_{X_1X_2} ||I \otimes P_{X_2})$ is concave in $P_{X_1X_2}$  as the divergence $D$ is jointly convex and $I \otimes P_{X_2}$ is linear in $P_{X_1X_2}$. By symmetry, $B_2(P_{X_1X_2})$ is as well concave in $P_{X_1X_2}$. Let us consider any of those three bounds, which we call $B$. We have by concavity of $B$ and the fact that $B(P_{X_1X_2})=B(P_{\overline{X}_1\overline{X}_2})$:
    \[B(P_{X_1X_2}) = \frac{B(P_{X_1X_2})+B(P_{\overline{X}_1\overline{X}_2})}{2} \leq B\left(\frac{P_{X_1X_2}+P_{\hat{X}_1\hat{X}_2}}{2}\right) = B\left(\frac{q}{2},\frac{1-q}{2},\frac{1-q}{2},\frac{q}{2}\right) \ , \]
    with $q=p_{00}+p_{11}$. This holds for the three bounds at the same time, so we can restrict ourselves to the distributions of the form $\left(\frac{q}{2},\frac{1-q}{2},\frac{1-q}{2},\frac{q}{2}\right)$ for some $q \in [0,1]$, i.e., $P_{X_1X_2}(0,0) = P_{X_1X_2}(1,1) = \frac{q}{2}$ and $P_{X_1X_2}(0,1) = P_{X_1X_2}(1,0) = \frac{1-q}{2}$.

     We have $P_{Y}(0) = P_{Y}(2) = \frac{q}{2}$ and $P_{Y}(1) = 1-q$, so:

    \begin{equation}
      \begin{aligned}
        B_{12}(P_{X_1X_2}) &= H(Y) = -q\log(\frac{q}{2})-\left(1-q\right)\log(1-q)\\
        &= -q\left(\log(q)-1\right)-\left(1-q\right)\log(1-q)\\
        &= q+h\left(q\right) \ .
      \end{aligned}
    \end{equation}
    
    We have $P_{X_2}(0) = P_{X_1X_2}(0,0)+P_{X_1X_2}(1,0) = \frac{q}{2}+\frac{1-q}{2} = \frac{1}{2}$ so $P_{X_2}(1) = \frac{1}{2}$. Thus:
    \[ B_1(P_{X_1X_2}) = H(X_1|X_2) = \frac{1}{2}H(X_1|X_2=0)+\frac{1}{2}H(X_1|X_2=1) \ .\]
    
    We have $P_{X_1|X_2=0}(0)=\frac{P_{X_1X_2}(0,0)}{P_{X_2}(0)} = q$ so $H(X_1|X_2=0)=h\left(q\right)$. On the other hand, we have $P_{X_1|X_2=1}(1)=\frac{P_{X_1X_2}(1,1)}{P_{X_2}(1)} = q$ so we get as well $H(Y|X_2=1)=h\left(q\right)$, and $B_1(P_{X_1X_2}) = H(X_1|X_2) = h\left(q\right)$. Symmetrically, we also get $B_2(P_{X_1X_2}) = h\left(q\right)$. Therefore, we get that $\mathcal{C}^{\overline{\mathrm{NS}}}(W_{\text{BAC}})$ is the closure of the convex hull of:
    \[\bigcup_{q \in [0,1]}\{ (R_1,R_2) : R_1 < h\left(q\right), R_2 < h\left(q\right), R_1+R_2 < q+h\left(q\right)\} \ .\]

    However this set is already convex, so we have:

    \[\mathcal{C}^{\overline{\mathrm{NS}}}(W_{\text{BAC}}) = \bigcup_{q \in [0,1]}\{ (R_1,R_2) : R_1 \leq h\left(q\right), R_2 \leq h\left(q\right), R_1+R_2 \leq q+h\left(q\right)\} \ .\]
    
    Finally, we can restrict ourselves to $q \in \left[\frac{1}{2},\frac{2}{3}\right]$, since $h$ is increasing from $0$ to $\frac{1}{2}$ (thus $q \mapsto q+h\left(q\right)$ as well), and the fact that $q \mapsto q+h\left(q\right)$ achieves its maximum for $q=\frac{2}{3}$ with $\frac{2}{3}+h\left(\frac{2}{3}\right)=\log_2(3)$ and then decreases (whereas $h$ is decreasing from $\frac{1}{2}$ to $1$), which completes the proof.
  \end{proof}

  As before, one can also define a symmetrized version of the relaxed linear program computing the value $\mathrm{S}^{\overline{\mathrm{NS}}}(W^{\otimes n},k_1,k_2)$ in polynomial time in $n$ and compute the zero-error $n$-shots capacity region by looking at the rates where $\mathrm{S}^{\overline{\mathrm{NS}}}(W^{\otimes n},k_1,k_2)=1$. We have computed this up to $7$ copies of the binary adder channel, which led to Figure~\ref{fig:BACNSrelaxed}:
  
  \begin{figure}[!h]
    \begin{center}
      \begin{tikzpicture}
        [spy using outlines={circle, magnification=4, size=2cm, connect spies}]
        \begin{axis}[
          xmin = 0, xmax = 1.05,
          ymin = 0, ymax = 1.05,
          xtick distance = 0.25,
          ytick distance = 0.25,
          grid = both,
          width = 0.55\textwidth,
          height = 0.55\textwidth,
          legend cell align = {left},
          legend pos = south west,
          ylabel=$R_2$,
          xlabel=$R_1$,
        ]
        \addplot[
          thick,
          dashed,
          black,
        ] coordinates {(0,1) (0.5,1) (1,0.5) (1,0)};
        \addplot[
          thick,
          dotted,
          darkgray,
        ] coordinates {(0,1) (0.5,1)};{(0.6666667,0.9182958) (0.9182958,0.6666667)} {(1,0.5) (1,0)};
        \addplot[
          domain = 0:1,
          blue,
        ] table[x=R1,y=R2,col sep=comma] {Data_BAC/ZEBorder7.csv};
        \addplot[
          domain = 0:1,
          red,
        ] table[x=R1,y=R2,col sep=comma] {Data_BAC_relaxed/ZEBorderINEQ7.csv};
        \addplot[
          thick,
          dotted,
          darkgray,
        ] coordinates {(0.6666667,0.9182958) (0.9182958,0.6666667)};
         \addplot[
          thick,
          dotted,
          darkgray,
         ] coordinates {(1,0.5) (1,0)};
         \addplot[
           thick,
           dotted,
           darkgray,
           domain = 0.5:0.6666667,
         ] {-(\x)*ln(\x)/ln(2)-(1-\x)*ln(1-\x)/ln(2)};
         \draw[
           thick,
           dotted,
           darkgray,
           domain = 0.48:0.6666667,
           variable=\y] plot({-(\y)*ln(\y)/ln(2)-(1-\y)*ln(1-\y)/ln(2)},\y);
        \coordinate (spypoint) at (0.77,0.77);
        \coordinate (magnifyglass) at (0.4,0.6);
        
        \legend{$\mathcal{C}(W_{\text{BAC}})$, $\mathcal{C}^{\overline{\mathrm{NS}}}(W_{\text{BAC}})$, $\mathcal{C}_{0,\leq 7}^{\mathrm{NS}}(W_{\text{BAC}})$, $\mathcal{C}_{0,\leq 7}^{\overline{\mathrm{NS}}}(W_{\text{BAC}})$}
        \end{axis}

        \spy [size=3cm] on (spypoint) in node [fill=white] at (magnifyglass);
      \end{tikzpicture}
      \caption{Comparison of relaxed and regular non-signaling assisted capacity regions of the binary adder channel. The black dashed curve depicts the classical capacity region $\mathcal{C}(W_{\text{BAC}})$, whereas the grey dotted curve depicts the relaxed non-signaling assisted capacity region $\mathcal{C}^{\overline{\mathrm{NS}}}(W_{\text{BAC}})$ as described in Proposition~\ref{prop:BACcapacityNSrelaxed}. In particular, the curved corners are obtained by taking $R_1=h(R_2)$ for $R_2 \in \left[\frac{1}{2},\frac{2}{3}\right]$ and symmetrically by switching the roles played by $R_1$ and $R_2$. The continuous blue (respectively red) curve depicts the zero-error (respectively relaxed) non-signaling assisted achievable rate pairs for $7$ copies of the binary adder channel.}
      \label{fig:BACNSrelaxed}
    \end{center}   
\end{figure}
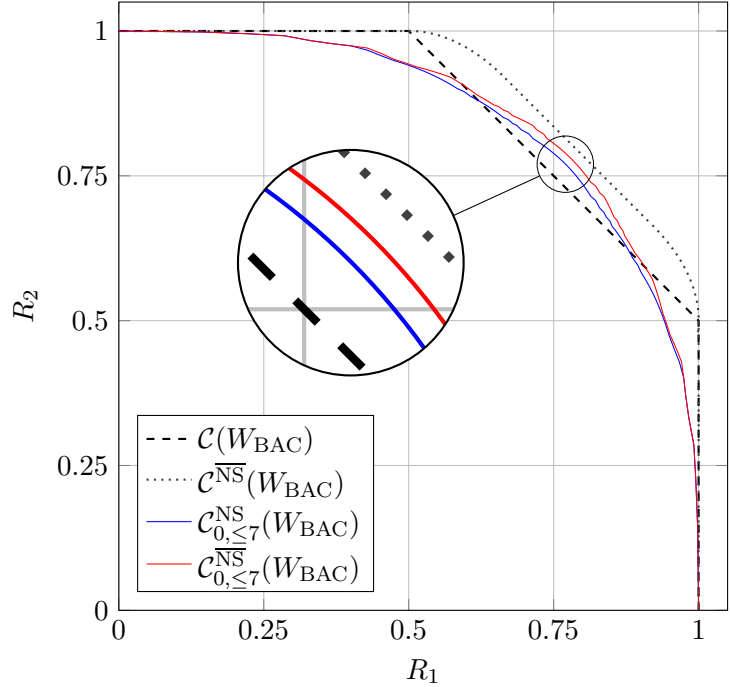

The first noticeable result coming from these curves is that the values $\mathrm{S}^{\overline{\mathrm{NS}}}$ and $\mathrm{S}^{\mathrm{NS}}$ differ. While the highest sum-rate of $\frac{2\log_2(42)}{7} \simeq 1.5406$ is achieved on $7$ copies of the binary adder channel with zero-error and non-signaling assistance, coming from the fact that $\mathrm{S}^{\mathrm{NS}}(W^{\otimes 7}_{\mathrm{BAC}},42,42)=1$, we have that $\mathrm{S}^{\overline{\mathrm{NS}}}(W^{\otimes 7}_{\mathrm{BAC}},44,44)=1>\mathrm{S}^{\mathrm{NS}}(W^{\otimes 7}_{\mathrm{BAC}},44,44) \simeq 0.9581$ which implies that a sum-rate of $\frac{2\log_2(44)}{7} \simeq 1.5598$ is achieved on $7$ copies of the binary adder channel with zero-error and relaxed non-signaling assistance. It also largely beats the best found sum-rate of $\frac{\log_2(72)}{4} \simeq 1.5425$ achieved on $8$ copies with the regular version. However the fact that the non-signaling assisted capacity region is strictly contained in the relaxed one is still open, as the same rates could potentially be achieved by the cost of using more copies of the channel.
  
  \subsection{Outer Bound Part of Theorem~\ref{theo:CharaNSrelaxed}}
In order to prove Proposition~\ref{prop:OBNSrelaxed}, we use a connection between hypothesis testing and relaxed non-signaling assisted codes as established in~\cite{Matthews12} for point-to-point channels.

\begin{defi}[Hypothesis Testing]
  \label{defi:beta}
  Given distributions $P^{(0)}$ and  $P^{(1)}$ on the same space $C$, we define $\beta_{1-\varepsilon}(P^{(0)},P^{(1)})$ to be the minimum type II error $\sum_{r \in C} T_rP^{(1)}(r)$ that can be achieved by statistical tests $T$ which give a type I error no greater than $\varepsilon$, i.e. $\sum_{r \in C} T_rP^{(0)}(r) \geq 1-\varepsilon$.

  In other words, we have that:
  \begin{equation}
    \begin{aligned}
      \beta_{1-\varepsilon}(P^{(0)},P^{(1)}) = &&&\underset{T_r}{\mini} &&\sum_{r \in C} T_rP^{(1)}(r)\\
      &&&\st &&\sum_{r \in C} T_rP^{(0)}(r) \geq 1-\varepsilon\\
      &&&&& 0 \leq T_r \leq 1 \ .
    \end{aligned}
  \end{equation}
\end{defi}

\begin{lem}
  \label{lem:beta}
    For any relaxed non-signaling assisted code $(p_{x_1,x_2}, r_{x_1,x_2,y})_{x_1 \in \mathcal{X}_1,x_2 \in \mathcal{X}_2,y \in \mathcal{Y}}$ with $(k_1,k_2)$ messages and a probability of success $1-\varepsilon$, if $P_{X_1X_2}(x_1,x_2)=\frac{p_{x_1,x_2}}{k_1k_2}$ and $Y \in \mathcal{Y}$ is the outcome of $W$ on inputs $X_1,X_2$, we have:
    \begin{equation}
      \begin{aligned}
        &\beta_{1-\varepsilon}\left(P_{X_1X_2Y},P_{X_1X_2} \times P_{Y|X_2}\right) \leq \frac{1}{k_1}\\
        &\beta_{1-\varepsilon}\left(P_{X_1X_2Y},P_{X_1X_2} \times P_{Y|X_1}\right) \leq \frac{1}{k_2}\\
        &\beta_{1-\varepsilon}\left(P_{X_1X_2Y},P_{X_1X_2} \times P_{Y}\right) \leq \frac{1}{k_1k_2}       \ .
      \end{aligned}
    \end{equation}
  \end{lem}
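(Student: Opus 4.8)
The plan is to exhibit, for each of the three inequalities, an explicit feasible statistical test $T$ for the corresponding hypothesis testing problem whose type II error is at most $\frac{1}{k_1}$, $\frac{1}{k_2}$, or $\frac{1}{k_1k_2}$ respectively, while keeping the type I error at most $\varepsilon$. Since $\beta_{1-\varepsilon}$ is defined as a minimum over feasible tests, any such explicit construction gives the desired upper bound. I would treat the three cases in parallel, as they are structurally identical up to which marginal is ``forgotten'' in the alternative hypothesis.

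First I would set up notation: the underlying space is $C = \mathcal{X}_1 \times \mathcal{X}_2 \times \mathcal{Y}$, the null hypothesis is $P^{(0)} = P_{X_1X_2Y}$ where $P_{X_1X_2}(x_1,x_2) = \frac{p_{x_1,x_2}}{k_1k_2}$ and $Y$ is the output of $W$ on $(X_1,X_2)$, so $P_{X_1X_2Y}(x_1,x_2,y) = \frac{p_{x_1,x_2}}{k_1k_2} W(y|x_1x_2)$. The natural candidate test comes directly from the relaxed non-signaling code itself: take $T_{x_1,x_2,y} := \frac{r_{x_1,x_2,y}}{p_{x_1,x_2}}$ (interpreting the ratio as $0$ when $p_{x_1,x_2}=0$; one should check this edge case causes no trouble since then $r_{x_1,x_2,y}=0$ too by the constraint $0 \le r \le p$). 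The constraint $0 \le r_{x_1,x_2,y} \le p_{x_1,x_2}$ immediately gives $0 \le T_{x_1,x_2,y} \le 1$, so $T$ is a valid test. For the type I error: $\sum_{x_1,x_2,y} T_{x_1,x_2,y} P^{(0)}(x_1,x_2,y) = \frac{1}{k_1k_2}\sum_{x_1,x_2,y} W(y|x_1x_2) r_{x_1,x_2,y}$, which is exactly the success probability $1-\varepsilon$ of the code, so the type I error constraint $\sum T_r P^{(0)}(r) \ge 1-\varepsilon$ holds with equality.

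The heart of the argument — and the step I expect to be the main obstacle — is bounding the type II error for each of the three alternative distributions. For the first inequality, $P^{(1)} = P_{X_1X_2}\times P_{Y|X_2}$ has $P^{(1)}(x_1,x_2,y) = \frac{p_{x_1,x_2}}{k_1k_2}\cdot P_{Y|X_2}(y|x_2)$, where $P_{Y|X_2}(y|x_2) = \frac{\sum_{x_1'} p_{x_1',x_2} W(y|x_1'x_2)}{\sum_{x_1'} p_{x_1',x_2}}$. Then the type II error is $\sum_{x_1,x_2,y} \frac{r_{x_1,x_2,y}}{p_{x_1,x_2}}\cdot\frac{p_{x_1,x_2}}{k_1k_2} P_{Y|X_2}(y|x_2) = \frac{1}{k_1k_2}\sum_{x_2,y} P_{Y|X_2}(y|x_2)\sum_{x_1} r_{x_1,x_2,y}$. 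Now I invoke the relaxed non-signaling constraint $\sum_{x_1} p_{x_1,x_2} \ge k_1 \sum_{x_1} r_{x_1,x_2,y}$, i.e. $\sum_{x_1} r_{x_1,x_2,y} \le \frac{1}{k_1}\sum_{x_1} p_{x_1,x_2}$. Substituting, and noting $P_{Y|X_2}(y|x_2)\cdot\sum_{x_1'} p_{x_1',x_2}$ telescopes against the definition of $P_{Y|X_2}$, the sum collapses to $\frac{1}{k_1^2 k_2}\sum_{x_2,y}\sum_{x_1'} p_{x_1',x_2} W(y|x_1'x_2) = \frac{1}{k_1^2 k_2}\sum_{x_1',x_2} p_{x_1',x_2} = \frac{k_1k_2}{k_1^2k_2} = \frac{1}{k_1}$, using $\sum_y W(y|x_1'x_2)=1$ and $\sum_{x_1,x_2} p_{x_1,x_2} = k_1k_2$. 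The second inequality is completely symmetric, using the constraint $\sum_{x_2} r_{x_1,x_2,y} \le \frac{1}{k_2}\sum_{x_2} p_{x_1,x_2}$ and yielding $\frac{1}{k_2}$. For the third, $P^{(1)} = P_{X_1X_2}\times P_Y$; here I would apply the marginalization bound twice (or combine both non-signaling constraints) to get $\sum_{x_1,x_2} r_{x_1,x_2,y} \le \frac{1}{k_1k_2}\sum_{x_1,x_2} p_{x_1,x_2}$ — this follows since $\sum_{x_1,x_2} r_{x_1,x_2,y} \le \frac{1}{k_1}\sum_{x_2}\left(\sum_{x_1} p_{x_1,x_2}\right)$ is not quite immediate, so I would instead derive it by summing the first constraint over $x_2$ and comparing — and then the same telescoping computation gives type II error $\le \frac{1}{k_1k_2}$. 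The delicate point throughout is handling the $p_{x_1,x_2}=0$ cases cleanly and making sure the chained marginalization inequality for the third case is valid; I would verify it by noting $\sum_{x_1,x_2} r_{x_1,x_2,y} = \sum_{x_2}\left(\sum_{x_1} r_{x_1,x_2,y}\right) \le \sum_{x_2}\frac{1}{k_1}\sum_{x_1} p_{x_1,x_2} = \frac{1}{k_1}\sum_{x_1,x_2}p_{x_1,x_2} = \frac{k_1k_2}{k_1} = k_2$, so that $\frac{1}{k_1k_2}\sum_{x_1,x_2} r_{x_1,x_2,y}\cdot$(something summing to appropriate total) lands on $\frac{1}{k_1k_2}$ — this needs a careful bookkeeping pass but no new idea.
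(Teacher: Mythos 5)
Your construction of the test $T_{x_1,x_2,y} := r_{x_1,x_2,y}/p_{x_1,x_2}$ (with the $p_{x_1,x_2}=0$ convention), the type I error computation, and the treatment of the first two inequalities via the constraints $\sum_{x_1} r_{x_1,x_2,y} \leq \frac{1}{k_1}\sum_{x_1} p_{x_1,x_2}$ and its symmetric counterpart are all correct and essentially identical to the paper's proof (the paper phrases the type II error as the success probability of the same code over a modified channel $W_1$, and proves the bound for an arbitrary $Q^{(1)}(y|x_2)$ before specializing to $P_{Y|X_2}$, but the computation is the same).

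There is, however, a genuine gap in your third case. You propose to derive $\sum_{x_1,x_2} r_{x_1,x_2,y} \leq \frac{1}{k_1k_2}\sum_{x_1,x_2} p_{x_1,x_2} = 1$ by combining the two marginalization constraints, but the verification you actually carry out — summing $\sum_{x_1} r_{x_1,x_2,y} \leq \frac{1}{k_1}\sum_{x_1} p_{x_1,x_2}$ over $x_2$ — only yields $\sum_{x_1,x_2} r_{x_1,x_2,y} \leq k_2$ (and the symmetric route yields $\leq k_1$). Plugging that into the type II error gives only $\frac{1}{k_1}$ (resp.\ $\frac{1}{k_2}$), not $\frac{1}{k_1k_2}$; no combination of the two marginal constraints alone recovers the factor you need, so the ``careful bookkeeping pass'' you defer to would not close. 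The missing ingredient is simply the first constraint of the relaxed program, $\sum_{x_1,x_2} r_{x_1,x_2,y} \leq 1$, which is part of the definition of $\mathrm{S}^{\overline{\mathrm{NS}}}$ and is exactly what the paper uses: with it, the type II error is $\frac{1}{k_1k_2}\sum_y P_Y(y)\sum_{x_1,x_2} r_{x_1,x_2,y} \leq \frac{1}{k_1k_2}\sum_y P_Y(y) = \frac{1}{k_1k_2}$. With that one-line substitution your argument is complete.
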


\begin{rk}
  These three bounds are actually achieved with the same statistical test.
\end{rk}

  \begin{proof}
    This result is a direct generalization of Theorem 9 in~\cite{Matthews12} for point-to-point channels, itself a generalization of Theorem 27 in~\cite{PPV10} without non-signaling assistance.

    Let us name $W_0:=W$ and $W_1$ a MAC yet to be defined. The coding strategy described by $r_{x_1,x_2,y}$ and $p_{x_1, x_2}$ leads to a probability of success on channel $i \in \{0,1\}$ is given by:
  \begin{equation}
    \begin{aligned}
      1 - \varepsilon_i &= \frac{1}{k_1k_2}\sum_{x_1,x_2,y} r_{x_1,x_2,y}W_i(y|x_1x_2)\\
      &= \sum_{x_1,x_2,y:p_{x_1,x_2} > 0} \frac{r_{x_1,x_2,y}}{p_{x_1,x_2}}W_i(y|x_1x_2)\frac{p_{x_1,x_2}}{k_1k_2} \quad \text{ since } 0 \leq r_{x_1,x_2,y} \leq p_{x_1,x_2}\\
      &= \sum_{x_1,x_2,y} T_{x_1,x_2,y}W_i(y|x_1x_2)\frac{p_{x_1,x_2}}{k_1k_2} \ ,
    \end{aligned}
  \end{equation}
  with $T_{x_1,x_2,y} := \frac{r_{x_1,x_2,y}}{p_{x_1,x_2}}$ if $p_{x_1,x_2} > 0$, and $T_{x_1,x_2,y} :=0$ otherwise.

  If now $Y$ is the output of the channel $W_i$, the joint distribution of $X_1,X_2,Y$ is given by $P^{(i)}_{X_1X_2Y}(x_1,x_2,y) = W_i(y|x_1x_2)P_{X_1X_2}(x_1,x_2)= W_i(y|x_1x_2)\frac{p_{x_1,x_2}}{k_1k_2}$.

  On the other hand, we have that for all $x_1,x_2,y$, $0 \leq T_{x_1,x_2,y} \leq 1$ since $0 \leq r_{x_1,x_2,y} \leq p_{x_1,x_2}$. So we get that:

  \[ 1 - \varepsilon_i = \sum_{x_1,x_2,y} T_{x_1,x_2,y}P^{(i)}_{X_1X_2Y}(x_1,x_2,y) \ . \]

  Since $\sum_{x_1,x_2,y} T_{x_1,x_2,y}P^{(0)}_{X_1X_2Y}(x_1,x_2,y) \geq 1 - \varepsilon_0$ and $0 \leq T_{x_1,x_2,y} \leq 1$, we have:
  \[ \beta_{1-\varepsilon_0}(P^{(0)},P^{(1)}) \leq \sum_{x_1,x_2,y} T_{x_1,x_2,y}P^{(1)}_{X_1X_2Y}(x_1,x_2,y) = 1 - \varepsilon_1 \ . \]

  Let us now consider three general cases, depending on the fact that $W_1$ does not depend on $x_1$, $x_2$ or both: $W_1(y|x_1x_2) := Q^{(1)}(y|x_2)$; $W_1(y|x_1x_2) := Q^{(2)}(y|x_1)$;  $W_1(y|x_1x_2) := Q^{(0)}(y)$. These will give respectively the three bounds we want.
 
  First, let us consider the case where $W_1(y|x_1x_2) := Q^{(1)}(y|x_2)$ (the second case where $W_1(y|x_1x_2) := Q^{(2)}(y|x_1)$ being symmetric), we have that:

  \begin{equation}
    \begin{aligned}
      1 - \varepsilon_1 &= \sum_{x_1,x_2,y} T_{x_1,x_2,y}Q^{(1)}(y|x_2)\frac{p_{x_1,x_2}}{k_1k_2} = \frac{1}{k_1k_2} \sum_{x_2,y} Q^{(1)}(y|x_2) \sum_{x_1}T_{x_1,x_2,y}p_{x_1,x_2}\\
      &= \frac{1}{k_1k_2}\sum_{x_2,y} Q^{(1)}(y|x_2) \sum_{x_1}r_{x_1,x_2,y} \leq \frac{1}{k_1k_2} \sum_{x_2,y} Q^{(1)}(y|x_2)\frac{1}{k_1}\sum_{x_1}p_{x_1,x_2}\\
      &= \frac{1}{k_1} \sum_{x_1,x_2}\frac{p_{x_1,x_2}}{k_1k_2}\sum_yQ^{(1)}(y|x_2) = \frac{1}{k_1}\sum_{x_1,x_2}\frac{p_{x_1,x_2}}{k_1k_2} = \frac{1}{k_1} \ .
    \end{aligned}
  \end{equation}
  
  For the third case, when $W_1(y|x_1x_2) := Q^{(0)}(y)$, we have:
  \begin{equation}
    \begin{aligned}
      1 - \varepsilon_1 &= \sum_{x_1,x_2,y} T_{x_1,x_2,y}Q^{(0)}(y)\frac{p_{x_1,x_2}}{k_1k_2} = \frac{1}{k_1k_2} \sum_{y} Q^{(0)}(y) \sum_{x_1,x_2}T_{x_1,x_2,y}p_{x_1,x_2}\\
      &= \frac{1}{k_1k_2}\sum_{y} Q^{(0)}(y) \sum_{x_1,x_2}r_{x_1,x_2,y} \leq \frac{1}{k_1k_2}\sum_{y} Q^{(0)}(y) = \frac{1}{k_1k_2} \ .
    \end{aligned}
  \end{equation}
  
  In those three cases, we have respectively $P^{(1)}_{X_1X_2Y} = P_{X_1X_2} \times Q^{(1)}_{Y|X_2}; P_{X_1X_2} \times Q^{(2)}_{Y|X_1}; P_{X_1X_2} \times Q^{(0)}_{Y}$. Specializing those cases with $Q^{(1)}_{Y|X_2}:=P_{Y|X_2}; Q^{(2)}_{Y|X_1}:=P_{Y|X_1}; Q^{(0)}_{Y}:=P_Y$ and using the fact that ${\beta_{1-\varepsilon_0}\left(P^{(0)},P^{(1)}\right) \leq 1 - \varepsilon_1}$ concludes the proof.
  \end{proof}

  \begin{lem}
    \label{lem:oneshotOB}
    For any relaxed non-signaling assisted code $(p_{x_1,x_2}, r_{x_1,x_2,y})_{x_1 \in \mathcal{X}_1,x_2 \in \mathcal{X}_2,y \in \mathcal{Y}}$ with $(k_1,k_2)$ messages and a probability of success $1-\varepsilon$, if $P_{X_1X_2}(x_1,x_2)=\frac{p_{x_1,x_2}}{k_1k_2}$ and $Y \in \mathcal{Y}$ is the outcome of $W$ on inputs $X_1,X_2$, we have:
    \begin{equation}
      \begin{aligned}
        \log(k_1) &\leq \frac{I(X_1:Y|X_2)+h(\varepsilon)}{1-\varepsilon} \ ,\\
        \log(k_2) &\leq \frac{I(X_2:Y|X_1)+h(\varepsilon)}{1-\varepsilon} \ ,\\
        \log(k_1)+\log(k_2) &\leq \frac{I((X_1,X_2):Y)+h(\varepsilon)}{1-\varepsilon} \ .
      \end{aligned}
    \end{equation}
  \end{lem}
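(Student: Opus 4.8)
The plan is to turn the one-shot hypothesis-testing bounds of Lemma~\ref{lem:beta} into the announced entropic bounds via the data-processing inequality for relative entropy, exactly as in the meta-converse for point-to-point channels. Fix a relaxed non-signaling code as in the statement, and let $T_{x_1,x_2,y} := r_{x_1,x_2,y}/p_{x_1,x_2}$ (with $T_{x_1,x_2,y}:=0$ when $p_{x_1,x_2}=0$) be the statistical test extracted in the proof of Lemma~\ref{lem:beta}; recall that under $P^{(0)}:=P_{X_1X_2Y}$ this test has type-I error exactly $\varepsilon$, i.e.\ $\sum_{x_1,x_2,y} T_{x_1,x_2,y}P^{(0)}(x_1,x_2,y)=1-\varepsilon$. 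For any distribution $Q$ on $\mathcal{X}_1\times\mathcal{X}_2\times\mathcal{Y}$, set $\beta:=\sum_{x_1,x_2,y} T_{x_1,x_2,y}Q(x_1,x_2,y)$. Applying the data-processing inequality to the binary channel induced by $T$ yields $D(P^{(0)}\|Q)\geq d(1-\varepsilon\,\|\,\beta)$, where $d(a\|b):=a\log\tfrac ab+(1-a)\log\tfrac{1-a}{1-b}$; and since $\log(1-\beta)\leq 0$ one has the elementary identity/bound $d(1-\varepsilon\|\beta)=-(1-\varepsilon)\log\beta-\varepsilon\log(1-\beta)-h(\varepsilon)\geq-(1-\varepsilon)\log\beta-h(\varepsilon)$. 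Together these give $-\log\beta\leq\frac{D(P^{(0)}\|Q)+h(\varepsilon)}{1-\varepsilon}$, the degenerate cases $\beta=0$ or $\varepsilon=1$ being trivial or vacuous.

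The second ingredient is to identify the relevant divergences. Since $Y$ is the output of $W$ on $(X_1,X_2)$ we have $P^{(0)}(x_1,x_2,y)=P_{X_1X_2}(x_1,x_2)W(y|x_1x_2)$, so a direct expansion of the logarithm — using $P_{X_1Y|X_2}(x_1,y|x_2)=P_{X_1|X_2}(x_1|x_2)W(y|x_1x_2)$ and its analogues — shows that $D\!\left(P_{X_1X_2Y}\,\|\,P_{X_1X_2}\times P_{Y|X_2}\right)=I(X_1:Y|X_2)$, $D\!\left(P_{X_1X_2Y}\,\|\,P_{X_1X_2}\times P_{Y|X_1}\right)=I(X_2:Y|X_1)$, and $D\!\left(P_{X_1X_2Y}\,\|\,P_{X_1X_2}\times P_Y\right)=I((X_1,X_2):Y)$.

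To conclude, Lemma~\ref{lem:beta} asserts precisely that for the three choices $Q=P_{X_1X_2}\times P_{Y|X_2}$, $Q=P_{X_1X_2}\times P_{Y|X_1}$ and $Q=P_{X_1X_2}\times P_Y$ the corresponding $\beta$ is at most $1/k_1$, $1/k_2$ and $1/(k_1k_2)$ respectively. Feeding $\log k_1\leq-\log\beta$, $\log k_2\leq-\log\beta$, and $\log(k_1k_2)=\log k_1+\log k_2\leq-\log\beta$ into the converse of the first paragraph then produces the three displayed inequalities. The only step carrying any subtlety is the hypothesis-testing converse of the first paragraph — in particular handling the degenerate cases and using that the code's test saturates the type-I error constraint — and this is entirely standard (cf.~\cite{PPV10}); identifying the divergences with conditional mutual informations is a routine computation, so no serious obstacle is expected.
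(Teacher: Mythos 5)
Your proposal is correct and follows essentially the same route as the paper: both invoke Lemma~\ref{lem:beta}, apply the data-processing inequality for divergence to the binary test to get $d(1-\varepsilon\|\beta)\leq D(P^{(0)}\|Q)$, lower-bound the binary divergence by $-(1-\varepsilon)\log\beta-h(\varepsilon)$, and identify the three divergences with $I(X_1{:}Y|X_2)$, $I(X_2{:}Y|X_1)$ and $I((X_1,X_2){:}Y)$. No gap.
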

  
  \begin{proof}
    Thanks to Lemma~\ref{lem:beta}, with the fact that $P_{X_1X_2} = P_{X_1|X_2} \times P_{X_2} = P_{X_2|X_1} \times P_{X_1}$, we have already:
    \begin{equation}
      \begin{aligned}
        &\beta_{1-\varepsilon}\left(P_{X_1X_2Y},\left(P_{X_1|X_2} \times P_{Y|X_2} \right) \times P_{X_2}\right) \leq \frac{1}{k_1}\\
        &\beta_{1-\varepsilon}\left(P_{X_1X_2Y},\left(P_{X_2|X_1} \times P_{Y|X_1} \right) \times P_{X_1}\right) \leq \frac{1}{k_2}\\
        &\beta_{1-\varepsilon}\left(P_{X_1X_2Y},P_{X_1X_2} \times P_{Y}\right) \leq \frac{1}{k_1k_2}       
      \end{aligned}
    \end{equation}

    Following the steps of section G in~\cite{PPV10}, since any hypothesis test is a binary-output transformation, by data-processing inequality for divergence, we have that:

    \begin{equation}
      \begin{aligned}
        &d\left(1-\varepsilon||\beta_{1-\varepsilon}\left(P_{X_1X_2Y},\left(P_{X_1|X_2} \times P_{Y|X_2} \right) \times P_{X_2}\right)\right)\\
        &= d\left(\beta_{1-\varepsilon}\left(P_{X_1X_2Y},P_{X_1X_2Y}\right)||\beta_{1-\varepsilon}\left(P_{X_1X_2Y},\left(P_{X_1|X_2} \times P_{Y|X_2} \right) \times P_{X_2}\right)\right)\\
        &\leq D\left(P_{X_1X_2Y}||\left(P_{X_1|X_2} \times P_{Y|X_2} \right) \times P_{X_2}\right) = I(X_1:Y|X_2)
      \end{aligned}
    \end{equation}

    where the binary divergence $d(a||b):= a\log\left(\frac{a}{b}\right) + (1-a)\log\left(\frac{1-a}{1-b}\right)$ and satisfies, $d(a||b) \geq -h(a) - a\log(b)$ and thus:
    \[ \log(\frac{1}{b}) \leq \frac{d(a||b) + h(a)}{a} = \frac{d(a||b) + h(1-a)}{a} \ , \]

    This leads to:
    \[\log(k_1) \leq \frac{1}{\log\left(\left(\beta_{1-\varepsilon}\left(P_{X_1X_2Y},\left(P_{X_1|X_2} \times P_{Y|X_2} \right) \times P_{X_2}\right)\right)\right)} \leq \frac{I(X_1:Y|X_2) + h(\varepsilon)}{1-\varepsilon} \ .\]

    Similarly for the two other inequalities, since $D\left(P_{X_1X_2Y}||\left(P_{X_2|X_1} \times P_{Y|X_1} \right) \times P_{X_1}\right) = I(X_2:Y|X_1)$ and $D\left(P_{X_1X_2Y}||P_{X_1X_2} \times P_Y \right) =  I((X_1,X_2):Y)$, we get:
    \begin{equation}
      \begin{aligned}
        \log(k_1) &\leq \frac{I(X_1:Y|X_2)+h(\varepsilon)}{1-\varepsilon} \ ,\\
        \log(k_2) &\leq \frac{I(X_2:Y|X_1)+h(\varepsilon)}{1-\varepsilon} \ ,\\
        \log(k_1)+\log(k_2) &\leq \frac{I((X_1,X_2):Y)+h(\varepsilon)}{1-\varepsilon} \ .
      \end{aligned}
    \end{equation}
  \end{proof}

In order to show additivity of the outer bound, we use the following lemma.

  \begin{lem}
    \label{lem:multiletterOB}
     For any distribution $P_{X_1^nX_2^n}$ of $(X_1^n,X_2^n)$, if $Y^n \in \mathcal{Y}^n$ is the outcome of $W^n$ on inputs $X_1^n,X_2^n$, we have:
      \begin{equation}
        \begin{aligned}
          I(X_1^n:Y^n|X_2^n) &\leq \sum_{i=1}^n I(X_{1,i}:Y_i|X_{2,i})\\
          I(X_2^n:Y^n|X_1^n) &\leq \sum_{i=1}^n I(X_{2,i}:Y_i|X_{1,i})\\
          I((X_1^n,X_2^n):Y^n) &\leq \sum_{i=1}^n I((X_{1,i},X_{2,i}):Y_i) \ .
        \end{aligned}
      \end{equation}
  \end{lem}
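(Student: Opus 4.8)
The plan is to prove the three inequalities by a standard single-letterization argument, treating each one in turn and using the fact that the channel $W^{\otimes n}$ acts independently on each coordinate. The key structural fact I would exploit is that, conditioned on the full input pair $(X_1^n, X_2^n)$, the output coordinates $Y_1, \dots, Y_n$ are mutually independent, and moreover $Y_i$ depends only on $(X_{1,i}, X_{2,i})$; so $H(Y^n \mid X_1^n X_2^n) = \sum_{i=1}^n H(Y_i \mid X_{1,i} X_{2,i})$, and similarly with extra conditioning on $X_2^n$ (resp. $X_1^n$) tacked on. This is the only place where the product structure of $W^{\otimes n}$ enters; no assumption on $P_{X_1^n X_2^n}$ is used, which is exactly what we want since the relaxed region allows arbitrary joint input distributions.

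First I would handle the sum-rate term $I((X_1^n, X_2^n) : Y^n)$. Write it as $H(Y^n) - H(Y^n \mid X_1^n X_2^n)$. For the first term use subadditivity of entropy, $H(Y^n) \le \sum_i H(Y_i)$. For the second term use the independence-across-coordinates fact above to get an exact equality $H(Y^n \mid X_1^n X_2^n) = \sum_i H(Y_i \mid X_{1,i} X_{2,i})$; in fact one only needs the inequality $H(Y^n\mid X_1^nX_2^n) \ge \sum_i H(Y_i \mid X_{1,i}X_{2,i})$, which holds because conditioning on more variables (namely all of $X_1^n X_2^n$ rather than just $(X_{1,i},X_{2,i})$) only decreases entropy and the chain rule over $i$ applies. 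Combining, $I((X_1^n,X_2^n):Y^n) \le \sum_i [H(Y_i) - H(Y_i\mid X_{1,i}X_{2,i})] = \sum_i I((X_{1,i},X_{2,i}):Y_i)$.

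Next, for $I(X_1^n : Y^n \mid X_2^n) = H(Y^n \mid X_2^n) - H(Y^n \mid X_1^n X_2^n)$, I would bound $H(Y^n \mid X_2^n) \le \sum_i H(Y_i \mid X_2^n) \le \sum_i H(Y_i \mid X_{2,i})$ using the chain rule, subadditivity, and that conditioning reduces entropy; and again $H(Y^n \mid X_1^n X_2^n) = \sum_i H(Y_i \mid X_{1,i} X_{2,i})$ by coordinatewise independence of the $Y_i$'s given all inputs. Subtracting gives $I(X_1^n : Y^n \mid X_2^n) \le \sum_i [H(Y_i\mid X_{2,i}) - H(Y_i\mid X_{1,i}X_{2,i})] = \sum_i I(X_{1,i}:Y_i\mid X_{2,i})$. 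The third inequality is identical with the roles of $X_1$ and $X_2$ swapped, so it follows by symmetry.

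**Main obstacle.** There is no serious obstacle here; the only point requiring care is making the decomposition $H(Y^n \mid X_1^n X_2^n) = \sum_i H(Y_i \mid X_{1,i} X_{2,i})$ fully rigorous, i.e. justifying that given the entire input vector the outputs are conditionally independent coordinatewise and that $Y_i \perp (X_{1,j}, X_{2,j})_{j \ne i}$ given $(X_{1,i}, X_{2,i})$ — this is immediate from the definition $W^{\otimes n}(y^n\mid x_1^n x_2^n) = \prod_i W(y_i\mid x_{1,i}x_{2,i})$ but should be spelled out so the chain-rule step is unambiguous. Everything else is the textbook MAC converse bookkeeping (chain rule, subadditivity of entropy, "conditioning reduces entropy"), applied without any independence assumption on the inputs.
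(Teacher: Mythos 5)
Your proof is correct, but it takes a genuinely different route from the paper's. The paper expands $I(X_1^n:Y^n|X_2^n)$ via the chain rule over output coordinates, splits each term as $I(X_{1,i}:Y_i|X_2^nY^{i-1}) + I(X_{1,-i}:Y_i|X_2^nY^{i-1}X_{1,i})$, and devotes a separate technical lemma (Lemma~\ref{lem:claim1}) to showing the second term vanishes, before finishing with Markov-chain and conditioning arguments. You instead write each mutual information as a difference of conditional entropies and use the exact identity $H(Y^n\mid X_1^nX_2^n)=\sum_i H(Y_i\mid X_{1,i}X_{2,i})$, which follows directly from $W^{\otimes n}(y^n\mid x_1^nx_2^n)=\prod_i W(y_i\mid x_{1,i}x_{2,i})$, together with subadditivity and ``conditioning reduces entropy'' for the positive terms. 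Both routes rest on the same underlying fact (coordinatewise conditional independence of the outputs given all inputs), but yours packages it once as an entropy identity and avoids the auxiliary lemma, making the bookkeeping shorter; the paper's version is the more standard converse template and generalizes more readily to settings where the exact entropy factorization is unavailable. One small slip: your parenthetical claim that the one-sided inequality $H(Y^n\mid X_1^nX_2^n)\ge\sum_i H(Y_i\mid X_{1,i}X_{2,i})$ follows from ``conditioning on more variables only decreases entropy'' is backwards --- that principle yields the $\le$ direction. The $\ge$ direction (and indeed equality) requires precisely the conditional independence $Y_i\perp (Y^{i-1},X_{1,-i},X_{2,-i})$ given $(X_{1,i},X_{2,i})$, which you do establish and use for the exact equality, so the proof is unaffected.
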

  
  \begin{proof}
    Consider $n$ copies of the MAC $W$. Let us write $X_{1,-i} := X_{1,1} \ldots X_{1,i-1} X_{1,i+1} \ldots X_{1,n}$ and $Z^n := Z_1 \ldots Z_n$. We have:

    \begin{equation}
      \begin{aligned}
        I(X_1^n:Y^n|X_2^n) &= I(X_1^n:Y^n|X_2^n)\\
        &= \sum_{i=1}^n I(X_1^n:Y_i|X_2^nY^{i-1}) \text{ by the chain rule}\\
        &= \sum_{i=1}^n I(X_{1,i}:Y_i|X_2^nY^{i-1}) + \sum_{i=1}^n I(X_{1,-i}:Y_i|X_2^nY^{i-1}X_{1,i})\\
        &= \sum_{i=1}^n I(X_{1,i}:Y_i|X_2^nY^{i-1}) \ ,
        \end{aligned}
     \end{equation}
     where the last equality comes from Lemma~\ref{lem:claim1}. As a result,
     \begin{equation}
     \begin{aligned}
        I(X_1^n:Y^n|X_2^n) &= \sum_{i=1}^n H(Y_i|X_2^nY^{i-1}) - H(Y_i|X_2^nY^{i-1}X_{1,i})\\
        &= \sum_{i=1}^n H(Y_i|X_2^nY^{i-1}) - H(Y_i|X_{2,i}X_{1,i}) \text{ since $X_{2,-i} Y^{i-1} \rightarrow (X_{1,i},X_{2,i}) \rightarrow Y_i$ Markov chain.}\\
        &\leq \sum_{i=1}^n H(Y_i|X_{2,i}) - H(Y_i|X_{2,i}X_{1,i}) = \sum_{i=1}^n I(X_{1,i}:Y_i|X_{2,i}) \ .
      \end{aligned}
    \end{equation}
    
      Symmetrically by switching the roles of $X_1$ and $X_2$, we get the second part of Lemma~\ref{lem:multiletterOB}.
      
      For the sum-rate case:
      \begin{equation*}
        \begin{aligned}
          I((X_1^n,X_2^n):Y^n) &= \sum_{i=1}^n I((X_1^n,X_2^n):Y_i|Y^{i-1}) \text{ by the chain rule}\\
          &= \sum_{i=1}^n I((X_{1,i},X_{2,i}):Y_i|Y^{i-1}) + \sum_{i=1}^n I((X_{1,-i},X_{2,-i}):Y_i|Y^{i-1}X_{1,i}X_{2,i})\\
          &= \sum_{i=1}^n I((X_{1,i},X_{2,i}):Y_i|Y^{i-1}) \text{ since $(X_{1,-i},X_{2,-i}) \rightarrow Y^{i-1}X_{1,i}X_{2,i} \rightarrow Y_i$ Markov chain.}\\
          &= \sum_{i=1}^n H(Y_i|Y^{i-1}) - H(Y_i|Y^{i-1}X_{1,i}X_{2,i})\\
          &= \sum_{i=1}^n H(Y_i|Y^{i-1}) - H(Y_i|X_{1,i}X_{2,i}) \text{ since $Y^{i-1} \rightarrow (X_{1,i},X_{2,i}) \rightarrow Y_i$ Markov chain.}\\
          &\leq \sum_{i=1}^n H(Y_i) - H(Y_i|X_{2,i}X_{1,i}) = \sum_{i=1}^n I((X_{1,i},X_{2,i}):Y_i) \ .
        \end{aligned}
      \end{equation*}
      
  \end{proof}
  
  We next prove a technical lemma that was used in the previous proof.
  
      \begin{lem}
        \label{lem:claim1}
        For any distribution $P_{X_1^nX_2^n}$ of $(X_1^n,X_2^n)$, if $Y^n \in \mathcal{Y}^n$ is the outcome of $W^n$ on inputs $X_1^n,X_2^n$, we have:
        \[ I(X_{1,-i}:Y_i|X_{1,i}X_2^nY^{i-1}) = 0 \ . \]
      \end{lem}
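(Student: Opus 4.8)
The plan is to show that $X_{1,-i} \rightarrow (X_{1,i}, X_2^n, Y^{i-1}) \rightarrow Y_i$ forms a Markov chain, which immediately yields the claimed vanishing of the conditional mutual information. Since $Y^n$ is produced from $(X_1^n, X_2^n)$ by $W^{\otimes n}$, the joint law factorizes as $P(x_1^n, x_2^n, y^n) = P_{X_1^nX_2^n}(x_1^n, x_2^n) \prod_{j=1}^n W(y_j|x_{1,j}x_{2,j})$; in particular, conditioned on the full pair of inputs $(X_1^n, X_2^n)$, the outputs $Y_1, \ldots, Y_n$ are mutually independent and $Y_j$ has conditional law $W(\cdot|x_{1,j}x_{2,j})$.

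First I would compute, for any tuple of values of positive probability, $P(y_i|x_1^n, x_2^n, y^{i-1}) = W(y_i|x_{1,i}x_{2,i})$: this follows from the factorization above, since the conditional independence of the $Y_j$'s given $(X_1^n, X_2^n)$ makes the past outputs $Y^{i-1}$ irrelevant, and $Y_i$ depends on the inputs only through its own coordinate $(x_{1,i}, x_{2,i})$. Crucially, the right-hand side depends on $(x_1^n, x_2^n, y^{i-1})$ only through $(x_{1,i}, x_{2,i})$, and both of these are already determined by $(x_{1,i}, x_2^n, y^{i-1})$, since $x_{2,i}$ is a coordinate of $x_2^n$. Hence $P(y_i|x_{1,-i}, x_{1,i}, x_2^n, y^{i-1})$ does not depend on $x_{1,-i}$. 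Marginalizing over $X_{1,-i}$ then gives $P(y_i|x_{1,i}, x_2^n, y^{i-1}) = \sum_{x_{1,-i}} P(x_{1,-i}|x_{1,i}, x_2^n, y^{i-1}) W(y_i|x_{1,i}x_{2,i}) = W(y_i|x_{1,i}x_{2,i})$, so that the conditional law of $Y_i$ given $(X_{1,i}, X_2^n, Y^{i-1})$ coincides with its conditional law given $(X_{1,-i}, X_{1,i}, X_2^n, Y^{i-1})$. This is exactly the assertion that $I(X_{1,-i}:Y_i|X_{1,i}X_2^nY^{i-1}) = 0$.

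There is no real obstacle here; the only mild technical point is the handling of conditioning on events of probability zero, which is dealt with in the standard way by restricting all of the displayed identities to the support of the relevant joint distribution, and this does not affect the value of the mutual information.
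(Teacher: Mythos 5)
Your proof is correct and follows essentially the same route as the paper: both establish that, conditioned on $(X_{1,i}, X_2^n, Y^{i-1})$, the variables $X_{1,-i}$ and $Y_i$ are independent, using the factorization $P(x_1^n,x_2^n,y^n)=P_{X_1^nX_2^n}(x_1^n,x_2^n)\prod_j W(y_j|x_{1,j}x_{2,j})$ and the observation that $P(y_i\mid x_1^n,x_2^n,y^{i-1})=W(y_i|x_{1,i}x_{2,i})$ depends only on coordinates already fixed by the conditioning. The paper verifies the factorization of the joint conditional law by explicit computation of the marginals, whereas you phrase it as invariance of the conditional law of $Y_i$ under further conditioning on $x_{1,-i}$; these are equivalent.
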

      \begin{proof}
        Let us show that, conditioned on any particular instance of $X_{1,i}=x_{i,1}, X_2^n=x_2^n, Y_1^{i-1}=y^{i-1}$, $X_{1,-i}$ and $Y_i$ are independent.

        We have:
        \[ \mathbb{P}\left(Y_i = y_i |  X_{1,i}=x_{i,1}, X_2^n=x_2^n, Y_1^{i-1}=y^{i-1}\right) = \mathbb{P}\left(Y_i = y_i |  X_{1,i}=x_{i,1}, X_{2,i}=x_{2,i}\right) = W(y_i|x_{1,i}x_{2,i}) \ , \]
        by definition of the law of $Y_i$. On the other hand, we have that:
        \[ \mathbb{P}\left(X_1^n=x_1^n,X_2^n=x_2^n,Y^n=y_n\right) = \mathbb{P}\left(X_1^n=x_1^n,X_2^n=x_2^n\right)\prod_{j=1}^n W(y_j|x_{1,j}x_{2,j}) \ . \]
        Thus, we have:
        \begin{equation}
          \begin{aligned}
            &\mathbb{P}\left(X_{1,i}=x_{i,1}, X_2^n=x_2^n, Y_1^{i-1}=y^{i-1}\right) = \sum_{x_{1,-i},x_2^n,y_i,\ldots, y_n}\mathbb{P}\left(X_1^n=x_1^n,X_2^n=x_2^n\right)\prod_{j=1}^n W(y_j|x_{1,j}x_{2,j})\\
            &= \sum_{x_{1,-i},x_2^n} \mathbb{P}\left(X_1^n=x_1^n,X_2^n=x_2^n\right)\prod_{j=1}^{i-1}W(y_j|x_{1,j}x_{2,j})\prod_{j=i}^n\left(\sum_{y_j}W(y_j|x_{1,j}x_{2,j})\right)\\
            &= \sum_{x_{1,-i},x_2^n} \mathbb{P}\left(X_1^n=x_1^n,X_2^n=x_2^n\right)\prod_{j=1}^{i-1}W(y_j|x_{1,j}x_{2,j}) \ .
          \end{aligned}
        \end{equation}

        And then:
        \[ \mathbb{P}\left(X_{1,-i}=x_{1,-i}|X_{1,i}=x_{i,1}, X_2^n=x_2^n, Y_1^{i-1}=y^{i-1}\right) = \frac{\sum_{x_2^n}\mathbb{P}\left(X_1^n=x_1^n,X_2^n=x_2^n\right)\prod_{j=1}^{i-1} W(y_j|x_{1,j}x_{2,j})}{\sum_{x_{1,-i},x_2^n}\mathbb{P}\left(X_1^n=x_1^n,X_2^n=x_2^n\right)\prod_{j=1}^{i-1} W(y_j|x_{1,j}x_{2,j})} \ . \]

        But:

        \begin{equation}
          \begin{aligned}
            &\mathbb{P}\left(X_{1,-i}=x_{1,-i},Y_i=y_i|X_{1,i}=x_{i,1}, X_2^n=x_2^n, Y_1^{i-1}=y^{i-1}\right)\\
            &= \frac{\sum_{x_2^n}\mathbb{P}\left(X_1^n=x_1^n,X_2^n=x_2^n\right)\prod_{j=1}^{i} W(y_j|x_{1,j}x_{2,j})}{\sum_{x_{1,-i},x_2^n}\mathbb{P}\left(X_1^n=x_1^n,X_2^n=x_2^n\right)\prod_{j=1}^{i-1} W(y_j|x_{1,j}x_{2,j})} \\
            &= \mathbb{P}\left(X_{1,-i}=x_{1,-i}|X_{1,i}=x_{i,1}, X_2^n=x_2^n, Y_1^{i-1}=y^{i-1}\right) W(y_i|x_{1,i}x_{2,i})\\
            &= \mathbb{P}\left(X_{1,-i}=x_{1,-i}|X_{1,i}=x_{i,1}, X_2^n=x_2^n, Y_1^{i-1}=y^{i-1}\right) \mathbb{P}\left(Y_i=y_i|X_{1,i}=x_{i,1}, X_2^n=x_2^n, Y_1^{i-1}=y^{i-1}\right) \ .
          \end{aligned}
        \end{equation}

        Thus, conditioned on any particular instance of $X_{1,i}=x_{i,1}, X_2^n=x_2^n, Y_1^{i-1}=y^{i-1}$, $X_{1,-i}$ and $Y_i$ are independent, and so $I(X_{1,-i}:Y_i|X_{1,i}X_2^nY^{i-1}) = 0$.
      \end{proof}

	Combining the previous results gives the desired outer bound.

      \begin{prop}[Outer bound part of Theorem~\ref{theo:CharaNSrelaxed}]
        \label{prop:OBNSrelaxed}
        If a rate pair is achievable with relaxed non-signaling assistance then it is in the closure of the convex hull of all $(R_1,R_2)$ satisfying:
        \[ R_1 < I(X_1:Y|X_2)\ ,\ R_2 < I(X_2:Y|X_1)\ ,\ R_1+R_2 < I((X_1,X_2):Y) \ ,\]
        for $(X_1,X_2)$ following some law $P_{X_1X_2}$ on $\mathcal{X}_1 \times \mathcal{X}_2$, and $Y \in \mathcal{Y}$ the outcome of $W$ on inputs $X_1,X_2$.
      \end{prop}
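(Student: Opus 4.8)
The plan is to combine the single-shot outer bound of Lemma~\ref{lem:oneshotOB} with the additivity statement of Lemma~\ref{lem:multiletterOB}, and then take the limit $n \to \infty$. Suppose $(R_1,R_2)$ is achievable with relaxed non-signaling assistance, so that $\mathrm{S}^{\overline{\mathrm{NS}}}(W^{\otimes n},\ceil{2^{R_1 n}},\ceil{2^{R_2 n}}) = 1 - \varepsilon_n$ with $\varepsilon_n \to 0$. Fix $n$, write $k_1 := \ceil{2^{R_1 n}}$ and $k_2 := \ceil{2^{R_2 n}}$, and let $(p_{x_1^n,x_2^n}, r_{x_1^n,x_2^n,y^n})$ be an optimal relaxed non-signaling code for $W^{\otimes n}$. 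Applying Lemma~\ref{lem:oneshotOB} to this code (with the channel $W^{\otimes n}$ in place of $W$), with $P_{X_1^n X_2^n}(x_1^n,x_2^n) := p_{x_1^n,x_2^n}/(k_1 k_2)$ and $Y^n$ the output of $W^{\otimes n}$ on $(X_1^n,X_2^n)$, yields
\begin{equation*}
  \log(k_1) \leq \frac{I(X_1^n:Y^n|X_2^n) + h(\varepsilon_n)}{1-\varepsilon_n}\ ,
\end{equation*}
and the two analogous bounds for $\log(k_2)$ and $\log(k_1)+\log(k_2)$.

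Next I would invoke Lemma~\ref{lem:multiletterOB} to single-letterize: $I(X_1^n:Y^n|X_2^n) \leq \sum_{i=1}^n I(X_{1,i}:Y_i|X_{2,i})$, and similarly for the other two mutual informations, where $(X_{1,i},X_{2,i})$ is the $i$-th marginal of $P_{X_1^n X_2^n}$ and $Y_i$ is the output of the $i$-th copy of $W$. Dividing the displayed inequalities by $n$ and using $\log(k_j)/n \geq R_j$, we get
\begin{equation*}
  R_1 \leq \frac{1}{1-\varepsilon_n}\left( \frac{1}{n}\sum_{i=1}^n I(X_{1,i}:Y_i|X_{2,i}) + \frac{h(\varepsilon_n)}{n} \right)\ ,
\end{equation*}
and likewise for $R_2$ and $R_1 + R_2$. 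Introducing a uniform time-sharing variable $T \in [n]$ and setting $X_1 := X_{1,T}$, $X_2 := X_{2,T}$, $Y := Y_T$ (so that $Y$ is the output of $W$ on $(X_1,X_2)$ conditioned on $T$), the average $\frac{1}{n}\sum_i I(X_{1,i}:Y_i|X_{2,i})$ equals $I(X_1:Y|X_2,T)$, which is of the form $\sum_t \Pr(T=t)\, I(X_1:Y|X_2, T=t)$ — a convex combination of single-letter conditional mutual informations evaluated at the distributions $P_{X_{1,t}X_{2,t}}$. The same holds for the other two quantities simultaneously, with the same weights. Hence the triple $(R_1, R_2, R_1+R_2)$ (up to the factor $1/(1-\varepsilon_n)$ and the vanishing term $h(\varepsilon_n)/n$) is dominated coordinatewise by a convex combination of triples $\bigl(I(X_1:Y|X_2), I(X_2:Y|X_1), I((X_1,X_2):Y)\bigr)$, each attained by an actual joint law $P_{X_1 X_2}$ on $\mathcal{X}_1 \times \mathcal{X}_2$. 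Taking $n \to \infty$ so that $\varepsilon_n \to 0$ and $h(\varepsilon_n)/n \to 0$, we conclude that $(R_1,R_2)$ lies in the closure of the convex hull of the stated region.

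The main obstacle — though a mild one given the lemmas — is the bookkeeping in the time-sharing step: one must check that the three single-shot bounds are single-letterized and averaged with the \emph{same} time-sharing variable $T$, so that a single convex combination of single-letter regions dominates all three inequalities at once, rather than three separate convex combinations. This is precisely why the region in the statement is the closure of the convex hull (the convex hull absorbs the time-sharing variable), and the argument parallels the classical converse for MACs except that here $P_{X_1 X_2}$ need not be a product distribution, since the relaxed non-signaling code imposes no independence on the induced input law $p_{x_1,x_2}/(k_1 k_2)$.
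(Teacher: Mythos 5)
Your proposal is correct and follows essentially the same route as the paper's proof: apply Lemma~\ref{lem:oneshotOB} to the block code on $W^{\otimes n}$ with the induced input law $p/(k_1k_2)$, single-letterize via Lemma~\ref{lem:multiletterOB}, absorb the average into a uniform time-sharing variable, and let $n\to\infty$. The bookkeeping point you flag (using one common time-sharing variable so that a single convex combination dominates all three bounds simultaneously) is exactly how the paper closes the argument, via the corner points of the time-shared region.
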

      \begin{proof}
        Consider $(R_1,R_2)$ achievable with relaxed non-signaling assistance: we have a sequence of relaxed non-signaling assisted codes for $n$ copies of the MAC $W$ with $k_1 = 2^{nR_1}, k_2 = 2^{nR_2}$ messages and an error probability $\varepsilon_n \underset{n \rightarrow +\infty}{\rightarrow} 0$, along with associated distributions of $X_1^nX_2^nY^n$.

    Thus combining Lemma~\ref{lem:oneshotOB} and Lemma~\ref{lem:multiletterOB}, we have that:
    \begin{equation}
      \begin{aligned}
        R_1 &\leq \frac{1}{n}\frac{I(X_1^n:Y^n|X_2^n)+h(\varepsilon_n)}{1-\varepsilon_n} \leq \frac{1}{n} \frac{\sum_{i=1}^nI(X_{1,i}:Y_i|X_{2,i})+h(\varepsilon_n)}{1-\varepsilon_n}\ ,\\
        R_2 &\leq \frac{1}{n}\frac{I(X_2^n:Y^n|X_1^n)+h(\varepsilon_n)}{1-\varepsilon_n} \leq \frac{1}{n}\frac{\sum_{i=1}^nI(X_{2,i}:Y_i|X_{1,i})+h(\varepsilon_n)}{1-\varepsilon_n} \ ,\\
        R_1+R_2 &\leq \frac{1}{n}\frac{I((X_1^n,X_2^n):Y^n)+h(\varepsilon_n)}{1-\varepsilon_n} \leq \frac{1}{n} \frac{\sum_{i=1}^nI((X_{1,i},X_{2,i}):Y_i)+h(\varepsilon_n)}{1-\varepsilon_n} \ .
      \end{aligned}
    \end{equation}

    Then let us consider some random variable $Q$ uniform on $[n]$ and independent from $(X_1^n,X_2^n,Y^n)$. Then we can write:
    \[ \sum_{i=1}^nI(X_{1,i}:Y_i|X_{2,i}) = \sum_{i=1}^nI(X_{1,i}:Y_i|X_{2,i},Q=i) = nI(X_{1,Q}:Y_Q|X_{2,Q},Q ) \ . \]

    Since $Y_Q$ conditioned on $X_{1,Q}$ and $X_{2,Q}$ still follows the law of the MAC $W(y|x_1x_2)$, we can take $X_1=X_{1,Q},X_2=X_{2,Q}$, and then the output of the channel $Y$ satisfies $Y=Y_Q$, and thus we obtain:
    \[ R_1 \leq \frac{I(X_1:Y|X_2,Q)+\frac{h(\varepsilon_n)}{n}}{1-\varepsilon_n} \ .\]
    
    Doing this similarly on the other conditional mutual informations, we get:

    \begin{equation}
      \begin{aligned}
        R_1 &\leq \frac{I(X_1:Y|X_2,Q)+\frac{h(\varepsilon_n)}{n}}{1-\varepsilon_n}\ ,\\
        R_2 &\leq \frac{I(X_2:Y|X_1,Q)+\frac{h(\varepsilon_n)}{n}}{1-\varepsilon_n} \ ,\\
        R_1+R_2 &\leq  \frac{I((X_1,X_2):Y|Q)+\frac{h(\varepsilon_n)}{n}}{1-\varepsilon_n} \ .
      \end{aligned}
    \end{equation}

    By taking the limit as $n$ goes to infinity, since the limit of $\varepsilon_n$ is $0$, then the limit of $\frac{h(\varepsilon_n)}{n}$ is $0$ as well and we get that $(R_1,R_2)$ must be in the set of rate pairs such that:
    \begin{equation}
      \begin{aligned}
        R_1 &\leq I(X_1:Y|X_2,Q) \ ,\\
        R_2 &\leq I(X_2:Y|X_1,Q) \ ,\\
        R_1+R_2 &\leq I((X_1,X_2):Y|Q) \ ,
      \end{aligned}
    \end{equation}
    for some uniform $Q$ in a finite set, $(X_1,X_2)$ any joint law depending on $Q$, and $Y$ the output of $W$ on inputs $(X_1,X_2)$.

    Finally, in order to show that this is the right region, one has only to see that the corner points of this region, such as for instance $(I(X_1:Y|Q),I(X_2:Y|X_1,Q))$, are finite convex combination of the points $(I(X_1:Y|Q=q),I(X_2:Y|X_1,Q=q))$ which are all in the capacity region of the theorem by taking $(X_1X_2) \sim P_{X_1X_2|Q=q}$. This implies that $(R_1,R_2)$ is in the convex hull of that region, so we can drop the random variable $Q$ and the proof is completed.
      \end{proof}

      The main consequence of that outer bound on the relaxed non-signaling assisted capacity region is that it holds also for the non-signaling assisted capacity region thanks to Corollary~\ref{cor:NSisrelaxed}:

      \begin{cor}[Outer Bound on the Non-Signaling Assisted Capacity Region]
        \label{cor:OB}
        If a rate pair is achievable with non-signaling assistance, then it is in the closure of the convex hull of all $(R_1,R_2)$ satisfying:
        \[ R_1 < I(X_1:Y|X_2)\ ,\ R_2 < I(X_2:Y|X_1)\ ,\ R_1+R_2 < I((X_1,X_2):Y) \ ,\]
        for $(X_1,X_2)$ following any law $P_{X_1X_2}$ on $\mathcal{X}_1 \times \mathcal{X}_2$, and $Y \in \mathcal{Y}$ the outcome of $W$ on inputs $X_1,X_2$.
      \end{cor}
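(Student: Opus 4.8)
The plan is to obtain this corollary as an immediate consequence of the outer bound already established for the \emph{relaxed} notion of non-signaling assistance, together with the fact that genuine non-signaling assistance is only a special case of it. Concretely, I would argue in three short steps, none of which requires any new computation beyond what has already been done in Proposition~\ref{prop:OBNSrelaxed} and Proposition~\ref{prop:NSisrelaxed}.

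First, suppose $(R_1,R_2)$ is achievable with non-signaling assistance, i.e.\ $\mathrm{S}^{\mathrm{NS}}(W^{\otimes n},\ceil{2^{R_1n}},\ceil{2^{R_2n}}) \to 1$ as $n \to +\infty$. By Proposition~\ref{prop:NSisrelaxed} applied to the channel $W^{\otimes n}$ (with message numbers $\ceil{2^{R_1n}},\ceil{2^{R_2n}}$), we have $\mathrm{S}^{\mathrm{NS}}(W^{\otimes n},\ceil{2^{R_1n}},\ceil{2^{R_2n}}) \leq \mathrm{S}^{\overline{\mathrm{NS}}}(W^{\otimes n},\ceil{2^{R_1n}},\ceil{2^{R_2n}}) \leq 1$, so the sandwich forces $\mathrm{S}^{\overline{\mathrm{NS}}}(W^{\otimes n},\ceil{2^{R_1n}},\ceil{2^{R_2n}}) \to 1$ as well. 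Hence $(R_1,R_2)$ is achievable with relaxed non-signaling assistance; equivalently, this is exactly the inclusion $\mathcal{C}^{\mathrm{NS}}(W) \subseteq \mathcal{C}^{\overline{\mathrm{NS}}}(W)$ recorded in Corollary~\ref{cor:NSisrelaxed}.

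Second, I would invoke Proposition~\ref{prop:OBNSrelaxed} (the outer bound part of Theorem~\ref{theo:CharaNSrelaxed}): since $(R_1,R_2)$ is achievable with relaxed non-signaling assistance, it lies in the closure of the convex hull of all $(R_1,R_2)$ with $R_1 < I(X_1:Y|X_2)$, $R_2 < I(X_2:Y|X_1)$, $R_1+R_2 < I((X_1,X_2):Y)$ for some joint law $P_{X_1X_2}$ on $\mathcal{X}_1 \times \mathcal{X}_2$ and $Y$ the output of $W$ on $(X_1,X_2)$. This is precisely the claimed region, so the corollary follows.

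There is essentially no obstacle here: the substantive work (the one-shot hypothesis-testing bound of Lemma~\ref{lem:oneshotOB}, the additivity Lemma~\ref{lem:multiletterOB}, and the single-letterization via the time-sharing variable $Q$) has all been carried out inside Proposition~\ref{prop:OBNSrelaxed}. If one wanted to be fully self-contained without the language of capacity regions, the only point deserving a line of care is the transfer from ``$\mathrm{S}^{\mathrm{NS}} \to 1$'' to ``$\mathrm{S}^{\overline{\mathrm{NS}}} \to 1$'', which, as noted above, is immediate from $\mathrm{S}^{\mathrm{NS}} \leq \mathrm{S}^{\overline{\mathrm{NS}}} \leq 1$; everything else is just quoting the two earlier results.
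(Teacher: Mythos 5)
Your proposal is correct and matches the paper's argument exactly: the corollary is obtained by combining the inclusion $\mathcal{C}^{\mathrm{NS}}(W) \subseteq \mathcal{C}^{\overline{\mathrm{NS}}}(W)$ (Corollary~\ref{cor:NSisrelaxed}, which follows from $\mathrm{S}^{\mathrm{NS}} \leq \mathrm{S}^{\overline{\mathrm{NS}}} \leq 1$ as you note) with the outer bound of Proposition~\ref{prop:OBNSrelaxed}. Nothing further is needed.
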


      \subsection{Achievability Part of Theorem~\ref{theo:CharaNSrelaxed}}
      In order to construct the relaxed non-signaling assisted code for achievability, we will need the notions of jointly and conditional typical sets. We will consider the following typical sets defined in Chapter 2.5 of~\cite{GK11}: $\mathcal{T}^n_{\varepsilon}(X_1,X_2,Y)$, $\mathcal{T}^n_{\varepsilon}(X_1,X_2)$, $\mathcal{T}^n_{\varepsilon}(Y)$, $\mathcal{T}^n_{\varepsilon}(X_1|x_2^n)$, $\mathcal{T}^n_{\varepsilon}(X_2|x_1^n)$, $\mathcal{T}^n_{\varepsilon}(X_1|x_2^n,y^n)$, $\mathcal{T}^n_{\varepsilon}(X_2|x_1^n,y^n)$, $\mathcal{T}^n_{\varepsilon}(X_1,X_2|y^n)$. Recall that:
      \begin{defi}[Typical set and conditional typical set]
        We have the following definitions:
        \begin{enumerate}
        \item $\mathcal{T}^n_{\varepsilon}(X_1,X_2) := \left\{(x_1^n,x_2^n) : |\pi(x_1,x_2|x_1^n,x_2^n) - P_{X_1X_2}(x_1,x_2)|\leq \varepsilon P_{X_1X_2}(x_1,x_2) \text{ for all } (x_1,x_2) \in \mathcal{X}_1 \times \mathcal{X}_2\right\}$
      where $\pi(x_1,x_2|x_1^n,x_2^n) := \frac{|\{i : (x_{1,i},x_{2,i}) = (x_1,x_2)\}|}{n}$. This definition generalizes for any $t$-uple of variables.
    \item $\forall y^n \in \mathcal{T}^n_{\varepsilon}(Y), \mathcal{T}^n_{\varepsilon}(X_1,X_2|y^n) := \{ (x_1^n,x_2^n) : (x_1^n,x_2^n,y^n) \in \mathcal{T}^n_{\varepsilon}(X_1,X_2,Y) \}$
        \end{enumerate}
      \end{defi}

      A crucial property of such typical sets is the typical average lemma:

      \begin{lem}[Typical Average Lemma~\cite{GK11}]
        Let $(x_1^n,x_2^n) \in \mathcal{T}^n_{\varepsilon}(X_1,X_2)$. Then for any nonnegative function $g$ on $\mathcal{X}_1\times\mathcal{X}_2$:
        \[ (1-\varepsilon)\mathbb{E}[g(X_1,X_2)] \leq \frac{1}{n}\sum_{i=1}^ng(x_{1,i},x_{2,i}) \leq (1+\varepsilon)\mathbb{E}[g(X_1,X_2)] \ . \]
      \end{lem}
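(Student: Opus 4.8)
The plan is to rewrite the empirical average as an expectation with respect to the joint type $\pi(\cdot,\cdot\,|\,x_1^n,x_2^n)$ and then transfer the pointwise bounds defining the typical set to the sum using the nonnegativity of $g$. Concretely, grouping the indices $i \in [n]$ according to the value of the pair $(x_{1,i},x_{2,i})$ gives
\[ \frac{1}{n}\sum_{i=1}^n g(x_{1,i},x_{2,i}) = \sum_{(x_1,x_2) \in \mathcal{X}_1 \times \mathcal{X}_2} \pi(x_1,x_2\,|\,x_1^n,x_2^n)\, g(x_1,x_2) \ , \]
which is merely a reindexing of the sum over $i$ by the finitely many possible values of $(x_{1,i},x_{2,i})$.

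Next I would invoke the hypothesis $(x_1^n,x_2^n) \in \mathcal{T}^n_{\varepsilon}(X_1,X_2)$, which by definition is equivalent to the two-sided bound
\[ (1-\varepsilon) P_{X_1X_2}(x_1,x_2) \leq \pi(x_1,x_2\,|\,x_1^n,x_2^n) \leq (1+\varepsilon) P_{X_1X_2}(x_1,x_2) \]
for every $(x_1,x_2) \in \mathcal{X}_1 \times \mathcal{X}_2$. Since $g(x_1,x_2) \geq 0$, multiplying through by $g(x_1,x_2)$ preserves both inequalities, and summing over all $(x_1,x_2)$ then yields
\[ (1-\varepsilon) \sum_{(x_1,x_2)} P_{X_1X_2}(x_1,x_2) g(x_1,x_2) \leq \sum_{(x_1,x_2)} \pi(x_1,x_2\,|\,x_1^n,x_2^n) g(x_1,x_2) \leq (1+\varepsilon) \sum_{(x_1,x_2)} P_{X_1X_2}(x_1,x_2) g(x_1,x_2) \ . \]
Recognizing the outer sums as $(1-\varepsilon)\mathbb{E}[g(X_1,X_2)]$ and $(1+\varepsilon)\mathbb{E}[g(X_1,X_2)]$, and the middle sum as the empirical average computed in the first display, finishes the proof.

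There is essentially no obstacle here: the only point worth care is the role of nonnegativity of $g$, which is precisely what allows the two-sided pointwise bounds on the type to be summed termwise without terms of opposite sign reversing the direction of the inequality; note also that on atoms with $P_{X_1X_2}(x_1,x_2) = 0$ the typicality condition forces $\pi(x_1,x_2\,|\,x_1^n,x_2^n) = 0$, so such terms contribute nothing on either side. As this is a standard fact (Chapter~2.5 of~\cite{GK11}), one could alternatively just cite it; we record the short argument for completeness.
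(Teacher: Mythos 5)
Your proof is correct and is exactly the standard argument for this lemma: rewrite the empirical average as an expectation against the type $\pi(\cdot,\cdot\,|\,x_1^n,x_2^n)$, apply the two-sided bound defining $\mathcal{T}^n_{\varepsilon}(X_1,X_2)$, and use nonnegativity of $g$ to sum the pointwise inequalities. The paper itself gives no proof and simply cites Chapter~2.5 of~\cite{GK11}, where this same argument appears, so there is nothing to reconcile.
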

      
      In particular, with this tool, we can derive the following properties:
      
      \begin{lem}[Properties of typical sets~\cite{GK11}]
        We have, among others, the following statements about typical sets:
        \begin{enumerate}
        \item $\forall (x_1^n,x_2^n) \in \mathcal{T}^n_{\varepsilon}(X_1,X_2), 2^{-n(1+\varepsilon)H(X_1,X_2)} \leq P_{X_1^nX_2^n}(x_1^n,x_2^n) \leq 2^{-n(1-\varepsilon)H(X_1,X_2)}$.
        \item $\underset{n \rightarrow +\infty}{\lim} \mathbb{P}\left((X_1^n,X_2^n) \in \mathcal{T}^n_{\varepsilon}(X_1,X_2) \right) = 1$.
        \item $|\mathcal{T}^n_{\varepsilon}(X_1,X_2)| \leq 2^{n(1+\varepsilon)H(X_1,X_2)}$.
        \item For $n$ sufficiently large, $|\mathcal{T}^n_{\varepsilon}(X_1,X_2)| \geq (1-\varepsilon)2^{n(1-\varepsilon)H(X_1,X_2)}$.
        \item If $(x_1^n,x_2^n) \in \mathcal{T}^n_{\varepsilon}(X_1,X_2)$ then $x_1^n \in \mathcal{T}^n_{\varepsilon}(X_1)$ and $x_2^n \in \mathcal{T}^n_{\varepsilon}(X_2)$.
        \item $\forall y^n \in \mathcal{T}^n_{\varepsilon}(Y), \mathcal{T}^n_{\varepsilon}(X_1,X_2|y^n) \subseteq \mathcal{T}^n_{\varepsilon}(X_1,X_2)$.
        \item $\forall (x_1^n,x_2^n,y^n) \in \mathcal{T}^n_{\varepsilon}(X_1,X_2,Y), 2^{-n(1+\varepsilon)H(X_1,X_2|Y)} \leq P_{X_1^nX_2^nY^n}(x_1^n,x_2^n|y^n) \leq 2^{-n(1-\varepsilon)H(X_1,X_2|Y)}$.
        \item $\forall y^n \in \mathcal{T}^n_{\varepsilon}(Y), |\mathcal{T}^n_{\varepsilon}(X_1,X_2|y^n)| \leq 2^{n(1+\varepsilon)H(X_1,X_2|Y)}$.
        \item For $\varepsilon' < \varepsilon$ and $n$ sufficiently large, we get $\forall y^n \in \mathcal{T}^n_{\varepsilon'}(Y), |\mathcal{T}^n_{\varepsilon}(X_1,X_2|y^n)| \geq (1-\varepsilon)2^{n(1-\varepsilon)H(X_1,X_2|Y)}$.
        \end{enumerate}
      \end{lem}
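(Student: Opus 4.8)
The plan is to obtain every item from just two inputs: the definition of $\varepsilon$-typicality through empirical types $\pi(\cdot|x_1^n,x_2^n)$, and the Typical Average Lemma stated just above. The two entropy estimates (items~1 and~7) come directly from applying the Typical Average Lemma to the nonnegative function $g(x_1,x_2):=-\log P_{X_1X_2}(x_1,x_2)$, whose expectation is $H(X_1,X_2)$: since $P_{X_1^nX_2^n}(x_1^n,x_2^n)=\prod_{i=1}^n P_{X_1X_2}(x_{1,i},x_{2,i})$, we get $(1-\varepsilon)H(X_1,X_2)\le -\tfrac1n\log P_{X_1^nX_2^n}(x_1^n,x_2^n)\le(1+\varepsilon)H(X_1,X_2)$ for $(x_1^n,x_2^n)\in\mathcal{T}^n_\varepsilon(X_1,X_2)$, which is item~1. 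Item~7 is identical with $g(x_1,x_2,y):=-\log P_{X_1X_2|Y}(x_1,x_2|y)$, $\mathbb{E}[g]=H(X_1,X_2|Y)$, $P_{X_1^nX_2^n|Y^n}(x_1^n,x_2^n|y^n)=\prod_i P_{X_1X_2|Y}(x_{1,i},x_{2,i}|y_i)$, evaluated at a triple $(x_1^n,x_2^n,y^n)\in\mathcal{T}^n_\varepsilon(X_1,X_2,Y)$.

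Next come the cardinality bounds (items~3, 4, 8). The upper bounds follow from summing probabilities: $1\ge\sum_{(x_1^n,x_2^n)\in\mathcal{T}^n_\varepsilon(X_1,X_2)}P_{X_1^nX_2^n}(x_1^n,x_2^n)\ge|\mathcal{T}^n_\varepsilon(X_1,X_2)|\cdot 2^{-n(1+\varepsilon)H(X_1,X_2)}$ using the lower bound of item~1; item~8 is the same computation conditioned on $y^n$ using item~7. For the lower bound (item~4), item~2 gives $\mathbb{P}((X_1^n,X_2^n)\in\mathcal{T}^n_\varepsilon(X_1,X_2))\ge 1-\varepsilon$ for $n$ large, and combining this with the upper bound of item~1 yields $1-\varepsilon\le|\mathcal{T}^n_\varepsilon(X_1,X_2)|\cdot 2^{-n(1-\varepsilon)H(X_1,X_2)}$. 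Item~2 itself is the weak law of large numbers applied to the indicator of the event $(X_{1,i},X_{2,i})=(x_1,x_2)$ for each fixed pair, followed by a union bound over the finite alphabet $\mathcal{X}_1\times\mathcal{X}_2$.

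The stability items~5 and~6 are pure type manipulations. If $(x_1^n,x_2^n)\in\mathcal{T}^n_\varepsilon(X_1,X_2)$, then $\pi(x_1|x_1^n)=\sum_{x_2}\pi(x_1,x_2|x_1^n,x_2^n)$ and $P_{X_1}(x_1)=\sum_{x_2}P_{X_1X_2}(x_1,x_2)$, so by the triangle inequality $|\pi(x_1|x_1^n)-P_{X_1}(x_1)|\le\sum_{x_2}\varepsilon P_{X_1X_2}(x_1,x_2)=\varepsilon P_{X_1}(x_1)$, i.e.\ $x_1^n\in\mathcal{T}^n_\varepsilon(X_1)$, and symmetrically for $x_2^n$; item~6 is the analogous projection of $\mathcal{T}^n_\varepsilon(X_1,X_2,Y)$ onto the $(X_1,X_2)$ coordinates.

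The one step that does \emph{not} reduce to a single application of the Typical Average Lemma — and the main obstacle — is item~9, the conditional typicality lemma, where the slack $\varepsilon'<\varepsilon$ is essential. I would prove that for $y^n\in\mathcal{T}^n_{\varepsilon'}(Y)$ fixed and $n$ large, a random $(X_1^n,X_2^n)$ drawn from $\prod_i P_{X_1X_2|Y}(\cdot|y_i)$ lands in $\mathcal{T}^n_\varepsilon(X_1,X_2|y^n)$ with probability at least $1-\varepsilon$, via a second-moment (Chebyshev) argument: for each pair $(x_1,x_2)$ the empirical count is a sum of conditionally independent Bernoulli terms with conditional mean $\sum_y\pi(y|y^n)P_{X_1X_2|Y}(x_1,x_2|y)$, which lies within $\varepsilon'$ of $P_{X_1X_2}(x_1,x_2)$ by $y^n$-typicality, while a variance bound of order $1/n$ forces concentration within the remaining slack $\varepsilon-\varepsilon'$ with probability tending to $1$, uniformly over the finitely many $(x_1,x_2)$. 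Once this conditional high-probability statement is established, item~9 follows exactly as item~4 did, using item~7: $1-\varepsilon\le\sum_{(x_1^n,x_2^n)\in\mathcal{T}^n_\varepsilon(X_1,X_2|y^n)}P_{X_1^nX_2^n|Y^n}(x_1^n,x_2^n|y^n)\le|\mathcal{T}^n_\varepsilon(X_1,X_2|y^n)|\cdot 2^{-n(1-\varepsilon)H(X_1,X_2|Y)}$. All of these are standard facts from Chapter~2.4--2.5 of~\cite{GK11}, so the write-up may simply cite them; the above is the route I would follow to make the excerpt self-contained.
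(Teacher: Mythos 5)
Your overall route matches the paper's: the paper also treats items 1--8 as standard citations to~\cite{GK11} (Typical Average Lemma for the entropy bounds, probability sums for the cardinality bounds, type marginalization for the stability claims) and only writes out item~9 in detail, precisely because the uniformity of the ``$n$ sufficiently large'' threshold over all $y^n \in \mathcal{T}^n_{\varepsilon'}(Y)$ is what is actually needed later. Your reduction of item~9 to a conditional high-probability statement plus item~7 is exactly the paper's reduction, and the Chebyshev-plus-union-bound mechanism, with the slack $\varepsilon-\varepsilon'$ absorbing the deviation permitted by $y^n$-typicality, is also the paper's.

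There is, however, one concrete imprecision in your item~9 that you should fix. You propose to concentrate, for each \emph{pair} $(x_1,x_2)$, the empirical count $\pi(x_1,x_2\,|\,X_1^n,X_2^n)$ around $P_{X_1X_2}(x_1,x_2)$. That only establishes $(X_1^n,X_2^n)\in\mathcal{T}^n_{\varepsilon}(X_1,X_2)$, whereas membership in $\mathcal{T}^n_{\varepsilon}(X_1,X_2|y^n)$ is defined by \emph{joint} typicality of the triple $(X_1^n,X_2^n,y^n)$, i.e.\ control of $\pi(x_1,x_2,y\,|\,X_1^n,X_2^n,y^n)$ for every triple $(x_1,x_2,y)$. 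The correct argument (the one in the paper) fixes a triple $(x_1,x_2,y)$ with $P_Y(y)>0$, restricts to the index set $S=\{i:y_i=y\}$ of size $N=n\pi(y|y^n)\geq(1-\varepsilon')P_Y(y)\,n$, and applies Chebyshev to the i.i.d.\ indicators $Z_i=\mathbbm{1}_{(X_{1,i},X_{2,i})=(x_1,x_2)}$, $i\in S$, each with mean $P_{X_1X_2|Y}(x_1,x_2|y)$; the lower bound on $N$ is exactly what makes the convergence rate uniform over $y^n\in\mathcal{T}^n_{\varepsilon'}(Y)$, and the factor $\pi(y|y^n)/P_Y(y)\in[1-\varepsilon',1+\varepsilon']$ is what consumes the slack between $\varepsilon'$ and $\varepsilon$. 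This is a local repair — your variance bound and union bound carry over verbatim once the indexing is by triples — but as literally written your concentration statement proves the wrong event.
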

      \begin{proof}
        We reproduce the proof of the last statement here to emphasize on the fact that there is an $n_0$ such that for all $n \geq n_0$ and for all $y^n \in \mathcal{T}^n_{\varepsilon'}(Y)$, the property holds.

        For any $\varepsilon > \varepsilon' > 0$, let us show that there exists $n$ such that we have:
        \[ \forall y^n \in \mathcal{T}^n_{\varepsilon'}(Y), \mathbb{P}\left((X_1^n,X_2^n,y^n) \in \mathcal{T}^n_{\varepsilon}(X_1,X_2,Y) \right) \geq 1-\varepsilon \ , \]
        
        where $X_1^n, X_2^n$ are drawn from the distribution $P_{X_1^nX_2^n|Y^n=y^n}$. This will imply the statement. Indeed, we have that:

        \begin{equation}
          \begin{aligned}
            \mathbb{P}\left((X_1^n,X_2^n,y^n) \in \mathcal{T}^n_{\varepsilon}(X_1,X_2,Y) \right) &= \sum_{(x_1^n,x_2^n) \in \mathcal{T}^n_{\varepsilon}(X_1,X_2|y^n)} P_{X_1^nX_2^n|Y^n}(x_1^n,x_2^n|y^n) \\
            &\leq |\mathcal{T}^n_{\varepsilon}(X_1,X_2|y^n)|2^{-n(1-\varepsilon)H(X_1,X_2|Y)} \ ,
          \end{aligned}
        \end{equation}
      
        since $P_{X_1^nX_2^n|Y^n}(x_1^n,x_2^n|y^n) \leq 2^{-n(1-\varepsilon)H(X_1,X_2|Y)}$ as $(x_1^n,x_2^n,y^n) \in \mathcal{T}^n_{\varepsilon}(X_1,X_2,Y)$. Thus, we have that $|\mathcal{T}^n_{\varepsilon}(X_1,X_2|y^n)| \geq (1-\varepsilon)2^{n(1-\varepsilon)H(X_1,X_2|Y)}$. In order to prove our result, we take the proof in Appendix 2A of~\cite{GK11}. We take $y^n \in \mathcal{T}^n_{\varepsilon'}(\mathcal{Y})$ and $(X_1^n,X_2^n) \sim P_{X_1^nX_2^n|Y^n}(x_1^n,x_2^n|y^n) = \prod_{i=1}^nP_{X_1X_2|Y}(x_{1,i},x_{2,i}|y_i)$. Applied to our choice of variables, we have the following result :

        \begin{equation}
          \begin{aligned}
            &\mathbb{P}\left((X_1^n,X_2^n,y^n) \notin \mathcal{T}^n_{\varepsilon}(X_1,X_2,Y) \right)\\
            &= \mathbb{P}\left(\exists (x_1,x_2,y) : |\pi(x_1,x_2,y|X_1^n,X_2^n,y^n) - P_{X_1X_2Y}(x_1,x_2,y)| > \varepsilon P_{X_1X_2Y}(x_1,x_2,y) \right) \\
            &\leq \sum_{x_1,x_2,y} \mathbb{P}\left(|\pi(x_1,x_2,y|X_1^n,X_2^n,y^n) - P_{X_1X_2Y}(x_1,x_2,y)| > \varepsilon P_{X_1X_2Y}(x_1,x_2,y) \right) \text{ by union bound,} \\
            &= \sum_{x_1,x_2,y} \mathbb{P}\left(\left|\frac{\pi(x_1,x_2,y|X_1^n,X_2^n,y^n)}{P_{X_1X_2Y}(x_1,x_2,y)} - 1\right| > \varepsilon \right) \\
            &= \sum_{x_1,x_2,y} \mathbb{P}\left(\left|\frac{\pi(x_1,x_2,y|X_1^n,X_2^n,y^n)}{P_{X_1X_2|Y}(x_1,x_2|y)\pi(y|y^n)}\frac{\pi(y|y^n)}{P_Y(y)} - 1\right| > \varepsilon \right) \\
            &\leq \sum_{x_1,x_2,y} \mathbb{P}\left(\frac{\pi(x_1,x_2,y|X_1^n,X_2^n,y^n)}{\pi(y|y^n)} > \frac{1+\varepsilon}{1+\varepsilon'}P_{X_1X_2|Y}(x_1,x_2|y) \right)\\
            &+ \sum_{x_1,x_2,y} \mathbb{P}\left(\frac{\pi(x_1,x_2,y|X_1^n,X_2^n,y^n)}{\pi(y|y^n)} < \frac{1-\varepsilon}{1-\varepsilon'}P_{X_1X_2|Y}(x_1,x_2|y) \right) \ ,
          \end{aligned}
        \end{equation}
        since $y^n \in  \mathcal{T}^n_{\varepsilon'}(\mathcal{Y})$ and thus $1-\varepsilon' \leq \frac{\pi(y|y^n)}{P_Y(y)} \leq 1+\varepsilon'$. However, since $\varepsilon' < \varepsilon$, we have $\frac{1+\varepsilon}{1+\varepsilon'} > 1$ and $\frac{1-\varepsilon}{1-\varepsilon'} < 1$. We will show that for all $x_1,x_2,y$ with $P_Y(y) > 0$, we have $\frac{\pi(x_1,x_2,y|X_1^n,X_2^n,y^n)}{\pi(y|y^n)} \underset{n \rightarrow +\infty}{\rightarrow} P_{X_1X_2|Y}(x_1,x_2|y)$ in probability, with a convergence rate independent from $y^n \in \mathcal{T}^n_{\varepsilon'}(Y)$, which will be enough to conclude the proof.

        Let us fix some $x_1,x_2,y$ with $P_Y(y) > 0$. Since $y^n \in \mathcal{T}^n_{\varepsilon'}(Y)$, we have in particular $(1-\varepsilon')P_Y(y) \leq \pi(y|y^n) \leq (1+\varepsilon')P_Y(y)$. Thus $N := |\{i:y_i=y\}| = n\pi(y|y^n) \geq (1-\varepsilon')P_Y(y)n$. Then we have:

        \[ \frac{\pi(x_1,x_2,y|X_1^n,X_2^n,y^n)}{\pi(y|y^n)} = \frac{1}{N}\sum_{i \in S}Z_i \text{ with }Z_i := \mathbbm{1}_{(X_{1,i},X_{2,i})=(x_1,x_2)} \text{ and } S := \{i:y_i=y\}\ . \]

        Thus, all $Z_i$ with $i \in S$ are independent and follow the same law:
        
          \[ Z_i := \begin{cases}
            1 & \text{ with probability } P_{X_1X_2|Y}(x_1,x_2|y) \\
            0 & \text{ otherwise}
          \end{cases}
          \]

          Furthermore, we have $\mathbb{E}[Z_i] = P_{X_1X_2|Y}(x_1,x_2|y)$, and all $Z_i$ have the same variance $\sigma_{x_1,x_2|y}^2 < +\infty$ (depending only on $X_1,X_2,Y,x_1,x_2,y$). Thus we can apply Chebyshev inequality:

          \[ \mathbb{P}\left( \left|\frac{1}{N}\sum_{i \in S} Z_i - P_{X_1X_2|Y}(x_1,x_2|y) \right| \geq \eta \right) \leq \frac{\sigma_{x_1,x_2|y}^2}{N\eta^2} \ .\]

          However, since $N \geq (1-\varepsilon')P_Y(y)n$, we get:

          \[ \mathbb{P}\left( \left|\frac{\pi(x_1,x_2,y|X_1^n,X_2^n,y^n)}{\pi(y|y^n)} - P_{X_1X_2|Y}(x_1,x_2|y) \right| \geq \eta \right) \leq \frac{\sigma_{x_1,x_2|y}^2}{\eta^2(1-\varepsilon')P_Y(y)n} \underset{n \rightarrow +\infty}{\rightarrow} 0 \ .\]

          Thus, we have $\frac{\pi(x_1,x_2,y|X_1^n,X_2^n,y^n)}{\pi(y|y^n)} \underset{n \rightarrow +\infty}{\rightarrow} P_{X_1X_2|Y}(x_1,x_2|y)$ in probability with a convergence rate independent from $y^n \in \mathcal{T}^n_{\varepsilon'}(Y)$.
      \end{proof}

  \begin{prop}[Achievability part of Theorem~\ref{theo:CharaNSrelaxed}]      
      \label{prop:AchievabilityNSrelaxed}
      If a rate pair is in the closure of the convex hull of $(R_1,R_2)$ satisfying:
      \[ R_1 < I(X_1:Y|X_2)\ ,\ R_2 < I(X_2:Y|X_1)\ ,\ R_1+R_2 < I((X_1,X_2):Y) \ ,\]
      for $(X_1,X_2)$ following some law $P_{X_1X_2}$ on $\mathcal{X}_1 \times \mathcal{X}_2$, and $Y \in \mathcal{Y}$ the outcome of $W$ on inputs $X_1,X_2$, then it is in $\mathcal{C}^{\overline{\mathrm{NS}}}(W)$.
  \end{prop}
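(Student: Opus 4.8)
\emph{Proof plan.} The plan is to build, for each $n$, an explicit feasible point of the relaxed non-signaling linear program for $W^{\otimes n}$ directly out of the jointly typical set, so that its objective value tends to $1$; this is the classical MAC achievability argument with the decoding step deleted, since the relaxed non-signaling relaxation resolves the ambiguity for free. First I would reduce to a single input distribution: because $\mathcal{C}^{\overline{\mathrm{NS}}}(W)$ is by definition the closure of the convex hull of the rate pairs achievable with relaxed non-signaling assistance, and since closure and convex hull are monotone, it suffices to prove that for every law $P_{X_1X_2}$ on $\mathcal{X}_1\times\mathcal{X}_2$ (with $Y$ the output of $W$) and every $(R_1,R_2)$ with $R_1<I(X_1:Y|X_2)$, $R_2<I(X_2:Y|X_1)$ and $R_1+R_2<I((X_1,X_2):Y)$, one has $\mathrm{S}^{\overline{\mathrm{NS}}}(W^{\otimes n},\lceil 2^{nR_1}\rceil,\lceil 2^{nR_2}\rceil)\to 1$ as $n\to\infty$.

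Fix such a $P_{X_1X_2}$ and such rates, let $k_i:=\lceil 2^{nR_i}\rceil$, write $P_{X_1^nX_2^n}:=P_{X_1X_2}^{\otimes n}$, and let $P_{X_1^nX_2^nY^n}$ be the law obtained by sending $(X_1^n,X_2^n)$ through $W^{\otimes n}$. For a small $\varepsilon>0$ to be fixed below I would take
\[ p_{x_1^n,x_2^n}:=k_1k_2\,P_{X_1^nX_2^n}(x_1^n,x_2^n),\qquad r_{x_1^n,x_2^n,y^n}:=p_{x_1^n,x_2^n}\,\mathbbm{1}\!\left[(x_1^n,x_2^n,y^n)\in\mathcal{T}^n_\varepsilon(X_1,X_2,Y)\right]. \]
Then $\sum p_{x_1^n,x_2^n}=k_1k_2$ and $0\le r\le p$ hold by construction, and the objective value equals $\mathbb{P}_{P_{X_1^nX_2^nY^n}}\!\left[(X_1^n,X_2^n,Y^n)\in\mathcal{T}^n_\varepsilon(X_1,X_2,Y)\right]$, which tends to $1$ by the law of large numbers. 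It remains to verify the three inequality constraints. For $\sum_{x_1^n,x_2^n}r_{x_1^n,x_2^n,y^n}\le 1$: if $y^n\notin\mathcal{T}^n_\varepsilon(Y)$ the sum vanishes, while for typical $y^n$ one bounds the number of jointly typical $(x_1^n,x_2^n)$ compatible with $y^n$ by $2^{n(1+\varepsilon)H(X_1,X_2|Y)}$ and each $P_{X_1^nX_2^n}(x_1^n,x_2^n)$ by $2^{-n(1-\varepsilon)H(X_1,X_2)}$, giving $\sum r\le k_1k_2\,2^{-n[(1-\varepsilon)H(X_1,X_2)-(1+\varepsilon)H(X_1,X_2|Y)]}$, which is $\le 1$ once $R_1+R_2$ is below the bracketed threshold and $n$ is large; that threshold tends to $I((X_1,X_2):Y)$ as $\varepsilon\to 0$. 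For $\sum_{x_1^n}p_{x_1^n,x_2^n}\ge k_1\sum_{x_1^n}r_{x_1^n,x_2^n,y^n}$, dividing by $k_1k_2 P_{X_2^n}(x_2^n)$ reduces the constraint to $k_1\sum_{x_1^n}P_{X_1^n|X_2^n}(x_1^n|x_2^n)\,\mathbbm{1}[(x_1^n,x_2^n,y^n)\in\mathcal{T}^n_\varepsilon]\le 1$; on the support of the indicator $(x_1^n,x_2^n)$ is jointly typical, so the typical average lemma applied to $g=-\log P_{X_1|X_2}$ gives $P_{X_1^n|X_2^n}(x_1^n|x_2^n)\le 2^{-n(1-\varepsilon)H(X_1|X_2)}$, while the number of admissible $x_1^n$ is at most $2^{n(1+\varepsilon)H(X_1|X_2,Y)}$ (the conditional typical-set size bound, proved exactly as the properties above via a counting argument with the typical average lemma). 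Hence the left-hand side is $\le k_1\,2^{-n[(1-\varepsilon)H(X_1|X_2)-(1+\varepsilon)H(X_1|X_2,Y)]}\le 1$ once $R_1$ is below the bracketed threshold, which tends to $I(X_1:Y|X_2)$ as $\varepsilon\to 0$. The constraint $\sum_{x_2^n}p_{x_1^n,x_2^n}\ge k_2\sum_{x_2^n}r_{x_1^n,x_2^n,y^n}$ is handled identically with the roles of the two senders swapped, requiring $R_2<I(X_2:Y|X_1)$.

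Since all three hypotheses are strict inequalities, I can fix $\varepsilon>0$ small enough that the three thresholds above strictly exceed $R_1+R_2$, $R_1$ and $R_2$ respectively, also dominating the $O(1/n)$ losses in the exponents coming from the ceilings $\lceil 2^{nR_i}\rceil$ (which only multiply $k_1,k_2$ by $O(1)$); then for all large $n$ the pair $(p,r)$ is feasible and its value tends to $1$, so $\mathrm{S}^{\overline{\mathrm{NS}}}(W^{\otimes n},k_1,k_2)\to 1$ and $(R_1,R_2)$ is achievable with relaxed non-signaling assistance. Taking the closure of the convex hull then yields the stated inclusion. The step I expect to be the main obstacle is purely the bookkeeping in the typical-set estimates: one must fix $\varepsilon$ first, uniformly over all three constraints, and only afterwards let $n\to\infty$, and one must ensure that the conditional-typicality size bounds are applied only to tuples $(x_2^n,y^n)$ (resp.\ $y^n$) that are themselves typical --- which is automatic here because projections of jointly typical sequences are jointly typical, so every tuple produced by the indicator satisfies the needed hypotheses.
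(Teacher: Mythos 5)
Your proposal is correct, and it follows the same overall template as the paper (build a feasible point of the relaxed LP out of jointly typical sets, with the objective equal to the probability mass of the typical set), but the construction of $p$ is genuinely different in a way that changes the technical content. The paper restricts $p_{x_1^n,x_2^n}$ to the typical set $\mathcal{T}^n_{\varepsilon}(X_1,X_2)$ and normalizes by $\sum_{\mathcal{T}^n_{\varepsilon}}P_{X_1^nX_2^n}$; as a consequence its constraint verifications need \emph{lower} bounds on (conditional) typical-set cardinalities, which forces the two-parameter device $\varepsilon'<\varepsilon$ and a uniformity-in-$y^n$ version of the conditional typicality lemma that the paper reproves at length. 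You instead take $p_{x_1^n,x_2^n}=k_1k_2\,P_{X_1^nX_2^n}(x_1^n,x_2^n)$ with full support, so $\sum p=k_1k_2$ is exact, every constraint check uses only \emph{upper} bounds (cardinality upper bounds for conditional typical sets and the typical average lemma applied to $-\log P_{X_1|X_2}$, both of which hold for a single $\varepsilon$ without any uniformity issue), and the objective value is exactly $\mathbb{P}[(X_1^n,X_2^n,Y^n)\in\mathcal{T}^n_{\varepsilon}]\to 1$. This is a mild but real simplification: it eliminates the paper's main technical hurdle. Your handling of the remaining details (triviality of the marginal constraint when $P_{X_2^n}(x_2^n)=0$ or when the relevant conditional typical set is empty, fixing $\varepsilon$ before letting $n\to\infty$, absorbing the ceilings into the strict rate inequalities, and reducing to a single $P_{X_1X_2}$ via the closure of the convex hull in the definition of $\mathcal{C}^{\overline{\mathrm{NS}}}(W)$) is sound.
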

  \begin{proof}
    Let us fix $\varepsilon,\varepsilon' \in (0,1)$ such that $\varepsilon' < \varepsilon \leq \frac{1}{2}$. Let $n \in \mathbb{N}$ which will be chosen large enough during the proof.

    We consider $n$ independent random variables $(X_{1,i}X_{2,i}Y_i) \sim P_{X_1X_2Y}$, with $P_{X_1X_2Y}(x_{1,i},x_{2,i},y_i) = W(y_i|x_{1,i}x_{2,i})P_{X_1X_2}(x_{1,i},x_{2,i})$. We call $P_{X_1^nX_2^nY^n} $ the law of their product. We have then $P_{X_1^nX_2^n}(x_1^n,x_2^n) := \prod_{i=1}^n P_{X_1X_2}(x_{1,i},x_{2,i})$. If $\hat{Y}$ is the output of $W^{\otimes n}$ on $X_1^nX_2^n$, we have that:
      \begin{equation}
        \begin{aligned}
          P_{X_1^nX_2^n\hat{Y}}(x_1^n,x_2^n,y^n) &= W^{\otimes n}(y^n|x_1^nx_2^n)P_{X_1^nX_2^n}(x_1^n,x_2^n) = W^{\otimes n}(y^n|x_1^nx_2^n)\prod_{i=1}^n P_{X_1X_2}(x_{1,i},x_{2,i})\\
          &= \prod_{i=1}^n W(y_i|x_{1,i}x_{2,i})P_{X_1X_2}(x_{1,i},x_{2,i}) = \prod_{i=1}^n P_{X_1X_2Y}(x_{1,i},x_{2,i},y_i) \ .
        \end{aligned}
      \end{equation}

      Thus, $\hat{Y}$ follows the product law of $Y_i$, i.e.  $\hat{Y} = Y^n$.
      
  Let us consider $C_1,C_2,C_3$ some positive numbers independent from $n$ and $\varepsilon$ which we will define later,  $k_1 = 2^{nR_1}$, $k_2 = 2^{nR_2}$ integers with $(R_1,R_2)$ positive rates such that:
  \begin{equation}
      \begin{aligned}
        R_1 &\leq I(X_1:Y|X_2) - \frac{1}{n} - C_1\varepsilon \ ,\\
        R_2 &\leq I(X_2:Y|X_1) - \frac{1}{n} - C_2\varepsilon \ ,\\
        R_1+R_2 &\leq I((X_1,X_2):Y) - \frac{1}{n} - C_3\varepsilon \ .
      \end{aligned}
    \end{equation}
  We define a solution of $\mathrm{S}^{\overline{\mathrm{NS}}}(W^{\otimes n},2^{nR_1},2^{nR_2})$ in the following way:
      
  \[ p_{x_1^n,x_2^n} := \begin{cases}
    \frac{2^{n(R_1+R_2)}P_{X_1^nX_2^n}(x_1^n,x_2^n)}{\sum_{(x_1^n,x_2^n) \in \mathcal{T}^n_{\varepsilon}(X_1,X_2)}P_{X_1^nX_2^n}(x_1^n,x_2^n)} & \text{ if } (x_1^n,x_2^n) \in \mathcal{T}^n_{\varepsilon}(X_1,X_2) \\
    0 & \text{ otherwise}
  \end{cases}
  \]

  and

  \[ r_{x_1^n,x_2^n,y^n} := \begin{cases}
    p_{x_1^n,x_2^n} & \text{ if } (x_1^n,x_2^n,y^n) \in \mathcal{T}^n_{\varepsilon'}(X_1,X_2,Y) \\
    0 & \text{ otherwise}
  \end{cases}
    \]

    By construction, the constraint $0 \leq r_{x_1^n,x_2^n,y^n} \leq p_{x_1^n,x_2^n}$ is satisfied. We have also that:
    \[\sum_{x_1^n,x_2^n} p_{x_1^n,x_2^n} = \sum_{(x_1^n,x_2^n) \in \mathcal{T}^n_{\varepsilon}(X_1,X_2)} \frac{2^{n(R_1+R_2)}P_{X_1^nX_2^n}(x_1^n,x_2^n)}{\sum_{(x_1^n,x_2^n) \in \mathcal{T}^n_{\varepsilon}(X_1,X_2)}P_{X_1^nX_2^n}(x_1^n,x_2^n)} = 2^{n(R_1+R_2)} = k_1k_2 \ . \]

    If $(x_1^n,x_2^n,y^n) \in \mathcal{T}^n_{\varepsilon'}(X_1,X_2,Y)$, we have that $(x_1^n,x_2^n) \in \mathcal{T}^n_{\varepsilon'}(X_1,X_2) \subseteq \mathcal{T}^n_{\varepsilon}(X_1,X_2)$, so in that case:
    \[ r_{x_1^n,x_2^n,y^n} = \frac{2^{n(R_1+R_2)}P_{X_1^nX_2^n}(x_1^n,x_2^n)}{\sum_{(x_1^n,x_2^n) \in \mathcal{T}^n_{\varepsilon}(X_1,X_2)}P_{X_1^nX_2^n}(x_1^n,x_2^n)} \ . \]

    If $y^n \not\in \mathcal{T}^n_{\varepsilon'}(Y)$, then for all $(x_1^n,x_2^n)$, $(x_1^n,x_2^n,y^n) \notin \mathcal{T}^n_{\varepsilon'}(X_1,X_2,Y)$, so $\sum_{x_1^n,x_2^n} r_{x_1^n,x_2^n,y^n}  = 0 \leq 1$ in that case.

    Otherwise, if $y^n \in \mathcal{T}^n_{\varepsilon'}(Y)$, then:
    \begin{equation}
      \begin{aligned}
        \sum_{x_1^n,x_2^n} r_{x_1^n,x_2^n,y^n} &= 2^{n(R_1+R_2)} \frac{\sum_{(x_1^n,x_2^n) \in \mathcal{T}^n_{\varepsilon'}(X_1,X_2|y^n)} P_{X_1^nX_2^n}(x_1^n,x_2^n)}{\sum_{(x_1^n,x_2^n) \in \mathcal{T}^n_{\varepsilon}(X_1,X_2)}P_{X_1^nX_2^n}(x_1^n,x_2^n)}\\
        &\leq 2^{n(R_1+R_2)} \frac{\sum_{(x_1^n,x_2^n) \in \mathcal{T}^n_{\varepsilon}(X_1,X_2|y^n)} P_{X_1^nX_2^n}(x_1^n,x_2^n)}{\sum_{(x_1^n,x_2^n) \in \mathcal{T}^n_{\varepsilon}(X_1,X_2)}P_{X_1^nX_2^n}(x_1^n,x_2^n)}\\
        &\leq 2^{n(R_1+R_2)}\frac{2^{-n(1-\varepsilon)H(X_1,X_2)}}{2^{-n(1+\varepsilon)H(X_1,X_2)}} \frac{|\mathcal{T}^n_{\varepsilon}(X_1,X_2|y^n)|}{|\mathcal{T}^n_{\varepsilon}(X_1,X_2)|} \text{ since $(x_1^n,x_2^n) \in \mathcal{T}^n_{\varepsilon}(X_1,X_2)$} \\
        &= 2^{n(R_1+R_2 + 2\varepsilon H(X_1,X_2))}\frac{|\mathcal{T}^n_{\varepsilon}(X_1,X_2|y^n)|}{|\mathcal{T}^n_{\varepsilon}(X_1,X_2)|} \ .
      \end{aligned}
    \end{equation}

    But $|\mathcal{T}^n_{\varepsilon}(X_1,X_2|y^n)| \leq 2^{n(1+\varepsilon)H(X_1,X_2|Y)}$ and for a large enough $n$ we have that $|\mathcal{T}^n_{\varepsilon}(X_1,X_2)| \geq (1-\varepsilon)2^{n(1-\varepsilon)H(X_1,X_2)} \geq 2^{n\left((1-\varepsilon)H(X_1,X_2) - \frac{1}{n}\right)}$, so in that case:

    \[ \sum_{x_1^n,x_2^n} r_{x_1^n,x_2^n,y^n} \leq 2^{n(R_1+R_2 + 2\varepsilon H(X_1,X_2))}\frac{2^{n(1+\varepsilon)H(X_1,X_2|Y)}}{2^{n\left((1-\varepsilon)H(X_1,X_2) - \frac{1}{n}\right)}} = 2^{n\left(R_1+R_2 - I(X_1,X_2:Y) + \frac{1}{n} + C_3\varepsilon\right)} \leq 1 \ ,\]

    since $I(X_1,X_2:Y) = H(X_1,X_2) - H(X_1,X_2|Y)$ and $R_1+R_2 \leq I(X_1,X_2:Y) - \frac{1}{n} - C_3\varepsilon$, with $C_3 := H(X_1,X_2|Y) + 3H(X_1,X_2)$.

    Let us focus on the constraint $\sum_{x_1^n} p_{x_1^n,x_2^n} \geq k_1 \sum_{x_1^n} r_{x_1^n,x_2^n,y^n}$ (the symmetric constraint $\sum_{x_2^n} p_{x_1^n,x_2^n} \geq k_2 \sum_{x_2^n} r_{x_1^n,x_2^n,y^n}$ will be achieved for symmetric reasons).

    Let us fix $(x_2^n,y^n)$. If $(x_2^n,y^n) \not\in \mathcal{T}^n_{\varepsilon'}(X_2,Y)$, then for all $x_1^n$, $(x_1^n,x_2^n,y^n) \not\in \mathcal{T}^n_{\varepsilon'}(X_1,X_2,Y)$, thus $r_{x_1^n,x_2^n,y^n} = 0$ and the constraint is fulfilled. Let us assume that $(x_2^n,y^n) \in \mathcal{T}^n_{\varepsilon'}(X_2,Y)$. Since $r_{x_1^n,x_2^n,y^n} > 0$ implies that $(x_1^n,x_2^n,y^n) \in \mathcal{T}^n_{\varepsilon'}(X_1,X_2,Y)$, we have that:

    \[ \sum_{x_1^n} r_{x_1^n,x_2^n,y^n} = \sum_{x_1^n \in \mathcal{T}^n_{\varepsilon'}(X_1|x_2^n,y^n)} r_{x_1^n,x_2^n,y^n} = \sum_{x_1^n \in \mathcal{T}^n_{\varepsilon'}(X_1|x_2^n,y^n)} p_{x_1^n,x_2^n} \ .\]
    
    Thus:
     \begin{equation}
      \begin{aligned}
        \frac{\sum_{x_1^n} p_{x_1^n,x_2^n}}{k_1\sum_{x_1^n} r_{x_1^n,x_2^n,y^n}} &\geq \frac{1}{k_1} \frac{ \sum_{x_1 \in \mathcal{T}^n_{\varepsilon}(X_1|x_2^n)}P_{X_1^nX_2^n}(x_1^n,x_2^n)}{\sum_{x_1 \in \mathcal{T}^n_{\varepsilon'}(X_1|x_2^n,y^n)} P_{X_1^nX_2^n}(x_1^n,x_2^n)} \geq \frac{1}{k_1} \frac{ \sum_{x_1 \in \mathcal{T}^n_{\varepsilon}(X_1|x_2^n)}P_{X_1^nX_2^n}(x_1^n,x_2^n)}{\sum_{x_1 \in \mathcal{T}^n_{\varepsilon}(X_1|x_2^n,y^n)} P_{X_1^nX_2^n}(x_1^n,x_2^n)}\\
        &\geq \frac{1}{k_1}\frac{2^{-n(1+\varepsilon)H(X_1,X_2)}}{2^{-n(1-\varepsilon)H(X_1,X_2)}}\frac{|\mathcal{T}^n_{\varepsilon}(X_1|x_2^n)|}{|\mathcal{T}^n_{\varepsilon}(X_1|x_2^n,y^n)|} \geq 2^{n(-R_1-2\varepsilon H(X_1,X_2))}\frac{|\mathcal{T}^n_{\varepsilon}(X_1|x_2^n)|}{|\mathcal{T}^n_{\varepsilon}(X_1|x_2^n,y^n)|}\ .
      \end{aligned}
    \end{equation}

     But$|\mathcal{T}^n_{\varepsilon}(X_1|x_2^n,y^n)| \leq 2^{n(1+\varepsilon)H(X_1|X_2Y)}$ and for a large enough $n$ we have $\forall x_2^n \in \mathcal{T}^n_{\varepsilon'}(X_2), |\mathcal{T}^n_{\varepsilon}(X_1|x_2^n)| \geq (1-\varepsilon)2^{n(1-\varepsilon)H(X_1|X_2)} \geq 2^{n\left((1-\varepsilon)H(X_1|X_2)-\frac{1}{n}\right)}$, so we get with $C_1 := 2 H(X_1,X_2) + H(X_1|X_2Y) + H(X_1|X_2)$ (symmetrically $C_2 := 2 H(X_1,X_2) + H(X_2|X_1Y) + H(X_2|X_1)$):

     \[ \frac{\sum_{x_1^n} p_{x_1^n,x_2^n}}{k_1\sum_{x_1^n} r_{x_1^n,x_2^n,y^n}} \geq 2^{n\left(H(X_1|X_2) - \frac{1}{n} -H(X_1|X_2Y)-R_1-C_1\varepsilon\right)} = 2^{n\left(I(X_1: Y | X_2)-\frac{1}{n}-C_1\varepsilon-R_1\right)} \geq 1 \ .\]
     
    For a large enough $n$, all constraints are satisfied, thus $(p_{x_1^n,x_2^n},r_{x_1^n,x_2^n,y^n})$ is a valid solution. Then:

    \begin{equation}
      \begin{aligned}
        &\mathrm{S}^{\overline{\mathrm{NS}}}(W^{\otimes n},2^{nR_1},2^{nR_2}) \geq \frac{1}{2^{n(R_1+R_2)}} \sum_{(x_1^n,x_2^n,y^n) \in \mathcal{T}^n_{\varepsilon'}(X_1,X_2,Y)} W^{\otimes n}(y^n|x_1^nx_2^n)r_{x_1^n,x_2^n,y^n}\\
        &= \frac{1}{2^{n(R_1+R_2)}} \sum_{(x_1^n,x_2^n,y^n) \in \mathcal{T}^n_{\varepsilon'}(X_1,X_2,Y)} \frac{P_{X_1^nX_2^nY^n}(x_1^n,x_2^n,y^n)}{P_{X_1^nX_2^n}(x_1^n,x_2^n)}\frac{2^{n(R_1+R_2)}P_{X_1^nX_2^n}(x_1^n,x_2^n)}{\sum_{(x_1^n,x_2^n) \in \mathcal{T}^n_{\varepsilon}(X_1,X_2)}P_{X_1^nX_2^n}(x_1^n,x_2^n)}\\
        &= \frac{\sum_{(x_1^n,x_2^n,y^n) \in \mathcal{T}^n_{\varepsilon'}(X_1,X_2,Y)} P_{X_1^nX_2^nY^n}(x_1^n,x_2^n,y^n)}{\sum_{(x_1^n,x_2^n) \in \mathcal{T}^n_{\varepsilon}(X_1,X_2)}P_{X_1^nX_2^n}(x_1^n,x_2^n) } \underset{n \rightarrow +\infty}{\rightarrow} 1\ ,
      \end{aligned}
    \end{equation}

    since typical sets cover asymptotically the whole probability mass. Thus, since $\mathrm{S}^{\overline{\mathrm{NS}}}(W^{\otimes n},2^{nR_1},2^{nR_2}) \leq 1$, we get that $\mathrm{S}^{\overline{\mathrm{NS}}}(W^{\otimes n},2^{nR_1},2^{nR_2}) \underset{n \rightarrow +\infty}{\rightarrow} 1$. Thus for sufficiently large $n$ we can achieve a rate pair arbitrarily close to the outer bound. Finally, since $\mathcal{C}^{\overline{\mathrm{NS}}}(W)$ is closed and convex, a rate pair that is in the closure of the convex hull of the initial region is also in $\mathcal{C}^{\overline{\mathrm{NS}}}(W)$, and thus the proof is completed.
  \end{proof}

  \section{Independent Non-Signaling Assisted Capacity Region}
  \label{section:NSsr}
The goal of this section is to show that independent non-signaling assistance does not change the capacity region of a MAC $W$, i.e. that $\mathcal{C}^{\mathrm{NS}_{\mathrm{SR}}}(W)=\mathcal{C}(W)$. In order to prove this result, we will need some properties in the one-sender one-receiver case from~\cite{BF18}. Specifically, let us first recall the definition of the maximum success probability $\mathrm{S}(W,k)$ of transmitting $k$ messages using the channel $W$:
\begin{equation}
  \begin{aligned}
    \mathrm{S}(W,k) := &&\underset{e,d}{\maxi} &&& \frac{1}{k} \sum_{i,x,y} W(y|x)e(x|i)d(i|y)\\
    &&\st &&& \sum_{x \in \mathcal{X}} e(x|i) = 1, \forall i \in [k]\\
    &&&&& \sum_{j \in [k]} d(j|y) = 1, \forall y \in \mathcal{Y}\\
    &&&&& e(x|i), d(j|y) \geq 0
  \end{aligned}
\end{equation}

Then, the following characterization of $\mathrm{S}(W,k)$ can be derived:

\begin{prop}[Proposition 2.1 of~\cite{BF18}]
  $\mathrm{S}(W,k) = \frac{1}{k} \underset{S \subseteq X: |S| \leq k}{\max} f_W(S)$ with $f_W(S) := \sum_{y \in Y} \max_{x \in S} W(y|x)$.
\end{prop}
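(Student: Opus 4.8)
The plan is to reduce the bilinear program defining $\mathrm{S}(W,k)$ to a maximization over \emph{deterministic} encoders and decoders, and then to recognize the resulting combinatorial quantity as $\frac1k\max_{|S|\le k}f_W(S)$.

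\textbf{Step 1 (reduction to deterministic strategies).} The objective is bilinear in $e$ and $d$, and the feasible set is a product of two polytopes: the set of stochastic matrices $e$ with $\sum_x e(x|i)=1$, which is a product of simplices over $i\in[k]$ and whose vertices are the functions $[k]\to\mathcal X$; and the set of stochastic matrices $d$ with $\sum_j d(j|y)=1$, a product of simplices over $y\in\mathcal Y$ whose vertices are the functions $\mathcal Y\to[k]$. Starting from an optimal pair $(e^\star,d^\star)$, I would fix $e^\star$ and maximize the (now linear) objective over $d$ at a vertex $d^{\star\star}$, then fix $d^{\star\star}$ and maximize over $e$ at a vertex $e^{\star\star}$; since neither step decreases the value, $(e^{\star\star},d^{\star\star})$ is optimal and deterministic. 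This is exactly the convexity remark already used in the excerpt for $\mathrm S(W,k_1,k_2)$. Writing $e:[k]\to\mathcal X$ and $d:\mathcal Y\to[k]$ for the induced functions, the objective collapses:
\[ \frac1k\sum_{i,x,y}W(y|x)e(x|i)d(i|y)=\frac1k\sum_y W(y|e(d(y))), \]
so $\mathrm S(W,k)=\frac1k\max_{e,d}\sum_y W(y|e(d(y)))$.

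\textbf{Step 2 (optimizing the decoder).} Fix a deterministic encoder $e$ and let $S:=e([k])\subseteq\mathcal X$, so $|S|\le k$. For each output $y$, the term depends on $d$ only through $e(d(y))\in S$, and the values $d(y)$ may be chosen independently across $y$; hence $\max_d\sum_y W(y|e(d(y)))=\sum_y\max_{x\in S}W(y|x)=f_W(S)$, the maximum being attained by letting $d(y)$ be any index $i$ with $e(i)$ achieving $\max_{x\in S}W(y|x)$. Conversely, every $S\subseteq\mathcal X$ with $|S|\le k$ equals $e([k])$ for some function $e:[k]\to\mathcal X$ (repeat elements if $|S|<k$), and every set of the form $e([k])$ has at most $k$ elements. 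Therefore $\max_e f_W(e([k]))=\max_{S:\,|S|\le k}f_W(S)$, and combining with Step 1 yields $\mathrm S(W,k)=\frac1k\max_{|S|\le k}f_W(S)$.

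\textbf{Main obstacle.} Everything after Step 1 is elementary bookkeeping; the only point requiring care is the reduction to deterministic strategies, i.e.\ justifying that the maximum of a bilinear function over a product of polytopes is attained at a pair of vertices, and that the vertices of these stochastic-matrix polytopes are precisely the $0/1$ (functional) matrices. The alternating-maximization argument above makes this rigorous; since the same fact is already invoked for the two-sender program in the excerpt, one may simply cite it rather than reprove it.
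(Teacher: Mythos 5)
Your proof is correct and follows essentially the same route as the source: the paper itself states this proposition without proof, citing~\cite{BF18}, and the argument there is exactly your reduction to deterministic encoders and decoders (vertices of the product of simplices, via alternating maximization of the bilinear objective), identification of the encoder with its image $S$ of size at most $k$, and the observation that the optimal decoder is maximum-likelihood over $S$, giving $f_W(S)$. Nothing is missing.
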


As in the MAC scenario, one can consider non-signaling assistance shared between the sender and the receiver, which leads to the following maximum success probability $\mathrm{S}^{\mathrm{NS}}(W,k)$:

\begin{equation}
  \begin{aligned}
    \mathrm{S}^{\mathrm{NS}}(W,k) := &&\underset{P}{\maxi} &&& \frac{1}{k} \sum_{i,x,y} W(y|x)P(xi|iy)\\
    &&\st &&& \sum_{x} P(xj|iy) = \sum_{x} P(xj|i'y)\\
    &&&&& \sum_{j} P(xj|iy) = \sum_{j} P(xj|iy')\\
    &&&&& \sum_{x,j} P(xj|iy) = 1\\
    &&&&& P(xj|iy) \geq 0
  \end{aligned}
\end{equation}

A symmetrization can also be done to simplify the expression of the linear program defining $\mathrm{S}^{\mathrm{NS}}(W,k)$:

\begin{prop}[Appendix A of~\cite{BF18}]
  \label{prop:NSonewayLP}
  \begin{equation}
    \begin{aligned}
      \mathrm{S}^{\mathrm{NS}}(W,k) = &&\underset{r,p}{\maxi} &&& \frac{1}{k} \sum_{x,y} W(y|x)r_{x,y}\\
      &&\st &&& \sum_{x} r_{x,y} = 1\\
      &&&&& \sum_{x} p_{x} = k\\
      &&&&& 0 \leq r_{x,y} \leq p_{x}
    \end{aligned}
  \end{equation}
\end{prop}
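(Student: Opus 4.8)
The plan is to imitate the proof of Proposition~\ref{prop:NSLP}, namely to exhibit value-preserving maps in both directions between feasible points of the two linear programs. Since both programs have objective $\frac{1}{k}\sum_{x,y}W(y|x)r_{x,y}$ written through a single quantity $r_{x,y}$, it is enough to match that quantity while checking feasibility.

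For the inequality $\mathrm{S}^{\mathrm{NS}}(W,k) \leq (\text{symmetrized LP})$, I would take a feasible non-signaling box $P(xj|iy)$ and set
\[ r_{x,y} := \sum_{i} P(xi|iy)\ ,\qquad p_x := \sum_{i,j} P(xj|iy)\ . \]
The value $p_x$ is well-defined, i.e. independent of $y$, because $\sum_j P(xj|iy)$ is, by the second non-signaling constraint. Non-negativity is immediate, and $r_{x,y}\leq p_x$ because the sum defining $r_{x,y}$ ranges over the subset $\{j=i\}$ of the non-negative terms defining $p_x$. The constraint $\sum_x p_x = k$ follows from $\sum_{x,j}P(xj|iy)=1$, and $\sum_x r_{x,y}=1$ follows from the first non-signaling constraint: writing $q_j(y):=\sum_x P(xj|iy)$, which does not depend on $i$, one gets $\sum_x r_{x,y} = \sum_i q_i(y) = \sum_j q_j(y) = 1$ (the last step being just a relabeling of the index, both ranging over $[k]$). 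The objective value is preserved since $\sum_i P(xi|iy) = r_{x,y}$.

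For the reverse inequality, from a feasible $(r,p)$ I would define, for $j\neq i$,
\[ P(xi|iy) := \frac{r_{x,y}}{k}\ ,\qquad P(xj|iy) := \frac{p_x - r_{x,y}}{k(k-1)}\ . \]
Non-negativity uses $0\leq r_{x,y}\leq p_x$. One first computes $\sum_j P(xj|iy) = \frac{r_{x,y}}{k} + (k-1)\frac{p_x-r_{x,y}}{k(k-1)} = \frac{p_x}{k}$, which gives normalization $\sum_{x,j}P(xj|iy)=\frac1k\sum_x p_x=1$ and also shows the marginal over $j$ is independent of $y$. For the other non-signaling constraint, $\sum_x P(xj|iy)$ equals $\frac1k\sum_x r_{x,y}=\frac1k$ when $j=i$ and $\frac{1}{k(k-1)}(\sum_x p_x-\sum_x r_{x,y})=\frac1k$ when $j\neq i$, hence is independent of $i$. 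The objective again matches because $\sum_i P(xi|iy)=r_{x,y}$.

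Neither direction presents a genuine obstacle; the only point requiring a bit of care is the bookkeeping in the non-signaling verifications of the reverse direction, where the diagonal term $j=i$ must be handled separately from the $(k-1)$ off-diagonal terms when forming $\sum_x$ and $\sum_j$. This is precisely the one-sender specialization of the computation carried out in Proposition~\ref{prop:NSLP}.
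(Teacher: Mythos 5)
Your proof is correct and is exactly the one-sender specialization of the construction the paper uses for Proposition~\ref{prop:NSLP} (the paper itself only cites Appendix A of~\cite{BF18} for this statement rather than reproving it). Both directions check out, including the relabeling argument for $\sum_x r_{x,y}=1$ and the separate handling of the diagonal term $j=i$ in the reverse direction; the only degenerate case, $k=1$, is harmless since the off-diagonal formula is then vacuous.
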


Finally, the main tool we will use from~\cite{BF18} is the following random coding technique, which describes how to find a classical code with a success probability close to the non-signaling assisted one:
\begin{theo}[Theorem 3.1 of~\cite{BF18}]
  \label{theo:RandomCoding}
  Given a solution $r,p$ of the program computing $\mathrm{S}^{\mathrm{NS}}(W,k)$, we have that:
  \[ \mathbb{E}_S\left[\frac{f_W(S)}{\ell}\right] \geq \frac{k}{\ell}\left(1-\left(1-\frac{1}{k}\right)^{\ell}\right) \cdot \frac{1}{k} \sum_{x,y} W(y|x)r_{x,y} , \]
  for the multiset $S$ obtained by choosing $\ell$ elements of $\mathcal{X}$ independently according to the distribution $\left(\frac{p_{x}}{k}\right)_{x \in \mathcal{X}}$.
\end{theo}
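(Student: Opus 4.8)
The plan is to prove the inequality for each output letter $y \in \mathcal{Y}$ separately and then sum, exploiting the decomposition $f_W(S) = \sum_{y} \max_{x \in S} W(y|x)$. Write $X_1,\dots,X_\ell$ for the i.i.d.\ draws from $q_x := p_x/k$, which is a genuine probability distribution because $\sum_x p_x = k$ by the LP constraint in Proposition~\ref{prop:NSonewayLP}. Since $S$ is the multiset $\{X_1,\dots,X_\ell\}$ we have $\max_{x \in S} W(y|x) = \max_{j \in [\ell]} W(y|X_j)$, so it suffices to show, for every fixed $y$,
\[ \mathbb{E}\Big[\max_{j \in [\ell]} W(y|X_j)\Big] \ \ge\ k\Big(1 - (1-\tfrac1k)^\ell\Big)\cdot \frac1k \sum_{x} W(y|x)\, r_{x,y}. \]

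The key device is an auxiliary \emph{selection} step. For $p_x > 0$ put $t_{x,y} := r_{x,y}/p_x$, and $t_{x,y} := 0$ otherwise; the LP constraint $0 \le r_{x,y} \le p_x$ makes $t_{x,y} \in [0,1]$ a valid Bernoulli parameter, and since $r_{x,y} = 0$ whenever $p_x = 0$, one gets $\mathbb{E}_{X\sim q}[t_{X,y}] = \sum_x \tfrac{p_x}{k} t_{x,y} = \tfrac1k \sum_x r_{x,y} = \tfrac1k$ using $\sum_x r_{x,y} = 1$. Conditionally on $X_1,\dots,X_\ell$, draw independent bits $B_1,\dots,B_\ell$ with $\mathbb{P}(B_j = 1 \mid X_j) = t_{X_j,y}$, and say index $j$ is ``selected'' when $B_j = 1$. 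Then the events $A_j := \{B_j = 1\}$ are i.i.d.\ Bernoulli$(1/k)$, and $W(y|X_j)\,\mathbbm{1}[A_j]$ is independent of $(A_1,\dots,A_{j-1})$ because the pairs $(X_j,B_j)$ are i.i.d.\ across $j$ and $B_j$ depends only on $X_j$ and fresh randomness.

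Let $T := \min\{j : A_j\}$, with $T := \infty$ if no index is selected. Because $W \ge 0$, on $\{T < \infty\}$ we have $W(y|X_T) \le \max_j W(y|X_j)$, so
\[ \mathbb{E}\Big[\max_j W(y|X_j)\Big] \ \ge\ \sum_{j=1}^\ell \mathbb{E}\big[W(y|X_j)\,\mathbbm{1}[T=j]\big] \ =\ \sum_{j=1}^\ell \big(1-\tfrac1k\big)^{j-1}\, \mathbb{E}\big[W(y|X_j)\,\mathbbm{1}[A_j]\big], \]
using $\{T=j\} = A_1^c \cap \cdots \cap A_{j-1}^c \cap A_j$ together with the independence noted above. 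Each factor satisfies $\mathbb{E}[W(y|X_j)\,\mathbbm{1}[A_j]] = \mathbb{E}_{X\sim q}[W(y|X)\,t_{X,y}] = \tfrac1k \sum_x W(y|x)\, r_{x,y}$, independent of $j$, and $\sum_{j=1}^\ell (1-\tfrac1k)^{j-1} = k\big(1-(1-\tfrac1k)^\ell\big)$, which gives the per-$y$ bound. Summing over $y \in \mathcal{Y}$ and dividing by $\ell$ then yields exactly the claimed inequality.

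I expect the only delicate points to be bookkeeping rather than conceptual: checking that the selection events are genuinely i.i.d.\ and that $W(y|X_j)\,\mathbbm{1}[A_j]$ is independent of $(A_1,\dots,A_{j-1})$, and dealing with the degenerate coordinates $p_x = 0$ (where one uses $r_{x,y} \le p_x = 0$). Everything else reduces to the geometric-series computation displayed above.
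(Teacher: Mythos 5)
Your proof is correct. Note first that the paper does not actually prove this statement: it is imported as ``Theorem 3.1 of~\cite{BF18}'' and used as a black box, so there is no in-paper argument to compare against. Judged on its own, your randomized-selection argument is sound and complete. The per-$y$ decomposition $f_W(S)=\sum_y\max_{x\in S}W(y|x)$ is the right reduction; the auxiliary Bernoulli bits $B_j$ with success probability $t_{X_j,y}=r_{X_j,y}/p_{X_j}$ are well defined thanks to $0\le r_{x,y}\le p_x$, the events $A_j$ are indeed i.i.d.\ Bernoulli$(1/k)$ because $\mathbb{E}_{X\sim q}[t_{X,y}]=\frac1k\sum_x r_{x,y}=\frac1k$, and the factorization $\mathbb{E}[W(y|X_j)\mathbbm{1}[T=j]]=(1-\frac1k)^{j-1}\,\mathbb{E}[W(y|X_j)\mathbbm{1}[A_j]]$ is justified by the independence of the pairs $(X_j,B_j)$ across $j$. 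The lower bound $\max_j W(y|X_j)\ge\sum_j W(y|X_j)\mathbbm{1}[T=j]$ uses only $W\ge 0$ and disjointness of the events $\{T=j\}$, and the geometric sum gives exactly the factor $k(1-(1-\tfrac1k)^\ell)$. You also correctly handle the degenerate coordinates $p_x=0$. The published proof in~\cite{BF18} reaches the same per-$y$ bound by a more analytic route (ordering the inputs by decreasing $W(y|x)$ and exploiting concavity of $t\mapsto 1-(1-t)^\ell$ together with the LP constraints); your coupling with an explicit ``first selected index'' is a clean probabilistic substitute that yields the same geometric series with less bookkeeping, at the cost of introducing auxiliary randomness that must be (and is) seen not to affect the distribution of $S$.
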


We can now state our result on independent non-signaling assistance, which says that even in one-shot scenarios, the success probability with and without that assistance are close:

\begin{theo}
  \label{theo:NSsr}
  For any $\ell_1,k_1,\ell_2,k_2$:
  \[ \min\left(\frac{k_1}{\ell_1}\left(1-\left(1-\frac{1}{k_1}\right)^{\ell_1}\right),\frac{k_2}{\ell_2}\left(1-\left(1-\frac{1}{k_2}\right)^{\ell_2}\right)\right)\mathrm{S}_{\text{sum}}^{\mathrm{NS}_{\mathrm{SR}}}(W,k_1,k_2) \leq \mathrm{S}_{\text{sum}}(W,\ell_1,\ell_2)\ . \]
\end{theo}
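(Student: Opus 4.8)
The plan is to turn an optimal solution of the symmetrised linear program of Proposition~\ref{prop:NSsrProgram} into a classical (randomised) sum-code with $\ell_1,\ell_2$ messages, invoking the point-to-point random coding result Theorem~\ref{theo:RandomCoding} once for each sender. The key observation is that the objective of that symmetrised program is the average of two terms, each of which is exactly the objective of the point-to-point non-signaling program of Proposition~\ref{prop:NSonewayLP} for the auxiliary channel $W^1_{p^2,k_2}:\mathcal{X}_1\to\mathcal{Y}$ (resp.\ $W^2_{p^1,k_1}:\mathcal{X}_2\to\mathcal{Y}$): indeed $\sum_y W^1_{p^2,k_2}(y|x_1)=\frac{1}{k_2}\sum_{x_2}p^2_{x_2}=1$, so $W^1_{p^2,k_2}$ is a genuine channel, and the constraints $\sum_{x_1}r^1_{x_1,y}=1$, $\sum_{x_1}p^1_{x_1}=k_1$, $0\le r^1_{x_1,y}\le p^1_{x_1}$ of Proposition~\ref{prop:NSsrProgram} say precisely that $(r^1,p^1)$ is feasible for the program computing $\mathrm{S}^{\mathrm{NS}}(W^1_{p^2,k_2},k_1)$, and symmetrically for $(r^2,p^2)$.

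Concretely, I would fix an optimal $(r^1,r^2,p^1,p^2)$, let $S_1$ be a multiset of $\ell_1$ elements of $\mathcal{X}_1$ drawn i.i.d.\ from $(p^1_{x_1}/k_1)_{x_1}$ and, independently, $S_2$ a multiset of $\ell_2$ elements of $\mathcal{X}_2$ drawn i.i.d.\ from $(p^2_{x_2}/k_2)_{x_2}$, and use $S_1,S_2$ as the two codebooks. Since $\mathrm{S}_{\text{sum}}(W,\ell_1,\ell_2)$ is a maximum over codes and decoders, it is at least the expectation over $S_1,S_2$ of the sum-success of this random code equipped with the maximum-likelihood decoders for $I_1$ and $I_2$. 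When sender $2$'s transmitted symbol is uniform over $S_2$, the first decoder faces the effective point-to-point channel $\hat W^1(y|x_1):=\frac{1}{\ell_2}\sum_{x_2\in S_2}W(y|x_1x_2)$, and the maximum-likelihood decoder achieves $\mathbb{P}(J_1=I_1)=\frac{1}{\ell_1}f_{\hat W^1}(S_1)$. Now $\hat W^1(y|x_1)$ is affine in the empirical measure of $S_2$, with $\mathbb{E}_{S_2}[\hat W^1(y|x_1)]=W^1_{p^2,k_2}(y|x_1)$, and $(a_{x_1})_{x_1}\mapsto\max_{x_1\in S_1}a_{x_1}$ is convex; Jensen's inequality therefore gives $\mathbb{E}_{S_2}[f_{\hat W^1}(S_1)]\ge f_{W^1_{p^2,k_2}}(S_1)$, hence $\mathbb{E}_{S_1,S_2}[\mathbb{P}(J_1=I_1)]\ge \mathbb{E}_{S_1}\!\big[\tfrac{1}{\ell_1}f_{W^1_{p^2,k_2}}(S_1)\big]$, and symmetrically $\mathbb{E}_{S_1,S_2}[\mathbb{P}(J_2=I_2)]\ge \mathbb{E}_{S_2}\!\big[\tfrac{1}{\ell_2}f_{W^2_{p^1,k_1}}(S_2)\big]$.

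Finally I would apply Theorem~\ref{theo:RandomCoding} to $W^1_{p^2,k_2}$ with the feasible pair $(r^1,p^1)$, and to $W^2_{p^1,k_1}$ with $(r^2,p^2)$, obtaining $\mathbb{E}_{S_1}\!\big[\tfrac{1}{\ell_1}f_{W^1_{p^2,k_2}}(S_1)\big]\ge \alpha_1\cdot\frac{1}{k_1}\sum_{x_1,y}W^1_{p^2,k_2}(y|x_1)r^1_{x_1,y}$ and similarly with subscript $2$, where $\alpha_i:=\frac{k_i}{\ell_i}\big(1-(1-\tfrac{1}{k_i})^{\ell_i}\big)$. Putting $\alpha:=\min(\alpha_1,\alpha_2)$ and using the nonnegativity of both terms (so each $\alpha_i$ may be lower-bounded by $\alpha$), one gets
\[ \mathrm{S}_{\text{sum}}(W,\ell_1,\ell_2)\ \ge\ \alpha\cdot\frac{1}{2}\!\left[\frac{1}{k_1}\sum_{x_1,y}W^1_{p^2,k_2}(y|x_1)r^1_{x_1,y}+\frac{1}{k_2}\sum_{x_2,y}W^2_{p^1,k_1}(y|x_2)r^2_{x_2,y}\right]\ =\ \alpha\cdot\mathrm{S}_{\text{sum}}^{\mathrm{NS}_{\mathrm{SR}}}(W,k_1,k_2)\ , \]
which is the claim. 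I expect the step requiring the most care to be the coupling in the middle paragraph: the decoder of each sender must be analysed against the random codebook of the other sender, and one has to argue cleanly—via the affine-in-$S_j$ plus convexity-of-$\max$ reduction—that this only costs the replacement of the random effective channel by the deterministic auxiliary channel, after which Theorem~\ref{theo:RandomCoding} applies verbatim; identifying that $W^1_{p^2,k_2}$ and $W^2_{p^1,k_1}$ are exactly the channels appearing in the symmetrised program of Proposition~\ref{prop:NSsrProgram} is the conceptual crux that makes everything line up.
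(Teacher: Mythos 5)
Your proposal is correct and follows essentially the same route as the paper: the same symmetrised program from Proposition~\ref{prop:NSsrProgram}, the same random codebooks drawn from $p^1/k_1$ and $p^2/k_2$, the same reduction of each decoder to a point-to-point problem against the effective channel induced by the other sender's codebook (this is exactly the paper's Lemma~\ref{lem:DoubleOneWay}, which you re-derive inline), and the same invocation of Theorem~\ref{theo:RandomCoding}. The only cosmetic difference is the order of averaging over $S_2$: the paper applies Theorem~\ref{theo:RandomCoding} to the random channel $W^1_{S_2,\ell_2}$ and then averages using linearity of the bound's right-hand side in the channel, whereas you first replace $W^1_{S_2,\ell_2}$ by $W^1_{p^2,k_2}$ via Jensen (convexity of the max) and then apply the theorem once to the deterministic channel — both steps are valid and interchangeable.
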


In particular, this will imply that the capacity regions are the same:

\begin{cor}
  \label{cor:NSsr}
  $\mathcal{C}^{\mathrm{NS}_{\mathrm{SR}}}(W)=\mathcal{C}(W)$.
\end{cor}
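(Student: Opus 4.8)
The plan is to establish the two inclusions of $\mathcal{C}^{\mathrm{NS}_{\mathrm{SR}}}(W)=\mathcal{C}(W)$ separately, assuming Theorem~\ref{theo:NSsr}. The inclusion $\mathcal{C}(W) \subseteq \mathcal{C}^{\mathrm{NS}_{\mathrm{SR}}}(W)$ is immediate: any unassisted code $(e_1,d_1,e_2,d_2)$ for $W^{\otimes n}$ is realized by the product non-signaling boxes $P^1(x_1 j_1|i_1 y) := e_1(x_1|i_1)d_1(j_1|y)$ and $P^2(x_2 j_2|i_2 y) := e_2(x_2|i_2)d_2(j_2|y)$, so $\mathrm{S}(W^{\otimes n},k_1,k_2) \leq \mathrm{S}^{\mathrm{NS}_{\mathrm{SR}}}(W^{\otimes n},k_1,k_2)$ for every $n$, and hence every rate pair achievable without assistance is achievable with independent non-signaling assistance.

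For the reverse inclusion I would first pass to the sum-success probability, which is the quantity controlled by Theorem~\ref{theo:NSsr}: by Proposition~\ref{prop:CapaSumJoint} we have $\mathcal{C}(W)=\mathcal{C}_{\text{sum}}(W)$ and by Proposition~\ref{prop:NSCapaSumJoint} we have $\mathcal{C}^{\mathrm{NS}_{\mathrm{SR}}}(W)=\mathcal{C}_{\text{sum}}^{\mathrm{NS}_{\mathrm{SR}}}(W)$, so it suffices to prove $\mathcal{C}_{\text{sum}}^{\mathrm{NS}_{\mathrm{SR}}}(W) \subseteq \mathcal{C}_{\text{sum}}(W)$. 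Let $(R_1,R_2)$ be sum-achievable with independent non-signaling assistance, that is, $\mathrm{S}_{\text{sum}}^{\mathrm{NS}_{\mathrm{SR}}}(W^{\otimes n},\ceil{2^{R_1 n}},\ceil{2^{R_2 n}}) \to 1$. Fix arbitrary rates $R_i' < R_i$ and apply Theorem~\ref{theo:NSsr} to the channel $W^{\otimes n}$ with $k_i := \ceil{2^{R_i n}}$ and $\ell_i := \ceil{2^{R_i' n}}$. Because $R_i' < R_i$, the ratio $t_i := \ell_i/k_i$ decays exponentially in $n$; using $(1-1/k_i)^{\ell_i} \leq e^{-t_i}$ together with $1-e^{-t} \geq t-t^2/2$ gives $\frac{k_i}{\ell_i}\bigl(1-(1-\tfrac{1}{k_i})^{\ell_i}\bigr) \geq 1 - \tfrac{t_i}{2} \to 1$, so the $\min$ prefactor in Theorem~\ref{theo:NSsr} tends to $1$. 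Since also $\mathrm{S}_{\text{sum}}^{\mathrm{NS}_{\mathrm{SR}}}(W^{\otimes n},k_1,k_2) \to 1$ by hypothesis, the left-hand side of the inequality in Theorem~\ref{theo:NSsr} tends to $1$, and therefore $\mathrm{S}_{\text{sum}}(W^{\otimes n},\ell_1,\ell_2) \to 1$, as it is bounded above by $1$. This is precisely the statement that $(R_1',R_2')$ is sum-achievable without assistance.

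Since $R_i' < R_i$ were arbitrary and $\mathcal{C}_{\text{sum}}(W)$ is a closure, letting $R_i' \uparrow R_i$ shows $(R_1,R_2) \in \mathcal{C}_{\text{sum}}(W)$, hence $\mathcal{C}_{\text{sum}}^{\mathrm{NS}_{\mathrm{SR}}}(W) \subseteq \mathcal{C}_{\text{sum}}(W)$; combining this with the trivial inclusion and the two sum/joint identities gives $\mathcal{C}^{\mathrm{NS}_{\mathrm{SR}}}(W)=\mathcal{C}(W)$. Given Theorem~\ref{theo:NSsr}, the corollary is only routine asymptotic bookkeeping; the two points that need a little care are that Theorem~\ref{theo:NSsr} speaks about $\mathrm{S}_{\text{sum}}$ rather than the joint success probability (forcing the detour through Propositions~\ref{prop:CapaSumJoint} and~\ref{prop:NSCapaSumJoint}), and that one must choose the classical message numbers $\ell_i$ with a strictly smaller rate exponent than the $k_i$ so that the random-coding loss factor $\frac{k_i}{\ell_i}(1-(1-1/k_i)^{\ell_i})$ does not stay bounded away from $1$. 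The genuine work sits in Theorem~\ref{theo:NSsr} itself, which one would prove by symmetrizing each non-signaling box via Proposition~\ref{prop:NSsrProgram} and then applying the single-sender random-coding bound of Theorem~\ref{theo:RandomCoding} independently to $P^1$ and $P^2$.
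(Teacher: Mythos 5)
Your proposal is correct and follows essentially the same route as the paper: reduce to sum-capacities via Propositions~\ref{prop:CapaSumJoint} and~\ref{prop:NSCapaSumJoint}, apply Theorem~\ref{theo:NSsr} to $W^{\otimes n}$ with classical message numbers slightly smaller than the assisted ones, and check that the prefactor $\frac{k}{\ell}\left(1-\left(1-\frac{1}{k}\right)^{\ell}\right)\geq 1-\frac{\ell}{2k}$ tends to $1$. The only cosmetic difference is that the paper takes $\ell_i = 2^{nR_i}/n$ (a vanishing rate loss of $\frac{\log n}{n}$) where you fix $R_i'<R_i$ and let $R_i'\uparrow R_i$ at the end; both are equivalent.
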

\begin{proof}
  We will show that $\mathcal{C}_{\text{sum}}^{\mathrm{NS}_{\mathrm{SR}}}(W)=\mathcal{C}_{\text{sum}}(W)$, which is enough to conclude thanks to Proposition~\ref{prop:CapaSumJoint} and Proposition~\ref{prop:NSCapaSumJoint}. We apply Theorem~\ref{theo:NSsr} on the MAC $W^{\otimes n}$.

  Let us fix $k_1=2^{nR_1},k_2=2^{nR_2}$ and $\ell_1=\frac{2^{nR_1}}{n},\ell_2=\frac{2^{nR_2}}{n}$. Since:
  \[ \frac{k}{\ell}\left(1-\left(1-\frac{1}{k}\right)^{\ell}\right) \geq \frac{k}{\ell}\left(1-e^{-\frac{\ell}{k}}\right) \geq 1 - \frac{\ell}{2k} \ ,\]

  and $1 - \frac{\ell_1}{2k_1} = 1 - \frac{\ell_2}{2k_2} = 1 - \frac{1}{2n}$, we get:
\[ \left(1-\frac{1}{2n}\right)\mathrm{S}_{\text{sum}}^{\mathrm{NS}_{\mathrm{SR}}}(W^{\otimes n},2^{nR_1},2^{nR_1}) \leq \mathrm{S}_{\text{sum}}\left(W^{\otimes n},\frac{2^{nR_1}}{n},\frac{2^{nR_2}}{n}\right)\ . \]

  As $1-\frac{1}{2n}$ tends to $1$ when $n$ tends to infinity, we get that $\forall \varepsilon > 0, \exists N \in \mathbb{N}, \forall n\geq N$:

  \[(1-\varepsilon)\mathrm{S}_{\text{sum}}^{\mathrm{NS}_{\mathrm{SR}}}(W^{\otimes n},2^{nR_1},2^{nR_1}) \leq \mathrm{S}_{\text{sum}}(W^{\otimes n},2^{n(R_1-\frac{\log(n)}{n})},2^{n(R_2-\frac{\log(n)}{n})}) \ . \]

Thus, if $\underset{n \rightarrow +\infty}{\lim} \mathrm{S}_{\text{sum}}^{\mathrm{NS}_{\mathrm{SR}}}(W^{\otimes n},2^{nR_1},2^{nR_1})  = 1$, we have that for all $R_1'<R_1$ and $R_2'<R_2$:

\[ \underset{n \rightarrow +\infty}{\lim} \mathrm{S}_{\text{sum}}(W^{\otimes n},2^{nR_1'},2^{nR_1'}) \geq 1-\varepsilon \ . \]

Since this is true for all $\varepsilon > 0$, we get in fact that $\underset{n \rightarrow +\infty}{\lim} \mathrm{S}_{\text{sum}}(W^{\otimes n},2^{nR_1'},2^{nR_1'}) = 1$. This implies that $\mathcal{C}_{\text{sum}}^{\mathrm{NS}_{\mathrm{SR}}}(W) \subseteq \mathcal{C}_{\text{sum}}(W)$, and thus that the capacity regions are equal as the other inclusion is always satisfied.
\end{proof}

In order to prove Theorem~\ref{theo:NSsr}, we will need the following lemma:

\begin{lem}
  \label{lem:DoubleOneWay}
  If $S_1,S_2$ are classical codes (i.e. multisets with elements in $\mathcal{X}_1,\mathcal{X}_2$) of size $\ell_1,\ell_2$:
  \[ \mathrm{S}_{\text{sum}}(W,\ell_1,\ell_2) \geq \frac{1}{2}\left(\frac{f_{W^1_{S_2,\ell_2}}(S_1)}{\ell_1} + \frac{f_{W^2_{S_1,\ell_1}}(S_2)}{\ell_2}\right) \ ,\]
  where $W^1_{S_2, \ell_2}$ is the channel defined by $W^1_{S_2, \ell_2}(y|x_1) = \frac{1}{\ell_2} \sum_{i_2 = 1}^{\ell_2} W(y|x_1S_2^{i_2})$ and similarly for $W^2_{S_1, \ell_1}$.
\end{lem}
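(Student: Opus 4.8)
The plan is to build an explicit classical $(\ell_1,\ell_2)$-code for $W$ from the two one-sender codes $S_1,S_2$ and lower-bound its sum success probability by the average of the two point-to-point success probabilities $f_{W^1_{S_2,\ell_2}}(S_1)/\ell_1$ and $f_{W^2_{S_1,\ell_1}}(S_2)/\ell_2$. Write $S_1 = \{S_1^{i_1}\}_{i_1 \in [\ell_1]}$ and $S_2 = \{S_2^{i_2}\}_{i_2 \in [\ell_2]}$ as multisets indexed by messages; this is the natural choice of encoders $e_1(x_1|i_1) := \mathbbm{1}_{x_1 = S_1^{i_1}}$ and $e_2(x_2|i_2) := \mathbbm{1}_{x_2 = S_2^{i_2}}$. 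The receiver, given output $y$, uses two \emph{independent} decoders $d_1 : \mathcal{Y} \to [\ell_1]$ and $d_2 : \mathcal{Y} \to [\ell_2]$, defined below; recall from the discussion after the definition of $\mathrm{S}_{\text{sum}}$ that using independent decoders does not change the value of the program, so it suffices to exhibit such $d_1,d_2$.

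First I would observe that with these encoders, when messages $(i_1,i_2)$ are sent the channel output is distributed as $W(\cdot|S_1^{i_1}S_2^{i_2})$, and the sum success probability of the resulting code is
\begin{equation*}
  \frac{1}{2}\left[\frac{1}{\ell_1\ell_2}\sum_{i_1,i_2,y} W(y|S_1^{i_1}S_2^{i_2})\,\mathbbm{1}_{d_1(y) = i_1} \;+\; \frac{1}{\ell_1\ell_2}\sum_{i_1,i_2,y} W(y|S_1^{i_1}S_2^{i_2})\,\mathbbm{1}_{d_2(y) = i_2}\right].
\end{equation*}
For the first term, average over $i_2$ inside the sum: $\frac{1}{\ell_2}\sum_{i_2} W(y|S_1^{i_1}S_2^{i_2}) = W^1_{S_2,\ell_2}(y|S_1^{i_1})$ by the definition of $W^1_{S_2,\ell_2}$. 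So the first term equals $\frac{1}{2\ell_1}\sum_{i_1,y} W^1_{S_2,\ell_2}(y|S_1^{i_1})\,\mathbbm{1}_{d_1(y)=i_1}$, which is exactly (half of) the success probability of the classical code $S_1$ for the point-to-point channel $W^1_{S_2,\ell_2}$ with decoder $d_1$. By Proposition~2.1 of~\cite{BF18} (the characterization $\mathrm{S}(W,k) = \frac{1}{k}\max_{|S|\le k} f_W(S)$), the optimal such decoder is the maximum-likelihood rule $d_1(y) := \arg\max_{i_1} W^1_{S_2,\ell_2}(y|S_1^{i_1})$, which attains $f_{W^1_{S_2,\ell_2}}(S_1)/\ell_1$. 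Symmetrically, choosing $d_2(y) := \arg\max_{i_2} W^2_{S_1,\ell_1}(y|S_2^{i_2})$ makes the second term equal $f_{W^2_{S_1,\ell_1}}(S_2)/(2\ell_2)$.

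Putting these together, the code $(e_1,e_2,d_1,d_2)$ achieves sum success probability $\frac{1}{2}\big(f_{W^1_{S_2,\ell_2}}(S_1)/\ell_1 + f_{W^2_{S_1,\ell_1}}(S_2)/\ell_2\big)$, and since $\mathrm{S}_{\text{sum}}(W,\ell_1,\ell_2)$ is a maximum over all codes, the claimed inequality follows. There is no serious obstacle here; the only point requiring a little care is the bookkeeping that, because the decoders $d_1$ and $d_2$ act on $y$ independently, the two terms of the sum-objective decouple cleanly, each collapsing via the marginalization $\frac{1}{\ell_2}\sum_{i_2}W(y|x_1 S_2^{i_2}) = W^1_{S_2,\ell_2}(y|x_1)$ (resp. the symmetric identity) into a genuine point-to-point success probability to which Proposition~2.1 of~\cite{BF18} applies directly. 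Note that multisets are handled without issue since $f_W$ and the success-probability program are defined for multisets (repeated codewords simply get treated as distinct messages with identical encoding).
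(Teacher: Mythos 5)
Your proposal is correct and follows essentially the same route as the paper: identical deterministic encoders from the multisets, the same maximum-likelihood decoders for the effective point-to-point channels $W^1_{S_2,\ell_2}$ and $W^2_{S_1,\ell_1}$ (your $\arg\max_{i_1} W^1_{S_2,\ell_2}(y|S_1^{i_1})$ coincides with the paper's $\arg\max_{i_1}\sum_{i_2}W(y|S_1^{i_1}S_2^{i_2})$), and the same decoupling of the sum-objective into the two $f/\ell$ terms. The only cosmetic difference is that you invoke Proposition~2.1 of~\cite{BF18} for optimality of the ML rule, whereas the paper just evaluates that rule directly to get the lower bound; either way the argument goes through.
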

\begin{proof}
  Let us define $e_1(x_1|i_1) := \mathbbm{1}_{S_1^{i_1}=x_1}$ and $e_2(x_2|i_2) := \mathbbm{1}_{S_2^{i_2}=x_2}$. Then for fixed $y$, let us take $j_1^y \in \text{argmax}_{i_1} \{\sum_{i_2 = 1}^{\ell_2}W(y|S_1^{i_1}S_2^{i_2})\}$, $j_2^y \in \text{argmax}_{i_2}\{\sum_{i_1 = 1}^{\ell_1}W(y|S_1^{i_1}S_2^{i_2})\}$ and then define $d(j_1|y) := \mathbbm{1}_{j_1=j_1^y}$, $d(j_2|y) := \mathbbm{1}_{j_2=j_2^y}$. We have then:

  \begin{equation}
    \begin{aligned}
      &\mathrm{S}_{\text{sum}}(W,\ell_1,\ell_2) \geq \frac{1}{\ell_1\ell_2}\sum_{i_1,i_2,x_1,x_2,y}W(y|x_1x_2)\mathbbm{1}_{S_1^{i_1}=x_1}\mathbbm{1}_{S_2^{i_2}=x_2}\frac{\mathbbm{1}_{i_1=i_1^y}+\mathbbm{1}_{i_2=i_2^y}}{2}\\
      &= \frac{1}{\ell_1\ell_2}\sum_{i_1,i_2,y}W(y|S_1^{i_1}S_2^{i_2})\frac{\mathbbm{1}_{i_1=i_1^y}+\mathbbm{1}_{i_2=i_2^y}}{2}
      = \frac{1}{\ell_1\ell_2}\sum_{y}\frac{1}{2}\left(\sum_{i_2}W(y|S_1^{i_1^y}S_2^{i_2}) + \sum_{i_1}W(y|S_1^{i_1}S_2^{i_2^y})\right)\\
      &= \frac{1}{\ell_1\ell_2}\sum_{y}\frac{1}{2}\left(\max_{i_1}\sum_{i_2}W(y|S_1^{i_1}S_2^{i_2}) + \max_{i_2}\sum_{i_1}W(y|S_1^{i_1}S_2^{i_2})\right)\\
      &= \frac{1}{2}\left(\frac{\sum_{y}\max_{i_1}\left[\frac{1}{\ell_2}\sum_{i_2} W(y|S_1^{i_1}S_2^{i_2})\right]}{\ell_1} + \frac{\sum_{y}\max_{i_2}\left[\frac{1}{\ell_1}\sum_{i_1} W(y|S_1^{i_1}S_2^{i_2})\right]}{\ell_2} \right)\\
      &= \frac{1}{2}\left(\frac{\sum_{y}\max_{i_1}W^1_{S_2,\ell_2}(y|S_1^{i_1})}{\ell_1} + \frac{\sum_{y}\max_{i_2}W^2_{S_1,\ell_1}(y|S_2^{i_2})}{\ell_2} \right)
      = \frac{1}{2}\left(\frac{f_{W^1_{S_2,\ell_2}}(S_1)}{\ell_1} + \frac{f_{W^2_{S_1,\ell_1}}(S_2)}{\ell_2}\right) \ .
    \end{aligned}
  \end{equation}
\end{proof}

We have now all the tools to prove Theorem~\ref{theo:NSsr}:

\begin{proof}[Proof of Theorem~\ref{theo:NSsr}]
  Let us consider an optimal solution $r^1,r^2,p^1,p^2$ of the program of Proposition~\ref{prop:NSsrProgram} computing $\mathrm{S}_{\text{sum}}^{\mathrm{NS}_{\mathrm{SR}}}(W,k_1,k_2)$.

  Let us fix some multiset $S_2$ with elements in $\mathcal{X}_2$ of size $\ell_2$. Note that $r^1$ and $p^1$ are a feasible solution of the program of Proposition~\ref{prop:NSonewayLP} computing $\mathrm{S}^{\mathrm{NS}}(W^1_{S_2,\ell_2},k_1)$. As a result, we can apply Theorem \ref{theo:RandomCoding} and get the following statement. For the multiset $S_1$ obtained by choosing $\ell_1$ elements of $\mathcal{X}_1$ independently according to the distribution $\left(\frac{p^1_{x_1}}{k_1}\right)_{x_1 \in \mathcal{X}_1}$, we have:

  \[ \mathbb{E}_{S_1}\left[ \frac{f_{W^1_{S_2,\ell_2}}(S_1)}{\ell_1}\right] \geq \frac{k_1}{\ell_1}\left(1-\left(1-\frac{1}{k_1}\right)^{\ell_1}\right) \cdot \frac{1}{k_1}\sum_{x_1,y} W^1_{S_2,\ell_2}(y|x_1)r^1_{x_1,y} \ .\]

  Now, let $S_2$ be the multiset obtained by choosing $\ell_2$ elements of $\mathcal{X}_2$ independently according to the distribution $\left(\frac{p^2_{x_2}}{k_2}\right)_{x_2 \in \mathcal{X}_2}$. We have:
  \begin{equation}
    \begin{aligned}
      &\mathbb{E}_{S_2}\left[\frac{1}{k_1}\sum_{x_1,y}W^1_{S_2,\ell_2}(y|x_1)r^1_{x_1,y}\right] = \mathbb{E}_{S_2}\left[\frac{1}{k_1}\sum_{x_1,y}\frac{1}{\ell_2}\sum_{i_2=1}^{\ell_2}W(y|x_1S_2^{i_2})r^1_{x_1,y}\right]\\
      &= \frac{1}{\ell_2}\sum_{i_2=1}^{\ell_2}\mathbb{E}_{X^{i_2}_2 \sim \frac{p^2_{x_2}}{k_2}}\left[\frac{1}{k_1}\sum_{x_1,y}W(y|x_1X^{i_2}_2)r^1_{x_1,y}\right] = \mathbb{E}_{X_2 \sim \frac{p^2_{x_2}}{k_2}}\left[\frac{1}{k_1}\sum_{x_1,y}W(y|x_1X_2)r^1_{x_1,y}\right]\\
      &= \frac{1}{k_1}\sum_{x_1,x_2,y}\frac{p^2_{x_2}}{k_2}W(y|x_1x_2)r^1_{x_1,y} = \frac{1}{k_1}\sum_{x_1,y}W^1_{p^2,k_2}(y|x_1)r^1_{x_1,y} \ .
    \end{aligned}
  \end{equation}

  Thus in all, we have:

  \begin{equation}
    \begin{aligned}
      \mathbb{E}_{S_2}\left[\mathbb{E}_{S_1}\left[ \frac{f_{W^1_{S_2,\ell_2}}(S_1)}{\ell_1}\right]\right] &\geq  \mathbb{E}_{S_2}\left[\frac{k_1}{\ell_1}\left(1-\left(1-\frac{1}{k_1}\right)^{\ell_1}\right) \cdot \frac{1}{k_1}\sum_{x_1,y} W^1_{S_2,\ell_2}(y|x_1)r^1_{x_1,y}\right]\\
      &=\frac{k_1}{\ell_1}\left(1-\left(1-\frac{1}{k_1}\right)^{\ell_1}\right) \cdot \mathbb{E}_{S_2}\left[\frac{1}{k_1}\sum_{x_1,y} W^1_{S_2,\ell_2}(y|x_1)r^1_{x_1,y}\right]\\
      &\geq \frac{k_1}{\ell_1}\left(1-\left(1-\frac{1}{k_1}\right)^{\ell_1}\right) \cdot \frac{1}{k_1}\sum_{x_1,y} W^1_{p^2,k_2}(y|x_1)r^1_{x_1,y} \ ,
    \end{aligned}
  \end{equation}

  and symmetrically for $\mathbb{E}_{S_1}\left[\mathbb{E}_{S_2}\left[ \frac{f_{W^2_{S_1,\ell_1}}(S_2)}{\ell_2}\right]\right]$. Since there exists classical codes $S_1^*,S_2^*$ such that:

  \[ \frac{1}{2}\left(\frac{f_{W^1_{S_2^*,\ell_2}}(S_1^*)}{\ell_1} + \frac{f_{W^2_{S_1^*,\ell_1}}(S_2^*)}{\ell_2}\right) \geq \mathbb{E}_{S_1,S_2}\left[\frac{1}{2}\left(\frac{f_{W^1_{S_2,\ell_2}}(S_1)}{\ell_1} + \frac{f_{W^2_{S_1,\ell_1}}(S_2)}{\ell_2}\right)\right]\ , \]
  
  by applying Lemma~\ref{lem:DoubleOneWay}, we get:
  
  \begin{equation}
    \begin{aligned}
      &\mathrm{S}_{\text{sum}}(W,\ell_1,\ell_2) \geq \frac{1}{2}\left(\frac{f_{W^1_{S_2^*,\ell_2}}(S_1^*)}{\ell_1} + \frac{f_{W^2_{S_1^*,\ell_1}}(S_2^*)}{\ell_2}\right) \geq  \mathbb{E}_{S_1,S_2}\left[\frac{1}{2}\left(\frac{f_{W^1_{S_2,\ell_2}}(S_1)}{\ell_1} + \frac{f_{W^2_{S_1,\ell_1}}(S_2)}{\ell_2}\right)\right]\\
      &= \frac{1}{2} \left(\mathbb{E}_{S_2}\left[\mathbb{E}_{S_1}\left[ \frac{f_{W^1_{S_2,\ell_2}}(S_1)}{\ell_1}\right]\right] + \mathbb{E}_{S_1}\left[\mathbb{E}_{S_2}\left[ \frac{f_{W^2_{S_1,\ell_1}}(S_2)}{\ell_2}\right]\right]\right)\\
      &\geq \frac{1}{2}\left(\frac{k_1}{\ell_1}\left(1-\left(1-\frac{1}{k_1}\right)^{\ell_1}\right) \cdot \frac{1}{k_1}\sum_{x_1,y} W^1_{p^2}(y|x_1)r^1_{x_1,y} + \frac{k_2}{\ell_2}\left(1-\left(1-\frac{1}{k_2}\right)^{\ell_2}\right) \cdot \frac{1}{k_2}\sum_{x_2,y}W^2_{p^1}(y|x_2)r^2_{x_2,y} \right)\\
      &\geq  \min\left(\frac{k_1}{\ell_1}\left(1-\left(1-\frac{1}{k_1}\right)^{\ell_1}\right),\frac{k_2}{\ell_2}\left(1-\left(1-\frac{1}{k_2}\right)^{\ell_2}\right)\right)\mathrm{S}_{\text{sum}}^{\mathrm{NS}_{\mathrm{SR}}}(W,k_1,k_2) \ .
    \end{aligned}
  \end{equation}
\end{proof}

\begin{rk}
  In the whole proof of Theorem~\ref{theo:NSsr}, as well as the properties it depends on, we have never used the fact that the output of the channel $y$ was the same for both decoders $d_1$ and $d_2$. This implies that the result also holds for interference channels, i.e. two-sender two-receiver channels $W(y_1y_2|x_2x_2)$. Specifically, non-signaling assistance shared between the first sender and the first receiver and independently shared between the second sender and the second receiver does not change the capacity region of interference channels.
\end{rk}
  
\section{Conclusion}
In this work, we have studied the impact of non-signaling assistance on the capacity of multiple-access channels. We have developed an efficient linear program computing the success probability of the best non-signaling assisted code for a finite number of copies of a multiple-access channel. In particular, this gives lower bounds on the zero-error non-signaling assisted capacity of multiple-access channels.
Applied to the binary adder channel, these results were used to prove that a sum-rate of $\frac{\log_2(72)}{4} \simeq 1.5425$ can be reached with zero error, which beats the maximum classical sum-rate capacity of $\frac{3}{2}$. For noisy channels, we have developed a technique giving lower bounds through the use of concatenated codes. Applied to the noisy binary adder channel, this technique was used to show that non-signaling assistance still improves the sum-rate capacity. We have also found an outer bound on the non-signaling assisted capacity region through a relaxed notion of non-signaling assistance, whose capacity region was characterized by a single-letter formula. Finally, we have shown that independent non-signaling assistance does not change the capacity region.

Our results suggest that quantum entanglement may also increase the capacity of such channels. However, even for the binary adder channel, this question remains open. One could also ask if such efficient methods to compute the best non-signaling assisted codes can be extended to Gaussian multiple-access channels. Finally, establishing a single-letter formula for the non-signaling assisted capacity of multiple-access channels is the main open question left by this work. It remains open even for the binary adder channel. Proving that non-signaling assistance and relaxed non-signaling assistance coincide asymptotically would directly answer this question and show that the capacity region is described in Theorem~\ref{theo:CharaNSrelaxed}.

\section*{Acknowledgements}
We would like to thank Mario Berta and Andreas Winter for discussions about the use of non-signaling correlations for multiple access channels as well as the anonymous referees for their suggestions that improved the paper.
This work is funded by the European Research Council (ERC Grant AlgoQIP, Agreement No. 851716). We also acknowledge funding from the European Union’s Horizon 2020 research and innovation programme under Grant Agreement No 101017733 within the QuantERA II Programme.

%%%   BIBLIOGRAPHY   %%%
\bibliographystyle{unsrturl}
\bibliography{MAC_NS}

\begin{thebibliography}{10}

\bibitem{FF22}
Omar Fawzi and Paul Ferm{\'{e}}.
\newblock Beating the sum-rate capacity of the binary adder channel with
  non-signaling correlations.
\newblock In {\em {IEEE} International Symposium on Information Theory, {ISIT}
  2022, Espoo, Finland, June 26 - July 1, 2022}, pages 2750--2755. {IEEE},
  2022.
\newblock \href {https://doi.org/10.1109/ISIT50566.2022.9834699}
  {\path{doi:10.1109/ISIT50566.2022.9834699}}.

\bibitem{Liao73}
Henry~Herng{-}Jiunn Liao.
\newblock Multiple access channels ({Ph.D. Thesis} abstr.).
\newblock {\em {IEEE} Trans. Inf. Theory}, 19(2):253, 1973.
\newblock \href {https://doi.org/10.1109/TIT.1973.1054960}
  {\path{doi:10.1109/TIT.1973.1054960}}.

\bibitem{Ahlswede73}
Rudolf Ahlswede.
\newblock Multi-way communication channels.
\newblock In {\em 2nd International Symposium on Information Theory,
  Tsahkadsor, Armenia, USSR, September 2-8, 1971}, pages 23--52. Publishing
  House of the Hungarian Academy of Science, 1973.

\bibitem{BCPSW14}
Nicolas Brunner, Daniel Cavalcanti, Stefano Pironio, Valerio Scarani, and
  Stephanie Wehner.
\newblock Bell nonlocality.
\newblock {\em Rev. Mod. Phys.}, 86:419--478, Apr 2014.
\newblock \href {https://doi.org/10.1103/RevModPhys.86.419}
  {\path{doi:10.1103/RevModPhys.86.419}}.

\bibitem{PLMK11}
R.~Prevedel, Y.~Lu, W.~Matthews, R.~Kaltenbaek, and K.~J. Resch.
\newblock Entanglement-enhanced classical communication over a noisy classical
  channel.
\newblock {\em Phys. Rev. Lett.}, 106:110505, Mar 2011.
\newblock \href {https://doi.org/10.1103/PhysRevLett.106.110505}
  {\path{doi:10.1103/PhysRevLett.106.110505}}.

\bibitem{BF18}
Siddharth Barman and Omar Fawzi.
\newblock Algorithmic aspects of optimal channel coding.
\newblock {\em {IEEE} Trans. Inf. Theory}, 64(2):1038--1045, 2018.
\newblock \href {https://doi.org/10.1109/TIT.2017.2696963}
  {\path{doi:10.1109/TIT.2017.2696963}}.

\bibitem{BSST99}
Charles~H Bennett, Peter~W Shor, John~A Smolin, and Ashish~V Thapliyal.
\newblock Entanglement-assisted classical capacity of noisy quantum channels.
\newblock {\em Physical Review Letters}, 83(15):3081, 1999.
\newblock \href {https://doi.org/10.1103/PhysRevLett.83.3081}
  {\path{doi:10.1103/PhysRevLett.83.3081}}.

\bibitem{Matthews12}
William Matthews.
\newblock A linear program for the finite block length converse of
  {Polyanskiy-Poor-Verd{\'{u}}} via nonsignaling codes.
\newblock {\em {IEEE} Trans. Inf. Theory}, 58(12):7036--7044, 2012.
\newblock \href {https://doi.org/10.1109/TIT.2012.2210695}
  {\path{doi:10.1109/TIT.2012.2210695}}.

\bibitem{QS17}
Yihui Quek and Peter~W. Shor.
\newblock Quantum and superquantum enhancements to two-sender, two-receiver
  channels.
\newblock {\em Phys. Rev. A}, 95:052329, May 2017.
\newblock \href {https://doi.org/10.1103/PhysRevA.95.052329}
  {\path{doi:10.1103/PhysRevA.95.052329}}.

\bibitem{LALS20}
Felix Leditzky, Mohammad~A Alhejji, Joshua Levin, and Graeme Smith.
\newblock Playing games with multiple access channels.
\newblock {\em Nature communications}, 11(1):1--5, 2020.
\newblock \href {https://doi.org/10.1038/s41467-020-15240-w}
  {\path{doi:10.1038/s41467-020-15240-w}}.

\bibitem{SLSS22}
Akshay Seshadri, Felix Leditzky, Vikesh Siddhu, and Graeme Smith.
\newblock On the separation of correlation-assisted sum capacities of multiple
  access channels.
\newblock In {\em {IEEE} International Symposium on Information Theory, {ISIT}
  2022, Espoo, Finland, June 26 - July 1, 2022}, pages 2756--2761. {IEEE},
  2022.
\newblock \href {https://doi.org/10.1109/ISIT50566.2022.9834459}
  {\path{doi:10.1109/ISIT50566.2022.9834459}}.

\bibitem{Mermin90}
N~David Mermin.
\newblock Simple unified form for the major no-hidden-variables theorems.
\newblock {\em Physical review letters}, 65(27):3373, 1990.
\newblock \href {https://doi.org/10.1103/PhysRevLett.65.3373}
  {\path{doi:10.1103/PhysRevLett.65.3373}}.

\bibitem{Peres90}
Asher Peres.
\newblock Incompatible results of quantum measurements.
\newblock {\em Physics Letters A}, 151(3-4):107--108, 1990.
\newblock \href {https://doi.org/10.1016/0375-9601(90)90172-K}
  {\path{doi:10.1016/0375-9601(90)90172-K}}.

\bibitem{Aravind02}
PK~Aravind.
\newblock A simple demonstration of {Bell}'s theorem involving two observers
  and no probabilities or inequalities.
\newblock {\em CoRR}, abs/quant-ph/0206070, 2002.
\newblock \href {https://arxiv.org/abs/quant-ph/0206070}
  {\path{arXiv:quant-ph/0206070}}.

\bibitem{BBT05}
Gilles Brassard, Anne Broadbent, and Alain Tapp.
\newblock Quantum pseudo-telepathy.
\newblock {\em Foundations of Physics}, 35(11):1877--1907, 2005.
\newblock \href {https://doi.org/10.1007/s10701-005-7353-4}
  {\path{doi:10.1007/s10701-005-7353-4}}.

\bibitem{Noetzel20}
Janis Noetzel.
\newblock Entanglement-enabled communication.
\newblock {\em {IEEE} J. Sel. Areas Inf. Theory}, 1(2):401--415, 2020.
\newblock \href {https://doi.org/10.1109/jsait.2020.3017121}
  {\path{doi:10.1109/jsait.2020.3017121}}.

\bibitem{ND20}
Janis N{\"{o}}tzel and Stephen Diadamo.
\newblock Entanglement-enabled communication for the internet of things.
\newblock In Mohammad~S. Obaidat, Kuei{-}Fang Hsiao, Petros Nicopolitidis, and
  Daniel~Cascado Caballero, editors, {\em International Conference on Computer,
  Information and Telecommunication Systems, {CITS} 2020, Hangzhou, China,
  October 5-7, 2020}, pages 1--6. {IEEE}, 2020.
\newblock \href {https://doi.org/10.1109/CITS49457.2020.9232550}
  {\path{doi:10.1109/CITS49457.2020.9232550}}.

\bibitem{JNWY20}
Zhengfeng Ji, Anand Natarajan, Thomas Vidick, John Wright, and Henry Yuen.
\newblock {MIP*=RE}.
\newblock {\em CoRR}, abs/2001.04383, 2020.
\newblock URL: \url{https://arxiv.org/abs/2001.04383}, \href
  {https://arxiv.org/abs/2001.04383} {\path{arXiv:2001.04383}}.

\bibitem{PDB23}
Uzi Pereg, Christian Deppe, and Holger Boche.
\newblock The multiple-access channel with entangled transmitters.
\newblock {\em CoRR}, abs/2303.10456, 2023.
\newblock \href {https://arxiv.org/abs/2303.10456} {\path{arXiv:2303.10456}},
  \href {https://doi.org/10.48550/arXiv.2303.10456}
  {\path{doi:10.48550/arXiv.2303.10456}}.

\bibitem{HDW08}
Min{-}Hsiu Hsieh, Igor Devetak, and Andreas~J. Winter.
\newblock Entanglement-assisted capacity of quantum multiple-access channels.
\newblock {\em {IEEE} Trans. Inf. Theory}, 54(7):3078--3090, 2008.
\newblock \href {https://doi.org/10.1109/TIT.2008.924726}
  {\path{doi:10.1109/TIT.2008.924726}}.

\bibitem{CT01}
Thomas~M. Cover and Joy~A. Thomas.
\newblock {\em Elements of Information Theory}.
\newblock Wiley, 2001.
\newblock \href {https://doi.org/10.1002/0471200611}
  {\path{doi:10.1002/0471200611}}.

\bibitem{LinearProgramming}
Bernd G{\"{a}}rtner and Jir{\'{\i}} Matousek.
\newblock {\em Understanding and using linear programming}.
\newblock Universitext. Springer, 2007.
\newblock \href {https://doi.org/10.1007/978-3-540-30717-4}
  {\path{doi:10.1007/978-3-540-30717-4}}.

\bibitem{Lindstrom69}
Bernt Lindstr{\"o}m.
\newblock Determination of two vectors from the sum.
\newblock {\em Journal of Combinatorial Theory}, 6(4):402--407, 1969.
\newblock \href {https://doi.org/10.1016/S0021-9800(69)80038-4}
  {\path{doi:10.1016/S0021-9800(69)80038-4}}.

\bibitem{Tilborg78}
Henk C.~A. van Tilborg.
\newblock An upper bound for codes in a two-access binary erasure channel
  {(Corresp.)}.
\newblock {\em {IEEE} Trans. Inf. Theory}, 24(1):112--116, 1978.
\newblock \href {https://doi.org/10.1109/TIT.1978.1055814}
  {\path{doi:10.1109/TIT.1978.1055814}}.

\bibitem{KL78}
Tadao Kasami and Shu Lin.
\newblock Bounds on the achievable rates of block coding for a memoryless
  multiple-access channel.
\newblock {\em {IEEE} Trans. Inf. Theory}, 24(2):187--197, 1978.
\newblock \href {https://doi.org/10.1109/TIT.1978.1055860}
  {\path{doi:10.1109/TIT.1978.1055860}}.

\bibitem{Weldon78}
E.~J.~Weldon Jr.
\newblock Coding for a multiple-access channel.
\newblock {\em Inf. Control.}, 36(3):256--274, 1978.
\newblock \href {https://doi.org/10.1016/S0019-9958(78)90312-1}
  {\path{doi:10.1016/S0019-9958(78)90312-1}}.

\bibitem{KLWY83}
Tadao Kasami, Shu Lin, Victor~K.{-}W. Wei, and Saburo Yamamura.
\newblock Graph theoretic approaches to the code construction for the two-user
  multiple-access binary adder channel.
\newblock {\em {IEEE} Trans. Inf. Theory}, 29(1):114--130, 1983.
\newblock \href {https://doi.org/10.1109/TIT.1983.1056614}
  {\path{doi:10.1109/TIT.1983.1056614}}.

\bibitem{BT85}
P.~A. B. M.~Coebergh van~den Braak and Henk C.~A. van Tilborg.
\newblock A family of good uniquely decodable code pairs for the two-access
  binary adder channel.
\newblock {\em {IEEE} Trans. Inf. Theory}, 31(1):3--9, 1985.
\newblock \href {https://doi.org/10.1109/TIT.1985.1057004}
  {\path{doi:10.1109/TIT.1985.1057004}}.

\bibitem{BB98}
Shraga~I. Bross and Ian~F. Blake.
\newblock Upper bound for uniquely decodable codes in a binary input {N}-user
  adder channel.
\newblock {\em {IEEE} Trans. Inf. Theory}, 44(1):334--340, 1998.
\newblock \href {https://doi.org/10.1109/18.651062}
  {\path{doi:10.1109/18.651062}}.

\bibitem{UL98}
R{\"u}diger Urbanke and Quinn Li.
\newblock The zero-error capacity region of the 2-user synchronous {BAC} is
  strictly smaller than its {Shannon} capacity region.
\newblock In {\em 1998 Information Theory Workshop (Cat. No. 98EX131)},
  page~61. IEEE, 1998.
\newblock \href {https://doi.org/10.1109/ITW.1998.706434}
  {\path{doi:10.1109/ITW.1998.706434}}.

\bibitem{AB99}
Rudolf Ahlswede and Vladimir~B. Balakirsky.
\newblock Construction of uniquely decodable codes for the two-user binary
  adder channel.
\newblock {\em {IEEE} Trans. Inf. Theory}, 45(1):326--330, 1999.
\newblock \href {https://doi.org/10.1109/18.746834}
  {\path{doi:10.1109/18.746834}}.

\bibitem{MO05}
M.~Mattas and Patric R.~J. {\"{O}}sterg{\aa}rd.
\newblock A new bound for the zero-error capacity region of the two-user binary
  adder channel.
\newblock {\em {IEEE} Trans. Inf. Theory}, 51(9):3289--3291, 2005.
\newblock \href {https://doi.org/10.1109/TIT.2005.853309}
  {\path{doi:10.1109/TIT.2005.853309}}.

\bibitem{OS15}
Or~Ordentlich and Ofer Shayevitz.
\newblock A {VC}-dimension-based outer bound on the zero-error capacity of the
  binary adder channel.
\newblock In {\em {IEEE} International Symposium on Information Theory, {ISIT}
  2015, Hong Kong, China, June 14-19, 2015}, pages 2366--2370. {IEEE}, 2015.
\newblock \href {https://doi.org/10.1109/ISIT.2015.7282879}
  {\path{doi:10.1109/ISIT.2015.7282879}}.

\bibitem{mosek}
Erling~D Andersen and Knud~D Andersen.
\newblock The {M}osek interior point optimizer for linear programming: an
  implementation of the homogeneous algorithm.
\newblock In {\em High performance optimization}, pages 197--232. Springer,
  2000.
\newblock \href {https://doi.org/10.1007/978-1-4757-3216-0_8}
  {\path{doi:10.1007/978-1-4757-3216-0_8}}.

\bibitem{PPV10}
Yury Polyanskiy, H.~Vincent Poor, and Sergio Verd{\'{u}}.
\newblock Channel coding rate in the finite blocklength regime.
\newblock {\em {IEEE} Trans. Inf. Theory}, 56(5):2307--2359, 2010.
\newblock \href {https://doi.org/10.1109/TIT.2010.2043769}
  {\path{doi:10.1109/TIT.2010.2043769}}.

\bibitem{GK11}
Abbas~El Gamal and Young{-}Han Kim.
\newblock {\em Network Information Theory}.
\newblock Cambridge University Press, 2011.
\newblock \href {https://doi.org/10.1017/CBO9781139030687}
  {\path{doi:10.1017/CBO9781139030687}}.

\end{thebibliography}
\end{document}